\definecolor{dnrbl}{rgb}{0,0,0.5}
\definecolor{dnrgr}{rgb}{0,0.5,0}
\definecolor{dnrre}{rgb}{0.5,0,0}
\newcommand{\N}{\mathcal{N}}
\newcommand{\RR}{\mathbb{R}}
\newcommand{\NN}{\mathbb{N}}
\newcommand{\UG}{U\! G}
\newcommand{\UR}{U\! R}
\newcommand{\HG}{H\! G}
\newcommand{\HR}{H\! R}
\newcommand{\FG}{F \! G}
\newcommand{\FR}{F \! R}
\newcommand{\GD}{G\! D}
\newcommand{\Sm}{\mathtt{Smooth}_{k,\eps'}}
\newtheorem{thmm}[subsection]{Theorem}
\newtheorem{prop}[subsection]{Proposition}
\newtheorem{lem}[subsection]{Lemma}
\newtheorem{rem}[subsection]{Remark}
\newtheorem{coro}[subsection]{Corollary}
\newtheorem{conj}[subsection]{Conjecture}
\newtheorem{defin}[subsection]{Definition}
\newtheorem{questionn}[subsection]{Open Question}
\newcommand{\bpm}{\begin{pmatrix}}
\newcommand{\epm}{\end{pmatrix}}
\newcommand{\eps}{\varepsilon}
\begin{document}

\title{Tipping Points in Schelling Segregation}

\author{George Barmpalias         \and
        Richard Elwes  \and 
        Andy Lewis-Pye
}

%\authorrunning{Short form of author list} % if too long for running head

\institute{{\bf George Barmpalias:}
(1) State Key Lab of Computer Science,
Institute of Software,
Chinese Academy of Sciences,
Beijing 100190,
China and 
(2) School of Mathematics, Statistics and Operations Research,
Victoria University, Wellington, New Zealand
\email{barmpalias@gmail.com,  www.barmpalias.net}            
%             \emph{Present address:} of F. Author  %  if needed
           \and
{\bf Richard Elwes}, School of Mathematics,
University of Leeds, LS2 9JT Leeds, United Kingdom.
\email{r.elwes@gmail.com, www.richardelwes.co.uk} 
\and
{\bf Andy Lewis-Pye}, Department of Mathematics, London School of Economics, Houghton Street, London WC2A 2AE.
 \email{andy@aemlewis.com, www.aemlewis.co.uk} (Andy Lewis-Pye was previously called Andrew Lewis.) Authors are listed alphabetically.}

\date{}
% The correct dates will be entered by the editor

\maketitle

\begin{abstract}
Thomas Schelling's spacial proximity model illustrated how racial segregation can emerge, unwanted, from the actions of citizens acting in accordance with their individual local preferences. One of the earliest agent-based models, it is closely related both to the spin-1 models of statistical physics, and to cascading phenomena on networks. Here a 1-dimensional unperturbed variant of the model is studied, which is \emph{open} in the sense that agents may enter and exit the model. Following the authors' previous work \cite{BEL1} and that of Brandt, Immorlica, Kamath, and Kleinberg in \cite{BK}, rigorous asymptotic results are established.

This model's openness allows either race to take over almost everywhere. Tipping points are identified between the regions of takeover and staticity. In a significant generalization of the models considered in \cite{BEL1} and \cite{BK}, the model's parameters comprise the initial proportions of the two races, along with independent values of the tolerance for each race.

\keywords{Schelling Segregation \and Algorithmic Game Theory \and Complex Systems \and Non-linear Dynamics\and Ising model \and Spin Glass \and Network Science}
\end{abstract}

\setcounter{tocdepth}{1}

\section{Introduction} \label{section:intro}

The game theorist Thomas Schelling proposed two models of racial segregation, which have both proved highly influential as computational/mathematical approaches to understanding social phenomena, and which contributed to his receiving the Nobel Memorial Prize in 2005. In each case, the model comprises a finite number of agents of two races, which we shall take to be green and red. The \emph{spacial proximity} or \emph{checkerboard model} entails agents taking up positions on a graph or grid (see \cite{S1}, \cite{S2}, \cite{S4}). Segregated regions may then appear as agents swap to neighbourhoods whose racial make-up is more to their liking. Subsequently, numerous authors (\cite{SS,DM,PW,GVN,GO}) have observed the structural similarity between this model and variants of the Ising model considered in statistical mechanics for the analysis of phase-transitions.

Schelling's second \emph{bounded neighbourhood} or \emph{tipping model} (see \cite{S1}, \cite{S2}, \cite{S3}, \cite{S4}) is a non-spacial model, in which a number of agents share a single neighbourhood, and where the initial proportions and preferences of the two races can give rise to total takeover by one race or the other. The simplest case is when no agent wishes to be in the minority, and move out when they are, to be replaced by an agent of the other race. In this case whichever race has more agents initially will take over totally, and thus the \emph{tipping point} is at the 50\% mark. Since Schelling's insights, tipping points have become a focus of interest in both the academic literature and popular culture. Models closely related to Schelling's have subsequently been investigated from a number of angles, notably in the work of Granovetter \cite{Gr} and as popularised by Gladwell \cite{Gl}.

More immediately, the current paper has its roots in the work of Brandt, Immorlica, Kamath, and Kleinberg \cite{BK}, which represented a major departure from the numerous previous studies of Schelling segregation, for the first time providing a rigorous mathematical analysis of an \emph{unperturbed} Schelling ring (which is to say a 1-dimensional spacial proximity model). Earlier work, such as that by Young \cite{Y} and Zhang (\cite{Z1}, \cite{Z2}, \cite{Z3}), had concentrated on \emph{perturbed} models, where agents have a small probability ($\eps$) of acting against their own interests. The introduction of this tiny amount of noise ensures that the resulting Markov process is reversible, and thus considerably easier to analyse. Yet it was established in  \cite{BK} that the unperturbed model may give rise to dramatically different patterns of segregation from the limiting case (letting $\eps \to 0$) of the corresponding perturbed model.

The breakthrough in \cite{BK} was built upon in \cite{BEL1} by the present authors, which also provided a thorough analysis of an unperturbed Schelling ring, but over a much larger range of parameters. The present model, which we shall describe shortly, continues in the same vein in again providing rigorous mathematical analyses of unperturbed 1-dimensional models, but represents a significant generalization again in terms of their parameters. The major innovation is that we enrich the model with the introduction of two extra parameters which break the symmetry between the two races, in two ways. Firstly, we allow the two races to exist in different numbers from one another initially. Secondly the two races may now exhibit unequal levels of racial tolerance. Both of these are natural extensions of previous models, and indeed were proposed by Schelling (see for instance \cite{S2} p 152).

Another important difference from the work of \cite{BK} and \cite{BEL1} is that the current model will be \emph{open}, meaning that at each time step, a single unhappy agent is selected uniformly at random and replaced by one of the opposite race, if doing so will cause the new agent to be happy. Most previous versions of the model are \emph{closed}, and the dynamic has involved two unhappy agents swapping at each time step. We thus assume the presence of an unlimited number of agents of both races outside the model who are ready to move in, given the opportunity. We remark that the openness of the current model brings it closer to the standard spin-1 models of statistical physics (although the closed variant also has a counterpart in systems employing Kawasaki dynamics). Progress has also recently been made by the authors in analysing closed models, similarly enriched with more parameters than \cite{BK} and \cite{BEL1}. See \cite{BEL2}.

We shall follow tradition in discussing our model in terms of racial segregation. However, as has often been remarked, it is equally applicable to any other geographical division of people along binary lines. Examples from \cite{S1} include women from men, students from faculty, or officers from enlisted men. What is more, the open model analysed here can equally well be interpreted in terms of static agents who modify some personal attribute in response to their neighbours, as happens in the class of \emph{voter models} (see for instance \cite{Lig}). In a voter model an agent will alter its state to mimic a randomly selected neighbour. In our setting, it is the proportions of its neighbours occupying the two states which determines its action. Thus we might propose an interpretation as a model of peer pressure, with two asymmetric states. Relevant examples may include whether or not to smoke \cite{CJR}, whether or not to clap at the end of a performance \cite{Mann}, or preferences for rival musical subcultures such as hip-hop or heavy metal \cite{STBM}.

Viewing the model from this perspective places it squarely within the category of cascading behaviour within networks. A central topic of study in this area is a \emph{general threshold model}. The setting here is a graph, in which every node $v$ is equipped with two things: a function $g_v$ which assigns a value $g_v(N)$ to every subset $N$ of the neighbours of $v$, and a threshold $\tau_v \in [0,1]$. Some nodes are initially \emph{active} while others are not. At each time step $t$, every inactive node $v$ computes $g_v(A^t_v)$ where $A^t_v$ is the set of neighbours of $v$ which are active at time $t$. Then $v$ activates at time $t+1$ if and only if $g_v(A^t_v) \geq \tau_v$. The primary question of interest here is to find conditions (on the set of initially active nodes, the topology of the graph, the functions $g_v$, and the thresholds $\tau_v$) which guarantee that the whole graph (or most of it) will eventually become activated, or alternatively that the cascade will quickly fizzle out. See \cite{Kl} for a good survey of this area.

The parallels with the study of Schelling segregation are striking. One major difference, however, is that while general threshold models evolve according to a  \emph{synchronous} dynamic (every agent that may change will do so at each time-step), the literature on Schelling segregation traditionally has one agent (or pair of agents) changing at each time-step. In the current work we shall consider variants in our Schelling model's dynamics including introducing synchronicity, and see that in many cases (but not all) our conclusions are unaffected by the choice between them.

Although our model is an instance of Schelling's spacial proximity model rather than any kind of hybrid or unified model, we nevertheless identify interesting phenomena in the spirit of his tipping or bounded neighbourhood models. That is to say, we shall identify thresholds in parameter-space on one one side of which one race takes over, and on the other side of which the other does. This behaviour is of course only possible in an open model, and furthermore is is only visible from our current \emph{asymptotic} perspective: we shall prove precise results concerning the ring's final configuration which are valid as the neighbourhood radius ($w$) grows large, and the ring size ($n$) grows large relative to $w$. More precisely, depending on the initial parameters, one of three conclusions will usually follow in the long run: either the ring will remain essentially static or one race or the other will take over. The asymptotic interpretation is critical here because takeover in this setting need not entail the complete absence of the other race, but rather takeover \emph{almost everywhere} in a measure-theoretic sense, meaning that in the ring's final configuration the majority race will almost certainly outnumber the other by any required margin for large enough values of $w$ and $n$. Thus takeover or staticity may not be apparent in simulations involving small $w$ and $n$. We will identify boundaries between these three regions within the parameter-space of the model.

While the results of \cite{BEL1} were somewhat counterintuitive (and perhaps politically discouraging) in that increased tolerance was seen in certain situations to lead to increased segregation, our results (described below) on the open model suggest the maxim ``tolerance wins out''. Loosely speaking more tolerant groups thrive at the expense of their less tolerant neighbours, although we emphasise that the details are highly sensitive to the initial proportions of the two groups. We also identify two very different regions of staticity, in which very few people move. These occur at the extremes: in one case a city comprising only very tolerant individuals in which almost everyone is happy with their neighbourhood. We think of this as the region of \emph{contentment}. In contrast, the region of \emph{frustration} comprises people so intolerant that, although almost all are unhappy with their neighbourhood, they are also unable to find anyone else prepared to move in, and thus are forced to stay put.

In more detail, and as already mentioned, the parameters in question represent a considerable generalisation of those from \cite{BEL1} in two directions. Firstly, we will no longer assume that the initial distribution is symmetric between the two races. Instead, each site will be occupied initially by a green agent with probability $\rho$, and by a red agent with probability $1-\rho$. Thus it might be that our model describes a racially homogeneous red region, into which a few green individuals have recently moved (meaning a small value of $\rho$). It is clearly of interest to be able to predict whether the newcomers will eventually take over the region, or will themselves be squeezed out.

Secondly, we drop the assumption that the preferences of the two races are simple mirror images, and allow the two groups to exhibit different tolerances. This is in line with social research, which has suggested in the past, for example, that black US citizens are happier in integrated neighbourhoods than their white compatriots (see for instance \cite{SSBK}). Thus we introduce two independent parameters $\tau_g$ and $\tau_r$ representing the tolerance of green and red agents, respectively.

The model runs as follows. First we fix the parameters $n, w \in \NN$ and $\rho, \tau_g, \tau_r \in (0,1)$. The ring then comprises nodes numbered $0$ to $n-1$. We arrange these in a circle, meaning that addresses are computed mod $n$ in everything that follows. Initially we populate the ring with agents of two races (red and green), with the colour of each node decided independently according to the toss of a biased coin, each node being green initially with a probability of $\rho$ and red with a probability of $1-\rho$. At each time-step, a node $x$ is solely concerned with its own neighbourhood of size $2w+1$, meaning the interval $\N(x)=[x-w,x+w]$ (understood mod $n$ as usual). If $x$ is a green (red) node, it will be \emph{happy} so long as the proportion of green (red) nodes in its neighbourhood is at least $\tau_g(2w+1)$ (respectively $\tau_r(2w+1)$), and \emph{unhappy} otherwise. We say that an unhappy node is \emph{hopeful} if a change of colour would cause it to be happy.

Now we introduce three possible dynamics by which the model may evolve:
\begin{itemize}[$\bullet$]
\item Our primary object of study will be the \emph{selective model}. Here, at each time-step a hopeful node is selected uniformly at random and its colour changed.

\item The \emph{incremental model} is similar: at each time-step an unhappy node is selected uniformly at random and its colour changed (regardless of whether this will make it happy).

\item In the \emph{synchronous model}, at each time-step every currently unhappy node alters its colour (again, regardless of the effect on their happiness).
\end{itemize}

In all cases, the process continues until no further changes are possible, at which stage we say the ring (or process) has \emph{finished}. (We shall establish in Lemma \ref{lem:itends} that this is guaranteed to occur for the selective model, and we shall later establish this in certain other cases.) Our principal concern is to find the probability that a randomly selected node is green in the finished ring. We shall show that this probability is usually close to either $\rho$, $0$, or $1$.

We shall describe these tipping phenomena in terms of numerical relationships between $\rho$, $\tau_g$, and $\tau_r$. In particular, for any $\rho$ there exist thresholds $\kappa^\rho_g$ \& $\kappa^\rho_r$ and $\mu^\rho_g = 1- \kappa^\rho_r$ \& $\mu^\rho_r = 1- \kappa^\rho_g$ as illustrated in Figure \ref{fig:Thresh1}.

%Pics found with skewthreshold.py and tractthreshold.py
%kappa-threholds for given values of rho found with thresholds.py
%mu-threholds for given values of rho found with tractthreshold2.py

\begin{figure}[!htbp]
\begin{tabular}{cc}
\def\svgwidth{7cm} 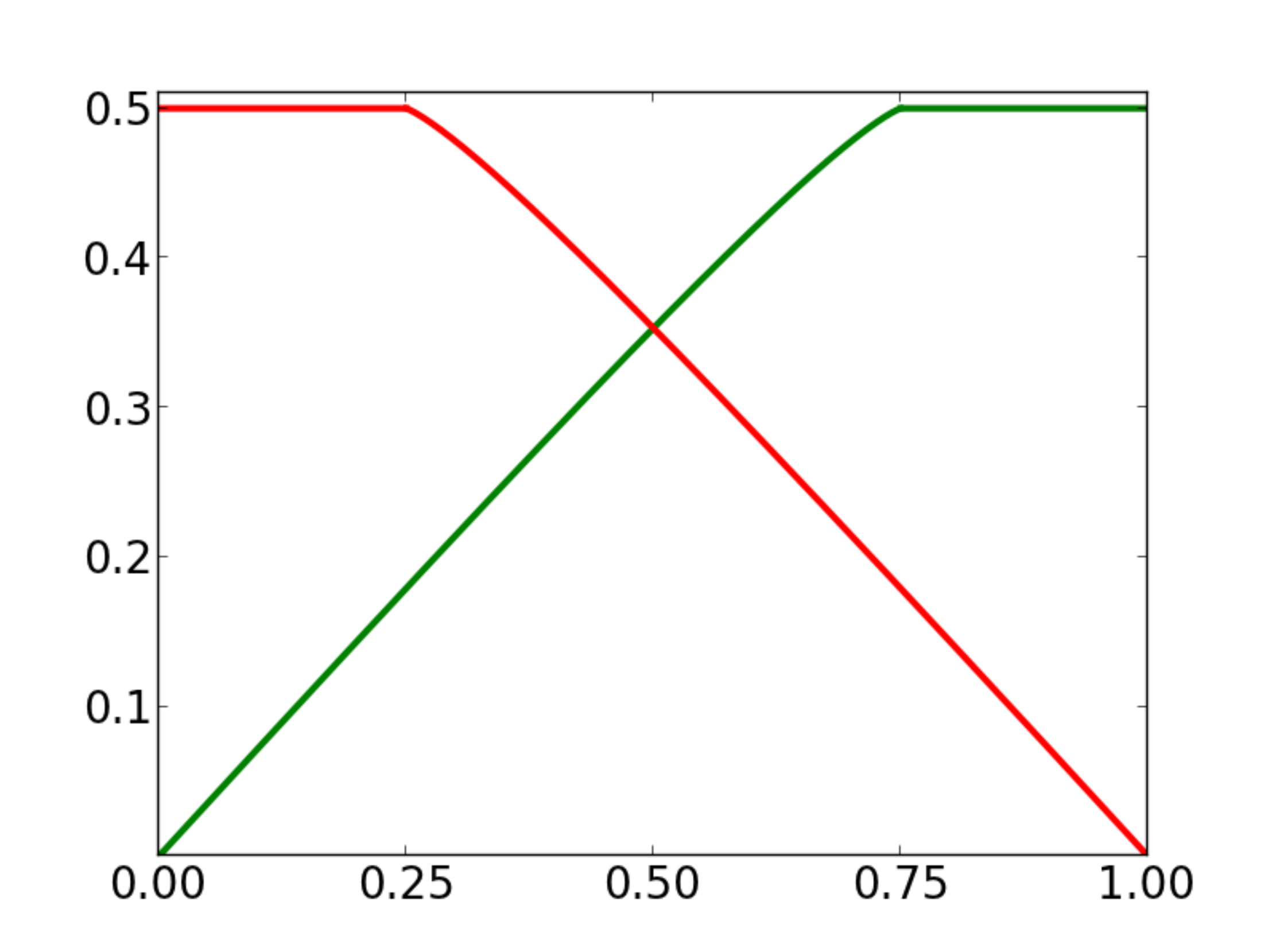 & \def\svgwidth{7cm} 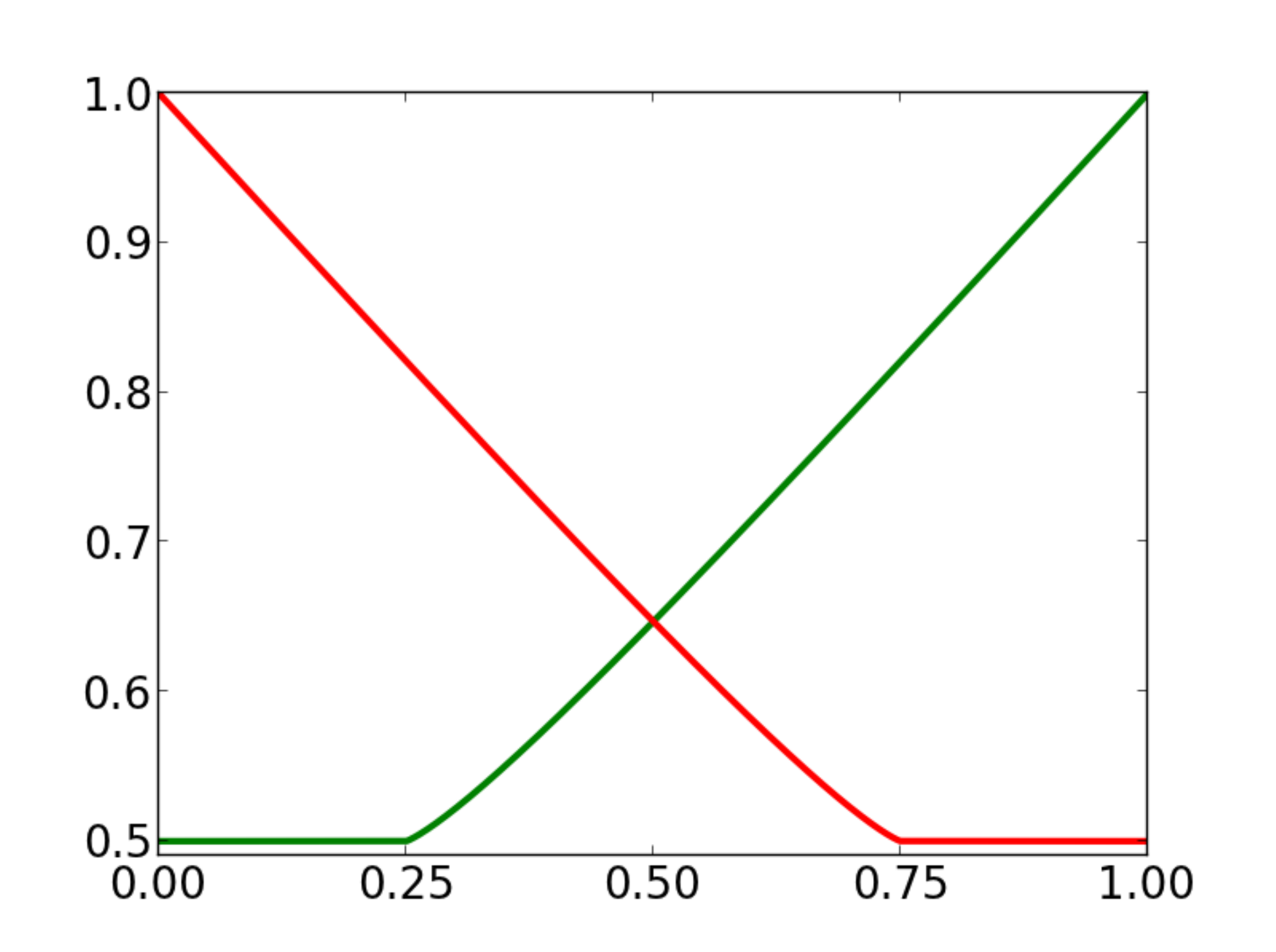\\
\end{tabular}
\caption{The thresholds $\kappa^\rho_g$ \& $\kappa^\rho_r$ and $\mu^\rho_g = 1- \kappa^\rho_r$ \& $\mu^\rho_r = 1- \kappa^\rho_g$}
\label{fig:Thresh1}
\end{figure}

The following give more details:

\begin{itemize}[$\bullet$]
\item For $\rho \leq \frac{1}{4}$ we have $\kappa^\rho_g < \kappa^\rho_r = \frac{1}{2} = \mu^\rho_g < \mu^\rho_r$.
\item For $\frac{1}{4} < \rho < \frac{1}{2}$ we have $\kappa^\rho_g < \kappa^\rho_r < \frac{1}{2}< \mu^\rho_g < \mu^\rho_r$.
\item For $\rho=\frac{1}{2}$ we have $\kappa^\rho_g = \kappa^\rho_r \approx 0.353092313$, which is the threshold $\kappa$ found in \cite{BEL1}, and $\mu^\rho_g = \mu^\rho_r = 1 - \kappa \approx 0.64690768667$.
\item For $\frac{1}{2} < \rho < \frac{3}{4}$ we have $\kappa^\rho_r < \kappa^\rho_g < \frac{1}{2} < \mu^\rho_r < \mu^\rho_g$.
\item For $\rho \geq \frac{3}{4}$ we have $\kappa^\rho_r < \kappa^\rho_g = \frac{1}{2} = \mu^\rho_r < \mu^\rho_g$.
\end{itemize}

As the pair $(\tau_g, \tau_r)$ ranges across the unit square, we shall find that the final configuration of the ring depends principally on where $\tau_g$ stands in relation to the thresholds $\kappa^\rho_g$, $\frac{1}{2}$, \& $\mu^\rho_g$ and where $\tau_r$ stands regarding $\kappa^\rho_r$, $\frac{1}{2}$, \& $\mu^\rho_r$, dividing the unit square into up to 16 open regions. We shall be able to analyse several of these simultaneously, but in some cases we shall find a more delicate dependency between $\tau_g$ and $\tau_r$. Throughout we shall leave open the intriguing question of what happens when the parameters exactly coincide with the thresholds. (We remark in passing, that the parameters of the models in \cite{BK} and \cite{BEL1} both constitute threshold cases of the current situation.) We shall also leave open the outcomes of the process in two small open regions of the parameter space (see Questions \ref{questionn:static2?} and \ref{questionn:static3?} below). We encourage others to investigate these matters.

As already mentioned, our results are asymptotic in nature, and we will use the shorthand ``all $n \gg w \gg 0$'' which carries the meaning ``all sufficiently large $w$, and all $n$ sufficiently large compared to $w$''. By a \emph{scenario} we mean the class of all rings with fixed values of $\rho$, $\tau_g$, and $\tau_r$, but $w$ and $n$ varying. We will identify a scenario with its signature triple $(\rho, \tau_g,\tau_r)$, and say that a value of $\rho$ \emph{admits} scenarios satisfying some property $X$, if there exist $\tau_g$ and $\tau_r$ such that $X$ holds for $(\rho, \tau_g,\tau_r)$. Our conclusions are then of three types:

\begin{itemize}[$\bullet$]
\item A scenario is \emph{static almost everywhere} if for every $\eps>0$ and all $n \gg w \gg 0$ a node $x$ chosen uniformly at random has a probability below $\eps$ of having changed colour at any stage before the ring finishes.
\item A scenario suffers \emph{green (red) takeover almost everywhere} if for every $\eps>0$ and all $n \gg w \gg 0$ a node $x$ chosen uniformly at random has a probability exceeding $1-\eps$ of being green (red) in the finished ring.
\end{itemize}

In some situations we will be able to strengthen the conclusion, and say that green (red) \emph{takes over totally} if for every $\eps>0$ and all $n \gg w \gg 0$ the probability that all nodes are green (red) in the finished ring exceeds $1-\eps$.

We now state our main results, and some open questions. (These depend on the existence of $\kappa^\rho_g$ \& $\kappa^\rho_r$ and $\mu^\rho_g$ \& $\mu^\rho_r$, which will be established rigorously in sections \ref{section:threshes} and \ref{section:themus} respectively.) Theorems \ref{thmm:static} - \ref{thmm:stag2} which follow are encapsulated (for the cases $\rho=0.42$ and $\rho=0.3$) by Figures \ref{fig:map42} and \ref{fig:map30}. Although the details of the diagrams are specific to $\rho=0.42$ and $\rho=0.30$, the major features apply more generally. Points coloured grey correspond to scenarios static almost everywhere, while green (red) points indicate green (red) takeover. Purple open regions represent those scenarios, other than those on the thresholds, whose outcome remains unclear. (Notice that there are no such regions in Figure \ref{fig:map42}, which is not unusual. See Questions \ref{questionn:static2?} and \ref{questionn:static3?}.) Similar diagrams for other values of $\rho$ are given in Figure \ref{fig:21}.

%Pic found with dominatingthreshcolour3.py

\begin{figure}[ht!]
\centering
\def\svgwidth{15cm}
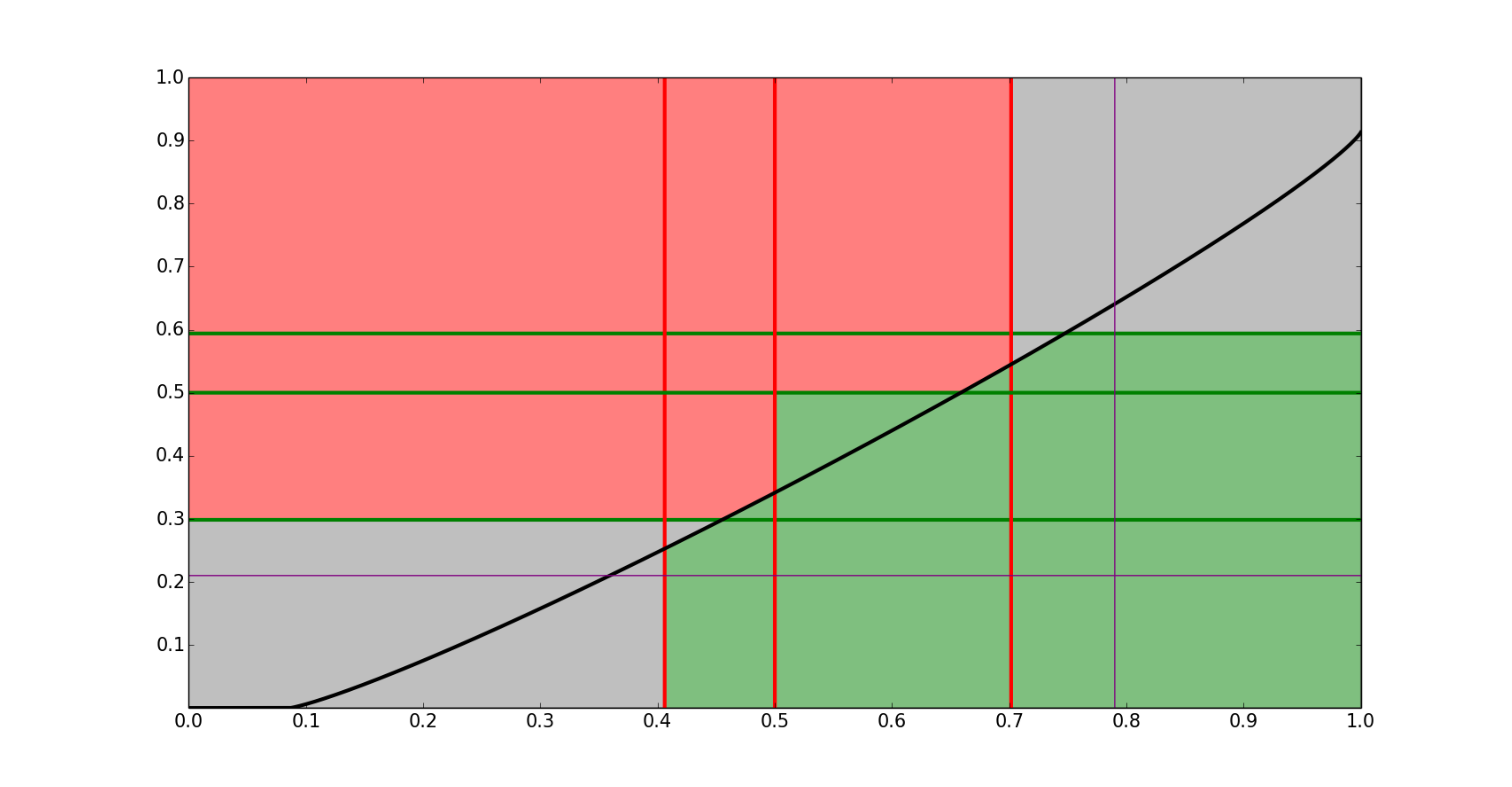
\caption{The landscape for $\rho=0.42$ under the selective dynamic}
\label{fig:map42}
\end{figure}

\begin{figure}[ht!]
\centering
\def\svgwidth{15cm}
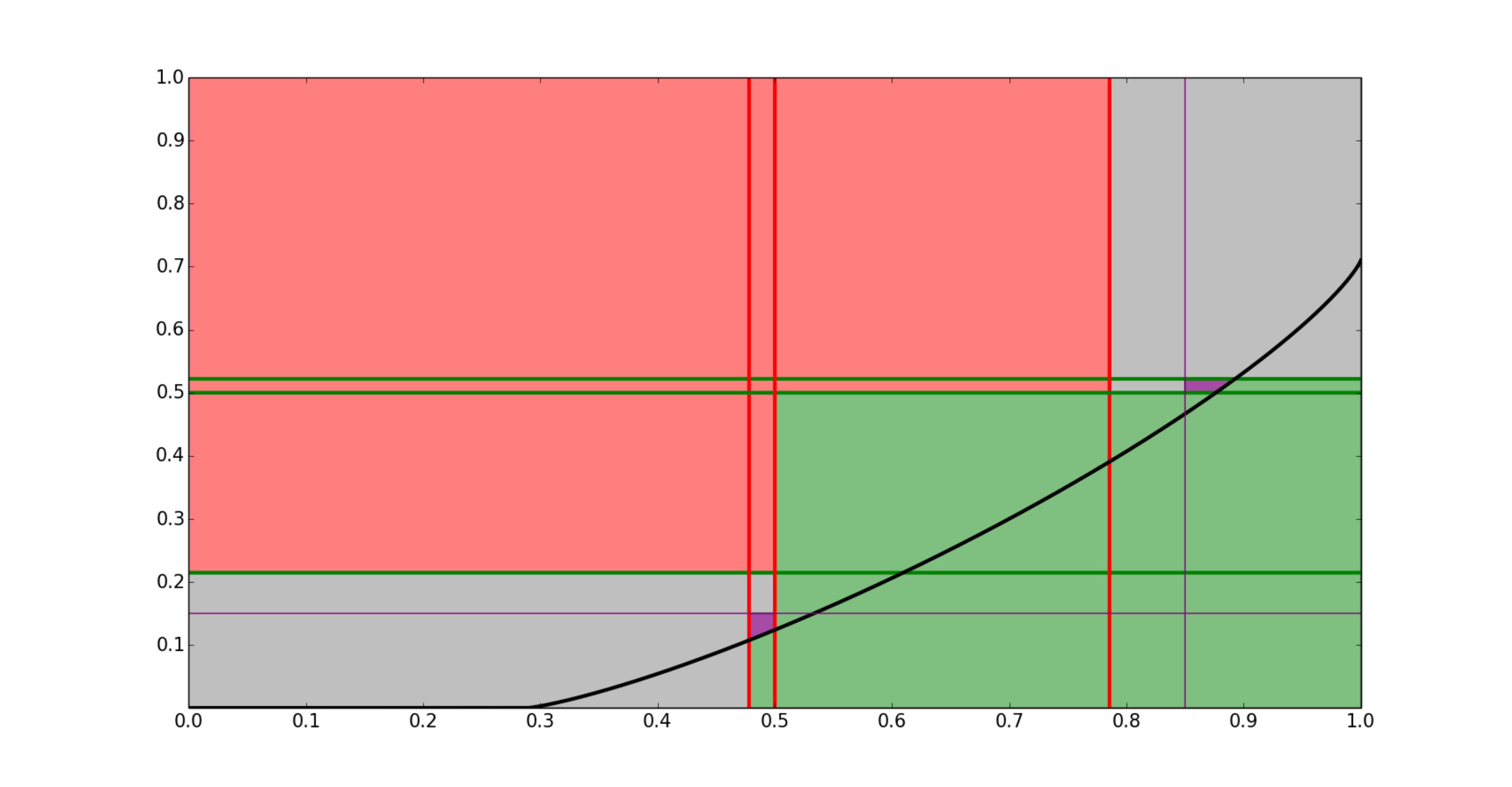
\caption{The landscape for $\rho=0.3$ under the selective dynamic}
\label{fig:map30}
\end{figure}

In each case the roles of red and green may be interchanged by swapping the relevant words, exchanging $\rho$ with $1-\rho$, and $\kappa^\rho_g$ with $\kappa^\rho_r$, and $\mu^\rho_g$ with $\mu^\rho_r$. Our first clutch of results (Theorem \ref{thmm:static} - \ref{thmm:straddle}) apply to all three dynamics, in the situation that at least one of $\tau_g, \tau_r < \frac{1}{2}$. 

\begin{thmm} \label{thmm:static}
Under all three dynamics, if $\tau_g<\kappa^\rho_g$ and $\tau_r<\kappa^\rho_r$ then the scenario will be static almost everywhere.
\end{thmm}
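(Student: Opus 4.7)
The plan is to show that under the stated hypotheses we are in the \emph{region of contentment}: most agents are initially happy with their randomly sampled neighbourhood, the few initially unhappy agents are widely separated, and no switch they trigger can propagate. This would give that only an $o(1)$ fraction of the ring ever changes colour, which is precisely the static-almost-everywhere conclusion.

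The first step is to bound the initial density of unhappy nodes. For a green node at $x$, the number of green nodes in $\N(x)$ is a sum of $2w+1$ i.i.d.\ Bernoulli($\rho$) variables, and initial unhappiness is the event that this sum falls below $\tau_g(2w+1)$. Using the characterization of $\kappa^\rho_g$ to be established in Section \ref{section:threshes} together with a Chernoff--Hoeffding-type concentration bound, the assumption $\tau_g<\kappa^\rho_g$ forces this probability to tend to zero, in fact exponentially fast in $w$. The same applies to red nodes via $\kappa^\rho_r$, so the expected fraction of initially unhappy nodes is $o(1)$.

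The second step is to upgrade this to a \emph{happiness margin}: with probability tending to one a typical green node $y$ satisfies $g(y)-\tau_g(2w+1)\geq \delta w$ for some fixed $\delta>0$, and likewise for red. Combined with the independence of the initial colouring across distances exceeding $2w$, this implies that, with high probability, the initially unhappy nodes are separated by distances $\gg w$, so each neighbourhood $\N(y)$ contains at most one of them. A single colour flip inside $\N(y)$ changes each colour count there by exactly one, which is absorbed by the margin $\delta w$, so $y$ remains happy. Consequently no new unhappy nodes are created once the initial unhappy ones act. Under the selective dynamic the process therefore terminates by Lemma \ref{lem:itends}, and the set of nodes that ever change colour is contained in the set of initially unhappy nodes, which has density $o(1)$. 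Under the incremental and synchronous dynamics, an unhappy-but-not-hopeful node might oscillate locally, but isolation confines any such oscillation to the sparse set of initially unhappy sites, of total density $o(1)$, which is again enough for staticity almost everywhere.

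The main obstacle will be the first step: tying the large-deviation bound precisely to the constant $\kappa^\rho_g$ as it is defined in Section \ref{section:threshes}. That definition is presumably tailored so that the exponential tail bound kicks in exactly when $\tau_g<\kappa^\rho_g$, and in particular it must force $\kappa^\rho_g\leq\rho$ so that typical green neighbourhoods have an excess of greens above $\tau_g(2w+1)$; verifying this properly amounts to unpacking Section \ref{section:threshes}. A secondary technical point is the rare event that two initially unhappy nodes fall within $w$ of one another, potentially requiring margin $2$ or more; but by independence this has conditional probability $o(1)$ per unhappy node and can be absorbed into the overall bound, allowing all three dynamics to be treated uniformly.
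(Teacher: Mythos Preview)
Your argument contains a genuine gap in the second step. The claim that ``no new unhappy nodes are created once the initial unhappy ones act'' is false, and the error lies in applying the typical-margin estimate to nodes that are not typical.

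Concretely: suppose $x$ is an initially unhappy red node, so $R_0(\N(x)) < \tau_r(2w+1)$. When $x$ flips to green, consider a red neighbour $x' = x+1$. The neighbourhoods $\N(x)$ and $\N(x')$ overlap in all but two positions, so $R_0(\N(x'))$ differs from $R_0(\N(x))$ by at most one; after the flip it drops by a further one. Hence $x'$ is very likely also unhappy now, and the process cascades. Your margin estimate $\delta w$ holds for a node sampled from the \emph{unconditional} distribution, but nodes adjacent to an unhappy one are conditioned on having nearly the same neighbourhood content, and so inherit nearly the same deficit rather than a $\delta w$ cushion. These cascades are precisely the green firewalls discussed in Section~\ref{subsection:outline}, and they can and do grow over distances far exceeding $w$.

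The paper's proof takes a different route. Rather than arguing that cascades are short, it shows they are \emph{blocked}: Lemma~\ref{lem:halt} proves that a node sandwiched between two stably green intervals, with no unhappy green node in between, can never turn red. Proposition~\ref{prop:kappasumup} establishes that the threshold $\kappa^\rho_g$ is exactly the point below which stably green intervals are exponentially more frequent (in $w$) than unhappy green nodes --- this is the actual meaning of $\kappa^\rho_g$, not the vanishing of $U_g$ itself (which happens for all $\tau_g<\rho$, and indeed $\kappa^\rho_g<\rho$). Lemma~\ref{lem:gen} then converts this ratio into the statement that a uniformly random $u_0$ is, with probability $>1-\eps$, flanked on both sides by stable green intervals before any unhappy green node appears. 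Repeating with colours swapped finishes the argument. The cascades may well consume many nodes near each initially unhappy site; the point is that a random $u_0$ is almost surely protected from them by the intervening stable intervals.
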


Scenarios where $\kappa^\rho_r < \tau_r < \frac{1}{2}$ and $\tau_g < \frac{1}{2}$ exhibit a more intricate dependency on $(\rho, \tau_g, \tau_r)$. To resolve matters here we require the following numerical condition (Definition \ref{defin:dom} below). A scenario $(\rho, \tau_g,\tau_r)$ where $\rho \in (0,1)$ and $\tau_g,\tau_r \in \left( 0,1 \right)$ and $\tau_g + \tau_r \neq 1$ is {\emph{red dominating}} if $$\rho \cdot \left( \tau_g^{\frac{\tau_g}{1-\tau_g - \tau_r}}\right) \left( 1 -\tau_g \right)^{\frac{1-\tau_g}{ 1- \tau_g - \tau_r}}   < (1-\rho) \cdot \left(\tau_r^{\frac{\tau_r}{1-\tau_g - \tau_r}} \right) \left(1-\tau_r\right)^{\frac{1-\tau_r}{1-\tau_g- \tau_r}} .$$
It is {\emph{green dominating}} if the reverse strict inequality holds.

\begin{thmm} \label{thmm:take1}
Under all three dynamics, if $\tau_g<\frac{1}{2}$ and $\kappa^\rho_r<\tau_r <\frac{1}{2}$ and $(\rho, \tau_g,\tau_r)$ is green dominating, then green will take over almost everywhere.
\end{thmm}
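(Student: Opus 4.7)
The plan is to identify ``seed'' configurations in the initial random colouring from which cascades of each colour emerge, and to show that under green dominance the green cascades outrun the red cascades in the long run. The key structural observation is that when $\tau_g, \tau_r < \frac{1}{2}$ we have $\tau_g < 1-\tau_r$ and $\tau_r < 1-\tau_g$, so any red (resp.\ green) node whose neighbourhood contains a proportion of greens (resp.\ reds) strictly exceeding $1-\tau_r$ (resp.\ $1-\tau_g$) is both unhappy in its current colour and, after flipping, happy in the opposite colour; every such node is therefore hopeful under any of the three dynamics. Cascades can thus be initiated wherever one colour locally exceeds the corresponding density threshold, and the assumption $\kappa^\rho_r < \tau_r$ ensures that red is in its active regime, so that a priori both colours compete.

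First, I would quantify the density of these seeds in the Bernoulli$(\rho)$ initial configuration. By a sharp form of Cram\'er's theorem, the probability that a window of length $\Theta(w)$ attains a prescribed empirical density decays exponentially in $w$ at a rate given by a binary Kullback--Leibler divergence from $\rho$. Optimising the choice of window length and local density subject to the twin requirements that (a) the window can ignite a cascade of its own colour and (b) the surrounding initial configuration does not itself block the cascade, should yield precisely the intensities appearing in the definition of green dominance. The factors $\tau_g^{\tau_g/(1-\tau_g-\tau_r)}(1-\tau_g)^{(1-\tau_g)/(1-\tau_g-\tau_r)}$ and its red analogue would emerge from this constrained optimisation, with the denominator $1-\tau_g-\tau_r$ reflecting the joint slack available to the two colours. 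Because green dominance is a strict inequality in the rate, the ratio of green to red seed intensities grows like $e^{\Omega(w)}$.

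Next I would show, by a deterministic monotonicity argument, that once a seed has grown into a sufficiently long monochromatic arc, its boundary advances one step at a time: the adjacent opposite-colour nodes have neighbourhoods progressively denser in the seed colour, so they successively become hopeful and flip. Each freshly flipped node is happy in its new colour and hence locked in under all three dynamics, so the arc grows monotonically until it meets another front. Combining this expansion with the near-Poisson point processes of seeds of each colour produces a Voronoi-type partition of the ring into basins, each eventually monochromatic in the colour of its generating seed. The exponentially large ratio of intensities then forces the expected share of the ring assigned to green basins to tend to $1$, whence a uniformly random node is green in the finished ring with probability $1-o(1)$.

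The principal obstacle will be controlling the behaviour at collisions between fronts of opposite colour: one must show that the interaction zone has width only $O(w)$, which is negligible against the basin scale $e^{\Omega(w)}$ set by the seed intensities, and that it cannot ``reflect'' to trigger a secondary cascade of its own. I expect a coupling with an idealised block-growth process, together with concentration estimates for the number of flips executed along each front, to close this gap. Uniformity across the three dynamics should then follow from the locked-in property of freshly flipped nodes; the synchronous case requires the additional check that simultaneous flips within a moving front do not transiently expose its trailing edge and seed a reverse avalanche, which is ruled out because the interior of an advancing cluster only becomes more, not less, homogeneous in the cluster's colour.
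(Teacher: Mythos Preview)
Your high-level picture---green ``seeds'' (unhappy red nodes) exponentially outnumber red seeds (unhappy green nodes), so green cascades win---is correct and is indeed the paper's strategy. But your proposal has two genuine gaps that the paper spends most of Section~4 closing.

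\textbf{The seed-to-firewall step is not justified.} You write that ``once a seed has grown into a sufficiently long monochromatic arc, its boundary advances one step at a time,'' and then proceed as if every seed yields such an arc. But this growth is exactly the hard part. A single unhappy red node $u$ flips green; its neighbour $u+1$ need not thereby become unhappy, since $G(\N(u+1))$ may still be well below $(1-\tau_r)(2w+1)$. There is no deterministic monotonicity here. The paper handles this with a chain of lemmas: first it shows the local configuration near a node with $\GD_{\theta^*}$ is \emph{smooth} (Corollary~4.11), then that the node \emph{green completes} before any distant red influence arrives (Lemma~4.13), and finally that with probability at least some $\delta>0$ independent of $w$ the node \emph{$\alpha$-sparks} into a full firewall (Lemmas~4.16--4.19, via a biased random walk argument on the running green count $M_s$). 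Because $\delta$ is only bounded away from $0$, not close to $1$, the paper must take $k_0$ many independent candidate seeds $l_1,\ldots,l_{k_0}$ on each side of $u_0$ to guarantee at least one fires. Your Voronoi picture implicitly assumes $\delta=1$.

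\textbf{Stable intervals, not opposing fronts, are the obstacles.} You model the competition as green and red fronts colliding. But a green firewall is actually halted by any \emph{stably red interval}---a stretch of $w+1$ nodes with $\geq \tau_r(2w+1)$ red nodes, none of which can ever become unhappy (Lemma~2.1)---and such intervals need contain no red cascade at all. The hypothesis $\tau_r>\kappa^\rho_r$ is used precisely to ensure $S_r \ll U_r$ (Proposition~2.2), so that stable red intervals are exponentially rarer than green seeds; you invoke this hypothesis only vaguely (``red is in its active regime'') without identifying what it controls. The paper's argument is therefore not a Voronoi partition but rather: around $u_0$ there is, with high probability, a long interval containing many $\GD_{\theta^*}$ nodes and \emph{no} unhappy green nodes \emph{and no} stable red intervals (Lemma~4.8), so the green firewalls that spark inside it can merge unimpeded over $u_0$.

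Finally, your account of where the domination rates come from (``constrained optimisation'' over window length and density, with $1-\tau_g-\tau_r$ as ``joint slack'') is not how they arise. Proposition~3.3 simply computes $U_g/U_r$ directly from the binomial tail via Stirling, and the exponent in Definition~3.1 drops out of that ratio; no optimisation is involved.
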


\begin{thmm} \label{thmm:static2}
Under all three dynamics, if $\tau_g<\kappa^\rho_g$ and $\kappa^\rho_r<\tau_r <\frac{1}{2}$, where the scenario is red dominating and $\tau_g> \frac{1}{2} \rho$, then the scenario is static almost everywhere.
\end{thmm}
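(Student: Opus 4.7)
The plan is to show that, although $\tau_r > \kappa^\rho_r$ permits a positive density of initially unhappy reds that will flip under the selective dynamic (each unhappy agent is hopeful, since $\tau_g + \tau_r < 1$ forces $1-\tau_r > \tau_g$), the resulting red-to-green cascade is spatially contained by dense red firewalls arising from the initial Bernoulli configuration. First I would combine $\tau_g < \kappa^\rho_g$ with the material of Section \ref{section:threshes} to conclude that the initial fraction of unhappy greens is asymptotically negligible, leaving the unhappy reds as the only non-trivial initial source of flipping. In this regime the selective and incremental dynamics admit the same set of permissible moves, so I treat them simultaneously and reduce the goal to showing that, around each initially unhappy red, the region eventually flipped has expected size $o(w)$.

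The core of the argument is a large-deviation comparison between two exponential rates in the initial Bernoulli sequence: the rate at which sufficiently red-dense intervals appear (which act as firewalls) and the rate at which sufficiently green-dense intervals appear (which could serve as nuclei for a sustained green cascade past a firewall). Both rates admit closed-form expressions involving the entropy-type quantities $\tau^\tau(1-\tau)^{1-\tau}$ raised to the power $1/(1-\tau_g-\tau_r)$, weighted by $\rho$ or $1-\rho$ as appropriate; the red-dominating hypothesis is precisely the strict inequality asserting that the red rate dominates. A first-moment computation then partitions the ring into firewall-bounded segments of length $O(w \log n)$ with high probability.

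The condition $\tau_g > \frac{1}{2}\rho$ plays a separate and more delicate role: it rules out a secondary propagation mechanism whereby the few initially unhappy greens could seed a red-wave cascade that bypasses the firewall structure. Loosely, any green-poor pocket capable of supporting such a cascade must deviate far enough from the expected density $\rho$ that its occurrence is exponentially rare under this condition. Combining this with the firewall decomposition, each firewall-bounded segment contains at most $O(w)$ nodes that ever flip, so a union bound over $O(n/w)$ segments yields an expected total of $o(n)$ flips, establishing static almost everywhere for the selective and incremental dynamics. For the synchronous dynamic one additionally needs termination, which I would obtain by exhibiting a monotone potential function that strictly decreases whenever any flip occurs, after which the same confinement argument applies.

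\emph{Main obstacle.} The hardest step is to verify that the firewall structure extracted from the initial Bernoulli sequence is genuinely invariant under the dynamics---that firewall nodes cannot become unhappy as the cascade progresses, regardless of the order in which hopeful flips occur. Doing this rigorously requires an inductive argument over flip history, combined with the strictness of the red-dominating inequality to absorb any small amplification of local deviations. A related challenge is formulating the notion of ``firewall'' so that its large-deviation rate matches the red-dominating formula exactly and its invariance simultaneously holds under all three dynamics; any mismatch will either weaken the conclusion or fail to cover the full range of $(\tau_g, \tau_r)$ in the theorem.
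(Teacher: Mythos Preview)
Your proposal inverts the central mechanism of the argument. Under red domination, Proposition~\ref{prop:outnumber} gives $U_r < \eta^w U_g$: initially unhappy \emph{green} nodes are exponentially more common than initially unhappy \emph{red} ones, not the other way round. Combined with $\tau_r > \kappa^\rho_r$ (so $S_r \ll U_r$) and $\tau_g < \kappa^\rho_g$ (so $U_g \ll S_g$), the ordering in the initial configuration is $S_r \ll U_r \ll U_g \ll S_g$. A randomly chosen green node is therefore protected by nearby stable green intervals exactly as in Theorem~\ref{thmm:static}; the hard case is a randomly chosen \emph{red} node $u_0$.

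Your plan to contain the red-to-green cascade by ``dense red firewalls arising from the initial Bernoulli configuration'' cannot work in this regime: the only intervals that halt a green firewall are stably red intervals, and $S_r \ll U_r$ says precisely that, looking outward from $u_0$, you will almost surely meet an unhappy red node (a seed for a green firewall) before you meet a stably red interval. The large-deviation comparison you describe does not match the red-domination inequality, which compares $U_r$ against $U_g$, not $S_r$ against $U_r$.

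The paper's actual route is the opposite of a ``secondary threat to be ruled out'': the many unhappy greens near $u_0$ are the \emph{rescue}. One argues (reusing the $l_i,r_i$ machinery of Section~\ref{section:longsect} with colours swapped) that these nodes are highly likely to green-complete and $\alpha$-spark, dynamically creating red firewalls on both sides of $u_0$ before any green firewall can arrive. This is exactly where the hypothesis $\tau_g > \tfrac{1}{2}\rho$ enters: it translates (after the colour swap) into $\theta^\ast < \tfrac{1}{2}(1+\rho)$, which is what makes $Z(\theta^\ast_\alpha,\rho) > 0$ in Lemmas~\ref{lem:firestarter2}--\ref{lem:sparkgrows} and hence makes the biased-random-walk sparking argument go through. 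It is not a side condition suppressing an unwanted cascade; it is the quantitative ingredient that guarantees the protective firewalls actually form.
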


\begin{questionn} \label{questionn:static2?}
What is the outcome, under any dynamic, if $\tau_g<\kappa^\rho_g$ and $\kappa^\rho_r<\tau_r <\frac{1}{2}$, the scenario is red dominating and $\tau_g \leq \frac{1}{2} \rho$?
\end{questionn}

We shall discuss this further at the end of Section \ref{section:longsect}, but remark that this problematic region only exists for a limited range of values of $\rho$, namely $\frac{1}{4} < \rho <0.3616$ approximately. (Of course exchanging the roles of red and green produces problematic scenarios in the range approximately $0.6384<\rho<\frac{3}{4}$.)

Notice that the following result has no dependency on $\rho$:

\begin{thmm} \label{thmm:straddle}
Under all three dynamics, if $\tau_g < \frac{1}{2} < \tau_r$ green will take over totally.
\end{thmm}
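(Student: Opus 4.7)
The plan is to exhibit, with high probability, a ``seed'' in the initial random ring---a block of $w+1$ consecutive greens---and then show deterministically that every such seed grows until the whole ring is monochromatically green. The intuition is that $\tau_g<\frac{1}{2}$ makes green so tolerant that any green-majority window keeps a green happy, while $\tau_r>\frac{1}{2}$ forces reds to demand a strict red-majority; these two inequalities together tip the boundary arithmetic of any sufficiently long green block decisively in green's favour, independent of $\rho$.

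\emph{Step 1 (seeds exist).} I would partition the ring into $\lfloor n/(w+1)\rfloor$ disjoint arcs of length $w+1$. Each arc is independently all-green with probability $\rho^{w+1}$. Under the paper's convention $n\gg w\gg 0$ (read as: $n$ may grow as fast as needed in $w$, in particular exponentially), the expected number of all-green arcs diverges, so a Chernoff/second-moment bound yields probability at least $1-\eps$ of at least one seed for any prescribed $\eps>0$.

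\emph{Step 2 (seeds persist and advance at their boundary).} A green inside a seed $[a,a+w]$ has at least $w+1$ greens among the $2w+1$ positions of its window, so its green-fraction is at least $(w+1)/(2w+1)>\frac{1}{2}>\tau_g$; hence it is happy and never switches under any of the three dynamics. The red immediately adjacent to the seed at $a+w+1$ has exactly $w$ greens and $w+1$ reds in its window, giving red-fraction $(w+1)/(2w+1)<\tau_r$ for all sufficiently large $w$, so it is unhappy. Were it to switch to green, its green-fraction would rise to $(w+1)/(2w+1)>\tau_g$ and it would be happy; hence it is hopeful in the selective dynamic, while in the incremental and synchronous dynamics the switch happens simply because it is unhappy. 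In every case the seed extends by one and the identical analysis applies at the new boundary, giving a monotone induction in which the seed never loses members.

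\emph{Step 3 (coverage, termination, and the main obstacle).} Because the seed always exposes an unhappy boundary red until the ring is entirely green, it must absorb the whole ring in finite time: in the synchronous dynamic a band of roughly $(2\tau_r-1)w$ boundary reds flips simultaneously at each step, so coverage takes $O(n/w)$ macro-steps; in the selective and incremental dynamics the immediate boundary red is selected from at most $n$ unhappy/hopeful nodes with probability at least $1/n$ per step, so coverage occurs in $O(n^2)$ expected steps. Once the ring is all-green every node is trivially happy, the process terminates, and takeover is total. The main obstacle will be the incremental dynamic when $\tau_g+\tau_r>1$: here a red at intermediate distance from the seed may switch to a genuinely unhappy green and oscillate indefinitely, and one must verify that such oscillations never reach inside the seed (clear because their positions lie outside any seed member's $w$-window, so they cannot change its green-count) and do not starve the immediate boundary red of its turn (clear because selections are uniform from at most $n$ unhappy nodes). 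A secondary check in the synchronous case is that the extremal newly-green node in a simultaneously-flipped band has exactly $w+1$ greens in its window, as a direct neighbourhood count shows, which is just enough to make it happy for $\tau_g<\frac{1}{2}$.
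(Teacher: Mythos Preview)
Your proposal is correct and follows essentially the same approach as the paper: exhibit a green firewall of length $\geq w+1$ in the initial configuration with high probability (the paper appeals to the weak law of large numbers, you use a partition-into-blocks argument), then verify deterministically that its members are happy, its adjacent reds are unhappy and hopeful, and hence the firewall grows monotonically until it covers the ring. One minor imprecision: the boundary red at $a+w+1$ has \emph{at least} $w$ greens in its window (not exactly $w$, since nodes beyond the seed may also be green), but this only strengthens your conclusion; your additional discussion of oscillations in the incremental dynamic is a detail the paper leaves implicit.
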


For the second batch of results (Theorems \ref{thmm:take2} - \ref{thmm:stag2}), we turn our attention to the case where both $\tau_g, \tau_r > \frac{1}{2}$. Here, the different dynamics diverge, and our focus will be on the selective case:

\begin{thmm} \label{thmm:take2}
Under the selective dynamic, if $\frac{1}{2} < \tau_g < \mu^\rho_g$ and $\frac{1}{2} < \tau_r$, and if $(\rho, \tau_g, \tau_r)$ is green dominating, then green will take over almost everywhere.
\end{thmm}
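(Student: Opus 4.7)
The hypothesis places us in the doubly-intolerant regime $\tau_g, \tau_r > \frac{1}{2}$, in which a node is unhappy precisely when the opposite colour forms a strict local majority in its neighbourhood. Following the firewall-and-expansion paradigm of \cite{BEL1}, my plan is to identify a class of initial \emph{green seeds}---intervals of length $O(w)$ whose interior nodes remain green throughout the selective dynamic---and to establish (i) that under the green dominating hypothesis these seeds are asymptotically far denser than the corresponding red seeds, and (ii) that each seed's expansion captures almost all of the ring lying between it and its nearest neighbour. The upper bound $\tau_g < \mu^\rho_g = 1 - \kappa^\rho_r$ is precisely what guarantees that green seeds exist with strictly positive exponential density and can actually expand; beyond $\mu^\rho_g$ they become either too rare or too fragile for the argument to go through.

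The first main step is a large-deviations calculation on the Bernoulli$(\rho)$ initial colouring. A green seed corresponds to an initial pattern satisfying window constraints: every length-$(2w+1)$ sub-window centred in the interior has green proportion at least $\tau_g$, while the flanking ``staging'' positions are configured so that the dynamics cannot disturb the seed's interior. Minimising the Kullback--Leibler cost over permissible seed shapes yields an exponential rate for the per-position probability of such a seed, and the resulting quantity is (up to normalisation) exactly $\rho \cdot \tau_g^{\tau_g/(1-\tau_g-\tau_r)}(1-\tau_g)^{(1-\tau_g)/(1-\tau_g-\tau_r)}$, i.e.\ the expression appearing on the green side of the green dominating inequality; the symmetric computation for red seeds yields the other side. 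Thus the green dominating condition translates \emph{exactly} into the statement that green seeds are exponentially more numerous than red seeds as $w \to \infty$.

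The second main step is the dynamical analysis. I would partition the ring into the gaps between consecutive seeds of either colour and argue by induction on time that (a) no node in a seed's interior ever flips, and (b) the green region emanating from a green seed grows monotonically outward via flips of hopeful red nodes on its frontier, until its front meets that of a neighbouring seed. Gaps between two green seeds are then captured entirely by green, and gaps between two red seeds entirely by red. The main obstacle will be the \emph{mixed-gap} analysis: one must exclude the possibility that new stable red regions form spontaneously in the interior of a gap during the evolution (where they would behave as secondary red seeds), and show that the collision point between the advancing green and red fronts in a mixed gap can be controlled---ideally by a coupling with a simpler front-advance process. Because under green dominance the expected share of the ring lying in any non-green-to-green gap is exponentially small in $w$, a union bound over the $n$ positions, combined with the regime $n \gg w \gg 0$, then yields that the final red fraction tends to zero in probability, giving green takeover almost everywhere.
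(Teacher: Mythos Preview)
Your proposal has a fundamental misidentification of the key objects, which causes the large-deviations step and the dynamical step both to point in the wrong direction.

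In the regime $\tau_g, \tau_r > \frac{1}{2}$ there are \emph{no} stably green intervals in the initial configuration: an interval of length $w+1$ cannot contain $\tau_g(2w+1) > w+1$ green nodes. So ``green seeds'' in your sense---intervals whose interior is green and remains green throughout---simply do not exist initially; they must be \emph{created} by the dynamics. The paper's engine is not a pre-existing green block but a \emph{hopeful red node}: a red $x$ with $G(\N(x)) \geq \tau_g(2w+1)-1$, which under the selective dynamic may flip to green and then spark a green firewall. The relevant comparison is therefore between the densities $F_r$ and $F_g$ of hopeful red and hopeful green nodes, and Proposition~\ref{prop:outnumber2} shows that green domination is exactly the statement $F_g < \eta^w F_r$. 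Your large-deviations claim that the domination expression equals a seed probability is thus off target: it is the exponential rate for hopeful-node probabilities, not for any kind of stable-interval probability.

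Correspondingly, the role of $\tau_g < \mu^\rho_g$ is not about seed fragility but about obstructions to firewall growth. When $\tau_g,\tau_r>\frac{1}{2}$ the only thing that can halt a growing green firewall is a \emph{green intractable} interval (length $w+1$ with fewer than $\tau_g(2w+1)-(w+1)$ green nodes, so no interior red node can ever become hopeful). Proposition~\ref{prop:musumup} gives that $\tau_g<\mu^\rho_g$ is precisely the condition $T_g<\zeta^w F_r$, i.e.\ intractable intervals are exponentially rarer than hopeful red nodes. The paper then runs the machinery of Section~\ref{section:longsect} with $\theta^*\approx\tau_g$: around a uniformly random $u_0$ one finds nodes $l_i,r_i$ satisfying $\GD_{\theta^*}$, shows they are smooth, green-complete, and each has a fixed chance $\delta'>0$ of $\alpha$-sparking into a firewall (Lemmas~\ref{lem:thelicomplete}, \ref{lem:firestarter}, \ref{lem:sparkgrows}); meanwhile no hopeful green nodes or intractable intervals lie between them to block the merger. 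Your gap-partition and mixed-gap collision analysis is unnecessary and would not obviously work as stated, since the ``fronts'' you envision are not present initially.
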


\begin{thmm} \label{thmm:stag1}
Under the selective dynamic, if $\frac{1}{2} < \tau_g < \mu^\rho_g$ and $\mu^\rho_r < \tau_r$, where the scenario is red dominating and additionally $\tau_r < 1- \frac{1}{2}\rho$, then the scenario is static almost everywhere.
\end{thmm}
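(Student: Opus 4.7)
The plan is to show that under the selective dynamic with these parameters, only an asymptotically negligible fraction of nodes ever change colour, so that the finished ring is essentially the initial ring. The argument follows the general strategy of Theorems \ref{thmm:static} and \ref{thmm:static2}, adapted to the regime $\tau_g, \tau_r > \frac{1}{2}$.

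First I would show that the initial density of hopeful nodes is exponentially small in $w$. A green node at $x$ is hopeful iff its neighbourhood $\N(x)$ contains at least $\tau_r(2w+1)$ reds; symmetrically for red nodes. In the relevant parameter regime ($\frac{1}{2}<\tau_g<\mu^\rho_g$ and $\mu^\rho_r<\tau_r<1-\frac{1}{2}\rho$), both threshold levels lie strictly above the natural Bernoulli means $\rho$ and $1-\rho$ respectively, so standard Chernoff bounds give probabilities at most $\expp(-cw)$ for some $c = c(\rho,\tau_g,\tau_r)>0$. The expected number of initial hopeful nodes is therefore $O(n\cdot\expp(-cw)) = o(n)$, and with high probability these nodes cluster into sparse, well-separated \emph{staging grounds} separated by long stretches of $\rho$-typical neighbourhood.

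Second and centrally, I would bound the damage caused at each staging ground. To rule out green takeover, I would compare the competing rates at which the selective dynamic can extend the cascade by converting greens to reds (requiring a $\tau_r$-dense red window around a green) versus reds to greens (requiring a $\tau_g$-dense green window around a red). The large-deviation exponents attached to these two events are exactly the quantities appearing in Definition \ref{defin:dom}; under the red-dominating inequality the ``green $\to$ red'' conversion rate dominates the ``red $\to$ green'' rate exponentially in $w$, so by a coupling argument any attempted green advance within a cascade is outcompeted by the corresponding red activity and cannot spread. To rule out red takeover, the condition $\tau_r < 1 - \frac{1}{2}\rho$ plays the decisive role: for the red front to advance past a staging ground, a green node must be converted, requiring its neighbourhood to already contain $\tau_r(2w+1)$ reds, and a further Chernoff computation (parallel to the analysis of $\mu^\rho_r$ in Section \ref{section:themus}) shows that beyond the staging ground this happens with exponentially small probability per step, forcing geometric decay and confining the cascade to $O(w)$ converted nodes.

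The hard part is coupling these two directions simultaneously: within a single cascade, red and green flips can interleave, and a red advance might shift local densities in a way that opens a new green opportunity farther away, or vice versa. I would handle this by dominating the full cascade tree by a branching process whose offspring distribution combines both types of flip attempts, with subcriticality supplied by the red-dominating inequality on the green branch and by $\tau_r < 1 - \frac{1}{2}\rho$ on the red branch. Putting everything together, the total number of ever-flipped nodes is at most (expected staging-ground count) $\times$ (per-ground cascade size) $= O(nw\expp(-cw)) = o(n)$ for $n \gg w \gg 0$, so a uniformly random node switches colour with vanishing probability, which is exactly static almost everywhere.
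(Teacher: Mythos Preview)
Your branching-process approach has a genuine gap. The central difficulty is that in this regime firewalls, once ignited, are \emph{self-sustaining}: a green firewall of length $\geq w+1$ makes the red node at its boundary hopeful regardless of the surrounding initial configuration, so the firewall advances deterministically until it hits a green intractable interval. Since $\tau_g < \mu^\rho_g$, green intractable intervals are exponentially \emph{rarer} than initially hopeful red nodes (Proposition \ref{prop:musumup}), so the distance a green firewall travels from a single hopeful red node is exponentially large in $w$, not $O(w)$. Your subcritical-branching picture, in which each staging ground converts only $O(w)$ nodes, therefore fails for the green cascades. Comparing initial large-deviation rates via red domination does not help, because the firewall manufactures its own hopeful boundary nodes and no longer depends on the initial density of hopeful reds. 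A crude count confirms the problem: the expected total length of green firewalls is of order $n F_r / T_g$, and $F_r/T_g \to \infty$ here, so one cannot conclude $o(n)$ damage from rarity of seeds alone.

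You have also misidentified the role of the hypothesis $\tau_r < 1-\tfrac{1}{2}\rho$. It is not a Chernoff ingredient limiting red spread. In the paper's argument (carried out with colours swapped, so it reads $\tau_g < \tfrac{1}{2}(1+\rho)$) its purpose is to guarantee $\theta^* < \tfrac{1}{2}(1+\rho)$ and hence $Z(\theta^*,\rho)>0$, so that the biased-random-walk Lemmas \ref{lem:firestarter} and \ref{lem:sparkgrows} apply. The actual mechanism is this: by red domination the nearest hopeful nodes to a random $u_0$ on each side are hopeful \emph{green} nodes (the $l_i,r_i$ of Lemma \ref{lem:bigliagain}), with no hopeful red nodes in between. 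Thanks to $\tau_r < 1-\tfrac{1}{2}\rho$, each $l_i,r_i$ has a positive, $w$-independent chance of sparking and growing into a \emph{red} firewall. These freshly created red firewalls are themselves green intractable, and they sit between $u_0$ and any hopeful red node, so they intercept any green firewall before it can reach $u_0$; meanwhile the abundant pre-existing red intractable intervals ($\tau_r > \mu^\rho_r$) stop the red firewalls short of $u_0$. This dynamic interaction --- one colour's firewalls creating the barriers that halt the other colour's --- is the missing idea, and it cannot be recovered from a static rate comparison or an offspring-distribution bound.
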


\begin{questionn} \label{questionn:static3?} 
What is the outcome, under the selective dynamic, of scenarios where $\frac{1}{2} < \tau_g < \mu^\rho_g$ and $\mu^\rho_r < \tau_r$, the scenario is red dominating and $\tau_r \geq 1- \frac{1}{2}\rho$?
\end{questionn}

The range of $\rho$ for which this mysterious region exists is the same as that for Question \ref{questionn:static2?} above.

Our analysis of the selective dynamic culminates in a region of frustration as discussed earlier:

\begin{thmm} \label{thmm:stag2}
Under the selective dynamic, if $\mu^\rho_g < \tau_g$ and $\mu^\rho_r < \tau_r$, then the scenario is static almost everywhere.
\end{thmm}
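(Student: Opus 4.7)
The plan is to reduce Theorem \ref{thmm:stag2} to Theorem \ref{thmm:static} by exploiting a duality between the frustrated regime $\tau_g > \mu^\rho_g$, $\tau_r > \mu^\rho_r$ and the contented regime of the selective dynamic.

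First I would simplify the hopeful condition in the frustrated regime. Since $\mu^\rho_g = 1 - \kappa^\rho_r \geq \frac{1}{2}$ and $\mu^\rho_r = 1 - \kappa^\rho_g \geq \frac{1}{2}$, the hypotheses yield $\tau_g + \tau_r > 1$. For a green node $x$ with green count $k$ in $\N(x)$, the two requirements ``unhappy'' ($k < \tau_g(2w+1)$) and ``flip makes happy'' ($k \leq (2w+1)(1-\tau_r)+1$) become compatible only when the second (stricter) inequality holds, since $\tau_g + \tau_r > 1$ forces $(2w+1)(1-\tau_r)+1 < \tau_g(2w+1)$ for all $w \gg 0$. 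So hopefulness for green reduces to $k \leq (2w+1)(1-\tau_r)+1$, and symmetrically a red is hopeful iff its red count is at most $(2w+1)(1-\tau_g)+1$.

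Next I would introduce the dual scenario $(\rho, \tilde\tau_g, \tilde\tau_r)$ with $\tilde\tau_g := 1 - \tau_r + 2/(2w+1)$ and $\tilde\tau_r := 1 - \tau_g + 2/(2w+1)$. The identities $\mu^\rho_g = 1 - \kappa^\rho_r$ and $\mu^\rho_r = 1 - \kappa^\rho_g$ translate the hypotheses into $1-\tau_r < \kappa^\rho_g$ and $1-\tau_g < \kappa^\rho_r$, so for all $w \gg 0$ we have $\tilde\tau_g < \kappa^\rho_g$ and $\tilde\tau_r < \kappa^\rho_r$, placing the dual in the hypothesis of Theorem \ref{thmm:static}. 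Moreover $\tilde\tau_g + \tilde\tau_r < 1$, so in the dual ``unhappy'' coincides with ``hopeful''; the $2/(2w+1)$ correction is engineered so that this set matches the frustrated hopeful set on every configuration. A direct coupling then finishes the argument: both scenarios share the same initial iid distribution with green density $\rho$, and at every intermediate configuration the hopeful sets coincide, so coupling the uniform random choices of which hopeful node to flip forces the trajectories to be identical. The probabilities that a fixed node changes colour at any stage agree, and applying Theorem \ref{thmm:static} to the dual supplies the $\eps$-bound required by ``static almost everywhere''.

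The main obstacle is handling the $O(1/w)$ discrepancy between the integer hopeful thresholds of the two scenarios, which I address above by letting the dual tolerances depend weakly on $w$. One must then confirm that the conclusion of Theorem \ref{thmm:static} is robust under this mild perturbation of its hypotheses. Since that theorem rests on Chernoff-type tail bounds on the rarity of initial hopeful nodes and on local cascades, and these bounds are insensitive to $O(1/w)$ shifts in the defining thresholds (shifting them only by subexponential factors), this robustness should follow without difficulty, though it must be verified alongside the proof of Theorem \ref{thmm:static} itself.
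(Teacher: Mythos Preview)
Your duality argument is correct in spirit and would work, but it takes a more roundabout route than the paper. The paper's proof is a direct two-sentence application of Lemma~\ref{lem:gen}: take $P(u)$ to be ``$u$ lies in a green (respectively red) intractable interval'' and $Q(u)$ to be ``$u$ is red (respectively green) and hopeful''; Proposition~\ref{prop:musumup} supplies the required exponential gap $F_r \ll T_g$ and $F_g \ll T_r$, and then an intractable-interval analogue of Lemma~\ref{lem:halt} (easy, since red nodes inside a green intractable interval can never become hopeful) finishes the job exactly as in the proof of Theorem~\ref{thmm:static}.

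Your approach instead packages the same duality as an explicit coupling of the selective Markov chain in the frustrated regime with the selective chain in a contented dual regime $(\rho,\tilde\tau_g,\tilde\tau_r)$ with $\tilde\tau_g \approx 1-\tau_r$, $\tilde\tau_r \approx 1-\tau_g$, and then invokes Theorem~\ref{thmm:static} as a black box. This is conceptually attractive, since it makes precise the structural correspondence (stable $\leftrightarrow$ intractable, unhappy $\leftrightarrow$ hopeful, $\kappa \leftrightarrow 1-\mu$) that the paper only exploits implicitly. The cost is exactly the technicality you flag: to get the hopeful sets to match \emph{on every configuration} you need $w$-dependent dual tolerances, and then you must reopen the proof of Theorem~\ref{thmm:static} to check it tolerates this. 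That check is routine (the estimate~\eqref{equation:balance} is exponential in $w$ with base bounded strictly away from~$1$, so an $O(1/w)$ shift in $\tau_g$ is harmless), but it is not less work than the paper's direct argument, which simply reruns the Theorem~\ref{thmm:static} proof with the frustrated-regime objects substituted in and Proposition~\ref{prop:musumup} replacing Proposition~\ref{prop:kappasumup}.
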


For the incremental and synchronous dynamics, we shall leave the case $\tau_g, \tau_r > \frac{1}{2}$ largely open. However, we make the following conjecture:

\begin{conj} \label{conj:abovehalf}
Under the incremental and synchronous dynamics, suppose that $\frac{1}{2} < \tau_g < \tau_r$. Then green will take over totally.

If $\frac{1}{2} < \tau_g = \tau_r$ then for any fixed $\rho$ the probability of both red and green total takeover tends to $\frac{1}{2}$ as $w,n \to \infty$.
\end{conj}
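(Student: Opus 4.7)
The overall plan is to reduce both parts of the conjecture to an analysis of the migration of interfaces between long monochromatic blocks, exploiting the asymmetry between the two tolerances. At a boundary between a long green block on the left and a long red block on the right, the unhappy reds extend into the red region a distance $\approx (2\tau_r-1)w$, while the unhappy greens extend only $\approx (2\tau_g-1)w$ into the green region; since $\tau_g<\tau_r$, there is a strict excess of unhappy red over unhappy green near every such interface. Complementing this, when $\tau_g,\tau_r>\frac{1}{2}$ any non-monochromatic ring configuration contains unhappy boundary nodes, so the only stable configurations on the ring are all-green and all-red.

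For Part 1 ($\tau_g<\tau_r$) I would proceed in three steps. \emph{Step 1 (local interface analysis).} For a single idealised green--red interface, a direct calculation in the synchronous case shows that after an initial transient, in which a thin oscillating layer of freshly-flipped nodes is created, the boundary settles and advances into the red region at a positive rate proportional to $\tau_r-\tau_g$. In the incremental case the same imbalance of unhappy counts yields a biased random walk for the interface location with drift of order $\tau_r-\tau_g$; an Azuma--Hoeffding estimate then gives that each interface moves consistently into the red region with probability $1-o(1)$ as $w\to\infty$. \emph{Step 2 (global reduction).} For the $\rho$-Bernoulli initial condition, concentration arguments analogous to those in Section \ref{section:threshes} show that, after a short burn-in, with high probability the ring decomposes into a finite number of monochromatic blocks separated by interfaces of the type analysed in Step 1. \emph{Step 3 (termination and takeover).} Combining the first two steps, every red block is eroded from both sides at a positive rate and so vanishes in finite time with probability $1-o(1)$; since the only stable configurations are monochromatic and any monochromatic ring is stable, the ring terminates, and by the direction of the drift it does so all-green with probability $1-o(1)$.

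For Part 2, where $\tau_g=\tau_r$, the dynamics is symmetric under the simultaneous swap of colour labels together with $\rho\leftrightarrow 1-\rho$; writing $p(\rho)$ for the probability of green total takeover, this symmetry alone delivers only $p(\rho)+p(1-\rho)\to 1$, not the claimed $p(\rho)\to\frac{1}{2}$ for each $\rho$. A rigorous proof of the stronger statement requires showing that the initial density bias is asymptotically washed out. A plausible route is the following: once the interface drift vanishes, the interface between any two adjacent monochromatic blocks should perform an essentially unbiased random walk whose $\Theta(\sqrt{T})$ fluctuations, over the long time-scale needed to reach termination, swamp the initial asymmetry in interface positions coming from $\rho$; the terminal colour would then be determined by these symmetric fluctuations rather than by $\rho$. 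Making this precise requires a coalescing random-walks analysis of the system of interfaces around the whole ring.

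The principal obstacle is Step 1 under the synchronous dynamic, where a clean interface is immediately disrupted by an oscillating layer of newly-flipped cells; proving that these oscillations relax into a cleanly-advancing front rather than spawning long-lived periodic structures (which could even defeat termination) requires a careful multi-step local analysis. A secondary obstacle is the \emph{total} (as opposed to merely almost-everywhere) takeover claim under the incremental dynamic: one must rule out the survival of a small but long-lived red island because the random scheduler rarely selects any of its few remaining nodes, which should follow from the fact that every node in such an island is unhappy and hence has selection probability $\Omega(1/n)$ at each step, though controlling the argument uniformly as the island shrinks is delicate. Finally, the $\rho$-independence in Part 2 is the deepest difficulty and likely requires a genuinely new idea, in particular to rule out the possibility that at extreme values of $\rho$ the initial bias has effectively decided the outcome before any diffusive behaviour takes hold.
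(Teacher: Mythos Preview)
The statement you are attempting to prove is Conjecture~\ref{conj:abovehalf}, which the paper explicitly leaves open. There is no proof in the paper to compare against; what the paper offers is a partial result for the synchronous dynamic under the additional restriction $\tau_g<\tfrac{2}{3}$ (Proposition~\ref{prop:happysim}), together with a stochastic-stability result for the perturbed model (Theorem~\ref{thmm:stochastic}) and some heuristic discussion. So your proposal should be read as a strategy for an open problem, not as an alternative to an existing argument.

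Your overall architecture---reduce to interfaces between long monochromatic blocks and show each interface drifts into the red---matches the paper's intuition, and your Step~2/Step~3 structure is reasonable. The genuine gap is in Step~1, and it is precisely where the paper's own partial progress stops. Your claim that the interface performs a biased random walk with drift of order $\tau_r-\tau_g$ is too optimistic once $\tau_g$ is large. The paper's discussion in Section~\ref{section:bigtau} tracks not the raw green/red boundary but the rightmost \emph{happy} green node $b$ of a firewall, and observes that the interval $I=[b+1,b+w]$ necessarily contains exactly $\lceil\tau_g(2w+1)\rceil-(w+1)$ unhappy green nodes; when $\tau_g>\tfrac{3}{4}$ this already exceeds $\tfrac{w}{2}$, so \emph{irrespective of $\tau_r$} the happy boundary is more likely to retreat than advance at each incremental step. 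This is exactly why the paper conjectures superpolynomial run-time in that regime and invokes the Ehrenfest-urn analogy: the eventual green takeover, if it happens, is expected to occur on an exponential time-scale via rare fluctuations, not via a positive-speed front. Your Azuma--Hoeffding argument therefore cannot go through as stated for $\tau_g>\tfrac{3}{4}$ (incremental) or $\tau_g>\tfrac{2}{3}$ (synchronous), and a correct proof would need a fundamentally different mechanism in those ranges---closer to a large-deviations or absorbing-Markov-chain analysis than to a drift estimate.

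On Part~2 you are right that colour symmetry alone yields only $p(\rho)+p(1-\rho)\to 1$, and your identification of this as the deepest difficulty is accurate; the paper offers no route here beyond the stochastic-stability statement, which is symmetric and hence uninformative about the $\rho$-dependence.
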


Some intuition and partial results towards this conjecture are established in section \ref{section:bigtau}, along with a related discussion of the process' run-time.\\

\subsection{Interpreting the thresholds} \label{subsection:outline}

Before we prove Theorems \ref{thmm:static} - \ref{thmm:stag2} rigorously, let us outline the general intuition. Although there will be several technicalities to overcome, the overall strategy is not too complicated and, in the case where $\tau_g<\frac{1}{2}$, amounts to comparing the relative probabilities of unhappy nodes of each colour along with those of \emph{stable intervals} of each colour, in the initial configuration. 

The significance of initially unhappy red nodes is that they are likely to spark the growth of \emph{green firewalls}, which is to say runs of $\geq w+1$ successive green nodes. When $\tau_g<\frac{1}{2}$, such a firewall is guaranteed to grow until it hits a \emph{stably red interval}, meaning an interval of length $w+1$ containing enough red nodes (specifically $\geq \tau_r(2w+1)$ many) to ensure that all remain perpetually happy. It is not difficult to see that stably red intervals stop the growth of green firewalls, and that they are the only things to do so.

Given this picture, it is natural that the relative frequencies of stable intervals and unhappy nodes in the initial configuration should be important, and indeed we shall establish that such considerations are decisive. The various thresholds we identify within the region $\tau_g <\frac{1}{2}$ can be understood as follows:

\begin{itemize}[$\bullet$]
\item For $\tau_r>\frac{1}{2}$, stably red intervals cannot occur for all large enough $w$. Thus if $\tau_g<\frac{1}{2}< \tau_r$ as in Theorem \ref{thmm:straddle}, stable green intervals will likely exist in large enough rings (and will serve to prevent total red takeover), as will unhappy nodes of both colours, while stable red intervals will not, suggesting that, eventually, total green takeover cannot be resisted.

\item $\tau_g = \kappa^\rho_g$ is the point below which stably green intervals become more likely than unhappy green nodes. Thus if $\tau_g < \kappa^\rho_g$, stably green intervals are more numerous than unhappy green nodes, making red takeover unlikely.

\item Green domination will correspond to unhappy red nodes being more common than unhappy green nodes. Hence, if $\tau_g<\frac{1}{2}$ and $\kappa^\rho_r<\tau_r <\frac{1}{2}$ hold alongside green domination, as in Theorem \ref{thmm:take1}, there will be many more unhappy red nodes than green. Since stably red intervals are also infrequent, it follows that many more nodes will be consumed by green firewalls than by red.
\end{itemize}

When $\tau_g, \tau_r >\frac{1}{2}$, under the selective dynamic, similar considerations apply, with a couple of changes: in the place of unhappy nodes we consider \emph{hopeful} nodes. (Recall these are unhappy nodes for whom changing colour would produce happiness. This is only automatic, for large enough $w$, when $\tau_g + \tau_r < 1$.) In the place of stability we consider \emph{intractability}, which similarly obstructs the growth of firewalls (but cannot occur when $\tau_g, \tau_r <\frac{1}{2}$).

An interval $J$ of length $w+1$ is \emph{green intractable} if it contains so few green nodes (specifically $<\tau_g(2w+1) - (w+1)$ many), that no red node inside $J$ can ever become hopeful, no matter what occurs outside $J$. Thus no red node within $J$ will ever change colour. Such an interval is therefore, in this setting, the only thing which can halt the growth of a green firewall. We may now interpret the remaining thresholds:

\begin{itemize}[$\bullet$]
\item If $\tau_g > \mu^\rho_g$, then green intractable intervals are more likely than hopeful red nodes, making green takeover improbable.

\item Green domination has an alternative characterisation when $\tau_g, \tau_r >\frac{1}{2}$, as saying that hopeful red nodes are more likely than hopeful green nodes. If this holds, along with the assumptions that $\frac{1}{2} < \tau_g < \mu^\rho_g$ and $\frac{1}{2} < \tau_r$ as in Theorem \ref{thmm:take2}, then more green firewalls will start than red ones, and since there are few green intractable intervals to impede them, we may expect many more nodes to end up green than red.
\end{itemize}

\subsection{Arguing that the ring finishes} As a final step before we launch into an analysis of segregation patterns, we address the question of whether the process is guaranteed to finish. For the selective dynamic, the following result, whose proof is included in Appendix \ref{section:appA}, is sufficient for our purposes:

\begin{restatable}{lem}{itends}
\label{lem:itends}
For any scenario $(\rho,\tau_g,\tau_r)$ and for all large enough $w$, the selective dynamic guarantees that the process will finish.
\end{restatable}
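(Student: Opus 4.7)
The plan is to exhibit an integer-valued potential $\Phi$ on ring configurations that strictly increases at every step of the selective dynamic. Since there are only $2^n$ configurations in total, strict monotonicity forces termination in a bounded number of steps.

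I propose $\Phi := a \cdot G + S$, where $G$ is the total number of green nodes, $S := \sum_x \#\{y \in \N(x) : c(y) = c(x)\}$ is the total same-colour count summed over all neighbourhoods (with $c(x)$ denoting the colour of $x$), and $a$ is a constant depending only on $\tau_g, \tau_r, w$. A short counting argument tracking how the ordered pairs $(x,y)$ and $(y,x)$ with $y \in \N(x)\setminus\{x\}$ switch between same- and different-colour status shows that when a hopeful green node with $g$ greens and $r = 2w+1-g$ reds in its neighbourhood flips to red, $S$ changes by $4(w+1-g)$, while a hopeful red flip changes $S$ by $4(w+1-r)$. Combined with $\Delta G = \mp 1$, this gives $\Delta \Phi = -a + 4(w+1-g)$ for a green-to-red flip and $\Delta\Phi = a + 4(w+1-r)$ for a red-to-green flip.

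The hopefulness conditions force $g < \tau_g(2w+1)$ and $r \geq \tau_r(2w+1)-1$ for a hopeful green flip, and symmetric bounds for a hopeful red flip, yielding uniform upper bounds $g \leq g_{\max}$ and $r \leq r_{\max}$ at the moment of any flip. Positivity of $\Delta \Phi$ in both cases then reduces to the existence of $a$ with $4(r_{\max}-w-1) < a < 4(w+1-g_{\max})$, which requires $g_{\max} + r_{\max} < 2w+2$. A short case analysis verifies this for all large enough $w$: when $\tau_g + \tau_r \leq 1$ the strict unhappiness bounds alone give $g_{\max} + r_{\max} \leq (\tau_g + \tau_r)(2w+1) \leq 2w+1$; when $\tau_g + \tau_r > 1$ the hopefulness bounds give $g_{\max} + r_{\max} \leq (2-\tau_g-\tau_r)(2w+1) + 2$, which is $< 2w+2$ once $w$ exceeds a threshold depending on $\tau_g + \tau_r - 1$.

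Setting $a := 2(r_{\max} - g_{\max})$, direct substitution yields $\Delta \Phi \geq 2(2w+2 - g_{\max} - r_{\max}) \geq 2$ at every selective flip. Since $\Phi$ is bounded above by a constant depending only on $n$ and $w$, the process must terminate. The main subtlety is in the case analysis above: the inequality $g_{\max} + r_{\max} < 2w+2$ becomes tight near the line $\tau_g + \tau_r = 1$, where one must carefully combine both upper bounds and exploit the integrality of $g$ and $r$. Away from this line the large-$w$ hypothesis provides ample slack, and the remainder of the argument is routine.
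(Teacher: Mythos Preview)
Your proof is correct and follows essentially the same approach as the paper: both construct a potential combining same-colour neighbour counts with an asymmetry parameter, show it strictly increases at every selective step, and split into cases according to whether $\tau_g+\tau_r\le 1$ or $>1$. The only difference is cosmetic: the paper weights green nodes multiplicatively via a factor $\chi$ in its harmony index $\sum_x A_t(x)L_t(x)$, whereas you add the linear correction $a\cdot G$; the resulting inequalities and the large-$w$ case analysis are parallel.
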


The observant reader will notice from the proof that the requirement that $w$ be large is not necessary in scenarios where $\tau_g, \tau_r < \frac{1}{2}$. Indeed, we expect that it could be dropped in all cases, although this does not follow from the current argument.

Of course, this implies that a ring under an incremental dynamic and with $\tau_g, \tau_r < \frac{1}{2}$ will also finish. We shall establish certain other cases as consequences of the results in sections \ref{section:straddle} and \ref{section:bigtau}, however we do not have complete answers for the incremental and synchronous dynamics when $\tau_g, \tau_r > \frac{1}{2}$. We expect (indeed it is implicit in Conjecture \ref{conj:abovehalf}) that for any scenario, for any $\eps>0$, and for all large enough $n \gg w \gg 0$, the probability that the ring will finish eventually exceeds $1-\eps$.

\section{Stability and Contentment: below the threshold \texorpdfstring{$\kappa^\rho_r$}{kr}} \label{section:threshes}

We begin with some notation. Given a node $a$, we shall write $\N(a):=[a-w,a+w]$ for $a$'s neighbourhood. Given some collection $A$ of nodes and a time $t$, we write $G_t(A):=|\{ x \in A : x \textrm{ is green at time } t \}|$. We similarly define $R_t(A)$, $U_t(A)$, $F_t(A)$, $\UG_t(A)$, $\UR_t(A)$, $\HG_t(A)$, $\HR_t(A)$, $\FG_t(A)$ and $\FR_t(A)$  to be the number of red, unhappy, hopeful, unhappy green, unhappy red, happy green, happy red nodes, hopeful green, and hopeful red nodes in $A$ at time $t$ respectively, and will omit $t$ when its meaning is understood from context. Thus a green node $a$ is happy if $G(\N(a)) \geq \tau_g (2w+1)$ and an unhappy red node $b$ is hopeful if $G(\N(b)) \geq \tau_g(2w+1)-1$. We will also apply this in the case that $A=\{a\}$ is a singleton. Abusing notation slightly, $G(a)$ can be thought of as the green characteristic function of $a$, taking values $0$ or $1$.

Similarly, an interval $[a,a+w]$ of length $w+1$ is \emph{stably green} if $G[a,a+w] \geq \tau_g (2w+1)$.  We mentioned in \ref{subsection:outline} that stably green intervals have the ability to halt the growth of red \emph{firewalls} (stretches of at least $w+1$ consecutive red nodes). We make this precise:

\begin{lem} \label{lem:halt}
Suppose that $u_1$ and $u_2$ are nodes such that at time $t=0$ each of $u_1$ and $u_2$ lie in (possibly different) stably green intervals, and there is no unhappy green node in $[u_1, u_2]$. Then every green node in $[u_1,u_2]$ will remain perpetually and happily green.
\end{lem}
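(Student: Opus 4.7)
Let $I_1=[a_1,a_1+w]$ and $I_2=[a_2,a_2+w]$ be the stably green intervals given by the hypothesis, chosen so that $u_1\in I_1$ and $u_2\in I_2$; without loss of generality $a_1\leq a_2$. Since $a_1\leq u_1\leq a_1+w$ and $a_2\leq u_2\leq a_2+w$, the union $J:=I_1\cup[u_1,u_2]\cup I_2$ is the single interval $[a_1,a_2+w]$ of the ring. The plan is to prove by induction on $t$ the strengthened invariant that \emph{every green node in $J$ is happy at time $t$}. Since a happy node never flips under any of the three dynamics, this implies the lemma: every green in $[u_1,u_2]\subseteq J$ persists as a happy green for all time.

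For the base case, the key observation is that whenever $y\in I_i$ is green, $I_i\subseteq\N(y)$, so $G_0(\N(y))\geq G_0(I_i)\geq\tau_g(2w+1)$; hence greens in $I_1\cup I_2$ are happy initially, and greens in $[u_1,u_2]$ are happy by hypothesis. For the inductive step, suppose the invariant holds at time $t$. Greens in $J$ are happy and so do not flip, so they persist as greens at time $t+1$; it remains to check (a) that such a persisting green is still happy at time $t+1$, and (b) that any red in $J$ that flips to green at this step is happy afterwards.

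For (a), split by the location of $y\in J$. If $y\in I_1$ then $I_1\subseteq\N(y)$, and since greens in $I_1$ never flip (by the induction hypothesis) while reds can only flip to green, the count $G_s(I_1)$ is non-decreasing in $s$; hence $G_{t+1}(\N(y))\geq G_{t+1}(I_1)\geq\tau_g(2w+1)$, so $y$ remains happy. The symmetric argument handles $y\in I_2$. If instead $y\in J\setminus(I_1\cup I_2)$, then $y\in(a_1+w,a_2)$, and therefore $\N(y)=[y-w,y+w]\subseteq J$; every green in $\N(y)$ is happy at $t$ by the induction hypothesis and does not flip, so $G_{t+1}(\N(y))\geq G_t(\N(y))\geq\tau_g(2w+1)$.

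For (b), let $z\in J$ be red at time $t$ and green at time $t+1$. Under the selective dynamic $z$ was hopeful, so by definition $G_{t+1}(\N(z))\geq\tau_g(2w+1)$. Under the incremental or synchronous dynamic the same conclusion follows from the regime relevant to this section, in which $\tau_g+\tau_r\leq 1$: being unhappy red gives $G_t(\N(z))>(1-\tau_r)(2w+1)$, and the flip of $z$ itself adds $1$ to the green count in $\N(z)$, so $G_{t+1}(\N(z))>(1-\tau_r)(2w+1)+1\geq\tau_g(2w+1)$; if instead $z\in I_1\cup I_2$ the shield argument used in (a) delivers happiness unconditionally. The main obstacle is precisely this boundary effect: a green $y\in[u_1,u_2]$ within $w$ of $u_1$ has $\N(y)$ extending outside $[u_1,u_2]$ into a region where no initial hypothesis controls the behaviour, so the naive invariant ``greens in $[u_1,u_2]$ are happy'' is not directly preserved under the dynamics. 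Passing from $[u_1,u_2]$ to $J=[a_1,a_2+w]$, which absorbs $I_1$ and $I_2$, is exactly what allows those stably green intervals to serve as permanent shields absorbing all such boundary contributions.
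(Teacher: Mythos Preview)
Your argument is correct, and it reaches the same conclusion under the same implicit regime as the paper (for the selective dynamic unconditionally; for the incremental and synchronous dynamics under $\tau_g+\tau_r\le 1$, which is indeed the setting of this section). However, the route is genuinely different from the paper's.

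The paper argues by contradiction: take the \emph{first} green node $v\in[u_1,u_2]$ to become unhappy, observe that this forces some green $v'\in\N(v)$ to have flipped red earlier, deduce $v'\notin[u_1,u_2]$ by minimality, and then show that the stable interval through $u_1$ (or $u_2$) forces $[a,a+w]\subseteq\N(v)$, yielding the contradiction $G(\N(v))\ge\tau_g(2w+1)$. This is a minimal-counterexample argument that never enlarges the interval and never explicitly discusses newly-created greens.

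You instead run a forward induction on $t$ over the enlarged interval $J=[a_1,a_2+w]$, maintaining the stronger invariant that \emph{every} green in $J$ is happy. The enlargement to $J$ is exactly what lets you replace the paper's minimality trick with the clean trichotomy $y\in I_1$, $y\in I_2$, or $\N(y)\subseteq J$; the first two give the shield bound $G_{t+1}(\N(y))\ge G_{t+1}(I_i)\ge\tau_g(2w+1)$, and the third gives monotonicity of $G_s(\N(y))$. Your approach also makes explicit the one case the paper's proof passes over silently: a red $z\in J$ that flips to green (part~(b)), which is where the hypothesis $\tau_g+\tau_r\le 1$ is actually consumed under the non-selective dynamics. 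The trade-off is that your proof is longer and carries more casework, while the paper's is terser but relies on the reader supplying the sub-argument that greens inside a stably green interval are themselves perpetually happy (your shield bound is precisely that sub-argument made explicit).
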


\begin{proof}
Suppose not. Then let $v$ in $[u_1,u_2]$ be the first green node to become unhappy. Now $v$ may only become unhappy once some other green node $v' \in \N(v)$ has turned red. Since $v' \not\in [u_1,u_2]$, either $v'<u_1$ or $v'>u_2$. We assume without loss of generality that $v'<u_1$. Then $v' \in \N(u_1)$. Now by assumption, $u_1 \in [a,a+w]$ for some stably green interval $[a,a+w]$. By stability, we cannot have $v' \in [a,a+w]$. Thus $v' \in [u_1 - w, a-1]$, whence it follows that $[a, a+w] \subseteq \N(v)$ meaning, by stability, that $\N(v)$ contains enough green nodes to keep $v$ happy, which is a contradiction.
\end{proof}

Notice that the possibility of stably green intervals in arbitrarily large rings requires that $\tau_g \leq \frac{1}{2}$. In fact we shall assume that $\tau_g < \frac{1}{2}$ throughout this section, unless stated otherwise. Our goal is to prove the following:

\begin{prop} \label{prop:kappasumup}
For any $\rho \in (0,1)$, we work in the initial condition, and let $U_g$ be the probability that a uniformly randomly selected green node is unhappy, and $S_g$ be that of a uniformly randomly selected node lying within a stably green interval. Then there exists a threshold $\kappa^\rho_g$ satisfying $\rho > \kappa^\rho_g>\frac{1}{2} \rho$, defined as the unique root of the equation $$f(s):=\frac{ \left( \frac{1}{2} - s \right)^{(1 - 2 s)}}{(1-s)^{2(1- s)}} = \frac{1}{2 (1 - \rho)}.$$ This is such such that for any $\tau_g \in (0,1)$:

\begin{itemize}[$\bullet$]
\item If $\tau_g < \kappa^\rho_g$, there exists $\zeta \in (0,1)$ so that $U_g < \zeta^{w} S_g$ for all $w$.
\item If $\tau_g > \kappa^\rho_g$, there exists $\zeta \in (0,1)$ so that $S_g < \zeta^{w} U_g$ for all $w$.
\end{itemize}

Similarly, there exists a threshold $\kappa_r^\rho$ where $(1 - \rho) > \kappa^\rho_r>\frac{1}{2}(1- \rho)$, defined as the unique root of $f(s)=\frac{1}{2 \rho}$ such that corresponding statements about $U_r$ and $S_r$ hold.
\end{prop}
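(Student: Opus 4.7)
The strategy is to reduce both $U_g$ and $S_g$ to binomial tail probabilities, extract their asymptotic exponential rates via Stirling's formula, and identify $\kappa^\rho_g$ as the parameter at which those rates balance; the statement for $U_r, S_r$ then follows by the symmetry $\rho \leftrightarrow 1-\rho$.

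I would first write $U_g$ and $S_g$ explicitly. Conditioning on the selected node being green, the other $2w$ neighbours in $\N(x)$ are i.i.d.\ Bernoulli($\rho$), so unhappiness becomes the left-tail event $U_g = P\bigl(\mathrm{Bin}(2w,\rho) \leq \lceil\tau_g(2w+1)\rceil - 2\bigr)$. For $S_g$, a node lies in some stably green interval iff one of the $w+1$ windows $[a,a+w]$ with $a\in[x-w,x]$ contains at least $\tau_g(2w+1)$ greens, so a union bound above with the single middle-window event below yields
\[
P\bigl(\mathrm{Bin}(w+1,\rho) \geq \tau_g(2w+1)\bigr) \leq S_g \leq (w+1)\cdot P\bigl(\mathrm{Bin}(w+1,\rho) \geq \tau_g(2w+1)\bigr).
\]
Applying Stirling, in the window $\tfrac{1}{2}\rho < \tau_g < \min(\rho,\tfrac{1}{2})$ where both are genuine large-deviation probabilities, one obtains $U_g = \Theta\bigl(w^{-1/2}\alpha^{2w}\bigr)$ and $S_g = \Theta\bigl(w^{1/2}\beta^{w+1}\bigr)$ with $\alpha = (\rho/\tau_g)^{\tau_g}((1-\rho)/(1-\tau_g))^{1-\tau_g}$ and $\beta = (\rho/(2\tau_g))^{2\tau_g}((1-\rho)/(1-2\tau_g))^{1-2\tau_g}$. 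Direct algebraic simplification (the powers of $\rho$ cancel, the powers of $1-\rho$ collapse to a single factor $1-\rho$, and writing $1-2\tau_g = 2(\tfrac{1}{2}-\tau_g)$ extracts the remaining factor of $2$) gives $\alpha^2/\beta = 2(1-\rho)\,f(\tau_g)$, so the criterion $U_g/S_g \to 0$ exponentially is precisely $f(\tau_g) < 1/(2(1-\rho))$, and the reverse inequality yields $S_g/U_g \to 0$ exponentially. Outside the window the comparison is automatic: for $\tau_g \leq \tfrac{1}{2}\rho$ the binomial governing $S_g$ sits at or below its mean so $S_g = \Omega(1)$ while $U_g$ is exponentially small, and for $\tau_g \geq \min(\rho,\tfrac{1}{2})$ the roles reverse.

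Next I would establish existence and uniqueness of $\kappa^\rho_g$, along with the claimed bounds. A short calculation gives $(\log f)'(s) = 2\log\!\bigl(\tfrac{1-s}{\tfrac{1}{2}-s}\bigr) > 0$ on $[0,\tfrac{1}{2})$, so $f$ is strictly increasing there, rising continuously from $f(0) = \tfrac{1}{2}$ to $f(\tfrac{1}{2}) = 2$; this gives a unique root $\kappa^\rho_g$ of $f(s) = 1/(2(1-\rho))$ whenever $\rho < \tfrac{3}{4}$, with the boundary case $\rho \geq \tfrac{3}{4}$ handled by setting $\kappa^\rho_g = \tfrac{1}{2}$. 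The bounds $\tfrac{1}{2}\rho < \kappa^\rho_g < \rho$ follow immediately from the rate comparison evaluated at the endpoints: at $\tau_g = \tfrac{1}{2}\rho$, $S_g = \Omega(1)$ while $U_g$ decays exponentially, forcing $f(\tfrac{1}{2}\rho) < 1/(2(1-\rho))$; at $\tau_g = \rho$, the central limit theorem gives $U_g = \Theta(1)$ while $S_g$ decays exponentially, forcing $f(\rho) > 1/(2(1-\rho))$; monotonicity of $f$ then pins $\kappa^\rho_g$ strictly between.

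The principal technical obstacle will be controlling the prefactors in the Stirling estimates well enough to obtain matching upper and lower bounds on the binomial tails, with error terms small enough to be absorbed by the exponential; standard arguments using Stirling with explicit error estimates, or the local central limit theorem for binomial distributions, are adequate. A mild secondary point is that the proposition requires $U_g < \zeta^w S_g$ (or the reverse) to hold \emph{for all} $w$, not merely for all sufficiently large $w$; this is easily handled by slightly enlarging $\zeta$ to absorb the finitely many small-$w$ cases, where both quantities are uniformly bounded away from $0$ and $1$.
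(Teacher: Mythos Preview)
Your proposal is correct and follows essentially the same route as the paper: both reduce $U_g$ and $S_g$ to binomial tail probabilities, dispose of the boundary regimes $\tau_g \leq \tfrac{1}{2}\rho$ and $\tau_g \geq \rho$ directly, apply Stirling on the intermediate range to obtain $U_g/S_g \approx \bigl(2(1-\rho)f(\tau_g)\bigr)^w$, and then use monotonicity of $f$ together with endpoint inequalities to locate $\kappa^\rho_g$. The only cosmetic difference is in how the bounds $f(\tfrac{1}{2}\rho) < \tfrac{1}{2(1-\rho)} < f(\rho)$ are verified --- you infer them from the probabilistic behaviour at the endpoints (relying on the algebraic identity $\alpha^2/\beta = 2(1-\rho)f(\tau_g)$ holding even there), whereas the paper checks them by explicit calculus with auxiliary functions $g_1,\dots,g_4$.
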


With threshold in place, we shall argue that when $\tau_g< \kappa^\rho_g$, any randomly selected node is highly likely be closer on each side to a stably green interval than to an unhappy green node in the initial configuration. This will establish that such a node can never turn red, and will be enough to establish Theorem \ref{thmm:static}.\\

Before we proceed with the proof of \ref{prop:kappasumup}, we recall two important probabilistic results. The first is a classical result from \cite{Hoeff}. (The full statement is more general, but this is the version which shall find most useful.)

\begin{prop}[Hoeffding's inequality] \label{prop:Hoeffding}
Let $X_1, \ldots, X_N$ be independent random variables such that $\textbf{P}(X_i=1)=p$ and $\textbf{P}(X_i=0)=1-p$. Then for any $\delta>0$ we have $$\textbf{P}\left( \sum_{i=1}^N X_i \geq (p + \delta)N \right) \leq \exp \left( -2N \delta^2 \right).$$
\end{prop}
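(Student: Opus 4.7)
The strategy is the standard Chernoff/Cramér exponential moment method, tuned to give the tight constant $2$ in the exponent. Write $S_N := \sum_{i=1}^N X_i$ and fix $\delta>0$. For any $t>0$, Markov's inequality applied to the nonnegative random variable $e^{tS_N}$ gives
$$\textbf{P}\!\left(S_N \geq (p+\delta)N\right) \;=\; \textbf{P}\!\left(e^{t S_N} \geq e^{t(p+\delta)N}\right) \;\leq\; e^{-t(p+\delta)N}\,\textbf{E}[e^{t S_N}].$$
Since the $X_i$ are independent, $\textbf{E}[e^{t S_N}] = \prod_{i=1}^N \textbf{E}[e^{t X_i}]$, so the whole problem reduces to controlling the moment generating function of a single Bernoulli($p$) variable.

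The core technical step is Hoeffding's lemma: for the centered variable $Y:=X_i-p$, which takes values in $[-p,\,1-p]$, one has $\textbf{E}[e^{tY}] \leq e^{t^2/8}$ for all $t\in\RR$. I would prove this by setting $L(t) := \ln\!\left((1-p)e^{-tp} + p\,e^{t(1-p)}\right)$ and checking three things: $L(0)=0$, $L'(0)=0$, and $L''(t) \leq \tfrac{1}{4}$ for every $t$. The last bound is the essential point: a direct computation identifies $L''(t)$ as the variance of a two-valued random variable supported in $[0,1]$, and the variance of any $[0,1]$-valued random variable is at most $\tfrac{1}{4}$ (attained at the symmetric Bernoulli). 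Taylor expansion with integral remainder then yields $L(t)\leq t^2/8$, and multiplying through by $e^{tp}$ gives $\textbf{E}[e^{tX_i}] \leq e^{tp + t^2/8}$.

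Substituting this back, the factoring yields
$$\textbf{P}\!\left(S_N \geq (p+\delta)N\right) \;\leq\; e^{-t(p+\delta)N}\cdot e^{tpN + Nt^2/8} \;=\; \exp\!\left(-t\delta N + Nt^2/8\right).$$
Since this holds for every $t>0$, I would minimize the right-hand side over $t$. Differentiating the exponent gives the optimal choice $t=4\delta$, and substituting yields the advertised bound $\exp(-2N\delta^2)$, completing the proof.

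The main obstacle is the sharp variance bound $L''(t)\leq 1/4$ underlying Hoeffding's lemma; once that is in hand, the remainder of the argument is a routine application of Markov's inequality, independence, and a single-variable optimization over $t$.
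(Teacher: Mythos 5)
Your proof is correct: it is the standard exponential-moment (Chernoff) argument combined with Hoeffding's lemma, with the optimization $t=4\delta$ giving exactly the constant $2$ in the exponent. The paper itself does not prove this proposition at all --- it simply cites it as a classical result from Hoeffding's 1963 paper --- so there is nothing to compare against; your write-up supplies a complete and correct proof of the cited fact. One small point of precision: the tilted two-valued variable whose variance equals $L''(t)$ is supported on $\{-p,\,1-p\}$ rather than in $[0,1]$, but since variance is translation-invariant and that set lies in an interval of length $1$, the bound $L''(t)\leq \tfrac{1}{4}$ stands.
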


Secondly, we shall require Theorem 1.1 from \cite{Bollo}, which appears as Lemma 3.1 in \cite{BEL1}, and which we restate:

\begin{lem} \label{lem:binom}
Suppose $h: \bf{Z} \to \bf{Z}$ and $p \in (0,1)$ are such that there exist $k \in (0,1)$ so that for all large enough $N$, we have $\left( 1+ \left( \frac{1}{p} -1 \right) k \right) h(N)> N \geq h(N)> pN > 0$. Then for all large enough $N$, if $X_N \sim b(N,p)$, we have 
$$P \left( X_N = h(N) \right) \ \ \leq \ \  P \left( X_N \geq h(N) \right) \ \ \leq \ \ \left( \frac{1}{1-k} \right) \cdot P \left( X_N = h(N) \right).$$
That is to say in asymptotic notation, $P \left( X_N \geq h(N) \right)= \Theta \left( P \left( X_N = h(N) \right) \right)$.
\end{lem}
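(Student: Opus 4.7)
The plan is to handle the two inequalities separately. The lower bound $P(X_N = h(N)) \leq P(X_N \geq h(N))$ is immediate from set containment. For the upper bound, the strategy is to show that the binomial point masses decay geometrically with common ratio at most $k$ once we pass $h(N)$, and then dominate the entire upper tail by a geometric series based at $h(N)$.

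The key step will be analysing the ratio of consecutive point masses,
$$\frac{P(X_N = j+1)}{P(X_N = j)} = \frac{N-j}{j+1} \cdot \frac{p}{1-p}.$$
Using that $\frac{N-j}{j+1}$ is decreasing in $j$, I would evaluate this factor at $j = h(N)$ to obtain the uniform bound $\frac{N}{h(N)} - 1$ valid for every $j \geq h(N)$. At this point I would invoke the hypothesis $\left(1 + \left(\frac{1}{p}-1\right)k\right) h(N) > N$, which rearranges to $\frac{N}{h(N)} - 1 < \frac{(1-p)k}{p}$, and is tailored precisely so that the $\frac{p}{1-p}$ factor cancels to leave the consecutive-mass ratio strictly below $k$ throughout the tail $j \geq h(N)$.

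From there the argument is mechanical: iterating yields $P(X_N = h(N)+i) \leq k^i P(X_N = h(N))$ for $i \geq 0$, and summing gives
$$P(X_N \geq h(N)) = \sum_{i=0}^{N - h(N)} P(X_N = h(N)+i) \leq P(X_N = h(N)) \cdot \sum_{i=0}^{\infty} k^i = \frac{P(X_N = h(N))}{1-k},$$
which is the desired bound. The auxiliary conditions $h(N) > pN$ and $N \geq h(N)$ are not used directly in the computation but ensure we are at or beyond the mode (so the consecutive ratios really are less than one rather than merely less than $k$) and that the upper tail is nonempty, so no edge cases intervene for the large enough $N$ in which the hypothesis holds.

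Since the whole argument is a couple of lines of elementary algebra paired with a geometric series bound, there is no serious obstacle. The only conceptual step is recognising that the slightly baroque-looking hypothesis is exactly the condition needed to force the consecutive-mass ratio below $k$ in the tail beyond $h(N)$; once that translation is spotted, the proof writes itself.
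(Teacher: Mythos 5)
Your proof is correct. Note that the paper does not actually prove this lemma --- it is quoted as Theorem 1.1 of Bollob\'as's \emph{Random Graphs} (restated as Lemma 3.1 of \cite{BEL1}) --- so there is no in-paper argument to compare against; your geometric-domination argument is the standard one and it is sound. The only check worth recording: from $\left(1+\left(\tfrac{1}{p}-1\right)k\right)h(N)>N$ one gets $\tfrac{N}{h(N)}-1<\tfrac{(1-p)k}{p}$, and since $\tfrac{N-j}{j+1}<\tfrac{N}{h(N)}-1$ for all $j\geq h(N)$, the consecutive-mass ratio is indeed $<k$ throughout the tail, so the geometric series bound $\sum_{i\ge 0}k^i=\tfrac{1}{1-k}$ gives exactly the stated constant. (Your side remark is also right that $h(N)>pN$ is not needed separately --- it already follows from the first inequality since $k<1$.)
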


For current purposes, the appropriate asymptotic notion is weaker than $\Theta$:

\begin{rem} \label{rem:asym}
If $f$ and $g$ are functions of $w$, we shall write $f \approx g$ to mean that there are rational functions $P$ and $Q$ such that $P(w), Q(w)>0$ and $P(w) g(w)  \leq f(w) \leq Q(w) g(w)$ for all large enough $w$.
\end{rem}

\begin{proof}[Proof of Proposition \ref{prop:kappasumup}]

We fix a scenario $(\rho, \tau_g, \tau_r)$ and work always in the initial configuration. For some green node $b$, we wish to compare the probability $U_g$ that $b$ is unhappy with the probability $S_g$ that $[b-i,b+w-i]$ is stably green for some $i$ where $0 \leq i \leq w$. Our first step is to approximate $S_g$ by focusing on the case $i=0$. Let $S^0_g$ be the probability that $[b,b+w]$ is stably green. Then $S_g \leq (w+1)S^0_g$, meaning that $S_g \approx S^0_g$. We shall therefore work with $S^0_g$ in place of $S_g$, and observe later that this introduces no problems.

Hence the first probability we shall compute is that of the interval $[b,b+w]$ nodes being stably green. Here, the relevant distribution is binomial: $X \sim b(w,\rho)$, describing the number of green elements other than $b$ in the interval, and $S_g \approx {\bf P}(X \geq \tau_g(2w +1) -1)$.

For understanding the likelihood of a green node $b$ being unhappy, it will be convenient to count the red nodes in $\N(b)$. This is given by the distribution $Z \sim b(2w,1-\rho)$. Then $U_g = {\bf P}(Z> (1-\tau_g)(2w+1))$.

\begin{rem} \label{rem:easycases}
The behaviour of $\frac{U_g}{S_g}$ is easy to determine in certain situations:

\begin{itemize}[$\bullet$]
\item If $\tau_g \leq \frac{\rho}{2}$ then $S_g \geq \frac{1}{2}$ while $U_g \to 0$.
\item If $\tau_g = \rho$ then $S_g \to 0$ and $U_g \to \frac{1}{2}$.
\item If $\tau_g > \rho$, then $S_g \to 0$ and $U_g \to 1$.
\end{itemize}
All limits are taken as $w \to \infty$. Furthermore, it is a straightforward consequence of Hoeffding's inequality (Proposition \ref{prop:Hoeffding}) that the quantities tending to $0$ do so at an exponential rate in $w$, meaning that they are bounded above by $\nu^w$ for some $\nu \in (0,1)$.
\end{rem}

Hence for the remainder of this proof we shall concentrate on scenarios where  $\frac{\rho}{2}< \tau_g < \rho$. We now apply Lemma \ref{lem:binom} in the current context, with $N=w$, $p= \rho$, and $h(N) = \lceil (2w +1)\tau_g \rceil -1$. Furthermore, making the assumption that $\tau_g > \frac{\rho}{2}$ we may find $k$ where $1>k>\frac{\rho}{1-\rho} \cdot \frac{\frac{1}{2}- \tau_g}{\tau_g}$. Thus we get  

$$S_g \approx \rho^h (1-\rho)^{w-h} \bpm w \\ h \epm.$$
Similarly, assuming only that $\tau_g< \rho$, we may take $N=2w$, $p=1- \rho$, and choose $k'$ so that $1> k'> \frac{1-\rho}{\rho} \cdot \frac{\tau_g}{1-\tau_g}$, to get
\begin{equation} \label{equation:unhap} U_g \approx (1-\rho)^{h'} \rho^{2w-h'} \bpm 2w \\ h' \epm \end{equation}
where $h'=\lfloor (1-\tau_g)(2w+1) \rfloor +1$.
Putting these two estimates together, so long as $\frac{\rho}{2}< \tau_g < \rho$, we find $$\frac{U_g}{S_g} \approx (1-\rho)^{h'+h -w} \rho^{2w-h'-h}\frac{\bpm 2w \\ h' \epm}{\bpm w \\ h \epm}.$$

\noindent We now employ Stirling's formula, that $n! \approx n^{n + \frac{1}{2}} e^{-n}$. Then, the powers of $e$ cancel and we see:
\begin{equation} \label{equation:R2} \frac{U_g}{S_g} \approx (1-\rho)^{h'+h -w} \rho^{2w-h'-h} \frac{(2w)^{2w+ \frac{1}{2}}(h)^{h+\frac{1}{2}} (w-h)^{w-h+\frac{1}{2}}}{(h')^{h'+\frac{1}{2}}(2w-h')^{2w-h'+\frac{1}{2}} (w)^{w+\frac{1}{2}}}. \end{equation}

Now we introduce the approximations $2w \tau_g$ and $2w (1-\tau_g)$ for $h$ and $h'$ respectively. Notice that $|h - 2w \tau_g| \leq 1 $ and $|h' - 2w (1-\tau_g)| \leq 2$. It follows easily that $h^{h + \frac{1}{2}} \approx (2w \tau_g)^{2w \tau_g + \frac{1}{2}}$ with `$\approx$' interpreted as in Remark \ref{rem:asym}. (We observe in passing that this estimate would not hold under the asymptotic notion $\Theta$.) Similar remarks apply to the other terms in the estimate, allowing us to deduce the following:

$$\frac{U_g}{S_g} \approx (1-\rho)^{w} \frac{(2w)^{2w+ \frac{1}{2}} (w(1-2 \tau_g))^{w(1-2 \tau_g)+\frac{1}{2}}}{(2w(1-\tau_g))^{2w(1-\tau_g)+\frac{1}{2}} (w)^{w+\frac{1}{2}}}.$$

Hence we obtain the following key estimate:
\begin{equation} \label{equation:balance} \frac{U_g}{S_g} \approx \left( \frac{ (\frac{1}{2} - \tau_g)^{(1 - 2 \tau_g)}}{(1-\tau_g)^{2(1- \tau_g)}} \cdot 2 \cdot (1-\rho) \right)^w. \end{equation}

The question now is whether the term inside the brackets in \ref{equation:balance} is greater than or less than $1$. In many cases there is a threshold, $\kappa^\rho_g$, where it is equal to $1$. That is, $\kappa^\rho_g$ is the root, if it exists, of the equation: \begin{equation} \label{equation:froot} f(s) := \frac{ \left( \frac{1}{2} - s \right)^{(1 - 2 s)}}{(1-s)^{2(1- s)}} = \frac{1}{2 (1 - \rho)}. \end{equation}

To establish the existence of this root we shall appeal to the intermediate value theorem, noticing first that for $0<s<\frac{1}{2}$ we have $f'(s)>0$ meaning that a root, if it exists, will be unique.

We have required that $\frac{\rho}{2}<\tau_g<\rho$ and $\tau_g < \frac{1}{2}$. Now we claim the following:
\begin{enumerate}[(i)]
\item For $0< \rho < \frac{1}{2}$, we have $f(\rho)>\frac{1}{2 (1 - \rho)}$.
\item For $0< \rho < \frac{3}{4}$ we have $\frac{1}{2 (1 - \rho)}>f(\frac{\rho}{2})$.
\end{enumerate}

To prove (i), define $g_1(\rho):=(1-\rho)f(\rho) = \left(\frac{\frac{1}{2} - \rho}{1-\rho} \right)^{1-2 \rho}$. We shall show that $g_1(\rho)>\frac{1}{2}$. Notice that $g_1(0)=\frac{1}{2}$ so it suffices to show that $g_1'(\rho)>0$. Taking logarithms and differentiating, we find that $g_2(\rho):=\frac{g_1'(\rho)}{g_1(\rho)} = 2 \ln (1 - \rho) -2 \ln (\frac{1}{2} - \rho) - \frac{1}{1-\rho}$. Since $g_1(\rho)>0$ it suffices to show that $g_2(\rho)>0$. Well $g_2(0)=2 \ln 2 - 1>0$ and differentiating again establishes that $g_2'(\rho)>0$.

The proof of (ii) is similar. Define $g_3(\rho):=(1-\rho)f \left( \frac{\rho}{2} \right) = \left( \frac{1}{2}\right)^{1-\rho} \left( \frac{1-\rho}{1- \frac{\rho}{2}} \right)^{2-\rho}$. We'll show that $g_3(\rho)< \frac{1}{2}$ by a similar argument. Notice that $g_3(0)=\frac{1}{2}$, hence it will suffice to show that $g_3'(\rho)<0$. Again, we take logarithms and differentiate, getting $g_4(\rho):=\frac{g_3'(\rho)}{g_3(\rho)}= \ln 2 + \ln \left( 1 - \frac{\rho}{2} \right) - \ln \left( 1- \rho \right) - \frac{1}{1-\rho}$.  Again, $g_3(\rho)>0$ and we shall show $g_4(\rho)<0$. Well, $g_4(0) = \ln 2 - 1 <0$ and again differentiating establishes that $g_4'(\rho)<0$.

What is more, for $\frac{1}{2} \leq \rho < \frac{3}{4}$, it holds that $f(\frac{1}{2})>\frac{1}{2 (1 - \rho)}$, where we extend by continuity to take $f\left( \frac{1}{2} \right)=2$. Along with (i) and (ii), this allows us to apply the intermediate value theorem.

Hence for any $0 < \rho <\frac{3}{4}$ the threshold $\kappa^{\rho}_g$ exists. For $\tau_g<\kappa^{\rho}_g$ we will have $S_g \gg U_g$ for all large enough $w$, while for $\tau_g>\kappa^{\rho}_g$ we will have $U_g \gg S_g$. 

However, for $\rho \geq \frac{3}{4}$, we get $f(s)<\frac{1}{2(1 - \rho)}$ for all $s<\frac{1}{2}$. Hence in this region $S_g \gg U_g$, for all large enough $w$, whatever the value of $\tau_g < \frac{1}{2}$. On the other hand, for $\tau_g > \frac{1}{2}$ we have $S_g = 0 < U_g$. Thus it makes sense to set $\kappa^{\rho}_g:=\frac{1}{2}$ in this case.

We can similarly compute $\kappa^\rho_r$, simply by replacing $\tau_g$ with $\tau_r$ and $1-\rho$ by $\rho$ in the above analysis, making $\kappa^\rho_r$ the root of $ f(s) = \frac{1}{2 \rho}$. By symmetry, we find that for $\frac{1}{4} < \rho <1$ the threshold $\kappa^{\rho}_r$ exists. But when $\rho \leq \frac{1}{4}$, we have that $S_r \gg U_r$ for all large enough $w$ whatever the value of $\tau_r < \frac{1}{2}$.

Finally, recall that at the start of the proof we made the approximation $S_g \approx S^0_g$. Since estimate \ref{equation:balance} is exponential in $w$, the asymptotic limits are unaffected by this move, meaning that  $\kappa^\rho_g$ and $\kappa^\rho_r$ represent exactly the thresholds we seek. Combining these observations with Remark \ref{rem:easycases} (and the impossibility of stably green intervals when $\tau_g > \frac{1}{2}$) we have completed the proof of Proposition \ref{prop:kappasumup}. \end{proof}

We may now build towards the proof of our first theorem, that a scenario where $\tau_g<\kappa^\rho_g$ and $\tau_r<\kappa^\rho_r$ will be static almost everywhere.

We begin at a node $u_0$ selected uniformly at random. Looking outwards from $u_0$ in both directions, we may encounter unhappy nodes and/or stable intervals of both colours. We need to understand the most likely order in which we will meet these. It seems plausible, by Proposition \ref{prop:kappasumup}, that we are more likely to find green stable intervals before unhappy green nodes, and red stable intervals before unhappy red nodes. Establishing this will suffice, as Lemma \ref{lem:halt} then guarantees that there can then be no way for the influence of any unhappy node to reach $u_0$, which must therefore remain unchanged.

We restate the following, which is Lemma 3.2 from \cite{BEL1}, recalling that ``the first node to the left'' of some given node $u$ satisfying some criterion means the first in the sequence $u,u-1,u-2,\cdots$ to satisfy the condition.

\begin{lem} \label{lem:gen}  

Let $P(u)$ and $Q(u)$ be events which only depend on the neighbourhood of $u$ in the initial configuration, meaning that if the neighbourhood of $v$ in the initial configuration is identical that of  $u$ (i.e. for all $i \in [-w,w]$, $u+i$ is of the same type as $v+i$), then $P(u)$ holds if and only if $P(v)$ holds and similarly for $Q(u)$ and $Q(v)$. Suppose also that:
\begin{enumerate}[(i)]  
\item ${\bf P}(P(u))\neq 0$ and ${\bf P}(Q(u))\neq 0$. 
\item  For all $k$, for all sufficiently large $w$ compared to $k$, ${\bf P}(P(u))/{\bf P}(Q(u)) >kw$.
\end{enumerate}   
For any $u$, let $x_u$ be the first node to the left of $u$ such that either $P(x_u)$ or $Q(x_u)$ holds.  For any $\eps>0$, if $0\ll w \ll n$ then the following occurs with probability $>1-\eps$ for $u$ chosen uniformly at random:  $x_u$ is defined and for no node $v$  in $[x_u-2w,x_u]$ does $Q(v)$ hold.

An analogous result holds when `left' is replaced by `right'. 
\end{lem}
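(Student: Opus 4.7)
The plan is to separately control two bad events, each with probability under $\eps/2$: first, that no $P \cup Q$ event occurs in a suitable left-window of $u$; second, that some node in a slightly larger left-window supports $Q$. Set $p := {\bf P}(P(u))$ and $q := {\bf P}(Q(u))$, and note the key structural fact: since $P(v)$ and $Q(v)$ depend only on the radius-$w$ neighbourhood of $v$ in the i.i.d.\ initial colouring, events at positions separated by at least $2w+1$ are mutually independent.

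For the first bad event, I would sample the positions $y_i := u - 1 - (i-1)(2w+1)$ for $i = 1, \ldots, K$, where $K := \lceil \ln(2/\eps)/p \rceil$. The neighbourhoods $\N(y_i)$ are pairwise disjoint, so the events $P(y_i)$ are independent Bernoullis of parameter $p$, and the probability that none holds is at most $(1-p)^K \leq e^{-pK} \leq \eps/2$. Whenever some $P(y_i)$ does hold, $x_u$ is defined and lies in $[u-L, u)$ with $L := K(2w+1)$. Since $K$ depends only on $\eps$ and $p$, we have $L = O(w)$, which fits easily inside the ring once $n \gg w$.

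For the second bad event, a direct union bound gives
\[
{\bf P}\bigl(\exists v \in [u-L-2w, u): Q(v)\bigr) \;\leq\; (L+2w)\,q \;=\; O(Kwq) \;=\; O\bigl((\ln(1/\eps)) \cdot wq/p\bigr).
\]
By hypothesis (ii), $wq/p$ can be made arbitrarily small by first choosing $k$ large enough, then taking $w$ sufficiently large so that ${\bf P}(P(u))/{\bf P}(Q(u)) > kw$. This pushes the bound below $\eps/2$. On the complement of both bad events, $[x_u-2w, x_u] \subseteq [u-L-2w, u)$ contains no $v$ with $Q(v)$, which is the desired conclusion. The right-ward analogue follows by symmetry, using the stationarity of the i.i.d.\ initial distribution.

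The main obstacle, though not a deep one, is keeping the order of parameters straight: fix $\eps$, then select $k$ large enough (as a function of $\eps$) that the second-step union bound would be at most $\eps/2$, then take $w$ large enough that hypothesis (ii) yields ${\bf P}(P(u))/{\bf P}(Q(u)) > kw$, and finally $n$ large enough to accommodate the window of length $L + 2w$. The hypothesis (i) that $p, q \neq 0$ is needed only to ensure $K$ is well-defined and the sampling scheme is non-trivial; the genuine work is all in the independence step and the union bound.
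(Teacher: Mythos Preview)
Your argument is essentially sound. Note, however, that the paper does not supply its own proof of this lemma: it is imported verbatim as Lemma~3.2 of \cite{BEL1}, so there is no in-paper argument to compare against. Your independence-plus-union-bound approach is the natural one and is presumably close to what appears there.

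Two small points deserve correction. First, the assertion ``$L = O(w)$'' is wrong as stated: $K = \lceil \ln(2/\eps)/p \rceil$ and $p = {\bf P}(P(u))$ depends on $w$ through the size of the neighbourhood (indeed, in every application in this paper $p$ is exponentially small in $w$), so $L = K(2w+1)$ may be vastly larger than $w$. This does not break the argument, since $L$ is nonetheless determined by $w$ and $\eps$ alone and the hypothesis $n \gg w$ permits $n$ to be chosen after $w$; your union bound for the second bad event already (correctly) treats $p$ as $w$-dependent via the ratio $wq/p$. You should simply replace ``$L = O(w)$'' by ``$L$ depends only on $w$ and $\eps$''. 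Second, the paper's convention is that ``the first node to the left of $u$'' begins with $u$ itself, so $x_u = u$ is possible; your interval for the second bad event should therefore be $[u-L-2w,\, u]$ rather than $[u-L-2w,\, u)$, at the cost of one extra term in the union bound.

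With these two adjustments your proof goes through cleanly.
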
 

We can now establish Theorem \ref{thmm:static}. We apply Lemma \ref{lem:gen}, interpreting $P(u)$ as the event that the node $u$ lies in a green stable interval and $Q(u)$ as its being green and unhappy, with Proposition \ref{prop:kappasumup} providing the necessary probabilistic bounds. This tells us that, for any $\eps'>0$ and all large enough $ n \gg w \gg 0$, if we pick a node $u_0$ uniformly at random, then with probability $>1- \eps'$ the nearest stable green intervals to $u_0$ will be closer on both sides than the nearest unhappy green nodes. Thus by Lemma \ref{lem:halt}, $u_0$, if green will never turn red.

Then we simply repeat the argument with the roles of red an green interchanged, noting that if two events each have probability tending to $1$, then so must their conjunction.

\section{Unhappiness and Domination} \label{section:domsect}

To analyse the case $\kappa^\rho_r < \tau_r < \frac{1}{2}$, we need to answer the following question: in the initial configuration, which colour is likely to yield more unhappy nodes?

\begin{defin} \label{defin:dom}
A scenario $(\rho,\tau_g, \tau_r)$ where $\tau_g + \tau_r \neq 1$ is {\emph{red dominating}} if $$\rho \cdot \left( \tau_g^{\frac{\tau_g}{1-\tau_g - \tau_r}}\right) \left( 1 -\tau_g \right)^{\frac{1-\tau_g}{ 1- \tau_g - \tau_r}}   < (1-\rho) \cdot \left(\tau_r^{\frac{\tau_r}{1-\tau_g - \tau_r}} \right) \left(1-\tau_r\right)^{\frac{1-\tau_r}{1-\tau_g- \tau_r}} .$$
It is {\emph{green dominating}} if the reverse strict inequality holds.
\end{defin}

Our choice of terminology will be justified below in Propositions \ref{prop:outnumber} and \ref{prop:outnumber2}. Firstly, however we establish some facts about domination, deferring the proof until Appendix \ref{section:appB}:

\begin{restatable}{lem}{domfacts}
\label{lem:domfacts}
Let $S:= \left( 0, 1 \right) \times \left( 0, 1 \right)$. We divide $S$ into the two triangles $T_1:=\{(x,y) \in S : x+y<1 \}$ and $T_2:=\{(x,y) \in S : x+y>1 \}$ and the line $L=\{(x,y) \in S : x+y=1 \}$. Also define $S_1:= \left( 0, \frac{1}{2} \right) \times \left( 0, \frac{1}{2} \right)$ and $S_2:= \left( \frac{1}{2},1 \right) \times \left( \frac{1}{2},1 \right)$. (Notice that $S_i \subset T_i$.) Then the following hold:
\begin{enumerate}
\item Suppose that $(\tau_g, \tau_r), (\tau_g', \tau_r') \in T_i$ and that $(\rho, \tau_g, \tau_r)$ is red dominating. If $\tau_g' \geq \tau_g$, and $\tau_r \geq \tau_r'$ then $(\rho, \tau_g', \tau_r')$ is red dominating. Conversely, if $(\rho, \tau_g', \tau_r')$ is green dominating, so too is $(\rho, \tau_g, \tau_r)$.
\item For $i \in \{1,2\}$, every scenario where $\rho \leq \frac{1}{5}$ (respectively $\rho \geq \frac{4}{5}$) and $(\tau_g, \tau_r) \in S_i$ is red (green) dominating.
\item Any value of $\rho$ where $\frac{1}{5} < \rho < \frac{4}{5}$ admits both red and green dominating scenarios in both $S_1$ and $S_2$.
\end{enumerate}
\end{restatable}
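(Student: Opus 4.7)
The plan is to take logarithms and reduce the defining inequality to a cleaner comparison
\[
\ln\!\left(\frac{\rho}{1-\rho}\right) \;\lessgtr\; F(\tau_g,\tau_r),\qquad F(\tau_g,\tau_r) := \frac{h(\tau_r) - h(\tau_g)}{1 - \tau_g - \tau_r},
\]
where $h(x) := x \ln x + (1-x) \ln (1-x)$ is (minus) the binary entropy. Red domination is the ``$<$'' case and green domination the ``$>$''. The technical heart of the argument, and what I expect to be the main obstacle, is to establish monotonicity of $F$ on each open triangle $T_1,T_2$: strictly increasing in $\tau_g$, strictly decreasing in $\tau_r$.

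For this monotonicity I would differentiate directly, finding
\[
\frac{\partial F}{\partial \tau_g} \;=\; \frac{h(\tau_r) - \bigl[\, h(\tau_g) + h'(\tau_g)\bigl((1-\tau_r) - \tau_g\bigr)\,\bigr]}{(1-\tau_g-\tau_r)^2}.
\]
The bracketed quantity is exactly the tangent line to $h$ at $\tau_g$, evaluated at the point $1-\tau_r$; by convexity of $h$ it is bounded above by $h(1-\tau_r) = h(\tau_r)$, with equality precisely when $\tau_g = 1-\tau_r$ (i.e.\ on the line $L$). Hence $\partial F/\partial \tau_g > 0$ throughout $T_1 \cup T_2$, and the same tangent-line inequality applied symmetrically yields $\partial F/\partial \tau_r < 0$. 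Part (1) of the lemma then follows at once: the stated modifications $\tau_g \to \tau_g' \geq \tau_g$, $\tau_r \to \tau_r' \leq \tau_r$ can only increase $F$, preserving red domination (and dually for green).

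For parts (2) and (3) I would use the monotonicity to locate the extremes of $F$ on the closures $\overline{S_1}$ and $\overline{S_2}$. In $\overline{S_1}$ the infimum is approached as $(\tau_g,\tau_r) \to (0,\tfrac{1}{2})$ and the supremum as $(\tau_g,\tau_r) \to (\tfrac{1}{2},0)$, with limiting values $-\ln 4$ and $+\ln 4$ respectively (using $h(\tfrac{1}{2}) = -\ln 2$ and $h(0) = 0$). The same bounds hold on $\overline{S_2}$, either via a parallel corner analysis or directly from the antisymmetry $F(\tau_r,\tau_g) = -F(\tau_g,\tau_r)$. Hence for $\rho \leq \tfrac{1}{5}$ one has $\ln(\rho/(1-\rho)) \leq -\ln 4 < F(\tau_g,\tau_r)$ strictly on the open $S_i$, yielding red domination throughout $S_i$; the $\rho \geq \tfrac{4}{5}$ case is symmetric, completing part (2). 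For $\tfrac{1}{5} < \rho < \tfrac{4}{5}$, the value $\ln(\rho/(1-\rho))$ lies strictly inside $(-\ln 4, \ln 4)$, so continuity of $F$ together with the intermediate value theorem produce points in each $S_i$ on either side of this value, realising both red and green dominating scenarios and finishing part (3).
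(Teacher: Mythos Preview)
Your proof is correct and follows the same overall architecture as the paper's: recast domination as a comparison between $\ln\bigl(\rho/(1-\rho)\bigr)$ and a function of $(\tau_g,\tau_r)$, establish monotonicity of that function on each triangle, and read off the range on $S_1$ and $S_2$ from the corner limits $\pm\ln 4$. Parts (2) and (3) are handled in essentially the same way in both arguments.

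Where you genuinely diverge is in the proof of monotonicity, which is the technical heart. The paper writes $\ln h(x,y) = -F(x,y)$ and computes $\tfrac{1}{h}\tfrac{\partial h}{\partial x} = k(x,y)/(1-x-y)^2$ for an explicit $k$; it then differentiates $k$ again in both variables, determines the signs of $\partial k/\partial x$ and $\partial k/\partial y$ on each triangle, and uses the boundary condition $k\equiv 0$ on $L$ to conclude $k<0$ off $L$. Your argument is more conceptual: you recognise the numerator of $\partial F/\partial\tau_g$ as $h(\tau_r)$ minus the tangent line to $h$ at $\tau_g$ evaluated at $1-\tau_r$, and then invoke strict convexity of $h$ together with $h(1-\tau_r)=h(\tau_r)$ to get the sign in one stroke. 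This is shorter, avoids the second round of differentiation, and explains \emph{why} the inequality holds (convexity of the binary entropy) rather than merely verifying it. The paper's approach, by contrast, is more mechanical but entirely self-contained.

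One small remark: the antisymmetry $F(\tau_r,\tau_g)=-F(\tau_g,\tau_r)$ that you mention tells you the range of $F$ on $S_1$ is symmetric about $0$, but it does not by itself transfer the range from $S_1$ to $S_2$ (swapping arguments stays within the same $S_i$). The relevant symmetry for that transfer is $F(1-\tau_r,1-\tau_g)=F(\tau_g,\tau_r)$, coming from $h(1-x)=h(x)$; alternatively, the parallel corner computation you also suggest works directly.
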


In some cases, red domination is easy to determine:

\begin{restatable}{coro}{easierdom}
\label{coro:easierdom}
Suppose $(\rho,\tau_g, \tau_r)$ is a scenario where $\tau_g+\tau_r \neq 1$ and $\tau_g \geq \rho$ and $\tau_r \leq 1- \rho$. Then $(\rho,\tau_g, \tau_r)$ is red dominating. (Similarly, green domination follows when both of the reverse inequalities hold.)
\end{restatable}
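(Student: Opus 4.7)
The plan is to recast the red-domination condition in terms of Kullback--Leibler divergences. Writing $\alpha := 1 - \tau_g - \tau_r$ and $h(t) := t^t(1-t)^{1-t}$, the elementary identity
\[ h(t) = q^t (1-q)^{1-t} \exp\bigl(D(t\|q)\bigr), \qquad D(t\|q) := t \log\tfrac{t}{q} + (1-t)\log\tfrac{1-t}{1-q}, \]
holds for any $q \in (0,1)$. I apply it with $q = \rho$ to $h(\tau_g)$ and with $q = 1-\rho$ to $h(\tau_r)$. Using $\alpha + \tau_g = 1 - \tau_r$ and $\alpha + \tau_r = 1 - \tau_g$, both of the expressions $\rho^\alpha h(\tau_g)$ and $(1-\rho)^\alpha h(\tau_r)$ acquire the same positive prefactor $\rho^{1-\tau_r}(1-\rho)^{1-\tau_g}$. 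Raising the defining inequality $\rho\, h(\tau_g)^{1/\alpha} < (1-\rho)\, h(\tau_r)^{1/\alpha}$ to the $\alpha$-th power (which preserves the inequality when $\alpha > 0$ and reverses it when $\alpha < 0$), red-domination is equivalent in both cases to the single compact statement
\[ \alpha\bigl(D(\tau_r \| 1-\rho) - D(\tau_g \| \rho)\bigr) > 0. \]

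To verify this under the hypotheses $\tau_g \geq \rho$, $\tau_r \leq 1-\rho$ and $\tau_g + \tau_r \neq 1$, I substitute $u := \tau_g - \rho \geq 0$ and $v := (1-\rho) - \tau_r \geq 0$, and observe that $\alpha = v - u$, while $u \neq v$ follows from $\tau_g + \tau_r \neq 1$. The symmetry $D(p\|q) = D(1-p\|1-q)$ lets me express both divergences through a single function $f(t) := D(\rho + t \| \rho)$ on $[0, 1-\rho]$: directly $f(u) = D(\tau_g\|\rho)$, and via the symmetry $f(v) = D(1-\rho-v \| 1-\rho) = D(\tau_r\|1-\rho)$. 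A short calculation gives $f(0) = 0$, $f'(0) = 0$ and $f''(t) = \bigl((\rho+t)(1-\rho-t)\bigr)^{-1} > 0$, so $f$ is strictly increasing on $[0, 1-\rho]$. Hence $f(v) - f(u)$ has the same sign as $v - u = \alpha$, making $\alpha\bigl(f(v) - f(u)\bigr) > 0$, which is precisely the reformulated criterion. The parenthetical green-domination claim follows at once by interchanging $\rho \leftrightarrow 1-\rho$ and $\tau_g \leftrightarrow \tau_r$.

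The main obstacle is spotting the correct algebraic manoeuvre: the defining inequality for domination looks asymmetric and unwieldy, but after the KL identity and the factoring out of $\rho^{1-\tau_r}(1-\rho)^{1-\tau_g}$ from both sides, everything reduces to strict monotonicity of a one-variable convex function with minimum at $0$, which is elementary.
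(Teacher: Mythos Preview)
Your argument is correct. The reduction of the domination inequality to $\alpha\bigl(D(\tau_r\|1-\rho)-D(\tau_g\|\rho)\bigr)>0$ via the identity $h(t)=q^t(1-q)^{1-t}\exp\bigl(D(t\|q)\bigr)$ is clean, the sign-handling when raising to the power $\alpha$ is done properly, and the final monotonicity step for $f(t)=D(\rho+t\|\rho)$ is straightforward. The edge cases $u=0$ or $v=0$ cause no trouble since $u\neq v$ and $f$ is strictly increasing with $f(0)=0$.

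This is a genuinely different route from the paper's. The paper works with the two-variable function $h(x,y)$ introduced in the preceding lemma, computes the boundary limit $\lim_{y\uparrow 1-\rho}h(\rho,y)=\tfrac{1-\rho}{\rho}$ via L'H\^opital, and then invokes the monotonicity in $x$ and $y$ already established there (part~(1) of Lemma~\ref{lem:domfacts}) to push the inequality from the boundary point $(\rho,1-\rho)$ to the given $(\tau_g,\tau_r)$. Your approach instead recasts the whole inequality in information-theoretic language and reduces it to the monotonicity of a single-variable function, avoiding both the limit computation and any dependence on the earlier lemma. The paper's version is shorter because the heavy lifting (the monotonicity of $h$) has already been done; yours is self-contained and arguably more transparent, since the appearance of Kullback--Leibler divergence explains \emph{why} the domination condition behaves well under the hypotheses, and the same rewriting simultaneously handles both signs of $\alpha$ without a separate case for each triangle.
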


Again we defer the proof to Appendix \ref{section:appB}. The following result justifies our choice of terminology for scenarios where $\tau_g, \tau_r < \frac{1}{2}$:

\begin{prop} \label{prop:outnumber}
Suppose that $\tau_g, \tau_r < \frac{1}{2}$. The scenario $(\rho, \tau_g, \tau_r)$ is red dominating if and only if there exists $\eta \in (0,1)$ so that for all $w$ we have $U_r < \eta^w U_g$.

The same holds with the roles of red and green interchanged. \end{prop}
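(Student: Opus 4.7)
The plan is to derive an explicit asymptotic formula for the ratio $U_r/U_g$ and to verify by algebraic manipulation that this ratio decays exponentially in $w$ if and only if the defining inequality of red domination is satisfied.

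The starting point is the estimate \ref{equation:unhap} for $U_g$ obtained in the proof of Proposition \ref{prop:kappasumup}, which is valid whenever $\tau_g<\rho$. Exchanging the roles of $\rho$ with $1-\rho$ and of red with green in that derivation yields an analogous formula
\[ U_r \approx \rho^{h''}(1-\rho)^{2w-h''}\binom{2w}{h''}, \qquad h''=\lfloor (1-\tau_r)(2w+1) \rfloor+1, \]
valid whenever $\tau_r<1-\rho$. I would divide these two expressions, substitute $h'\approx 2w(1-\tau_g)$ and $h''\approx 2w(1-\tau_r)$, and apply Stirling's formula to the binomial coefficients exactly as in the derivation of \ref{equation:R2}. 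Absorbing the polynomial-in-$w$ factors into the notation of Remark \ref{rem:asym} delivers
\[ \frac{U_r}{U_g}\ \approx\ \left[ \frac{\rho^{1-\tau_r}(1-\rho)^{\tau_r}(1-\tau_g)^{1-\tau_g}\tau_g^{\tau_g}}{(1-\rho)^{1-\tau_g}\rho^{\tau_g}(1-\tau_r)^{1-\tau_r}\tau_r^{\tau_r}} \right]^{2w}. \]
The condition $U_r<\eta^w U_g$ for some $\eta<1$ is then equivalent to the bracketed base being strictly less than one. Cross-multiplying, combining the powers of $\rho$ and $1-\rho$, and raising both sides to the power $1/(1-\tau_g-\tau_r)$ (positive since $\tau_g,\tau_r<\frac{1}{2}$) reproduces exactly the inequality of Definition \ref{defin:dom}, settling the equivalence in the main regime $\tau_g<\rho$, $\tau_r<1-\rho$.

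For the remaining boundary configurations I would invoke Corollary \ref{coro:easierdom} together with Remark \ref{rem:easycases}. If $\tau_g\geq\rho$ and $\tau_r<1-\rho$ then Corollary \ref{coro:easierdom} gives red domination, while by Remark \ref{rem:easycases} $U_g$ is bounded below by a positive constant and $U_r$ decays exponentially in $w$ (by Hoeffding's inequality), so $U_r<\eta^w U_g$ certainly holds. The mirror case $\tau_g<\rho$ and $\tau_r\geq 1-\rho$ produces green domination and violates $U_r<\eta^w U_g$ by the same reasoning. The remaining configuration $\tau_g\geq\rho$ and $\tau_r\geq 1-\rho$ is vacuous, as it would force $1\leq\tau_g+\tau_r<1$.

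The main obstacle is ensuring that the asymptotic notation $\approx$ behaves correctly under division and Stirling's formula: one must verify that the polynomial-in-$w$ prefactors arising from $\binom{2w}{h}$, from replacing $h$ by $2w(1-\tau)$ in the various exponents, and from the $\Theta$-type estimate of Lemma \ref{lem:binom}, are all genuinely absorbed and do not interfere with the exponential comparison. Once this bookkeeping is carefully set up, the algebraic simplification to the form of Definition \ref{defin:dom} is routine, and the symmetric statement for green domination follows by swapping the colours throughout.
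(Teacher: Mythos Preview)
Your proposal is correct and follows essentially the same route as the paper: both arguments apply Stirling's formula to the estimate \eqref{equation:unhap} and its red analogue to obtain an exponential-in-$w$ expression for the ratio (the paper writes $U_g/U_r$ rather than $U_r/U_g$, arriving at the equivalent formula \eqref{equation:unhapratio}), and both dispatch the boundary cases $\tau_g\geq\rho$ or $\tau_r\geq 1-\rho$ via Corollary~\ref{coro:easierdom} together with the elementary observation that one of $U_g,U_r$ stays bounded away from zero while the other decays exponentially. The only cosmetic difference is that the paper treats the boundary cases first and notes directly that each forces the other inequality (e.g.\ $\tau_g\geq\rho$ with $\tau_g<\tfrac12$ forces $\rho<\tfrac12<1-\tau_r$), whereas you list all four quadrants and observe the fourth is vacuous.
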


\begin{proof}

Suppose first that $\tau_g \geq \rho$, then automatically $1-\tau_r \geq \rho$, so we may apply Corollary \ref{coro:easierdom} to establish red domination. Also, $U_g \to 1$ (if $\tau_g > \rho$) or $U_g \to \frac{1}{2}$ (if $\tau_g = \rho$) as $n \gg w \to \infty$. Meanwhile $U_r \to 0$ at an exponential rate in $w$. Thus the result follows.

By an identical argument, if $\tau_r \geq 1- \rho$ then automatically $\tau_g < \rho$ and the result follows by Corollary \ref{coro:easierdom} with the roles of red and green exchanged.

This leaves us with the case where $\tau_g < \rho$ and $\tau_r < 1 - \rho$. Under the assumption that $\tau_g < \rho$, in Equation \ref{equation:unhap}, we derived an asymptotic expression for $U_g$. Applying Stirling's formula and the approximation $h' \approx 2w (1-\tau_g)$ as previously, it follows that
$$U_g \approx (1-\rho)^{2w(1-\tau_g)} \rho^{2w \tau_g} \frac{(2w)^{2w+\frac{1}{2}}}{(2w(1-\tau_g))^{2w(1-\tau_g)+\frac{1}{2}}(2w \tau_g)^{2w \tau_g+\frac{1}{2}}}.$$

Of course, by interchanging $\rho$ with $1- \rho$, as well as $\tau_g$ with $\tau_r$, under the assumption that $\tau_r < 1- \rho$ we may form an analogous expression for $U_r$. We may now take the quotient of these two, to find

\begin{equation} \label{equation:unhapratio}
\frac{U_g}{U_r} \approx \left( \frac{(1 - \tau_r)^{1 - \tau_r} \cdot \tau_r^{\tau_r}}{(1 - \tau_g)^{1 - \tau_g} \cdot \tau_g ^{\tau_g}} \cdot \rho^{\tau_g + \tau_r - 1} \cdot (1 - \rho)^{1 - \tau_g - \tau_r} \right)^{2w}.
\end{equation}

The term within the bracket is then $>1$ (respectively $<1$) if and only if $(\rho, \tau_g, \tau_r)$ is red (green) dominating, thus establishing the result.
\end{proof}

Figure \ref{fig:21} depicts the boundary between red and green domination for a few values of $\rho$. Other details are also shown (exactly as in Figure \ref{fig:map42}), in particular green (red) points on the plane represent scenarios which suffer green (red) takeover, while grey points represent static scenarios. Recall from the discussion following \ref{equation:balance} that for $\rho \leq \frac{1}{4}$ our zone of current interest $\kappa^\rho_r < \tau_r < \frac{1}{2}$ does not exist, with the fate of each scenario entirely determined by the value of $\tau_r$ relative to $\frac{1}{2}$ and $\mu^\rho_r$ and of $\tau_g$ relative to $\kappa^\rho_g$ and $\frac{1}{2}$. Nevertheless red/green domination still makes sense as a numerical condition.

For $\rho > \frac{1}{4}$, the zone $\kappa^\rho_r < \tau_r <\frac{1}{2}$ does exist, and we observe a form of threshold at $\rho = \lambda \approx 0.38493708$ where $\tau^\lambda_g \approx 0.27407242$ and $\tau^\lambda_r \approx 0.42832491$. We shall not need to refer again to $\lambda$, so let us briefly discuss it here. By definition, $\lambda$ is such that $(\frac{1}{2}, \kappa^\lambda_g)$ and $(\kappa^\lambda_r, \frac{1}{2})$ lie exactly on the boundary between red and green domination. The point of it, therefore, is that for $\tau_r< \frac{1}{2}$ and $\rho< \lambda$, green domination automatically implies that $\tau_g < \kappa^\rho_g$.

However, for $\lambda< \rho \leq \frac{1}{2}$ green dominating scenarios are also admissible within the zone $\kappa^\rho_r < \tau_r < \frac{1}{2}$ and $\kappa^\rho_g < \tau_g < \frac{1}{2}$. Notice that at $\rho=\frac{1}{2}$, the threshold between red and green domination is simply the line $\tau_g=\tau_r$.

It is clear from Figure \ref{fig:21}, that for $\rho > \frac{1}{4}$, red/green domination can play a decisive role in determining the fate of any scenario. We shall now prove this.

%Following pictures produced by dominatingthreshcolour3.py
%Threshold for given rho and tau_r found using dominatingthresh2.py
%Lambda found with anotherthresh.py

\begin{figure}[!htbp]
\begin{tabular}{cc}
\includegraphics[width=8cm]{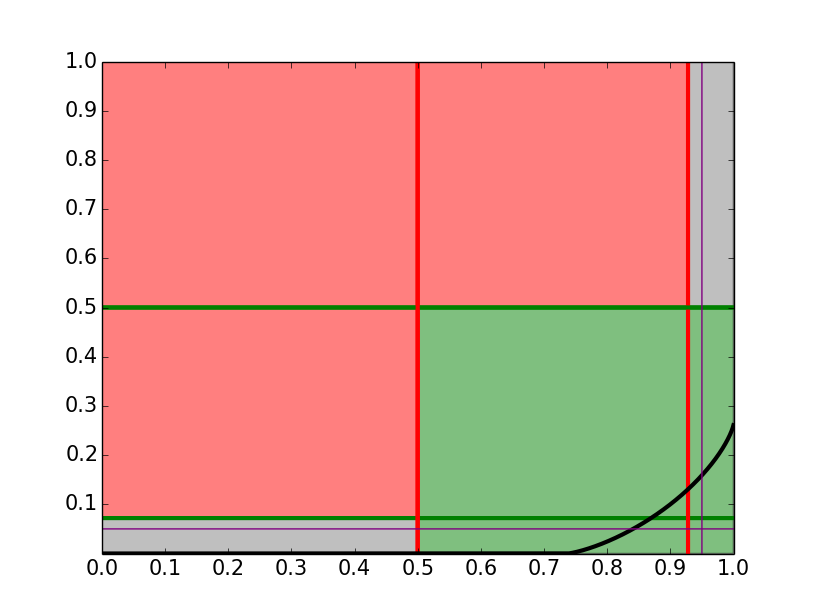}  & \includegraphics[width=8cm]{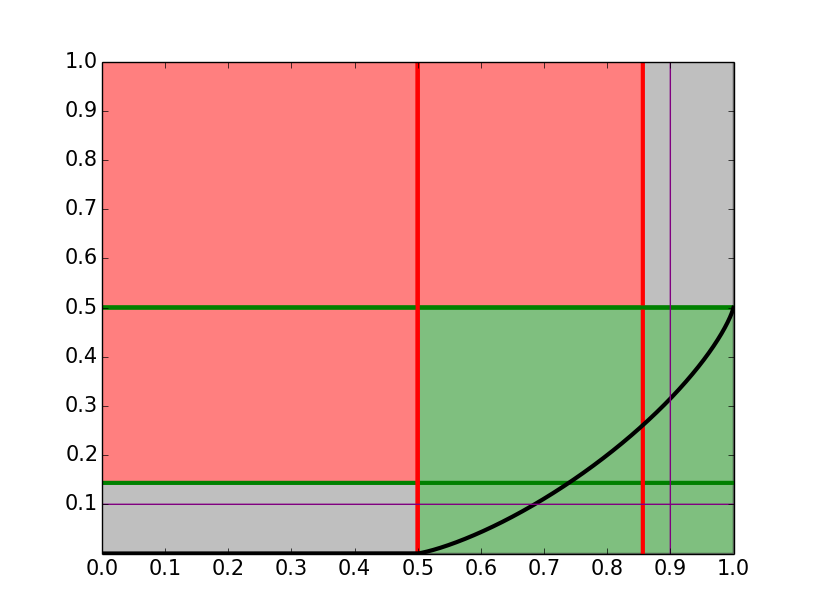}\\
$\rho=0.1$ & $\rho=0.2$\\
\includegraphics[width=8cm]{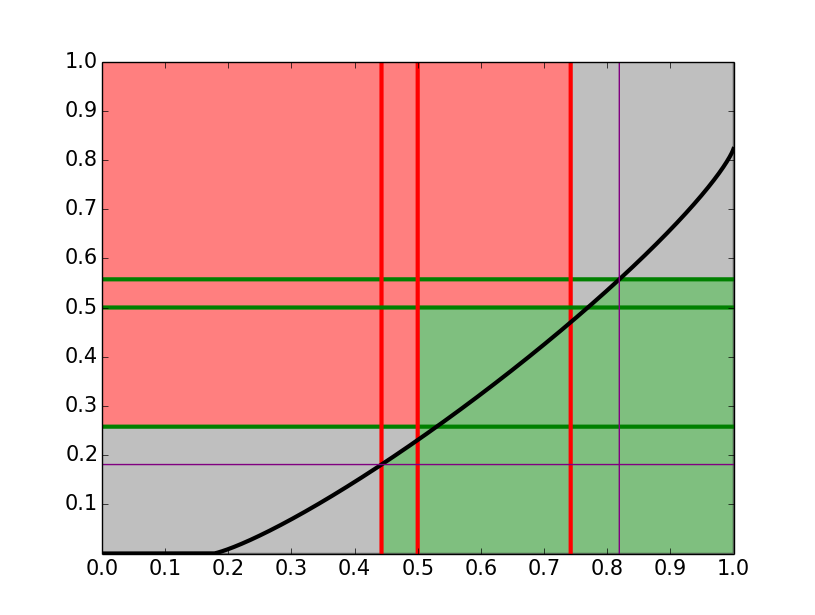} & \includegraphics[width=8cm]{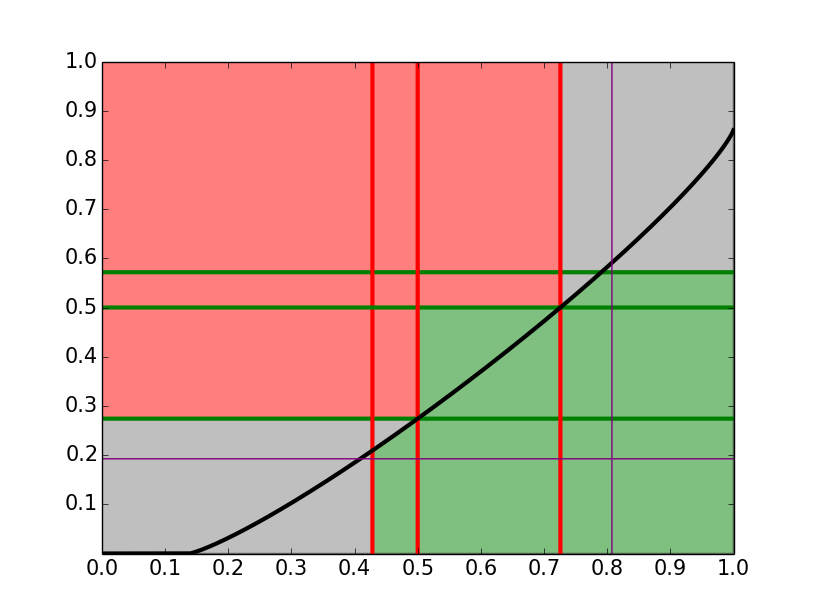}\\
$\rho=0.3616$ & $\rho=\lambda \approx 0.38493708$\\
\includegraphics[width=8cm]{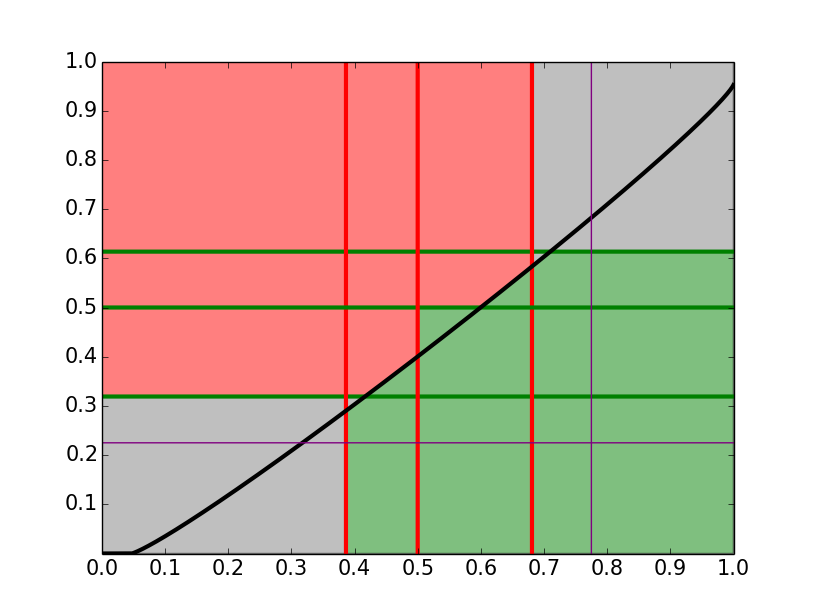} & \includegraphics[width=8cm]{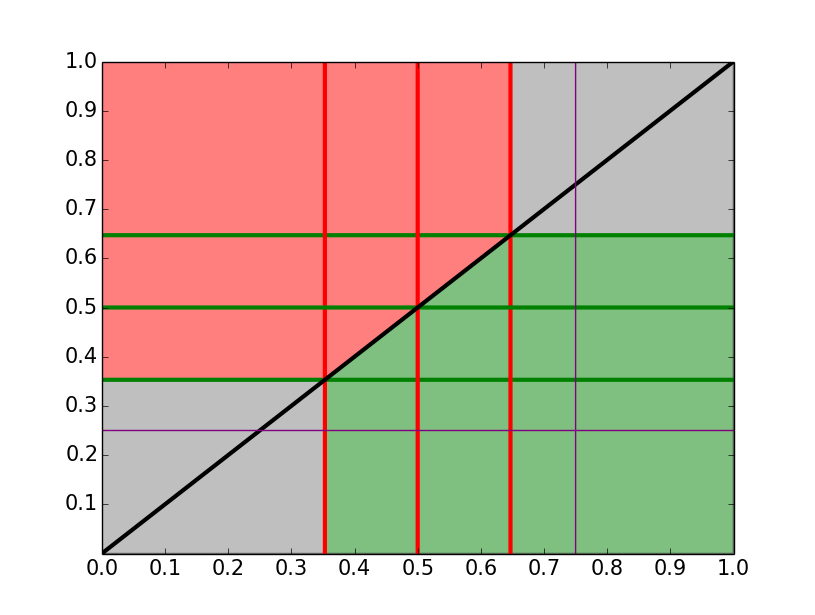}\\
$\rho=0.45$ & $\rho=0.5$
\end{tabular}
\caption{Domination thresholds for various values of $\rho$. In each case $\tau_r$ is plotted on the horizontal axis against $\tau_g$ on the vertical. The marked vertical lines represent (in increasing order) $\kappa^\rho_r$, $\frac{1}{2}$, $\mu^\rho_r$, and $1-\frac{1}{2}\rho$. The horizontal lines are $\frac{1}{2}\rho$, $\kappa^\rho_g$, $\frac{1}{2}$, and $\mu^\rho_g$. The black line is the threshold between red and green domination. Regions of red (respectively green) takeover are marked in red (green), and static regions are marked in grey.}
\label{fig:21}
\end{figure}

\section{Close Contests: between the thresholds \texorpdfstring{$\kappa^\rho_r$}{kr} and \texorpdfstring{$\frac{1}{2}$}{1/2}} \label{section:longsect}

We now wish to apply the results of the previous section to understand scenarios where $\kappa^\rho_r < \tau_r < \frac{1}{2}$. This automatically requires $\rho > \frac{1}{4}$. We shall make both these assumptions throughout this section, and also insist that $\tau_g < \frac{1}{2}$, since cases where $\tau_g > \frac{1}{2}$ will be subsumed into Theorem \ref{thmm:straddle}. However some of our lemmas will have weaker hypotheses which we shall state explicitly for later reuse.

To begin with, we shall deal with scenarios where $(\rho, \tau_g, \tau_r)$ is green dominating, establishing green takeover almost everywhere, thus proving Theorem \ref{thmm:take1}. An example of such a ring is illustrated in Figure \ref{fig:take3}. (We briefly explain how to interpret such a figure: the initial configuration is shown as the innermost ring, the initially unhappy elements are depicted outside that, and the final configuration is shown in the outermost ring. In between, the elements which change are shown in their new colour, with their distance from the centre proportional to their time of change.)

Our proof will be a modification of Section 4 of \cite{BEL1}, and indeed certain things will be simpler in the current case. In outline, the proof will proceed by letting $\eps>0$ while picking a node $u_0$ uniformly at random, and then seeking to establish that $u_0$ will be green in the final configuration with probability exceeding $1-\eps$ for all $n \gg w \gg 0$. As discussed in \ref{subsection:outline}, a key notion will be that of a green firewall, meaning a sequence of at least $w+1$ consecutive green nodes. Recall that any firewall is guaranteed to grow in both directions until it hits a stable interval of the opposite colour. Our plan is thus to establish that green firewalls are highly likely to form on both sides of $u_0$, with no stable red intervals or unhappy green nodes (which may spawn stable red intervals) in positions to block their paths from merging and encompassing $u_0$.

The first step, in Lemma \ref{lem:theli}, will be to identify a sequence of nodes $l_i$ stretching to the left of $u_0$ and $r_i$ to the right. Essentially $l_1$ will turn out to be the first node to the left of $u_0$ whose neighbourhood is such that it will be unhappy if red. Then $l_2$ will be the first such node to the left of $l_1 - (2w+1)$, and so on, with the $r_i$ emerging similarly to the right.

We shall then prove that each of the following statements holds with probability at least $1 - \eps'$ for arbitrary $\eps'>0$, conditional on the previous statements holding. (We withhold the technicalities for now, including suppressing several intermediate notions.)

\begin{itemize}[$\bullet$]
\item The $l_i$ and $r_i$ exist and satisfy various criteria including the absence of red stable intervals and unhappy green nodes between them (Lemma \ref{lem:theli}).
\item The distribution of green nodes in the vicinities of each $l_i$ and $r_i$ is \emph{smooth}, meaning that there are no awkward concentrations of red or green nodes nearby. (See Definition \ref{defin:smth2} and Corollary \ref{coro:theliaresmooth}). 
\item The vicinity of each $l_i$ and $r_i$ is likely to reach maturity without interference from beyond the $l_i$ or $r_i$, where red firewalls may be growing (Definition \ref{defin:greencomplete} and Lemma \ref{lem:thelicomplete}). Thus we can be confident that within our region of interest all the changes that occur will consist of red nodes turning green, rather than vice versa.
\item Smoothness will then allow us to argue that each $l_i$ and $r_i$ stands a reasonable chance of originating a green firewall (Corollary \ref{coro:thelispark} and Lemma \ref{lem:sparkgrows}).
\end{itemize}

Together, these will establish that green firewalls are highly likely to grow on both the left and right of $u_0$, and furthermore there will be no red stable intervals in positions to block these firewalls from eventually meeting and consuming $u_0$.

We now begin the proof by recalling some notation from \cite{BEL1} which will be useful when we wish to divide some interval $I$ into $k$ pieces. The following definition addresses this situation when the length of $I$ is not divisible by $k$:

\begin{figure}[!tbp]
\includegraphics[width=15cm, clip=true, trim= 2cm 3cm 2cm 3cm]{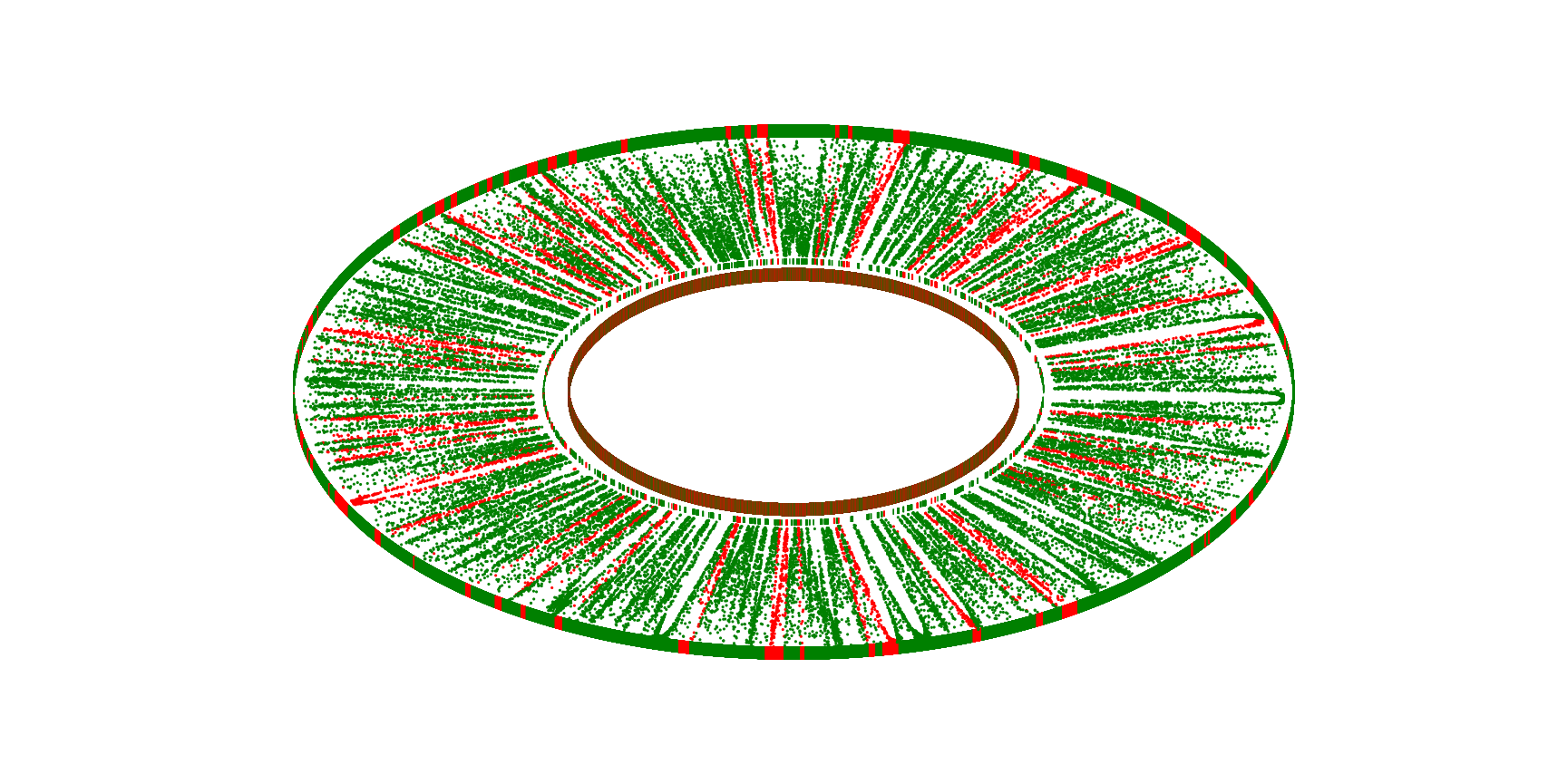} 
\caption{$\rho=0.48$, $\tau_g=0.38$, $\tau_r=0.46$, $w=50$, $n=100,000$, selective dynamic}
\label{fig:take3}
\end{figure}

\begin{defin} \label{defin:divide} Let $I=[a,b]$ and suppose $k\geq 1$.  We define the subintervals 
$I(1:k):= \left[ a, a+\left\lfloor \frac{b-a}{k} \right\rfloor \right]:=$ $\left[ I(1:k)_1,I(1:k)_2 \right]$ and $$I(j:k) := \left[ a+ \left\lfloor \frac{(j-1) (b-a)}{k} \right\rfloor +1, a+\left\lfloor \frac{j (b-a)}{k} \right\rfloor \right] := \left[ I(j:k)_1,I(j:k)_2 \right]$$ for $2 \leq j \leq k$.
\end{defin} 
 
It will sometimes be useful to count the subintervals from right to left: 
 
 \begin{defin} Let $I=[a,b]$ and suppose $k\geq 1$. For $1\leq j\leq k$ we define $I(j:k)^-= I(k-j+1:k)$,  $I(j:k)^-_1= I(k-j+1:k)_1$ and $I(j:k)^-_2=I(k-j+1:k)_2$. 
  \end{defin} 

We now begin to analyse the scenario by picking a node $u_0$ uniformly at random. The aim of the proof will be to show that for any $\eps$, in the finished ring $u_0$ is green with probability $>1-\eps$ for all $n \gg w \gg 0$. We shall deal separately with the cases $1- \rho \leq \tau_r$ and $1- \rho > \tau_r$. We postpone the former situation, and begin with the case where $\gamma:= 1- \rho - \tau_r >0$. Notice that under this assumption, green domination implies that $\rho > \tau_g$ via Corollary \ref{coro:easierdom}.

\begin{rem} \label{rem:Claim1} There are $0<\eta, \zeta <1$ so that $U_g < \eta^{w} U_r$ and $S_r < \zeta^{w} U_r$. \end{rem}

This follows from the assumptions of green domination and $\tau_r> \kappa^\rho_r$, by Propositions \ref{prop:kappasumup} and \ref{prop:outnumber}. 

\begin{rem} \label{rem:Claim2} Red nodes are unlikely to be unhappy in the initial configuration: $U_r \leq \exp{ \left(-2\gamma^2 (2w+1) \right)}$.\end{rem}

To justify this, let $u$ be a randomly selected red node. Then we think of $R_0(\N(u))$ as a sum of $2w+1$ independent random variables taking the value $1$ or $0$. Clearly its expected value is $(1- \rho)(2w+1)$. Then $U_r={\bf P} \big( R_0(\N(u))< \tau_r(2w+1) \big) = {\bf P} \big( (1- \rho)(2w+1) - R_0(\N(u)) > \gamma(2w+1) \big)$. The remark then follows by Hoeffding's inequality (Proposition \ref{prop:Hoeffding}).\\

\begin{defin} \label{defin:smth1} Let $u$ be a node, and let $\theta \in [0,1]$. We say $u$ has a \emph{local green density of $\theta$} or that $\GD_\theta(u)$ holds, if $\frac{G_0(\N(u))}{|\N(u))|}=\theta$.
\end{defin}
We shall be particularly interested in the case $\GD_{\theta^*}(u)$ where $\theta^*$ is as follows:

\begin{defin} \label{def:thetaask}
Let $\theta^*$ be minimal such that $\GD_{\theta^*} (v)$ implies that $v$ is unhappy if red. That is:
$$\theta^*:=\min \left\{ \frac{m}{2w+1}: \frac{m}{2w+1}>1-\tau_r \ \& \ m \in \NN \right\}.$$
\end{defin}
Clearly then, $\theta^* \to 1-\tau_r$ as $w \to \infty$, and it follows from our standing assumption $1-\tau_r - \rho >0$ that $\theta^*> \rho$ for all $w$.\\

\noindent \textbf{Definition of the $l_i$ and $r_i$.} \ \ We proceed recursively, with $l_0:=u_0$. Now define $l_{i+1}$ to be the first node to the left of $l_i - (2w+1)$ which is either unhappy, or satisfies $\GD_{\theta^*}$, or belongs to a red stable interval, so long as this node lies within $[u_0 - \frac{n}{4}]$. The $r_i$ are defined identically to the right. A little later we shall choose a specific value of $k_0$, not depending on $w$. For now we keep it flexible.

\begin{lem} \label{lem:theli}
For any $k_0>0$ and any $\eps'>0$, there exists $d>0$ such that for all large enough $w$ and $n$ large enough relative to $w$, the following hold with probability $>1 - \eps'$ 
\begin{enumerate}
\item $l_{k_0}, \ldots, l_{1}, r_{1}, \ldots, r_{k_0}$ are all defined. 
\item $l_{k_0}, \ldots, l_{1}, r_{1}, \ldots, r_{k_0}$ all satisfy $\GD_{\theta^*}$.
\item There are no unhappy green nodes in $[l_{k_0}, r_{k_0}]$.
\item No node in $[l_{k_0}, r_{k_0}]$ belongs to a stable red interval.
\item For $i \geq 2$, we have $|l_{i-1}-l_i|, |r_{i+1}-r_i|, |r_1 - l_1| \geq e^{dw}$.
\end{enumerate} \end{lem}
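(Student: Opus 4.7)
My plan is to construct the $l_i$ (and symmetrically the $r_i$) inductively. The central quantity is $p^* := \textbf{P}(\GD_{\theta^*}(u))$. Under the standing hypothesis $\gamma = 1-\rho-\tau_r > 0$ we have $\theta^* > \rho$, so Lemma \ref{lem:binom} combined with Stirling gives $p^* = \Theta(U_r)$, and in particular $p^* \sim c^w$ for some $c \in (0,1)$. By Remark \ref{rem:Claim1}, $\rho U_g$ and $S_r$ are both smaller than $p^*$ by factors of at least $\eta^w$ and $\zeta^w$ respectively, for some $\eta, \zeta \in (0,1)$.

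The key observation driving the whole argument is that the local green density of $\N(x)$ changes by at most $1/(2w+1)$ as $x$ shifts by one position, since only one node enters the window and one leaves. Hence the sequence of densities along a leftward walk is a lattice walk of step $1/(2w+1)$. Consequently, starting from a position of density strictly below $\theta^*$, the first position at which density attains $\geq \theta^*$ has density \emph{exactly} $\theta^*$, and so satisfies $\GD_{\theta^*}$. This is the crucial fact that will force each $l_i$ to satisfy condition (2), rather than being an unhappy red node of density strictly greater than $\theta^*$.

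In the inductive step, starting from $l_{i-1}$ (with $l_0 := u_0$), let $y := l_{i-1} - (2w+1) - 1$ be the initial search position for $l_i$. The neighborhood $\N(y)$ is disjoint from $\N(l_{i-1})$ and from any region previously constrained by the construction, so the density at $y$ has the usual binomial distribution and $\textbf{P}(\delta(y) \geq \theta^*) = \Theta(p^*)$. Fix constants $d$ with $0 < d < -\ln c$ (so that $e^{dw} p^* \to 0$) and $K$ (to be chosen at the end), and set $M := K(2w+1)/p^*$. I then wish to bound five bad events: (B5) some initial search position $y$ has $\delta(y) \geq \theta^*$; (B6) no $\GD_{\theta^*}$ node appears among the $K/p^*$ pairwise-disjoint-neighborhood positions $y - j(2w+1)$, for some search; (B1), (B2) an unhappy green node or a red stable interval occurs anywhere in the strip $[u_0 - k_0 M, u_0 + k_0 M]$; (B4) two consecutive stopping nodes lie within distance $e^{dw}$. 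These have probabilities $O(k_0 p^*)$, at most $k_0 (1-p^*)^{K/p^*} \leq k_0 e^{-K}$, $O(k_0 w (\eta^w + \zeta^w))$, and $O(k_0 e^{dw} p^*)$ respectively; all except (B6) tend to $0$, and (B6) can be made smaller than $\eps'/2$ by taking $K$ large.

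Off the union of these bad events, each search walk starts from density strictly below $\theta^*$, must hit a $\GD_{\theta^*}$ node within $M$ steps (since one is guaranteed to lie among the disjoint sample positions), and by the lattice observation must reach density exactly $\theta^*$ before any higher density; combined with the absence of unhappy-green or red-stable nodes in the strip, this forces the first stopping node $l_i$ in each search to satisfy $\GD_{\theta^*}$. Conditions (3) and (4) follow since $[l_{k_0}, r_{k_0}] \subset [u_0 - k_0 M, u_0 + k_0 M]$, and (5) is just (B4). The requirement $n \geq 2 k_0 M$ (ensuring all $l_i, r_i$ fit within $[u_0 - n/4, u_0 + n/4]$) is compatible with $n \gg w \gg 0$ since $M$ is merely exponential in $w$. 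The hard part is the lattice-walk observation pinning the density at $l_i$ down to exactly $\theta^*$; the remaining estimates are routine union bounds over exponentially rare events.
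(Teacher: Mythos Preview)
Your argument is correct and, in fact, fills in a detail the paper's two-line proof glosses over. The paper simply cites Remark~\ref{rem:Claim1} and Lemma~\ref{lem:gen} for items (1)--(4) and Remark~\ref{rem:Claim2} plus a union bound for (5). That is, the paper treats Lemma~\ref{lem:gen} as a black box: with $P(u)=\GD_{\theta^*}(u)$ and $Q(u)$ the disjunction ``unhappy green or in a red stable interval'', the exponential gap from Remark~\ref{rem:Claim1} feeds the hypothesis of Lemma~\ref{lem:gen}, and (5) follows from $U_r\le e^{-4\gamma^2 w}$ together with the trivial union bound over an interval of length $e^{dw}$.

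Your route is different in that you bypass Lemma~\ref{lem:gen} entirely, replacing it by explicit union bounds over five bad events on a strip of length $O(k_0/p^*)$. What your write-up makes explicit --- and what the paper's invocation of Lemma~\ref{lem:gen} leaves to the reader --- is the lattice-walk observation: since $G_0(\N(x))$ changes by at most $1$ as $x$ moves by one, a leftward walk starting from density $<\theta^*$ must pass through density \emph{exactly} $\theta^*$ before reaching any unhappy red node of strictly higher density. Without this observation, neither argument rules out the possibility that $l_i$ is an unhappy red node with $G_0(\N(l_i))>\theta^*(2w+1)$; the probability of such a node is $\Theta(p^*)$, not $o(p^*)$, so it cannot be killed by the ratio hypothesis of Lemma~\ref{lem:gen} alone. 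Your treatment of the start-position event (B5) handles the one case where the walk argument could fail, and the disjointness of $\N(y)$ from $\N(l_{i-1})$ (and from all earlier search regions) lets you iterate cleanly to $l_2,\dots,l_{k_0}$, another point the paper leaves implicit. So your proof is correct, somewhat more elementary, and arguably more careful about item~(2); the paper's version is terser and reuses its general machinery.
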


\begin{proof}

The first four points follow from Remark \ref{rem:Claim1} and Lemma \ref{lem:gen}. The fifth follows from Remark \ref{rem:Claim2} and the fact that for any interval $I$, we have ${\bf P}(\UR(I)>0) \leq \sum_{x \in I} \UR(x)$.
\end{proof}

With all this done, our goal is to show that there is a high chance that at least one of the $l_i$ and at least one of the $r_i$ will originate a green firewall. Since it is very likely that there are no unhappy green nodes or red stable intervals lying between these nodes, we can then be confident that the two firewalls will merge, thereby encompassing $u_0$. To this end we adapt the following notion from \cite{BEL1}:

\begin{defin} \label{defin:smth2} Suppose that $\GD_\theta(u)$ holds for some $\theta$. Let $\mathcal{L}= [u-(3w+1),u]$ and $\mathcal{R}=[u,u+(3w+1)]$. Suppose that $k>0$ is a multiple of $3$ and $\eps'>0$. For $j \leq k$ let $\mathcal{R}_j = \mathcal{R}(j:k)$ and $\mathcal{L}_j = \mathcal{L}(j:k)^-$. We additionally say that $\Sm(u)$ holds if:
\begin{itemize}[$\bullet$]
\item For $0 \leq j \leq \frac{k}{3}$, we have $\frac{|G_0(\mathcal{L}_j)|}{|\mathcal{L}_j|}, \frac{|G_0(\mathcal{R}_j)|}{|\mathcal{R}_j|} \in [\theta - \eps', \theta + \eps']$
\item For $\frac{k}{3} < j \leq k$, we have $\frac{|G_0(\mathcal{L}_j)|}{|\mathcal{L}_j|}, \frac{|G_0(\mathcal{R}_j)|}{|\mathcal{R}_j|} \in [\rho - \eps', \rho + \eps']$
\end{itemize}
\end{defin}

Thus $\Sm(u)$ asserts that the proportion of green nodes in $\N(v)$ smoothly moves from $\theta$ to $\rho$ as $v$ moves from $u$ to $u \pm(2w+1)$.

\begin{coro} \label{coro:smooth2} We make no assumption on $(\rho, \tau_g, \tau_r)$ or $\theta$. For all multiples of three $k>0$ and $\eps'>0$, and for all sufficiently large $w$, $${\bf P}\left( \Sm(u) | \GD_\theta(u) \right) >1-\eps'.$$ 
\end{coro}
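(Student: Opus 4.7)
The plan is to bound, for each of the $2k$ subintervals in the definition of $\Sm(u)$, the probability that its green density deviates from the target by more than $\eps'$, and then conclude by a union bound. The critical observation is that the conditioning event $\GD_\theta(u)$ constrains only the nodes in $\N(u) = [u-w, u+w]$, whereas $\mathcal{L} \cup \mathcal{R}$ extends $2w+1$ further in each direction. Since $\mathcal{L} = [u-(3w+1), u]$ has length $3w+2$ and intersects $\N(u)$ in a segment of length $w+1$, the subintervals $\mathcal{L}(j:k)^-$ with $1 \le j \le k/3$ lie essentially within $\N(u)$, while those with $j > k/3$ lie essentially outside; and symmetrically for $\mathcal{R}$. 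At most one boundary subinterval per side may straddle this division, but this can be absorbed by a slight loss in constants.

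For the outer subintervals, the node colours remain independent Bernoulli$(\rho)$ irrespective of the conditioning, so the green count in $\mathcal{L}_j$ is binomial with parameters $(|\mathcal{L}_j|, \rho)$ and $|\mathcal{L}_j| = \Theta(w)$ (since $k$ is a fixed constant). Hoeffding's inequality (Proposition \ref{prop:Hoeffding}) then bounds the probability of an $\eps'$-deviation of the empirical proportion from $\rho$ by $\exp(-c w)$ for a constant $c = c(\eps', k) > 0$.

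For the inner subintervals, conditioning on $\GD_\theta(u)$ places exactly $\theta(2w+1)$ green nodes uniformly at random among the $2w+1$ positions of $\N(u)$, so the green count in any subinterval of size $s = \Theta(w)$ is hypergeometric with mean $\theta s$. A standard Chernoff-type bound for sampling without replacement (the hypergeometric analogue of Hoeffding's inequality) again yields exponentially small failure probability in $w$. Since $2k$ is a constant independent of $w$, a union bound reduces the total failure probability to a constant multiple of $\exp(-c' w)$, which is below $\eps'$ for all sufficiently large $w$. The only step warranting any care is the concentration bound on the inner side, where sampling is without replacement rather than independent; but this is routine, and no step of the argument presents a substantial obstacle.
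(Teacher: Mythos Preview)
Your proposal is correct and follows essentially the same approach as the paper: split the $2k$ subintervals into the inner ones (inside $\N(u)$, where the conditioning makes the green count hypergeometric with mean proportion $\theta$) and the outer ones (outside $\N(u)$, where the colours remain i.i.d.\ Bernoulli($\rho$)), apply a concentration bound to each, and take a union bound over the fixed number $2k$ of intervals. The only difference is cosmetic: the paper uses Chebyshev's inequality for the hypergeometric part (yielding $O(1/w)$ tails) and the weak law of large numbers for the outer part, whereas you invoke Hoeffding-type exponential bounds in both cases; either suffices since $k$ is a constant independent of $w$.
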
 
\begin{proof} Select $u$ uniformly at random from nodes such that $\GD_\theta(u)$ holds. We prove the first smoothness criterion first. The nodes in $\N(u)$ form a hypergeometric distribution. Since we consider fixed $k$ and $\eps'$ and take $w$ large, it suffices to prove the result for given $j$ with $1\leq j \leq \frac{k}{3}$. Here the result follows from an application of Chebyshev's inequality and standard results for the mean and variance of a hypergeometric distribution: 
\[ {\bf P}\left( \left| \frac{G(L_j)}{|L_j|}-\theta \right|>\eps' \right) <|L_j|^{-2} \eps^{\prime -2} \mbox{ Var}(G(L_j)) =O(1) |L_j|^{-1}. \]
Noting that $\big| |L_j| -(3w+1)/k \big| \leq 1$, the result follows.

Now let $u_{-1}=u-(2w+1)$ and $u_1=u+(2w+1)$. The fact that $\GD_\theta(u)$ holds has no impact on the distributions for $\N(u_{-1})$ and $\N(u_1)$, where both $E(G(\N(u_1)))=E(G(\N(u_{-1})))= \rho$. Thus the second smoothness criterion follows directly from the weak law of large numbers. 
\end{proof} 

Of course, the $l_i$ and $r_i$ are not selected randomly, so we may not simply apply Corollary \ref{coro:smooth2} to establish their smoothness. Nevertheless we shall be able to deduce it from the following result whose somewhat technical proof is contained in Appendix \ref{section:appC}:

\begin{restatable}{prop}{newbound}
\label{prop:newbound}
Fix a value of $\rho$ and a value $\theta \neq \rho$. For any node $u$ let $x_u$ be the first node to the left of $u$ such that $GD_\theta(x_u)$ holds.

Let $Q(u)$ be a property of nodes which depends only on the vicinity of $u$ in the initial configuration (which is to say it depends on $[u-C,u+C]$, for some $C$ independent of $n$).

Suppose there exists $p>0$ such that for all sufficiently large $w$ we have ${\bf P}(Q(u)|\GD_\theta(u)) \geq p$. Then there exists $p'>0$ such that for all $n \gg w \gg 0$ we have ${\bf P}(Q(x_u)) \geq p'$ for $u$ selected uniformly at random.

If additionally the hypothesis holds with $p = 1- \eps'$ for all $\eps'>0$, then we may likewise take $p' = 1-\eps_0$ for any $\eps_0>0$.
\end{restatable}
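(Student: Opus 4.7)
The plan is to re-express $\mathbf{P}(Q(x_u))$ as a translation-invariant expectation via a length-biasing identity, then use the exponential sparsity of $\GD_\theta$-events to control the relevant gap statistics. Let $S$ denote the set of nodes $v$ on the ring for which $\GD_\theta(v)$ holds, and, for $v \in S$, let $L_v$ be the cyclic distance from $v$ to the next element of $S$ on the right, so that $\sum_{v \in S} L_v = n$ deterministically; set $L_v := 0$ for $v \notin S$. Fixing the initial configuration and choosing $u$ uniformly at random on the ring, one has $\mathbf{P}(x_u = v) = L_v/n$ for each $v \in S$. Averaging over the initial configuration and using its rotational invariance, for any fixed reference node $0$,
\[
\mathbf{P}(Q(x_u)) \;=\; \mathbf{E}\bigl[L_0\,\mathbf{1}(\GD_\theta(0))\,\mathbf{1}(Q(0))\bigr], \qquad \mathbf{E}\bigl[L_0\,\mathbf{1}(\GD_\theta(0))\bigr] = 1,
\]
the second identity resulting from taking $Q \equiv 1$ in the first. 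Equivalently, $\mathbf{E}[L_0 \mid \GD_\theta(0)] = 1/p_w$ where $p_w := \mathbf{P}(\GD_\theta(0))$, and since $\theta \neq \rho$, Stirling's formula applied to $b(2w+1, \rho)$ gives $p_w = e^{-\Theta(w)}$, so the conditional mean $1/p_w$ is exponentially large in $w$.

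For the quantitative second conclusion, I would apply Cauchy--Schwarz to the complementary expectation:
\[
\mathbf{E}\bigl[L_0\,\mathbf{1}(\GD_\theta(0))\,\mathbf{1}(\neg Q(0))\bigr]^2 \;\leq\; \mathbf{E}\bigl[L_0^2\,\mathbf{1}(\GD_\theta(0))\bigr]\cdot\mathbf{P}\bigl(\GD_\theta(0)\cap\neg Q(0)\bigr) \;\leq\; p_w^2\,\mathbf{E}[L_0^2\mid\GD_\theta(0)]\,(1-p).
\]
Subtracting from the normalisation gives $\mathbf{P}(Q(x_u)) \geq 1 - \sqrt{p_w^2\,\mathbf{E}[L_0^2\mid\GD_\theta(0)]\,(1-p)}$. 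If one can show $p_w^2\,\mathbf{E}[L_0^2\mid\GD_\theta(0)] \leq C$ for a constant $C$ independent of $w$, this yields $\mathbf{P}(Q(x_u)) \geq 1 - \sqrt{C(1-p)}$, which gives the quantitative second conclusion with $\eps_0 = \sqrt{C\eps'}$ when $p = 1-\eps'$.

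For the first conclusion at general $p > 0$, the Cauchy--Schwarz bound becomes vacuous and a complementary truncation argument is required. Setting $M := \alpha/p_w$ for $\alpha > 0$ to be chosen, note that $\mathbf{P}(Q(x_u)) \geq M \cdot \mathbf{P}(L_0 \geq M,\,\GD_\theta(0),\,Q(0))$, and from the elementary inequality $\mathbf{P}(A\cap B\cap C) \geq \mathbf{P}(A\cap B) - \mathbf{P}(\neg C \cap B)$ we obtain
\[
\mathbf{P}\bigl(L_0 \geq M,\,\GD_\theta(0),\,Q(0)\bigr) \;\geq\; \mathbf{P}(L_0 \geq M \mid \GD_\theta(0))\,p_w - \mathbf{P}(\GD_\theta(0)\cap\neg Q(0)) \;\geq\; \bigl(c(\alpha) - (1-p)\bigr)p_w,
\]
provided the geometric-style tail bound $\mathbf{P}(L_0 \geq \alpha/p_w \mid \GD_\theta(0)) \geq c(\alpha) > 0$ holds uniformly in $w$. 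Choosing $\alpha$ small enough that $c(\alpha) > 1-p$ (possible since $c(\alpha) \to 1$ as $\alpha \to 0$) delivers $\mathbf{P}(Q(x_u)) \geq \alpha\bigl(c(\alpha) - (1-p)\bigr) > 0$.

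The chief technical obstacle is the pair of estimates on $L_0$ given $\GD_\theta(0)$: the second-moment bound $p_w^2\,\mathbf{E}[L_0^2 \mid \GD_\theta(0)] = O(1)$, and the tail bound $\mathbf{P}(L_0 \geq \alpha/p_w \mid \GD_\theta(0)) \geq c(\alpha) > 0$. Heuristically, if the $\GD_\theta$-events at distinct nodes were independent, then $L_0$ would be geometric with parameter $p_w$ and both estimates would be immediate. Positive correlation between $\GD_\theta$-events through the overlapping neighbourhoods $\N(k) \cap \N(0)$ for $|k| \leq 2w$ thickens the short-range distribution, but peak-binomial conditional probabilities $\mathbf{P}(\GD_\theta(k) \mid \GD_\theta(0)) = O(k^{-1/2})$ contribute at most $O(w^2)$ to $\mathbf{E}[L_0^2 \mid \GD_\theta(0)]$, negligible against the long-range contribution. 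Beyond distance $2w$ the search for the next $\GD_\theta$-node resumes with events disjoint from $\N(0)$ and inherits an essentially geometric tail of parameter $p_w$, yielding both estimates with constants independent of $w$. Once these are in place, the remainder of the argument is routine bookkeeping.
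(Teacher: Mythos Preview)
Your length-biasing identity $\mathbf{P}(Q(x_u)) = \mathbf{E}[L_0\,\mathbf{1}(\GD_\theta(0))\,\mathbf{1}(Q(0))]$ is correct, and the Cauchy--Schwarz route for the $p=1-\eps'$ case is plausible in outline (though the second-moment bound $p_w^2\,\mathbf{E}[L_0^2\mid\GD_\theta(0)]=O(1)$ is not trivial and is only sketched). The real gap is in the general-$p$ argument: the assertion that $c(\alpha)\to 1$ as $\alpha\to 0$ is false.

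To see this, compute $\mathbf{P}(L_0=1\mid\GD_\theta(0))$. Conditioned on $\GD_\theta(0)$, node $-w$ is green with probability $\theta$ (by exchangeability within $\N(0)$) and node $w+1$ is green with probability $\rho$, independently; $\GD_\theta(1)$ holds iff these agree, so $\mathbf{P}(L_0=1\mid\GD_\theta(0))=\theta\rho+(1-\theta)(1-\rho)=:q>0$. For any fixed $\alpha>0$ one has $\alpha/p_w\geq 2$ once $w$ is large, whence $c(\alpha)\leq \mathbf{P}(L_0\geq 2\mid\GD_\theta(0))\leq 1-q$ uniformly. Your truncation step therefore yields $c(\alpha)-(1-p)>0$ only when $p>q$, and fails for small $p$. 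The underlying phenomenon is precisely the short-range clustering you mention: $\GD_\theta$-events genuinely cluster, so the conditional law of $L_0$ is a mixture of a short intra-cluster piece and a long inter-cluster piece, and the short piece carries a fixed positive mass that does not vanish as $\alpha\to 0$.

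The paper's proof confronts this clustering directly rather than via moments of $L_0$. It introduces $\GD_\theta(v,z)$, the event that at most $z$ nodes in $[v-(2w+1),v+(2w+1)]$ satisfy $\GD_\theta$, shows (Corollary~\ref{coro:hb}) that this follows from $\GD_\theta(v)$ with high probability for $z$ large, and then runs a left-to-right iteration that partitions the ring into blocks each containing one cluster. Counting over blocks replaces your $L_0$-weighting by an essentially uniform measure over clusters; the bound $z$ on cluster size then converts the ratio $\mu=\mathbf{P}(Q\mid\GD_\theta(\cdot,z))/\mathbf{P}(\neg Q\mid\GD_\theta(\cdot,z))$ into a lower bound $p'\approx p/z$. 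To repair your approach you would likewise need to pass to the cluster level---for instance by conditioning on $0$ being the rightmost $\GD_\theta$-node in its local cluster, which restores the ``$c(\alpha)\to 1$'' behaviour for the corresponding gap variable---rather than bounding the raw tail of $L_0$.
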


We shall appeal to Proposition \ref{prop:newbound} several times, starting with this:

\begin{coro} \label{coro:theliaresmooth}
Let $\eps'>0$ and $k>0$ be a multiple of $3$ and let $k_0>0$ be fixed. Then for all sufficiently large $n \gg w \gg 0$, with probability $>1-\eps'$ we have that $\Sm(l_i)$ and $\Sm(r_i)$ hold for all $i \leq k_0$.
\end{coro}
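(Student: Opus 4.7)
The approach is to combine Lemma \ref{lem:theli}, Corollary \ref{coro:smooth2}, and Proposition \ref{prop:newbound}. The underlying idea is that while smoothness is established for a uniformly random node with density $\theta^*$, the $l_i$ and $r_i$ are selected nodes (\emph{first} occurrences of density $\theta^*$), and Proposition \ref{prop:newbound} is precisely the mechanism for transferring ``high probability conditional on $\GD_{\theta^*}$'' statements to such selected nodes.

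First, I would apply Lemma \ref{lem:theli} with error parameter $\eps'/2$ to conclude that, with probability exceeding $1 - \eps'/2$, each of $l_1, \ldots, l_{k_0}, r_1, \ldots, r_{k_0}$ is defined and satisfies $\GD_{\theta^*}$ (rather than the alternative clauses of being unhappy or belonging to a stable red interval). Conditional on this favourable event, each $l_i$ is exactly the first node to the left of $l_{i-1} - (2w+1)$ at which $\GD_{\theta^*}$ holds, and likewise for the $r_i$. Moreover, the estimate $|l_{i-1}-l_i|,|r_{i+1}-r_i|,|r_1-l_1|\geq e^{dw}$ in Lemma \ref{lem:theli} ensures that for all sufficiently large $w$ the ``smoothness vicinities'' $[l_i-(3w+1), l_i+(3w+1)]$ (and similarly for the $r_i$) are pairwise disjoint, since $e^{dw}$ dwarfs $6w+2$ eventually.

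Next, Corollary \ref{coro:smooth2} shows that ${\bf P}(\Sm(u) \mid \GD_{\theta^*}(u)) > 1 - \eps''$ for any $\eps'' > 0$ and all sufficiently large $w$, where $u$ is chosen uniformly at random. I would then invoke Proposition \ref{prop:newbound} with $Q(u) := \Sm(u)$ (a property that depends only on $[u-(3w+1), u+(3w+1)]$, a set of size $O(w)$ independent of $n$) and $\theta = \theta^*$, using the strengthened ``for any $\eps_0$'' clause of the proposition. This delivers ${\bf P}(\Sm(l_1)) > 1 - \eps'''$ for any prescribed $\eps''' > 0$. For each subsequent $l_i$ (and for the $r_i$), the same argument applies after translating the reference point to $l_{i-1}-(2w+1)$: translation invariance of the i.i.d. initial configuration, together with the disjointness of the relevant vicinities, lets each application of Proposition \ref{prop:newbound} proceed conditionally independently of the previous ones. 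A union bound over the $2k_0$ smoothness events, a fixed number independent of $w$ and $n$, contributes at most $\eps'/2$ additional error, and combining with the first step yields the total bound of $\eps'$.

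The main obstacle is the selection bias inherent in the definition of the $l_i$ and $r_i$: they are neither uniformly random nodes nor obviously independent of one another, and smoothness is exactly a statement about their vicinities. Proposition \ref{prop:newbound} is custom-built to absorb this bias at each step, so the real work lies in verifying that the conditioning implicit in choosing $l_i$ given $l_{i-1}$ does not contaminate the smoothness event at $l_i$. The exponential separation $e^{dw}$ guaranteed by Lemma \ref{lem:theli} is precisely what makes the relevant vicinities disjoint and independent; beyond that, the argument reduces to careful bookkeeping of the error parameters $\eps', \eps'', \eps'''$.
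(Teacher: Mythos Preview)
Your proposal is correct and follows essentially the same route as the paper's own proof: combine Corollary \ref{coro:smooth2} with Proposition \ref{prop:newbound} to obtain smoothness of $x_u$ for uniformly random $u$, apply this at $u=u_0$ (or rather $u_0-(2w+1)$) to handle $l_1$, then iterate with $u=l_{i-1}-(2w+1)$ using independence of the fresh region to the left, and finally take a union bound over the $2k_0$ nodes. You are in fact somewhat more explicit than the paper in invoking Lemma \ref{lem:theli} to guarantee that the $l_i$ genuinely satisfy $\GD_{\theta^*}$ (as opposed to one of the alternative clauses in their definition), which is a point the paper leaves implicit; your framing via disjointness of the smoothness vicinities is a minor variation on the paper's direct appeal to independence of $[l_i-D,\,l_i-(w+1)]$ from $[l_i-w,\,r_i+w]$, but the substance is the same.
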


\begin{proof}
Corollary \ref{coro:smooth2} and Proposition \ref{prop:newbound} combine to tell us that for uniformly randomly selected $u$, we know that $\Sm(x_u)$ holds with probability $>1-\eps'$. Applying this to $u=u_0$ directly tells us that $\Sm(l_1)$ holds with probability $>1-\eps'$. Of course a symmetric argument applies to $r_1$. Proceeding inductively, suppose that we have established the result for $l_i$ and $r_i$. Then the sequence of nodes $[l_i-D,l_i - (w+1)]$, where $D$ is any quantity which is small compared to $n$, is independent of $[l_i-w,r_i+w]$. Hence we may apply the same argument again taking $u=l_i - (2w+1)$ to deduce that $\Sm(l_{i+1})$ holds with probability $>1-\eps'$. A symmetric argument works for $r_{i+1}$.
\end{proof}

Now, the following definition is valid under all three dynamics. Recall that a node is \emph{hopeful} if it is unhappy but a change of colour would cause it to become happy, and that we denote the number of hopeful, hopeful green, and hopeful red nodes in a set $A$ at time $t$ by $F_t(A)$, $\FG_t(A)$ and $\FR_t(A)$ respectively. (In our current scenario where $\tau_h, \tau_r < \frac{1}{2}$ hopefulness is automatic for unhappy nodes. However we shall reuse this notion later in another context.) 

\begin{defin} \label{defin:greencomplete}
We say that a node $u$ \emph{green completes at stage $s$} if \begin{itemize}[$\bullet$]
\item $F_s(\N(u))=0$, but $F_t(\N(u))>0$ for all $t<s$
\item $\FG_t(\N(u))=0$ for all $t\leq s$.
\end{itemize}
\end{defin}

For the nodes we consider, it will typically be the case that $\FR_0(\N(u))>0$, otherwise we may have the trivial situation of green completion at stage 0. 

If $u$ green completes it follows that $G_t(\N(u))$ is a monotonic increasing function for $t\leq s$. We shall apply the following Lemma to $l_i$ and $r_i$ where $i<k_0$, but phrase it more generally for reuse later. Again the most useful case will be when $u$ itself is a hopeful red node.

\begin{lem} \label{lem:thelicomplete}
Suppose that $u$ is a node and that $v$ and $v'$ are its nearest hopeful green nodes to the left and right respectively in the initial configuration. Assume that there exists $d>0$ so that $ |u-v|, |v'-u| >e^{wd}$ for all $w \gg 0$. Then the following holds independently of all other facts about the ring's initial configuration: in the selective model with any $\tau_g, \tau_r$ or in the incremental model with $\tau_g, \tau_r < \frac{1}{2}$, for any $\eps'>0$ we have that $u$ green completes with probability $>1- \eps'$ for all large enough $w$. In the synchronous model for $\tau_g, \tau_r < \frac{1}{2}$ we have instead that $u$ green completes with probability $1$. 
\end{lem}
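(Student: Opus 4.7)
My strategy is a race argument: I will show that the monotone red-to-green dynamics inside $\N(u)$ complete long before any wave of hopeful greens can propagate from the distant $v$ or $v'$ to interfere. The central observation is that the happiness of every node in $\N(u)$ depends only on the state of $[u-2w,u+2w]$, so a green in $\N(u)$ can become hopeful only after some green-to-red flip occurs in this ``safe zone''. By hypothesis the safe zone initially contains no hopeful greens, so until the first safe-zone green-to-red flip every flip inside $[u-2w,u+2w]$ must be red-to-green. Such flips can only create new hopeful reds (they raise the local green count and so cannot produce a hopeful green), and they monotonically decrease the local red count; hence at most $4w+1$ local flips are needed for $\N(u)$ to green complete.

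Next I would establish a light-cone bound on wave propagation. On each side of $u$, let the ``innermost'' be the hopeful green nearest to $u$. For the innermost to shift closer to $u$, some hopeful green within distance $w$ of the current innermost must flip, and each such flip advances the innermost by at most $w$. Since the innermost on each side lies at distance $\geq e^{wd}$ initially, at least $\lceil (e^{wd}-2w)/w\rceil$ ``wave-advancing'' flips must occur before any hopeful green can reach $[u-2w,u+2w]$.

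For the selective dynamic, let $A_t$ denote the total number of hopeful nodes in the ring at time $t$. Conditional on $A_t$, at each step the probability that the chosen flip lies in $\N(u)$ is $\geq 1/A_t$ (provided $\N(u)$ still contains a hopeful), while the probability that the chosen flip is wave-advancing is at most $2(w+1)/A_t$, there being at most $w+1$ candidate hopeful greens within distance $w$ of each innermost. The crucial point is that the $A_t$-factors cancel in the ratio: over any horizon of $N$ flips the expected number of local flips is at least $N/A_t$ and the expected number of wave advances is at most $2(w+1)N/A_t$, so standard Chernoff estimates show that for any $\eps'>0$ and all sufficiently large $w$, with probability $>1-\eps'$ the $\leq 4w+1$ local flips occur long before anything close to $e^{wd}/w$ wave advances can accumulate, uniformly in $n$. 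The incremental case with $\tau_g,\tau_r<\frac{1}{2}$ reduces immediately, since unhappy and hopeful coincide there, so the incremental and selective dynamics pick from the same set with the same distribution. For the synchronous dynamic with $\tau_g,\tau_r<\frac{1}{2}$ the argument becomes entirely deterministic: each synchronous step advances the innermost by at most $w$ and strictly decreases the local red count in $[u-2w,u+2w]$ by at least one until stability, so $\N(u)$ green completes within $4w+1$ synchronous steps while the wave needs at least $\lceil(e^{wd}-2w)/w\rceil$ synchronous steps, which forces completion with probability $1$ for all large $w$.

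The main obstacle I anticipate is formalizing the $A_t$-cancellation cleanly, because $A_t$ is a time-varying random variable that may be as large as $n$. The cleanest route is probably to pass to the continuous-time equivalent of the selective dynamic, in which every hopeful node carries an independent exponential clock of rate $1$ (and the discrete skeleton of the jump times reproduces the selective dynamic exactly). Then the local sub-process inside $[u-2w,u+2w]$ and the wave sub-process outside it evolve genuinely independently until the first safe-zone violation, with expected local completion time $O(w)$ and expected wave-arrival time $\Omega(e^{wd}/w^2)$; a direct tail bound then yields the required probability uniformly in $n$.
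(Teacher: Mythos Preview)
Your race argument is correct and matches the paper's approach closely. The paper sidesteps your $A_t$-cancellation concern without Poissonization by simply restricting attention to the subsequence of stages that are either a flip in $\N(u)$ or a wave-advancing flip: conditioned on being in this subsequence, the probability of the latter is at most $2(w+2)$ times that of the former regardless of $A_t$, so a direct law-of-large-numbers argument on this filtered sequence (of length $\geq e^{d'w}$) shows that far more than $2w+1$ completion steps occur before the wave arrives.
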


\begin{proof}

We work with the selective/incremental model first. Let $I_1=[v,u]$ and $I_2=[u,v']$. Let $k$ be the greatest such that, when $1\leq j \leq k$, $I_1(j:k)$ and $I_2(j:k)$ are of length $\geq w+1$. For $1\leq j \leq \lfloor k/2 \rfloor $ define: 
\[ J_j:= I_1(j:k) \cup I_2(j:k)^-. \] 

\noindent  For $1\leq j \leq \lfloor k/2 \rfloor $, let $P_j$ be the event that $R(J_j)$ increases by $1$, and note that $P_{j+1}$ cannot occur until $P_{j}$ has occurred.  Now the basic idea is that if green completion fails to occur,  then the sequence of events $P_2,...,P_{\lfloor k/2 \rfloor} $ must occur before any stage when $F(\N(u))=0$. 

We label certain stages as being a `step towards green completion', and certain others as being a `step towards failure of green completion'.\\

\noindent \textbf{Steps towards green completion}.  If $F(\N(u))>0$ we label any stage at which a node in $\N(u)$ changes from red to green as a \emph{step towards green completion}. Once $F(\N(u))=0$, we consider every step to be a step towards green completion.\\

\noindent \textbf{Steps towards failure of green completion}. If  $1\leq j<\lfloor k/2 \rfloor $ is the greatest  such that $P_j$ has occurred prior to stage $s$ or no $P_j$ has occurred and $j=1$,  and if $P_{j+1}$ occurs at stage $s$, then we label $s$  a  \emph{step towards failure of green completion}.\\

We now adopt a modified stage count which counts only steps towards either green completion or its failure. (Once $F(\N(u))=0$, every stage is counted.) Now at any stage $s$ at which some $P_j$ for $ j\leq \lfloor k/2 \rfloor$ is yet to occur, and at which $F(\N(u))>0$, the probability of $s$ being  a step towards failure of green completion is at most $2(w+2)$ times the probability of it being a step towards green completion (since there are at most $2(w+2)$ times as many nodes which, if chosen to change, will cause a step towards failure of green completion, as those which will cause a step towards green completion). Choosing $0<d'<d$ we get that for all sufficiently large $w$, $\lfloor k/2 \rfloor  >e^{d'w}$. We may therefore consider the first $e^{d'w}$ many stages which are  steps either towards green completion or failure of completion and, for large $w$, consider the  probability that at most $2w+1$ of these are steps towards green completion. By the law of large numbers, this probability tends to 0 as $w\rightarrow \infty$.  What is more, by assumption on $u$, $2w+1$ many such steps more than suffice for its green completion.

For the synchronous model, we simply have to note that since $e^{d'w} \gg 2w+1$ for large enough $w$, the influence of $v$ or $v'$ cannot be felt in $\N(u)$ within the first $2w+1$ time-steps. Thus green completion is inevitable.
\end{proof}

We shall say that a node $u$ \emph{originates a green firewall} if $u$ green completes, at which time $\N(u)$ contains a run of $w+1$ consecutive green nodes. The final step in the proof is to show that each $l_i$ and $r_i$ originates a green firewall with reasonable probability. We have to do a little more work to establish this, and again we express things more generally. First we establish something weaker, that a firewall gets started in the following sense:

\begin{defin} \label{defin:spark}
With no assumptions on our scenario, let $\alpha \in (0,1)$. We say that a node $u$ \emph{$\alpha$-sparks} if $u$ green completes, and at the moment of completion the interval $K_\alpha:=[u- \lfloor \alpha \cdot w\rfloor, u+ \lfloor \alpha \cdot w \rfloor ]$ is completely green.
\end{defin}

Our strategy will be to argue, under suitable conditions on $\alpha$, that each $l_i$ and $r_i$ has a reasonable chance of $\alpha$-sparking, and then to establish that such a spark will guarantee the emergence of a green firewall. First, however, we need to consider a technical matter which will become important:

\begin{lem} \label{lem:firestarter2}
For any $\theta,\rho \in (0,1)$, define $Z(\theta,\rho):=1+ \theta^3 - 3 \theta^2 + 3 \theta^2 \rho - 2 \theta^3 \rho$. Then \begin{enumerate}[(i)]
\item $\frac{\partial Z}{ \partial \theta}<0$
\item If $\theta < \frac{1}{2}(1+ \rho)$ then $Z(\theta, \rho)>0$.
\end{enumerate}
\end{lem}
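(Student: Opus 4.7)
My plan has two parts that mirror the two claims, and both amount to direct (but slightly organised) calculus.

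For (i), I would simply compute the derivative. A quick calculation gives
$$\frac{\partial Z}{\partial \theta} \;=\; 3\theta^2(1-2\rho) - 6\theta(1-\rho) \;=\; 3\theta\bigl((1-2\rho)\theta - 2(1-\rho)\bigr).$$
Since $\theta > 0$, it suffices to show the bracketed factor is negative for every $\rho \in (0,1)$. If $\rho \geq 1/2$, then $(1-2\rho)\theta \leq 0 < 2(1-\rho)$. If $\rho < 1/2$, then $(1-2\rho)\theta < 1-2\rho$ (using $\theta < 1$), and $1-2\rho < 2(1-\rho)$ reduces to the trivial $1<2$. Either way the bracket is strictly negative.

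For (ii), the strict monotonicity from (i) reduces matters to showing that $Z(\theta_0, \rho) > 0$ where $\theta_0 := (1+\rho)/2$. Writing $u := 1 + \rho \in (1,2)$ and substituting yields
$$8\, Z(\theta_0, \rho) \;=\; -2u^4 + 9u^3 - 12u^2 + 8,$$
which visibly vanishes at $u=2$ (corresponding to $\rho = 1$). This motivates the factorisation
$$8\, Z(\theta_0, \rho) \;=\; (2-u)\, P(u), \qquad P(u) := 2u^3 - 5u^2 + 2u + 4.$$
Since $2-u = 1-\rho > 0$ on our range, it only remains to show $P(u) > 0$ on $[1,2]$. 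I would note $P(1) = 3$ and $P(2) = 4$, and identify the unique interior critical point $u^* = (5 + \sqrt{13})/6 < 3/2$ of $P$ in $[1,2]$; since $P'$ is an upward-opening parabola, $u^*$ is the minimum of $P$ on the interval. Rather than evaluating $P(u^*)$ directly with $\sqrt{13}$, I would use the relation $6(u^*)^2 = 10u^* - 2$ from $P'(u^*)=0$ to reduce powers, giving $P(u^*) = (41 - 13u^*)/9$, which is positive because $13u^* < 13 \cdot \tfrac{3}{2} < 41$. Hence $P > 0$ throughout $[1,2]$, so $Z(\theta_0, \rho) > 0$, and (i) then supplies $Z(\theta, \rho) > Z(\theta_0, \rho) > 0$ for every $\theta < \theta_0$.

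There is no genuine obstacle here; the whole lemma is a routine exercise. The one moment that takes any thought is spotting that the quartic in (ii) factors as $(2-u)P(u)$, which is signposted by the boundary value $Z(1,1) = 0$ coming from $\rho = 1$. After that, exploiting the critical-point relation $P'(u^*) = 0$ to reduce $P(u^*)$ to a linear expression in $u^*$ sidesteps any unpleasant manipulation of $\sqrt{13}$.
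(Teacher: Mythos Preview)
Your proof is correct and follows the same overall strategy as the paper: for (i) compute and factor $\partial Z/\partial\theta$, and for (ii) use the monotonicity from (i) to reduce to checking positivity at the boundary $\theta_0=\tfrac12(1+\rho)$. The paper handles (i) with a single bound (replacing $\rho(2-2\theta)$ by $2-2\theta$ via $\rho<1$) rather than your case split, but the content is the same.

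The only real difference is in how the boundary value is verified. The paper simply writes out $Z(\theta_0,\rho)=\tfrac38-\tfrac58\rho+\tfrac38\rho^2+\tfrac18\rho^3-\tfrac14\rho^4$ and asserts that the reader may check this is positive on $(0,1)$. You instead substitute $u=1+\rho$, spot the factor $(2-u)$ forced by $Z(1,1)=0$, and then carry out a clean minimum analysis of the cubic $P(u)=2u^3-5u^2+2u+4$, reducing $P(u^*)$ to a linear expression via $P'(u^*)=0$. Your route is more self-contained, actually supplying the verification the paper omits, at the cost of a little extra algebra; the paper's route is shorter but leaves a small computation to the reader.
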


\begin{proof}
To start with, $\frac{\partial Z}{ \partial \theta} = 3 \theta^2 - 6 \theta + 6 \theta \rho - 6 \theta^2 \rho = 3 \theta(\theta - 2 + \rho(2- 2\theta)) <  3 \theta(\theta -2 +2 - 2\theta) = - 6 \theta^2<0$, establishing the first statement of the lemma for any $\theta \in (0,1)$.

For the second, then, by assumption on $\theta$, we only need to check that $$Z \left( \frac{1}{2}(1 + \rho), \rho \right)=\frac{3}{8} - \frac{5}{8} \rho + \frac{3}{8} \rho^2 + \frac{1}{8} \rho^3 - \frac{1}{4} \rho^4>0$$ which the reader may verify does indeed hold for $\rho \in (0,1)$. 
\end{proof}

\begin{rem} \label{rem:thetastarisgood}
Lemma \ref{lem:firestarter2} establishes in particular that $Z(\theta^*, \rho)>0$ for all large enough $w$. 
\end{rem}

To see why the remark holds, recall that the hypotheses of Theorem \ref{thmm:take1} include the assumption that $\tau_r > \kappa^\rho_r$ and we saw in Proposition \ref{prop:kappasumup}, that $\kappa^\rho_r > \frac{1}{2}(1-\rho)$. Since $\theta^* \to 1- \tau_r$ as $w \to \infty$, it follows that for large enough $w$, we shall have $\theta^* < \frac{1}{2}(1+ \rho)$.

The following will go most of the way to establishing that the $l_i$ and $r_i$ have a good chance of sparking:

\begin{lem} \label{lem:firestarter} 
Let $Z$ be as defined in \ref{lem:firestarter2}, and suppose $\theta$ is such that $Z(\theta,\rho)>0$. Suppose that $u$ is uniformly randomly selected from nodes satisfying $\GD_\theta(u)$.  

For any $\alpha \in \left( 0, \frac{\theta}{2} \right)$, define $\theta_\alpha:=\frac{\theta-2 \alpha}{1 - 2 \alpha}$. Now fix $\alpha$ small enough that also $Z(\theta_\alpha, \rho)>0$.

Then there exists $\delta>0$ (depending on the scenario, $\theta$, and $\alpha$ but not on $w$) such that if $u$ green completes, then it $\alpha$-sparks with probability $>\delta$ for all $w \gg 0$.

This holds in both the selective/incremental and synchronous models.
\end{lem}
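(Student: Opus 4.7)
The plan is to identify an event $E$ on the initial configuration of the extended neighbourhood $[u-(1+\alpha)w,\, u+(1+\alpha)w]$ whose probability, given $\GD_\theta(u)$, is bounded below by some constant $\delta>0$ independent of $w$, and such that $E$ forces $\alpha$-sparking whenever $u$ green completes. The key structural observation is that green completion at $u$ is monotone on $\N(u)$: greens never turn red during it, so it suffices to ensure that every initially red node in $K_\alpha$ eventually becomes unhappy and hence flips during the process. In the synchronous model this flipping is automatic once a node becomes unhappy; in the selective/incremental models, since at any stage prior to completion the total count of hopeful nodes in $\N(u)$ is at most $2w+1$ and each is selected with positive probability per step, any unhappy red node is eventually selected and flipped before completion ends.

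\textbf{Construction of the good event $E$.} Conditioning on $\GD_\theta(u)$, the $\theta(2w+1)$ initial greens in $\N(u)$ are arranged uniformly (hypergeometrically), while nodes outside $\N(u)$ are independent $\rho$-Bernoulli. The event $E$ will impose density conditions on strategically placed subintervals of this extended region. The particular threshold $\theta_\alpha=(\theta-2\alpha)/(1-2\alpha)$ is forced by the balance equation: writing $J := \N(u)\setminus K_\alpha$, if $K_\alpha$ is to be entirely green at completion while $\N(u)$ retains green density at least $\theta$, then $J$ must already carry green density at least $\theta_\alpha$ in the initial configuration. Under $E$, as monotone completion proceeds and reds in $J$ convert to greens, the accumulated green count in the neighbourhood $\N(v)$ of each red $v\in K_\alpha$ eventually surpasses $(1-\tau_r)(2w+1)$, so $v$ becomes unhappy, which forces $v$ to flip before completion.

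\textbf{Role of $Z$.} The polynomial $Z(\theta_\alpha,\rho) = 1 + \theta_\alpha^3 - 3\theta_\alpha^2 + 3\theta_\alpha^2\rho - 2\theta_\alpha^3\rho$ will enter as a lower bound for $P(E)$: its cubic structure strongly suggests that it encodes the probability of a three-way conjunction on strategically placed slices of the extended region, with factors $\theta_\alpha^k$ arising from the hypergeometric distribution of greens inside $\N(u)$ and factors involving $\rho$ and $1-\rho$ arising from the independent Bernoulli external nodes just outside. The positivity $Z(\theta_\alpha,\rho)>0$ (automatic from Lemma \ref{lem:firestarter2}(i) applied to $\theta_\alpha<\theta$, given $Z(\theta,\rho)>0$) then delivers the required constant $\delta>0$.

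\textbf{Main obstacle.} The principal difficulty I foresee is pinning down the event $E$ precisely so that two competing demands are balanced: $E$ must be strong enough that on $E\cap\{\text{completion}\}$, every red $v\in K_\alpha$ is necessarily forced unhappy at some stage (requiring control over the arrival pattern of greens in $\N(v)$ as the cascade evolves), yet weak enough that $P(E\mid \GD_\theta(u))$ does not decay with $w$. I expect this balance to be realised by a finite case analysis---partitioning the extended region into a small constant number of slices (three being natural given the cubic form of $Z$) and asking only for mild density control on each---in which case standard Chebyshev-type concentration bounds for the hypergeometric and Bernoulli distributions will yield $P(E)\geq c\cdot Z(\theta_\alpha,\rho)$, and the argument will apply uniformly to both the selective/incremental and synchronous dynamics.
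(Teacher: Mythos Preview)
Your proposal misidentifies where $Z(\theta_\alpha,\rho)$ enters the argument, and this is a genuine gap rather than a cosmetic difference. In the paper's proof, $Z$ is not the probability of a static ``three-slice'' event on the initial configuration; it is the \emph{per-step drift of a random walk}. The paper introduces ``spark stages'': spark stage $s$ is the first time $[u-s,u+s]$ is entirely green. It then defines a process $M_s$ (and its mirror $M'_s$) that lower-bounds $G(\N(u\pm(s+1)))$ at spark stage $s$, with recursion
\[
M_{s+1}=M_s + R(u-s)+R(u+s)-G(u+s-w)+G(u+s+w+1).
\]
Under the (temporary) pretence that the three inside nodes are independent $\theta$-Bernoulli and the one outside node is $\rho$-Bernoulli, one computes directly that ${\bf P}(M_{s+1}>M_s)-{\bf P}(M_{s+1}<M_s)=Z(\theta,\rho)$. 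So the cubic in $\theta$ and the linear factor in $\rho$ come from three individual \emph{nodes} revealed at each step, not from three slices. If $M_s$ and $M'_s$ never fall below $M_0$, then $u\pm(s+1)$ is unhappy if red at spark stage $s$, and the spark propagates. The positive lower bound $\delta$ then comes from the classical gambler's-ruin estimate for a walk with drift $p'=\tfrac12(1+Z(\theta_\alpha,\rho))>\tfrac12$, namely $\delta=1-\tfrac{1-p'}{p'}=\tfrac{2Z(\theta_\alpha,\rho)}{1+Z(\theta_\alpha,\rho)}$.

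Your plan of imposing density conditions on a bounded number of slices and invoking Chebyshev does not realise this. Concentration on fixed slices would give an event of probability tending to~$1$, but such a ``typical'' event cannot force sparking: the subtle point is that whether the spark continues from stage $s$ to $s+1$ depends on the colours of a handful of \emph{specific} nodes, and these fine-grained fluctuations are exactly what the random-walk argument tracks. The quantity $\theta_\alpha$ does appear, but not via the balance equation you describe; it arises because after revealing $4s$ inside nodes up to spark stage $s\leq\lfloor\alpha w\rfloor$, the conditional green probability among the remaining unseen nodes of $\N(u)$ is at least $(\theta(2w+1)-4s)/((2w+1)-4s)>\theta_\alpha$, which is what lets one drop the false independence assumption while retaining drift $Z(\theta_\alpha,\rho)>0$.
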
 

\begin{proof}

We adopt a novel way of counting stages, defining \emph{spark stage} $s$ to be the first time (if it exists) that $[u -s, u+s]$ is entirely green. We shall consider spark stages up to $s = \lfloor \alpha \cdot w \rfloor$. Of course, if this final stage is reached, then a green spark has occurred. Now suppose, inductively, that spark stage $s$ has been reached. We shall estimate the probability of reaching spark stage $s+1$. To do this, we compute a lower bound for $G(\N(u+s+1))$ at stage $s$ recursively: define $M_0:=G_0(\N(u)) = \theta(2w+1)$. Define $$M_{s+1}:=M_{s} + R(u-s) + R(u+s) - G(u +s - w)+G(u +s +w +1).$$
(By $G(a)$ for an individual node $a$ we shall mean $G_0(a)$ throughout this proof and similarly for $R$). Then $G(\N(u+s+1)) \geq M_s$ at spark stage $s$.

Now, we have made no assumption on the distribution of nodes outside $\N(u)$. Therefore it is legitimate to treat nodes in $[u+w+1, u+3w+1]$ as independent identical random variables with a probability $\rho$ of being green. Let us briefly make the additional temporary assumption that in the initial configuration, nodes in $[u - w, u+w]$ are independent random variables with a probability of $\theta$ of being green.\\

\noindent \textbf{Claim \ \ } ${\bf P}(M_{s+1}>M_{s})>{\bf P}(M_{s+1}<M_{s})$\\

\noindent \textbf{Proof of claim \ \ } At each stage we have 
\begin{eqnarray*}{\bf P}(M_{s+1} &=& M_{s} -1 ) \\ &=& P \Big( G(u-s)=G(u+s)=G( u+s-w)=  R(u + s +w +1)=1 \Big) \\ &= & \theta^3 (1-\rho) \end{eqnarray*}

Similarly ${\bf P}(M_{s+1} = M_{s}) =  \theta^3 \rho + 3 \theta^2(1-\theta)(1-\rho)$. Thus ${\bf P}(M_{s+1}>M_{s}) = 1 - \theta^3 -3 \theta^2(1-\theta)(1-\rho)$ and ${\bf P}(M_{s+1}>M_{s})-{\bf P}(M_{s+1}<M_{s}) = Z(\theta,\rho)$. Hence the claim follows from our assumption that $Z(\theta,\rho)>0$. \textbf{QED Claim}\\

We now consider $M_s$ as a biased random walk, omitting all steps where $M_{s+1}=M_s$. The claim establishes that the walk is more likely to increase than decrease at each spark stage. Call the probability that it increases $p:=\frac{1}{2}(1+Z(\theta,\rho))> \frac{1}{2}$. It follows then that, the probability that it ever drops below $M_0$ is $\frac{1- p}{p}$ by a standard result on biased random walks. 

Everything we have stated here applies equally to the mirror-image process $$M'_{s+1}:=M'_{s} + R(u+s) + R(u-s) - G(u -s + w)+G(u -s -w -1).$$
Moreover $G(\N(u-s-1)) \geq M'_s$ at spark stage $s$. If $M_s \geq M_0$ and $M'_s \geq M'_0$ for all $s \leq \lfloor \alpha \cdot w \rfloor$ then $u-s-1$ and $u+s+1$ are guaranteed to be unhappy if red at stage $s$, meaning that $K_\alpha$ is certain to be green in the event of green completion.

All that remains is to drop the false assumption of independence. Taking the above two processes $M$ and $M'$ together, at each spark stage $s$, we \emph{see} four new nodes in $\N(u)$ namely $u-s$, $u+s$, $u +s - w$, and $u -s + w$. Thus the number of unseen nodes is $(2w+1) - 4s$ of which at least $\theta(2w+1) - 4s$ are green. Thus, the real probability of picking a green node from the remaining unseen nodes is $\frac{\theta(2w+1) - 4s}{(2w+1) - 4s}> \theta_\alpha$ for all $s \leq \lfloor \alpha \cdot w\rfloor$.

Hence, for all large enough $w$, working now with the true probabilities, we find that for all large enough $w$ and all $s \leq \lfloor \alpha \cdot w\rfloor$, we have ${\bf P}(M_{s+1}>M_{s})-{\bf P}(M_{s+1}<M_{s}) > Z(\theta_\alpha,\rho)>0$ by assumption on $\alpha$.

Thus setting $p':=\frac{1}{2}(1+Z(\theta_\alpha,\rho))$ we find $p'>\frac{1}{2}$ and dropping the assumption of independence, the actual probability that $M_{s+1}>M_{s}$ will exceed $p'$ for all large enough $w$ and all spark stages $s \leq \lfloor \alpha \cdot w \rfloor$, no matter what has occurred at previous stages. Thus the probability that $M_s$ never drops below $M_0$, and thus that an $\alpha$-spark occurs, will be at least $\delta:=1-\frac{1-p'}{p'} = \frac{2 Z(\theta_\alpha,\rho)}{1+ Z(\theta_\alpha,\rho)}$.
\end{proof}

\begin{coro} \label{coro:thelispark}
Preserving the notation from \ref{lem:firestarter}, suppose that $\theta \neq \rho$ and $\alpha$ is such that $Z(\theta_\alpha, \rho)>0$.

Then there exists $\delta'>0$ (depending on the scenario, $\theta$, and $\alpha$ but not on $w$) such that for each $l_i$, if $l_i$ green completes, then it $\alpha$-sparks with probability $>\delta'$ for all $w \gg 0$.

This holds in both the selective/incremental and synchronous models.
\end{coro}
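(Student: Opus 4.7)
The plan is to isolate the favorable event from the proof of Lemma \ref{lem:firestarter} as a property of the initial configuration in a bounded vicinity of the node, then apply Proposition \ref{prop:newbound} to transfer the probabilistic lower bound from a uniformly sampled $\GD_\theta$ node to the specific nodes $l_i$ produced by the selection rule of Lemma \ref{lem:theli}.

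First, I would extract from the proof of Lemma \ref{lem:firestarter} the event $C(u)$: the two random walks $\{M_s\}$ and $\{M'_s\}$ defined there never fall below $M_0$ and $M'_0$ throughout $0 \leq s \leq \lfloor \alpha w \rfloor$. Crucially, $C(u)$ is determined by $G_0$ and $R_0$ restricted to $[u - 3w - 1, u + 3w + 1]$, so it is a vicinity-local property in the sense of Proposition \ref{prop:newbound}. The proof of \ref{lem:firestarter} establishes both ${\bf P}(C(u) \mid \GD_\theta(u)) \geq \delta$ for some $\delta > 0$ independent of $w$, and the implication that $C(u)$ together with green completion of $u$ forces an $\alpha$-spark at $u$.

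Next, apply Proposition \ref{prop:newbound} with $Q = C$ and $p = \delta$. Before doing so, observe via Remarks \ref{rem:Claim1} and \ref{rem:Claim2} that the alternative selection criteria for $l_i$ (being unhappy or lying in a stable red interval) are exponentially rarer than $\GD_{\theta^*}$, so with overwhelming probability $l_i$ coincides with the first node to the left of $l_{i-1} - (2w+1)$ at which $\GD_{\theta^*}$ holds, placing us squarely in the setting of Proposition \ref{prop:newbound}. This yields ${\bf P}(C(l_1)) \geq p'$ for some $p' > 0$ independent of $w$. For $l_2, \ldots, l_{k_0}$, the argument iterates exactly as in the inductive step in the proof of Corollary \ref{coro:theliaresmooth}: conditional on $l_1, \ldots, l_{i-1}$, the event $C(l_i)$ depends only on a window of width $O(w)$, disjoint from the data used to select the previous $l_j$ (since Lemma \ref{lem:theli}(5) guarantees $|l_{i-1} - l_i| \geq e^{dw} \gg w$), so Proposition \ref{prop:newbound} re-applies with starting node $l_{i-1} - (2w+1)$.

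Finally, combine with green completion. Writing $A(l_i)$ for the event that $l_i$ green completes and $B(l_i)$ for the $\alpha$-spark event, Lemma \ref{lem:thelicomplete} gives ${\bf P}(A(l_i)) \geq 1 - \eps'$ for any $\eps' > 0$ and all $w \gg 0$. Since $C(l_i) \wedge A(l_i) \Rightarrow B(l_i)$, we get ${\bf P}(B(l_i) \wedge A(l_i)) \geq {\bf P}(C(l_i)) - {\bf P}(\neg A(l_i)) \geq p' - \eps'$, so ${\bf P}(B(l_i) \mid A(l_i)) \geq p'/2 =: \delta'$ for $\eps'$ chosen small enough. The synchronous case requires no modification, since both Lemma \ref{lem:firestarter} and Lemma \ref{lem:thelicomplete} cover it. The main obstacle is the middle paragraph — pinpointing $C$ as a bounded-window initial-configuration property and iterating Proposition \ref{prop:newbound} along disjoint windows — which is precisely what converts the non-uniform but initial-data-determined selection of the $l_i$ into a setting where the random walk analysis of Lemma \ref{lem:firestarter} applies with a uniform lower bound.
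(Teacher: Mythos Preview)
Your proposal is correct and follows essentially the same approach as the paper: apply Lemma \ref{lem:firestarter} together with Proposition \ref{prop:newbound}, iterating over the $l_i$ exactly as in Corollary \ref{coro:theliaresmooth}. The paper's proof is a one-line pointer to those ingredients; you have simply unpacked it, in particular making explicit that the favourable event from Lemma \ref{lem:firestarter} (the two biased walks never dipping below their starting values) is a property of the initial configuration in a window of width $O(w)$, which is precisely what is needed to invoke Proposition \ref{prop:newbound}.
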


\begin{proof}
This is simply a matter of applying Lemma \ref{lem:firestarter} and Proposition \ref{prop:newbound}, bearing in mind that we can apply the conclusion to $l_i$ exactly as in the proof of Corollary \ref{coro:theliaresmooth}.
\end{proof}

Corollary \ref{coro:thelispark} at last allows to choose $k_0$ and $\eps'$ satisfying $\eps' + (1-\delta')^{k_0}< \eps$, applying Lemma \ref{lem:thelicomplete} for our chosen value of $\eps'$.

Notice that $k_0$ is independent of $w$ as promised and is selected to ensure that the probability that no $l_i$ or no $r_i$ $\alpha$-sparks is $<\eps$, for all large enough $w$.

The final step is now to appeal to smoothness to show that a green spark will lead to a green firewall. We shall apply the following to those $l_i$ which $\alpha$-spark:

\begin{lem} \label{lem:sparkgrows}
Suppose that $\theta>1-\tau_r$ and $\theta> \tau_g$ and $\theta < \frac{1}{2}(1+ \rho)$, and that $\alpha$ is small enough that also $Z(\theta_\alpha, \rho)>0$ as in Lemma \ref{lem:firestarter}.

Now fix integers $r$ large enough that $r(1+ \rho - 2 \theta) \gg 1$ and $k$ large enough that $\frac{r}{k}<\alpha$.

Suppose now that $u$ is a node satisfying $\GD_\theta(u)$ and $\mathtt{Smooth}_{k,\eps'}(u)$ and suppose that $u$ $\alpha$-sparks. Then $\N(u)$ contains a green firewall at the moment of green completion.
\end{lem}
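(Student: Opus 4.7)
The strategy is proof by contradiction. Suppose that at the moment $s$ of green completion of $u$, the maximal green run $[u-a,u+b]$ containing $K_\alpha$ satisfies $a+b+1 \leq w$. Then both $u-a-1$ and $u+b+1$ lie in $\N(u)$, are red, and by green completion must be happy at time $s$ (no hopeful reds remaining in $\N(u)$).

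First, I would combine two facts. Green completion guarantees $\FG_t(\N(u))=0$ throughout, so $G_t(\N(u))$ is non-decreasing and greens initially present in $\N(u)$ persist. Meanwhile $\Sm(u)$ delivers initial green density $\approx \theta$ within $\N(u)$ (the first $k/3$ pieces of $\mathcal{L}$ and $\mathcal{R}$ on each side) and $\approx \rho$ outside. Together with the all-green run $[u-a,u+b]$ at time $s$, these yield a lower bound of the form
\[
G_s(\N(v)) \ \geq\ (a+b+1)+\theta(2w-a-2b-2)+O(\eps' w)
\]
for $v=u+b+1$. Pairing this with the happiness requirement $G_s(\N(v))\leq (1-\tau_r)(2w+1)$ and its symmetric counterpart for $u-a-1$, and using $\theta>1-\tau_r$, gives linear-in-$w$ lower bounds on both $a$ and $b$.

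Second, to amplify these bounds to force $a+b\geq w$ in the delicate regime where $\theta$ is only slightly above $1-\tau_r$, I would iterate the analysis across the $r$ pieces $\mathcal{R}_2,\ldots,\mathcal{R}_{r+1}$ immediately outside $K_\alpha$ (still within the $\theta$-density zone by $r/k<\alpha$). One shows piece by piece that each such piece must be entirely green at time $s$: a residual red in piece $j$ would have its neighborhood dominated by greens already secured inside $\N(u)$ (contributing at density $\approx\theta$) together with the outside contribution (at density $\approx\rho$), producing a per-piece net excess of $\approx (1+\rho-2\theta)\cdot(3w/k)$ over the happiness threshold. Since $1+\rho-2\theta>0$ by hypothesis, this excess is positive, and the choice $r(1+\rho-2\theta)\gg 1$ makes the cumulative excess over $r$ pieces dominate both the smoothness error $\eps'$ and any adverse fluctuation in $\N(v)\setminus\N(u)$; the hypothesis $\theta>\tau_g$ prevents initially green nodes just outside $\N(u)$ from becoming hopeful green and flipping, keeping their contribution near $\rho$.

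The main obstacle is precisely this inductive piece-by-piece extension, because monotonicity of $G_t$ fails outside $\N(u)$ and one must control the $\N(v)\setminus\N(u)$ contribution honestly. The positivity of $1+\rho-2\theta$ provides the uniform per-piece margin, and the strength of the amplification is governed by $r(1+\rho-2\theta)$; taking this product sufficiently large, as the hypothesis permits, makes the argument close. Once the green region is shown to cover $K_\alpha$ together with $r$ additional pieces on each side, the resulting run around $u$ has length exceeding $w+1$, yielding the required firewall and contradicting the initial assumption.
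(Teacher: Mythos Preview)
Your proposal contains the right key ingredient --- the per-piece margin $(1+\rho-2\theta)$ coming from the transition between the $\theta$-density zone and the $\rho$-density zone --- but you have misidentified the role of $r$, and this leads to a genuine gap.

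In the paper's argument, the induction is a \emph{direct forward} induction on pieces: assuming $\bigcup_{j=1}^s \mathcal{R}_j$ is entirely green, one shows $\mathcal{R}_{s+1}$ becomes green. The base case is $s=r$ (the first $r$ pieces lie inside $K_\alpha$, hence are green by the $\alpha$-spark), and the induction runs all the way to $s=k/3$, at which point $[u,u+w]$ is green and the firewall exists. At the inductive step, for $v\in\mathcal{R}_{s+1}$ one computes
\[
G(\N(v)) \ \geq\ 2w(\theta-\eps') + \frac{3w+1}{k}\Big( s(1+\rho-2\theta+\eps') - 1 \Big),
\]
and since $s\geq r$ throughout, the hypothesis $r(1+\rho-2\theta)\gg 1$ guarantees this exceeds $(2w+1)\theta > (2w+1)(1-\tau_r)$, so a red $v$ would be unhappy. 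The point is that $r$ is the \emph{starting index} of the induction, chosen to make the margin uniformly positive at \emph{every} subsequent step; it is not the number of pieces to be processed.

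Your plan, by contrast, iterates only across ``the $r$ pieces $\mathcal{R}_2,\ldots,\mathcal{R}_{r+1}$'' and then asserts that ``the resulting run around $u$ has length exceeding $w+1$''. But $K_\alpha$ together with $r$ further pieces on each side has length roughly $2\alpha w + 2r\cdot\tfrac{3w}{k}$, and since both $\alpha$ and $r/k$ are small, this is nowhere near $w+1$. The induction must continue all the way to piece $k/3$; the quantity $r(1+\rho-2\theta)$ controls the \emph{margin per step}, not the \emph{number of steps}.

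Secondary points: your contradiction framing and ``first step'' (deriving linear bounds on $a,b$ from happiness of the boundary red) are unnecessary detours --- the paper's direct induction is cleaner and avoids them entirely. Also, your first-step bound $(a+b+1)+\theta(2w-a-2b-2)+O(\eps' w)$ omits the $\rho$-density contribution from $\N(v)\setminus\N(u)$; in the correct accounting this piece is essential, and indeed it is precisely the interplay between the all-green piece (density $1$), the $\theta$-pieces, and the $\rho$-pieces that produces the $(1+\rho-2\theta)$ coefficient.
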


\begin{proof}
We proceed inductively, assuming that $\bigcup_{j=1}^s \mathcal{R}_j$ has become fully green, where $\mathcal{R}_j$ is as defined in Definition \ref{defin:smth2}) and $1 \leq s < \frac{k}{3}$. We shall show that $\mathcal{R}_{s+1}$ will also become green. (The base case $s=r$ holds by assumption since $u$ $\alpha$-sparks. The final case $s=\frac{k}{3}$ will amount to $[u,u+w]$ forming a green firewall.)

Let $v \in \mathcal{R}_{s+1}$. First we shall bound $G(\N(v))$ below. Let $v'$ be the rightmost node in $\mathcal{R}_s$, and look at $\N(v')$. Firstly, of course, $\bigcup_{j=1}^s \mathcal{R}_j \subseteq \N(v')$, which makes a contribution of $\geq s \cdot \frac{3w+1}{k}$ many green nodes. To the left of that, we have $\bigcup_{j=1}^{\frac{k}{3} - s} \mathcal{L}_j$, which by smoothness contribute at least $\left(\frac{k}{3} - s \right) \cdot \frac{3w+1}{k} \cdot (\theta - \eps')$ many green nodes.

To the right, however, we have $\bigcup_{j=s+1}^{\frac{k}{3}} \mathcal{R}_j$ contributing at least $\left(\frac{k}{3} - s \right) \cdot \frac{3w+1}{k} \cdot (\theta - \eps')$, followed by $\bigcup_{j=\frac{k}{3}+1}^{\frac{k}{3}+s} \mathcal{R}_j$ which contributes at least $s \cdot \frac{3w+1}{k} \cdot (\rho - \eps')$.

Adding all these together, we get a lower bound for $G_0(v')$ which may then adapt to a lower bound for $G_0(v)$, by observing that $|v-v'| \leq \frac{3w+1}{k}$. Hence we find our bound:

$$G_0(v) \geq \frac{3w+1}{k} \cdot \left(s + 2\left(\frac{k}{3} - s \right)(\theta - \eps') +s \cdot (\rho - \eps') - 1\right).$$

That is to say
$$G_0(v) \geq 2w(\theta - \eps') + \frac{3w+1}{k} \cdot \left(-1 + s \cdot \left(1 - 2 \theta  + \rho + \eps' \right)\right)$$

Now $s \geq r$, and by assumption on $r$ we have $r \cdot \left(1 - 2 \theta  + \rho + \eps' \right) \gg 1$, from which it follows that $G_0(v) \geq (2w+1)\theta$, meaning that if $v$ is red then it must be unhappy.
\end{proof}

This concludes the proof of Theorem \ref{thmm:take1} in the case $(1- \rho) > \tau_r$.

\subsection*{\emph{Proof of Theorem \ref{thmm:take1} when $(1- \rho) \leq \tau_r$}}
 
\ \  Here we find that the probability that a randomly chosen red node is unhappy  $U_r \to 1$ (if $(1- \rho) < \tau_r$) or $U_r \to \frac{1}{2}$ (if $(1- \rho) = \tau_r$) as $n,w \to \infty$. In either case the foregoing proof goes through with only very minor modifications.

Suppose first that $(1- \rho) < \tau_r$. Again we define $l_0:=u_0$ and $l_{i+1}$ to be the first node to the left of $l_i - (2w+1)$ which is either unhappy, or satisfies $\GD_{\theta^*}$, or belongs to a red stable interval, so long as this node lies within $[u_0 - \frac{n}{4}]$. The $r_i$ are defined identically to the right. Let $v$ (respectively $v'$) be the nearest unhappy green node to the left (right) of $u_0$. This time we find that the $l_i$ and $r_i$ are much closer together. This is of no concern so long as they are far enough from $v$ and $v'$, and with the proof as before we obtain the following variant of Lemma \ref{lem:theli}:

\begin{lem} \label{lem:theliyetagain}
For any $k_0>0$ and $\eps'>0$ there exists $d>0$ such that for all large enough $w$ the following hold with probability $>1 - \eps'$ 
\begin{enumerate}
\item $l_{k_0}, \ldots, l_{1}, r_{1}, \ldots, r_{k_0}$ are all defined. 
\item $l_{k_0}, \ldots, l_{1}, r_{1}, \ldots, r_{k_0}$ all satisfy $\GD_{\theta^*}$.
\item No node in $[l_{k_0}, r_{k_0}]$ lies in a stable red intervals.
\item For $i \geq 1$, we have $|l_i-v|, |r_{i}-v'| \geq e^{dw}$.
\end{enumerate} \end{lem}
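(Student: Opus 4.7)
\emph{Proof plan.} We mimic the argument of Lemma \ref{lem:theli}, drawing on three probabilistic facts that remain in force. First, $S_r \leq \zeta^w$ for some $\zeta \in (0,1)$ by $\tau_r > \kappa^\rho_r$ and Proposition \ref{prop:kappasumup}. Second, $U_g \leq \eta^w$ for some $\eta \in (0,1)$ by Hoeffding's inequality, since $\rho > \frac{1}{2} > \tau_g$ in this regime (the first inequality follows from $1-\rho \leq \tau_r < \frac{1}{2}$). Third, $\mathbf{P}(\GD_{\theta^*}(u))$ is bounded below by a (possibly subexponential) positive function of $w$, via the binomial mass at $\theta^*(2w+1)$. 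The essential new feature is that $U_r$ no longer decays -- in fact $U_r \to 1$ -- so the $l_i$ cluster near $u_0$, with successive $l_i$ separated by $O(w)$ with probability tending to $1$. This forces us to replace the original spacing claim by the new Point 4, which asserts that the $l_i, r_i$ are exponentially far from the nearest unhappy green nodes $v, v'$.

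For Points 1 and 2 (existence and $\GD_{\theta^*}$), we proceed inductively. The first triggering node to the left of $l_{i-1}-(2w+1)$ is defined within $O(w)$ with overwhelming probability since unhappy red nodes occur at density $\approx 1-\rho$, giving Point 1 directly. To ensure the triggering node satisfies $\GD_{\theta^*}$ specifically, we invoke Lemma \ref{lem:gen} with $P=\GD_{\theta^*}$ and $Q$ encoding the conflicting triggering conditions, exploiting the exponential scarcity of $S_r$ (and a direct comparison of binomial masses to account for unhappy red nodes whose local green density exceeds $\theta^*$). A union bound across $i \leq k_0$ then gives Points 1 and 2 simultaneously with probability $\geq 1-\eps'/2$.

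For Point 3, the interval $[l_{k_0}, r_{k_0}]$ has length at most $O(wk_0)$ (polynomial in $w$) with high probability, so a union bound combined with $S_r \leq \zeta^w$ yields $\mathbf{P}(\exists \text{ stable red interval member in } [l_{k_0}, r_{k_0}]) \leq O(wk_0)\cdot \zeta^w \to 0$. For Point 4, the expected number of unhappy green nodes in $[u_0-e^{dw}, u_0+e^{dw}]$ is at most $2 e^{dw}\cdot U_g \leq 2e^{(d+\log \eta)w}$, which tends to $0$ for any $d < -\log \eta$. Hence $v, v'$ lie outside this interval with probability $\to 1$, and since each $l_i, r_i$ is within $O(wk_0) \ll e^{dw}$ of $u_0$, the bound $|l_i-v|, |r_i-v'| \geq e^{dw}$ (for perhaps a slightly smaller $d$) follows.

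The main technical obstacle is Point 2. In the original Lemma \ref{lem:theli} the three triggering conditions defining $l_i$ had comparable (exponentially small) probability, so applying Lemma \ref{lem:gen} was routine; here ``unhappy'' is overwhelmingly the most common trigger, so care is needed to ensure that the first triggering node is not merely an unhappy red node with the ``wrong'' local density. One resolution is to refine the definition of $l_i$ in this regime to require $\GD_{\theta^*}$ directly; another is to verify a version of Lemma \ref{lem:gen} adapted to the asymmetric probabilities by filtering among unhappy red nodes. The downstream applications -- smoothness via Corollary \ref{coro:theliaresmooth} and sparking via Corollary \ref{coro:thelispark} -- are insensitive to \emph{how} $\GD_{\theta^*}$ is attained; what is essential is that each $l_i, r_i$ has a specifically known local green density $\theta^*$, which is the hook for the smoothness and spark arguments that carry the rest of the proof of Theorem \ref{thmm:take1}.
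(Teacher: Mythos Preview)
Your plan follows the paper's own approach, which is nothing more than the sentence ``with the proof as before we obtain the following variant of Lemma~\ref{lem:theli}''. You supply considerably more detail than that, and your treatments of Points 1, 3, and 4 are along the right lines: in this regime $U_r\to 1$, while $U_g$ and $S_r$ remain exponentially small (the former because $1-\rho<\tau_r<\tfrac12$ forces $\rho>\tfrac12>\tau_g$), so the $l_i$ cluster near $u_0$ and a union bound disposes of unhappy green nodes and stable red intervals in the relevant window.

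You have also put your finger on a genuine difficulty that the paper's one-line proof does not address. With the $l_i$ defined exactly as stated (first node that is \emph{unhappy}, or satisfies $\GD_{\theta^*}$, or lies in a stable red interval), the first trigger to the left of $l_{i-1}-(2w+1)$ is, with overwhelming probability, simply the first red node encountered---and that node has local green density $\approx\rho$, not $\theta^*\approx 1-\tau_r<\rho$. The continuity argument that works when $\theta^*>\rho$ (the density must \emph{pass through} $\theta^*$ on its way up) is unavailable here because we start \emph{above} $\theta^*$. So Point 2, read literally with the given definition, does not follow ``as before''.

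Where your plan needs tightening is the internal tension between this diagnosis and your arguments for Points 3 and 4. You bound $|[l_{k_0},r_{k_0}]|$ by $O(wk_0)$ on the basis that the $l_i$ trigger on nearby unhappy red nodes; but that is exactly the reading under which Point 2 fails. If instead you redefine $l_i$ to be the first $\GD_{\theta^*}$ node (your first proposed resolution), then the $l_i$ are spaced like $\mathbf{P}(\GD_{\theta^*})^{-1}$, which is exponential in $w$, and Points 3 and 4 must be re-argued by comparing exponential rates. This does go through: since $\tau_g,\tau_r<\tfrac12$ gives $\theta^*>\tau_g$, the lower-tail event $\GD_{\theta^*}$ is exponentially more likely than the deeper lower-tail event ``unhappy green'', which secures Point 4; Point 3 then needs $S_r\ll \mathbf{P}(\GD_{\theta^*})$, which one checks from the explicit asymptotics in Section~\ref{section:threshes}. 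The downstream machinery (Proposition~\ref{prop:newbound}, Corollaries~\ref{coro:theliaresmooth} and~\ref{coro:thelispark}) is phrased in terms of ``first $\GD_\theta$ node to the left'', so this redefinition is in fact the natural one and makes the rest of the argument run unchanged. You should commit to it rather than leave two alternatives open.
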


Now, Corollary \ref{coro:theliaresmooth} applies again (technically of course the current $l_i$ are different from those involved in the original statement, however the proof goes through without alteration). This gives us that for each $\eps'>0$ and $k_0 \in \NN$, $\Sm(l_i)$ and $\Sm(r_i)$ holds for all $i \leq k_0$ with probability $>1 - \eps'$ for all $w \gg 0$.

Then we can apply Lemma \ref{lem:thelicomplete}, giving us that the $l_i$ and $r_i$ all green complete with probability $>1- \eps'$ for all large enough $w$. Finally Corollary \ref{coro:thelispark} also applies to the $l_i$ and $r_i$ and once again allows us to choose $k_0$, such that the probability that no $l_i$ or no $r_i$ initiate a green firewall is $<\eps$.\\

In the case $\tau_r=1- \rho$ there is a minor complication in that we cannot apply results such as Corollary \ref{coro:theliaresmooth}, owing to the hypothesis in various lemmas that $\theta \neq \rho$. However we may get around this quite simply, by letting $\tau_r'<\tau_r$ be such that $\tau_r'>\kappa^\rho_r$ and $(\rho, \tau_g,\tau_r')$ is green dominating. Then $\tau_r'<1- \rho$ and we proceed through this section's main argument working with $\tau_r'$ in place of $\tau_r$ throughout, beginning with Definition \ref{def:thetaask}. 

Clearly any red node which is $\tau_r'$-unhappy is automatically $\tau_r$-unhappy, and thus we may deduce the existence of green firewalls on either side of $u_0$ as before. It may be that other green regions also grow, undetected by our analysis, around red nodes which are initially $\tau_r$-unhappy but not $\tau_r'$-unhappy, but this is unproblematic. This completes the proof of Theorem \ref{thmm:take1}.

\subsection*{\emph{Proof of Theorem \ref{thmm:static2}}}

\ \ We now turn our attention to scenarios where $\tau_r < \kappa^\rho_r$, while $\frac{1}{2}>\tau_g > \kappa^\rho_g$ and $(\rho, \tau_g, \tau_r)$ is green dominating, which we shall prove to be static almost everywhere under the additional assumption that $\tau_r > \frac{1}{2}(1-\rho)$. Interchanging the roles of red and green will establish Theorem \ref{thmm:static2}. (We have made this alteration to be able to apply our previous lemmas verbatim.) Figure \ref{fig:stat3} is instructive of what to expect (and also provides an example where large values of $n$ and $w$ are required for the essential staticity of the situation to be revealed).

\begin{figure}[!ht]
\centering
\def\svgwidth{15cm}
\includegraphics[width=15cm, clip=true, trim= 2cm 3cm 2cm 3cm]{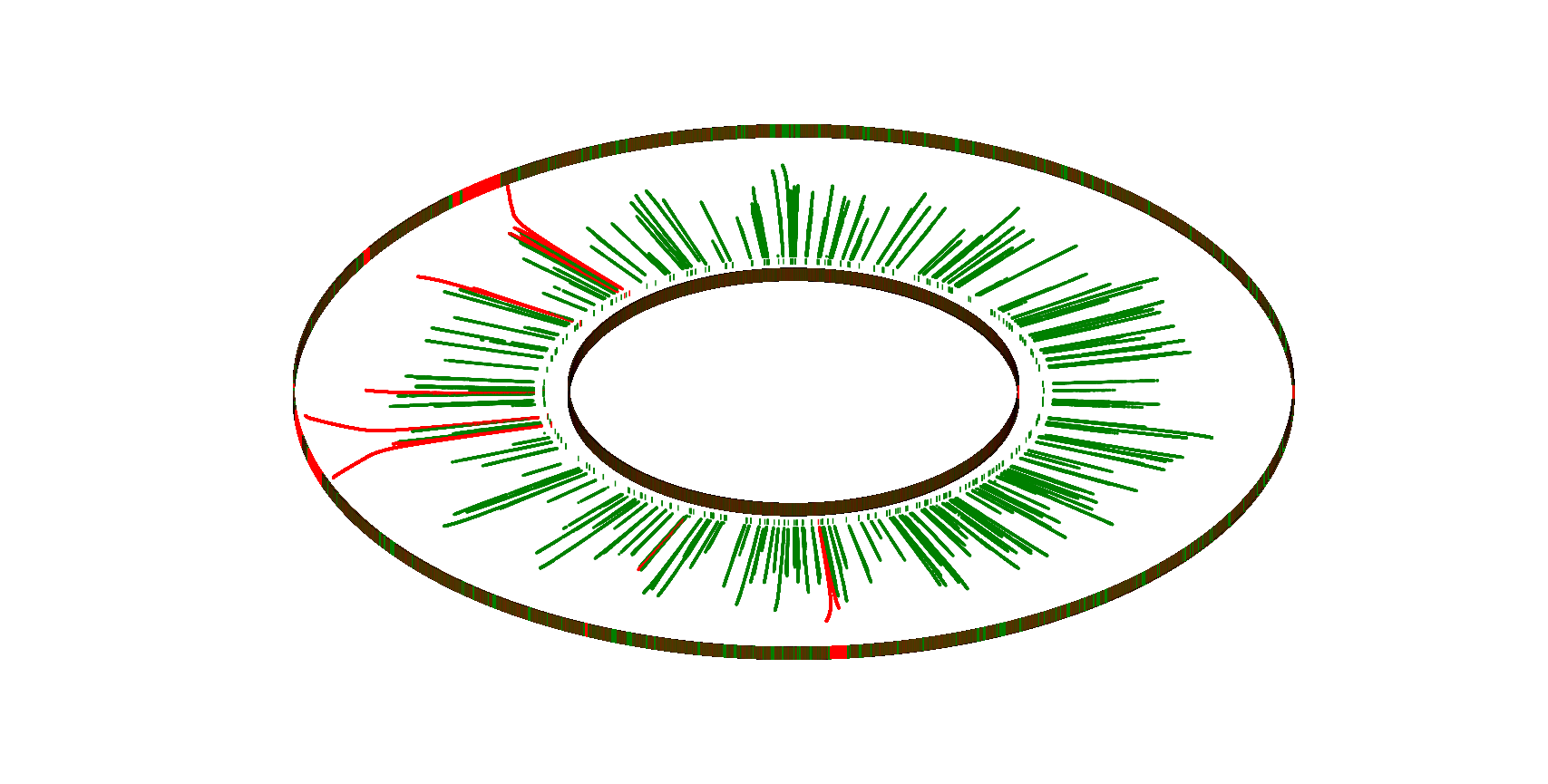}
\caption{$\rho=0.6$, $\tau_g=0.43$, $\tau_r=0.27$, $w=70$, $n=5,000,000$, selective dynamic}
\label{fig:stat3}
\end{figure}

We begin by observing that the following follows from Propositions \ref{prop:outnumber} and \ref{prop:kappasumup}:

\begin{rem} \label{rem:dagger} There exist $0<\eta, \zeta, \xi<1$ so that for all $w \gg 0$ we have $S_g< \xi^w U_g$, while $U_g < \eta^w U_r$, and in turn $U_r < \zeta^w S_r$. \end{rem}

Again begin by picking a node $u_0$ at random. Now we outline the general intuition. Let $y$ (respectively $y'$) be the nearest node to the left (right) of $u_0$ which are either unhappy or belong to a stable interval. By Remark \ref{rem:dagger} above and Lemma \ref{lem:gen}, for large enough $w$, both $y$ and $y'$ are highly likely to belong to red stable intervals. This guarantees that $u_0$, if red, can never turn green. 

So suppose, for the remainder of this section, that $u_0$ is green. In the initial configuration Remark \ref{rem:dagger} tells us that unhappy green nodes are hugely more frequent than stable green intervals. Thus on first sight, there appears to be a danger that $u_0$ will be engulfed in a red firewall. However, unhappy red nodes are commoner still, and we shall show that these are highly likely to give rise to stable green intervals (or short firewalls) in positions protecting $u_0$.

The argument runs much as previously. First we define $l_0:=u_0$, and define $l_{i+1}$ to be the first node to the left of $l_i - (2w+1)$ which is either unhappy or satisfies $\GD_{\theta^*}$ (defined as in Definitions \ref{defin:smth1} and \ref{def:thetaask}), so long as this node lies within $[u_0 - \frac{n}{4}]$. The $r_i$ are defined identically to the right. As before Remark \ref{rem:dagger} together with Lemma \ref{lem:gen} give us the following.

\begin{lem} \label{lem:theliagain}
For any $k_0>0$ and $\eps'>0$ there exists $d>0$ such that for all large enough $w$ the following hold with probability $>1 - \eps'$ 
\begin{enumerate}
\item $l_{k_0}, \ldots, l_{1}, r_{1}, \ldots, r_{k_0}$ are all defined. 
\item $l_{k_0}, \ldots, l_{1}, r_{1}, \ldots, r_{k_0}$ all satisfy $\GD_{\theta^*}$.
\item There are no unhappy green nodes in $[l_{k_0}, r_{k_0}]$.
\item For $i \geq 2$, we have $|l_{i+1}-l_i|, |r_{i+1}-r_i|, |r_1 - l_1| \geq e^{dw}$.
\end{enumerate} \end{lem}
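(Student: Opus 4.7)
The plan is to mirror the proof of Lemma \ref{lem:theli} closely, with Remark \ref{rem:dagger} playing the role of Remark \ref{rem:Claim1}. Conclusions (1), (2) and (3) will emerge from iterated applications of Lemma \ref{lem:gen}, while conclusion (4) will be a Hoeffding-based union bound.

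For the first three conclusions I would take $P(u)$ to be the event that $u$ satisfies $\GD_{\theta^*}$ and $Q(u)$ to be the event that $u$ is unhappy and green. Remark \ref{rem:dagger} supplies the bound $U_g < \eta^w U_r$, and Lemma \ref{lem:binom} yields ${\bf P}(\GD_{\theta^*}) = \Theta(U_r)$, so ${\bf P}(P)/{\bf P}(Q)$ grows exponentially in $w$. Lemma \ref{lem:gen} then identifies the first $P \cup Q$ node leftward of $u_0$ as a $P$-node whose $2w$-neighbourhood contains no $Q$-node; this is $l_1$. Iterating the same argument starting from $l_1 - (2w+1)$, and exploiting at each step the independence of the new leftward scan from the relevant portion of the initial configuration, produces $l_2, \ldots, l_{k_0}$; a symmetric argument yields $r_1, \ldots, r_{k_0}$. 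Together these establish (1), (2) and the unhappy-green-free portion of (3).

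For conclusion (4) I would use that under the standing assumption $\tau_r < \kappa^\rho_r < 1 - \rho$, Hoeffding's inequality yields $U_r \leq \exp\bigl(-2(1 - \rho - \tau_r)^2 (2w+1)\bigr)$, and the same exponential bound applies to ${\bf P}(\GD_{\theta^*})$ because $\theta^* > \rho$ for all large $w$. Thus the per-node probability of being a \emph{defining} node (unhappy or $\GD_{\theta^*}$) is at most $e^{-c'w}$ for some $c' > 0$. A union bound over an interval of length $e^{dw}$ with any fixed $0 < d < c'$ makes the probability of encountering a defining node within that interval vanish as $w$ grows; summing over the $2(k_0-1) - 1$ relevant intervals between consecutive $l_i$'s and $r_i$'s, plus the interval between $l_1$ and $r_1$, finishes conclusion (4).

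The subtlest point, inherited from the proof of Lemma \ref{lem:theli} and to be resolved by the same device, is that the event ``unhappy red at density strictly greater than $\theta^*$'' sits at the same order of magnitude as $\GD_{\theta^*}$ itself (within a constant factor, by Lemma \ref{lem:binom}), rather than being exponentially smaller. Consequently Lemma \ref{lem:gen} does not literally guarantee that $l_i$ lies at the density $\theta^*$ exactly. Conclusion (2) should therefore be read in the same spirit as its counterpart in Lemma \ref{lem:theli}: each $l_i$ satisfies $\GD_\theta$ for some $\theta \geq \theta^*$ (equivalently, each $l_i$ is ``just-barely unhappy if red''), which is what the smoothness and sparking machinery of Corollaries \ref{coro:smooth2}--\ref{coro:thelispark} actually consumes in the remainder of the argument.
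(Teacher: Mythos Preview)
Your approach is essentially the same as the paper's: Remark \ref{rem:dagger} plus Lemma \ref{lem:gen} for (1)--(3), and an exponential bound on the per-node ``defining'' probability, followed by a union bound, for (4). The only minor divergence is that for (4) you invoke Hoeffding afresh, whereas the paper simply observes that Remark \ref{rem:dagger} already contains the needed exponential decay (since $U_r < \zeta^w S_r \leq \zeta^w$), so no separate appeal to Hoeffding is required here---this is the content of the paper's remark that ``in this case, Remark \ref{rem:dagger} alone provides enough information to derive the fourth point.'' Your final paragraph on the $\GD_{\theta^*}$-versus-higher-density subtlety is a fair observation that applies equally to Lemma \ref{lem:theli}; the paper is indeed informal on this point in both places.
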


Notice that in this case, Remark \ref{rem:dagger} alone provides enough information to derive the fourth point.

Now, Corollary \ref{coro:theliaresmooth} applies again. (Again our current $l_i$ are technically different from those mentioned there but the proof remains valid.) Therefore, for each $\eps'>0$ and $k_0 \in \NN$, $\Sm(r_i)$ holds for all $i \leq k_0$ with probability $>1 - \eps'$ for all $w \gg 0$.

Then we can apply Lemma \ref{lem:thelicomplete}, giving us that the $l_i$ and $r_i$ all green complete with probability $>1- \eps'$ for all large enough $w$.

Since we have stipulated that $\tau_g> \frac{1}{2}(1-\rho)$, we find that with $\theta^*$ (defined as in Definition \ref{def:thetaask}) satisfies $\theta^*<\frac{1}{2}(1+\rho)$ for all large enough $w$. Thus by Remark \ref{rem:thetastarisgood} $Z(\theta^*, \rho)>0$ for all large enough $w$. Hence we may pick $\alpha$ as before and apply Corollary \ref{coro:thelispark} to our current $l_i$ and $r_i$ to establish that each has a chance of at least $\delta'>0$ of $\alpha$-sparking. Furthermore, if some $l_i$ or $r_i$ does spark then we may apply Lemma \ref{lem:sparkgrows} to establish that a green firewall will automatically ensue (again we are relying on our hypothesis that $\tau_g> \frac{1}{2}(1-\rho)$).

This once again allows us to choose $k_0$, such that the probability that no $l_i$ or no $r_i$ initiate a green firewall is $<\eps$. Now let $l$ and $r$ be the $l_i$ and $r_i$ nearest $u_0$ which do initiate green firewalls. It is certain that these green firewalls will spread towards $u_0$ until they hit stable red intervals. Suppose this has happened by stage $s$. At this stage, the vicinity of $u_0$ has transformed to resemble the case of Theorem \ref{thmm:static}, meaning that looking away from $u_0$ we encounter red stable intervals before unhappy red nodes and green stable intervals unhappy green nodes. Thus with probability $>1- \eps$, the node $u_0$ will not change colour.

\subsection*{\emph{Discussion of Question \ref{questionn:static2?}}}

It is clear that the bulk of the foregoing argument does not extend to cases where $\tau_r \leq \frac{1}{2}(1-\rho)$ (this corresponds to Question \ref{questionn:static2?} with the roles of red and green interchanged). For values of $\rho$ approximately $0.6384$ and $\frac{3}{4}$, such scenarios are possible.
For example, the scenario $(\rho,\tau_g,\tau_r)= (0.7,0.49,0.13)$ satisfies all the hypotheses of this section: $\tau_r < \kappa_r \approx 0.21$ and $\kappa^{0.7}_g \approx 0.48 < \tau_g < 0.5$ and the scenario is green dominating. However, also $\tau_r=0.13<\frac{1}{2}(1-\rho)=0.15$.

Exchanging the roles of red and green gives the scenario $(\rho,\tau_g,\tau_r)= (0.3,0.13,0.49)$ which is located in the lower purple region of Figure \ref{fig:map30}.

In this scenario, at least we have $Z(\theta^*,\rho)>0$ however, for large enough $w$. But there are even scenarios where this weaker condition fails. For example the scenario comprising $(\rho,\tau_g,\tau_r)=(0.74,0.498,0.07)$ which again satisfies all the relevant constraints: $\tau_r < \kappa^{0.74}_r \approx 0.186$ and $\kappa^{0.74}_g \approx 0.497 < \tau_g < 0.5$ and the scenario is green dominating.

Clearly also $\tau_r = 0.07 <  \frac{1}{2}(1-\rho) = 0.13$. Moreover, as $w \to \infty$ we have $\theta^* \to 1-\tau_r=0.93$, and we find $Z(\theta^*,\rho) \to -0.06$ approximately.

Although our previous arguments fail in such cases, needless to say, it does not follow that no $l_i$ evolves into a green stable interval protecting $u_0$. It seems that a more detailed analysis is needed to resolve such situations.

\section{Total Takeover: the case \texorpdfstring{$\tau_r> \frac{1}{2} > \tau_g$}{tr>0.5>tg}} \label{section:straddle}

We begin with an observation whose proof we leave to the reader:

\begin{lem} \label{lem:happadj} The following are equivalent:
\begin{enumerate}[(i)]
\item $\tau_g+\tau_r \leq 1$
\item For all $n \gg w \gg 0$, there can exist happy adjacent nodes of opposite colours.
\item For all $n \gg w \gg 0$, all unhappy nodes are hopeful.
\end{enumerate}
\end{lem}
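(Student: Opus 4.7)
\noindent\textbf{Proof proposal for Lemma \ref{lem:happadj}.} The plan is to prove (i)$\Leftrightarrow$(ii) and (i)$\Leftrightarrow$(iii) by elementary counting arguments exploiting the overlap structure of adjacent neighbourhoods. The key observation throughout is that for adjacent nodes $v$ and $u=v+1$, the neighbourhoods $\N(v)=[v-w,v+w]$ and $\N(u)=[v+1-w,v+w+1]$ overlap in the interval $[v+1-w,v+w]$ of length $2w$, with the singletons $v-w\in\N(v)\setminus\N(u)$ and $v+w+1\in\N(u)\setminus\N(v)$ being the only asymmetries.

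For (ii)$\Rightarrow$(i) and (i)$\Rightarrow$(ii): suppose $v$ is green, $u=v+1$ is red, and both are happy. Each node in the overlap is either green (contributing $1$ to $G(\N(v))$ but $0$ to $R(\N(u))$) or red (vice versa), while $v-w$ contributes at most $1$ to $G(\N(v))$ and $v+w+1$ at most $1$ to $R(\N(u))$. Hence
\[ G(\N(v))+R(\N(u)) \leq 2w+2, \]
whereas happiness gives $G(\N(v))+R(\N(u))\geq (\tau_g+\tau_r)(2w+1)$. If $\tau_g+\tau_r>1$, then for all sufficiently large $w$ we have $(\tau_g+\tau_r)(2w+1)>2w+2$, a contradiction; so (ii) fails. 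Conversely, if $\tau_g+\tau_r\leq 1$ then $\lceil\tau_g(2w+1)\rceil+\lceil\tau_r(2w+1)\rceil\leq 2w+3$, and the slack in the bound $2w+2$ permits an explicit construction: place $v$ green and $u$ red, make $v-w$ green and $v+w+1$ red, then distribute the overlap so that exactly $\lceil\tau_g(2w+1)\rceil-1$ of its nodes are green and the rest red. Straightforward arithmetic (using the margin provided by the ceilings, and enlarging $w$ if necessary to absorb the $1/(2w+1)$ slack) shows both happiness conditions are met.

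For (i)$\Leftrightarrow$(iii): consider an unhappy green node $v$. Writing $g=G(\N(v))$, unhappiness says $g\leq \lceil\tau_g(2w+1)\rceil-1$, so $R(\N(v))=(2w+1)-g\geq 2w+2-\lceil\tau_g(2w+1)\rceil$. Recolouring $v$ red raises its red count to $R(\N(v))+1$, and hopefulness requires this to meet the red happiness threshold $\lceil\tau_r(2w+1)\rceil$. The worst case is the maximal unhappy $g$, giving the inequality
\[ \lceil\tau_g(2w+1)\rceil + \lceil\tau_r(2w+1)\rceil \leq 2w+3. \]
If $\tau_g+\tau_r\leq 1$ this is immediate since each ceiling exceeds its argument by less than $1$; if $\tau_g+\tau_r>1$ then $(\tau_g+\tau_r)(2w+1)$ eventually exceeds $2w+3$, violating the condition for large $w$. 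The symmetric argument handles unhappy red nodes.

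The only real delicacy is the tight case $\tau_g+\tau_r=1$, where the overlap bound is saturated: there the construction for (i)$\Rightarrow$(ii) must place the extremal nodes $v-w$ and $v+w+1$ in precisely the right colours, and the parity of $\tau_g(2w+1)$ matters. However, the shorthand ``for all $n\gg w\gg 0$'' gives us room to choose $w$ so the ceilings align favourably, so this poses no genuine obstacle.
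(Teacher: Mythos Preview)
The paper explicitly leaves Lemma~\ref{lem:happadj} to the reader, so there is no proof in the paper to compare against. Your argument is correct in substance and is exactly the kind of elementary counting the authors have in mind; the overlap bound $G(\N(v))+R(\N(u))\leq 2w+2$ is the right observation, and your treatment of (iii) is clean. One small tightening: under (i) you state $\lceil\tau_g(2w+1)\rceil+\lceil\tau_r(2w+1)\rceil\leq 2w+3$, but in fact this sum is always $\leq 2w+2$ when $\tau_g+\tau_r\leq1$ (if the sum of the arguments is an integer $\leq 2w+1$, the ceilings overshoot by at most one in total; if not, the sum of arguments is strictly below $2w+1$ so the ceilings sum to at most $2w+2$). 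With that sharper bound your construction for (i)$\Rightarrow$(ii) goes through for every $w$, not just large ones, and the ``delicacy'' at $\tau_g+\tau_r=1$ evaporates.
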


It follows that when $\tau_g+\tau_r > 1$, the selective and incremental dynamics have the possibility to differ, as indeed will be the case. Having said this, in this section we may establish Theorem \ref{thmm:straddle} which applies under every dynamic, and states that if $\tau_g < \frac{1}{2} < \tau_r$ then green will take over totally, as illustrated in Figure \ref{fig:take2}. This will also establish that, for all large enough $n$, the initial configuration will very likely be such that the process is guaranteed to finish, under both the incremental and synchronous dynamics.

\begin{figure}[!ht]
\centering
\def\svgwidth{15cm}
\includegraphics[width=15cm, clip=true, trim= 2cm 3cm 2cm 3cm]{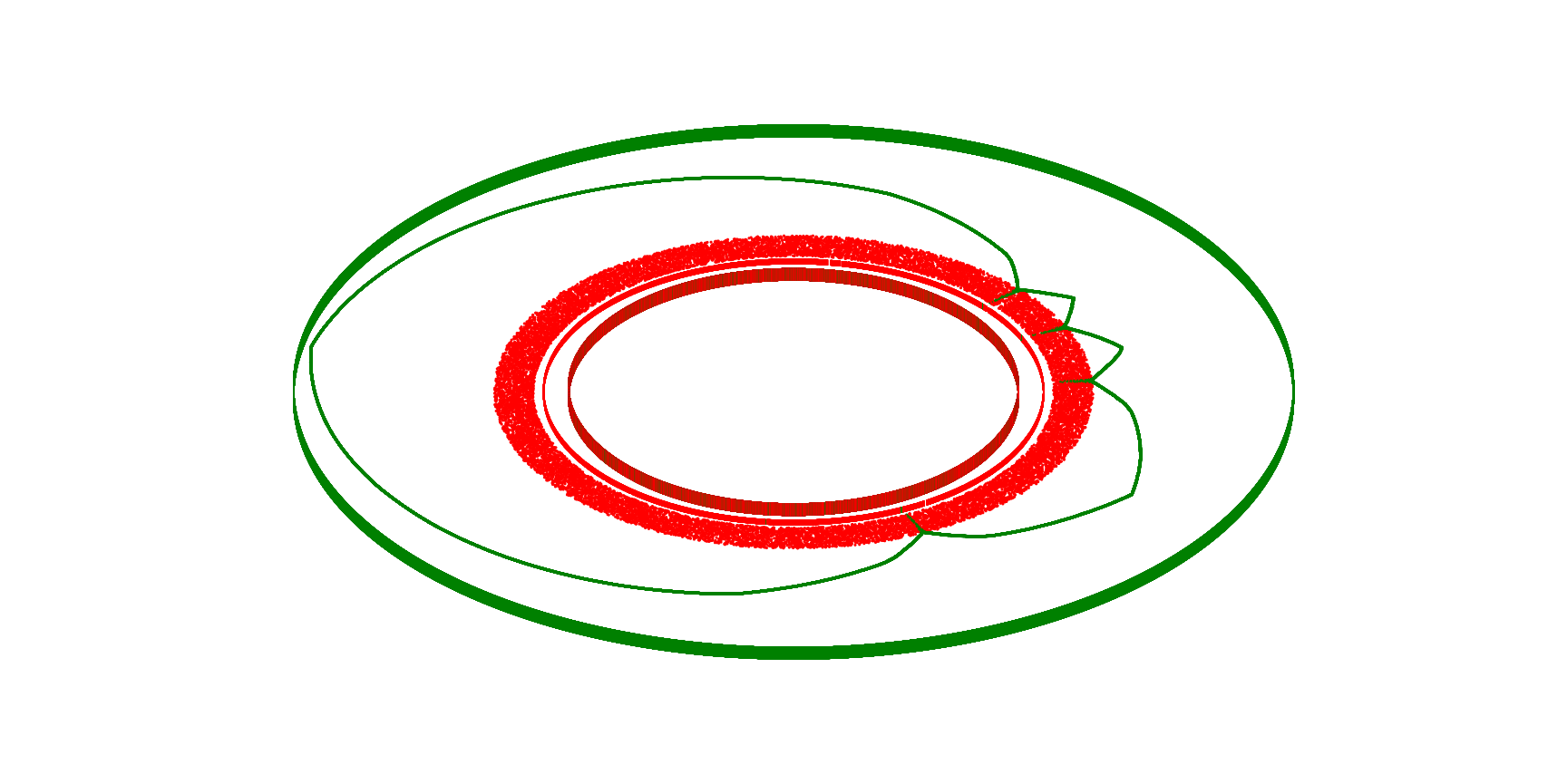}
\caption{$\rho=0.2$, $\tau_g=0.25$, $\tau_r=0.65$, $w=40$, $n=100,000$, selective dynamic}
\label{fig:take2}
\end{figure}

\begin{proof}[Proof of Theorem \ref{thmm:straddle}]

We work with $w$ large enough that $\frac{w+1}{2w+1} < \tau_r$. Suppose that at some stage $[a,b]$ is a green firewall of length at least $w$, and that $a-1$ and $b+1$ are red. Then every node in $[a,b]$ is happy and stably so. On the other hand, $a-1$ and $b+1$ are unhappy (and indeed hopeful) and will remain so as long as they are red. Hence the firewall cannot shrink and will eventually grow to encompass $a-1$ and $b+1$.

It follows that as soon as we have a green firewall of length $\geq w$, total green takeover is inevitable under every dynamic. There are various possible arguments for how such a firewall may emerge in smaller rings. However, for all large enough $n$, we may make the cheap observation that such a firewall will appear in the initial configuration with probability $>1-\eps$ by the weak law of large numbers.
\end{proof}

Theorem \ref{thmm:straddle} is necessarily probabilistic; it is not true that every ring will suffer green takeover. For example given any $w$ and tolerances satisfying $\tau_r > \frac{1}{2} \geq \tau_g$ where $\lceil (2w+1) \tau_r \rceil + \lceil (2w+1) \tau_g \rceil \leq 2w+1$, a ring comprising alternating blocks of $\lceil (2w+1) \tau_r \rceil$ many red nodes followed by $\lceil (2w+1) \tau_g \rceil$ many green nodes will be totally static throughout under all three dynamics.\\

\section{Hope and Intractability} \label{section:themus}

In this section our focus will be entirely on the selective model where $\tau_g, \tau_r > \frac{1}{2}$. Recall that an unhappy red node $x$ is \emph{hopeful} if $G(\N(x)) \geq \tau_g(2w+1)-1$. (Notice that when $\tau_g, \tau_r > \frac{1}{2}$ unhappiness automatically follows from this second condition, for large enough $w$.) In this situation, a new type of stable interval emerges:

\begin{defin}
An interval $J$ of length $w+1$ is \emph{green intractable} if $G(J)< \tau_g(2w+1) - (w+1)$.
\end{defin} 

We define red intractability analogously. The point of this definition is that, regardless of the situation outside $J$, no red node inside a green intractable interval can be hopeful, and thus can never turn green. Hence green firewalls cannot spread through green intractable intervals. Now we shall compare the probabilities of hopeful individuals versus intractable intervals.

The probability $F_r$ that, in the initial configuration, a randomly selected red node is hopeful is the same as the probability that a green node in that position would be happy. Setting $X \sim b(2w, \rho)$ we have $F_r = {\bf P}(X \geq h)$ where $h:=\lceil \tau_g(2w+1)\rceil -1$.

On the other hand, let $J$ be an interval of length $w+1$, and let $T_g$ be the probability that such an interval, selected uniformly at random is green intractable. We now introduce an approximation for $T_g$ as follows. Set $Y \sim b(w+1, \rho)$, and set $T_g'={\bf P}(Y \leq \ell)$ where $\ell:= \lceil (\tau_g - \frac{1}{2})(2w+1) \rceil -1 $. It is easy to see that $T_g \approx T_g'$, and thus we may work with $T_g'$ in place of $T_g$ in all that follows.

We wish to understand the ratio $\frac{F_r}{T_g}$. First notice that if $\tau_g \leq \rho$ then $T_g \to 0$ at an exponential rate in $w$, while $F_r \geq \frac{1}{2}$. Hence $\frac{F_r}{T_g} \to \infty$.

Similarly, if $\tau_g \geq \frac{\rho +1}{2}$ then $F_r \to 0$ at an exponential rate in $w$, while $T_g \geq \frac{1}{2}$. Hence $\frac{F_r}{T_g} \to 0$.

Therefore we will suppose $\rho < \tau_g < \frac{\rho+1}{2}$, and in this region we may derive some estimates from Lemma \ref{lem:binom} as before.
Firstly, taking $p=\rho$, with $h$ as above, along with $N=2w$, and some $k$ such that $1>k>\frac{1- \tau_g}{\tau_g} \cdot \frac{\rho}{1-\rho}$ (which is possible since $\rho < \tau_g$), we find that:
\begin{equation} \label{equation:hopefuleq} F_r \approx \rho^{h} (1-\rho)^{2w-h} \bpm 2w \\ h \epm. \end{equation}

Similarly, working with $\ell$ above in place of $h$ and taking $N=w+1$ as well as some $k'$ where $1 > k' > \frac{2(1- \tau_g)}{2 \tau_g -1} \cdot \frac{\rho}{1-\rho}$ (which is possible since $\tau_g> \frac{\rho+1}{2}$), we find  $$T_g \approx \rho^\ell (1- \rho)^{w+1 - \ell} \bpm w+1 \\ \ell \epm.$$

Taking the ratio of these two, we find
$$\frac{F_r}{T_g} \approx \rho ^{h- \ell} (1-\rho)^{w-1 + \ell - h} \frac{\bpm 2w \\ h \epm}{\bpm w+1 \\ \ell \epm}.$$

Now we use Stirling's approximation, which gives us that 
$$\frac{F_r}{T_g} \approx \rho ^{h- \ell} (1-\rho)^{w-1 + \ell - h} \frac{(2w)^{2w+ \frac{1}{2}} \ell^{\ell + \frac{1}{2}}(w+1-\ell)^{w+ \frac{3}{2}- \ell}}{(w+1)^{w+\frac{3}{2}}h^{h+ \frac{1}{2}}(2w-h)^{2w-h + \frac{1}{2}}}.$$

Next we introduce the approximations $h \approx 2w \tau_g$ and $\ell \approx 2w \left(\tau_g - \frac{1}{2}\right)$, and noticing that $w- \ell \approx 2w-h \approx 2w(1-\tau_g)$, we get

$$\frac{F_r}{T_g} \approx Q(w) \cdot \rho ^w \frac{(2w)^{2w} (2w)^{2w(\tau_g - \frac{1}{2})}\left( \tau_g - \frac{1}{2} \right)^{2w(\tau_g - \frac{1}{2})}}{w^w (2w)^{2w \tau_g} \tau_g^{2w \tau_g}}$$
for some polynomial $Q(w)$. Hence
\begin{equation} \label{equation:mubounds} \frac{F_r}{T_g} \approx \left( \frac{2 \rho \left(\tau_g - \frac{1}{2} \right)^{2 \tau_g -1}}{\tau_g^{2 \tau_g}} \right)^w. \end{equation}

Thus we deduce the existence of the thresholds $\mu^\rho_g$ as the root, when it exists, of $$g(x):=\frac{(x-\frac{1}{2})^{2x-1}}{x^{2x}} = \frac{1}{2\rho}.$$

Similarly, $\mu^\rho_r$ is the root of $g(x)=\frac{1}{2(1-\rho)}$. Comparing this with Equation \ref{equation:froot}, and noticing that $g(x)=f(1-x)$, we deduce that $\mu^\rho_r = 1 - \kappa^\rho_g$ and similarly $\mu^\rho_g = 1 - \kappa^\rho_r$. Thus we have arrived at:

\begin{prop} \label{prop:musumup}
For any $\rho \in (0,1)$, we work in the initial configuration and interpret $F_r$ as the probability that a uniformly randomly selected red node is hopeful, and $T_g$ as that of a uniformly randomly selected node lying within a green intractable interval. Then there exists a threshold $\mu^\rho_g$ where $\rho < \mu^\rho_g < \frac{1}{2}(1+\rho)$, such that for any $\tau_g > \frac{1}{2}$:

\begin{itemize}[$\bullet$]
\item If $\tau_g < \mu^\rho_g$, there exists $\zeta \in (0,1)$ so that $T_g < \zeta^{w} F_r$ for all $w$.
\item If $\tau_g > \mu^\rho_g$, there exists $\zeta \in (0,1)$ so that $F_r < \zeta^{w} T_g$ for all $w$.
\end{itemize}

(Similarly, there exists a threshold $\mu_r^\rho$ where $1-\rho < \mu^\rho_r < 1- \frac{1}{2}\rho$ such that corresponding statements about $F_g$ and $T_r$ hold.)
\end{prop}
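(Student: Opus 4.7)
The plan is to observe that nearly all of the required computation has already been carried out in the paragraphs immediately preceding the statement, and to complete the argument by (i) validating the auxiliary approximations, (ii) handling the easy boundary regions, and (iii) transferring the existence and location of the root from Proposition \ref{prop:kappasumup} via the identity $g(x) = f(1-x)$ noted in the text.

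First I would justify the replacement $T_g \approx T_g'$. Writing $T_g$ as a probability over the $w+1$ possible positions of a length-$(w+1)$ window containing a randomly chosen node, one finds $T_g \leq (w+1) T_g'$ with $T_g'$ dominating each summand, and the $(w+1)$ factor is absorbed by the $\approx$ notation of Remark \ref{rem:asym}. Next I would dispose of the two boundary bands: when $\tau_g \leq \rho$, Hoeffding's inequality gives $T_g \to 0$ at an exponential rate in $w$ while $F_r \geq \tfrac{1}{2}$ is bounded below; when $\tau_g \geq \tfrac{1}{2}(1+\rho)$, the symmetric estimate gives $F_r \to 0$ exponentially while $T_g \geq \tfrac{1}{2}$. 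Both yield the required $\zeta^w$ dominance.

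In the intermediate band $\rho < \tau_g < \tfrac{1}{2}(1+\rho)$, the calculation already written down reduces the question to whether the quantity $g(\tau_g)$ sits above or below $\tfrac{1}{2\rho}$. Here I would invoke the identity $g(x) = f(1-x)$ to transfer the analysis directly to Proposition \ref{prop:kappasumup}: since $\kappa^\rho_r$ is the unique root of $f(s) = \tfrac{1}{2\rho}$ with $\tfrac{1}{2}(1-\rho) < \kappa^\rho_r < 1-\rho$, setting $\mu^\rho_g := 1 - \kappa^\rho_r$ produces the unique root of $g(x) = \tfrac{1}{2\rho}$ and automatically places it in the interval $\bigl(\rho,\ \tfrac{1}{2}(1+\rho)\bigr)$. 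Monotonicity of $g$ on $(\tfrac{1}{2},1)$, which gives uniqueness of the root, is inherited from $f' > 0$ on $(0,\tfrac{1}{2})$ via $g'(x) = -f'(1-x)$. Finally, to convert ``$\approx$ with exponential base $\neq 1$'' into the stated strict exponential bounds $T_g < \zeta^w F_r$ or $F_r < \zeta^w T_g$, I would note that $\approx$ conceals only a ratio bounded above and below by rational functions of $w$, and such polynomial factors are absorbed by any exponential with base strictly greater than $1$; one picks $\zeta$ slightly larger than the reciprocal of the base appearing in Equation \ref{equation:mubounds}. The analogous threshold $\mu^\rho_r$ is obtained by interchanging $\rho \leftrightarrow 1-\rho$ and red $\leftrightarrow$ green throughout, which by the same identity yields $\mu^\rho_r = 1 - \kappa^\rho_g$ in the claimed range.

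I expect no serious obstacle: the genuinely delicate estimates (the Stirling calculation producing \ref{equation:mubounds} and the existence of $\kappa^\rho_r$) are already available, so the only real work is the bookkeeping needed to absorb polynomial pre-factors into $\zeta^w$ and to check that the hypotheses for Lemma \ref{lem:binom} are met on the precise sub-intervals of $\tau_g$ used above. The mild subtlety worth flagging is that the application of Lemma \ref{lem:binom} to both $F_r$ and $T_g$ requires $\rho < \tau_g < \tfrac{1}{2}(1+\rho)$ simultaneously, which is exactly why the two boundary cases are handled separately by Hoeffding before the $g(x) = f(1-x)$ argument is invoked.
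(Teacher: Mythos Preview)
Your proposal is correct and follows essentially the same route as the paper: the paper's proof is precisely the computation preceding the proposition (the $T_g \approx T_g'$ reduction, the boundary regions via Hoeffding, the Stirling estimate yielding Equation~\ref{equation:mubounds}, and the identification of $\mu^\rho_g$ as the root of $g(x)=\tfrac{1}{2\rho}$ via $g(x)=f(1-x)$), and you have accurately described the remaining bookkeeping needed to turn this into the stated exponential bounds. Your observation that the bounds on $\kappa^\rho_r$ from Proposition~\ref{prop:kappasumup} translate directly into the claimed range $\rho<\mu^\rho_g<\tfrac{1}{2}(1+\rho)$ is exactly what the paper relies on implicitly.
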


The thresholds $\mu^\rho_r$ and $\mu^\rho_g$ are illustrated in Figure \ref{fig:Thresh1}. Notice that $\frac{1}{2} < \tau_g < \mu^\rho_g$ is only possible when $\rho> \frac{1}{4}$ and similarly $\frac{1}{2} < \tau_r < \mu^\rho_r$ requires that $\rho< \frac{3}{4}$.

Besides understanding the relative frequency of intractable intervals and hopeful nodes, we shall also need to know whether red or green hopeful nodes are more numerous in a given scenario. Happily, we do not need to introduce yet another threshold:

\begin{prop} \label{prop:outnumber2}
If $\tau_g, \tau_r > \frac{1}{2}$, the scenario $(\rho, \tau_g, \tau_r)$ is red dominating if and only if there exists $\eta \in (0,1)$ so that for all $w$, we have $F_r < \eta^w F_g$. The same holds with the roles of red and green interchanged. \end{prop}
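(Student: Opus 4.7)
The approach mirrors the proof of Proposition \ref{prop:outnumber}, adapted from unhappy to hopeful nodes. I would compute the ratio $F_g/F_r$ asymptotically in $w$ and match the resulting exponential rate against the inequality in Definition \ref{defin:dom}.

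First I would dispatch the easy boundary cases using Hoeffding's inequality (Proposition \ref{prop:Hoeffding}) together with Corollary \ref{coro:easierdom}. When $\tau_g \leq \rho$ and $\tau_r \geq 1-\rho$ (with $\tau_g+\tau_r \neq 1$), Corollary \ref{coro:easierdom} with the roles of red and green exchanged declares the scenario green dominating, while Hoeffding forces $F_g \to 0$ exponentially in $w$ and leaves $F_r$ bounded below by (essentially) $\frac{1}{2}$, yielding $F_g < \eta^w F_r$ as required. The mirror region $\tau_g \geq \rho$, $\tau_r \leq 1-\rho$ is handled symmetrically and delivers the red dominating case.

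For the remaining subregion $\rho < \tau_g < \frac{1+\rho}{2}$ and $1-\rho < \tau_r < 1-\frac{\rho}{2}$, I would reuse the asymptotic estimate for $F_r$ derived from Lemma \ref{lem:binom} just before Equation \ref{equation:hopefuleq}, namely $F_r \approx \rho^{h}(1-\rho)^{2w-h}\bpm 2w \\ h \epm$ with $h=\lceil\tau_g(2w+1)\rceil -1$, together with its symmetric counterpart $F_g \approx (1-\rho)^{h'}\rho^{2w-h'}\bpm 2w \\ h' \epm$ with $h'=\lceil\tau_r(2w+1)\rceil-1$ (which is valid since the hypotheses on $\tau_r$ likewise satisfy the constraint of Lemma \ref{lem:binom} after swapping $\rho \leftrightarrow 1-\rho$). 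Forming the quotient, applying Stirling's formula, and substituting $h \approx 2w\tau_g$ and $h' \approx 2w\tau_r$, with polynomial prefactors absorbed into $\approx$ as in Remark \ref{rem:asym}, yields
\[
\frac{F_g}{F_r} \approx \left( \frac{\tau_g^{\tau_g}(1-\tau_g)^{1-\tau_g}}{\tau_r^{\tau_r}(1-\tau_r)^{1-\tau_r}} \cdot \left(\frac{1-\rho}{\rho}\right)^{\tau_g+\tau_r-1}\right)^{2w}.
\]

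The final step is to verify that the bracketed expression exceeds $1$ precisely when $(\rho,\tau_g,\tau_r)$ is red dominating. This is an algebraic manipulation: take logarithms of the red dominating inequality in Definition \ref{defin:dom} and multiply through by $(\tau_g+\tau_r-1)$, which is strictly positive in the present region $\tau_g, \tau_r > \frac{1}{2}$, and rearrange. The one pitfall deserving care is the sign of $1-\tau_g-\tau_r$: in Proposition \ref{prop:outnumber} it was positive and the analogous manipulation linked red domination to $U_g/U_r \to \infty$, whereas here the opposite sign is precisely what converts red domination into the statement that $F_g/F_r \to \infty$ (exponentially in $w$), which is what we want. Any residual boundary sub-cases within $\tau_g,\tau_r > \frac{1}{2}$ where one of the strict hypotheses of the core region fails are cleaned up by a further Hoeffding argument exactly as in the easy cases above.
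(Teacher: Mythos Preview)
Your approach is essentially the paper's own: dispose of the easy cases via Corollary \ref{coro:easierdom} and Hoeffding, then in the central region form the ratio $F_g/F_r$ using the estimate \eqref{equation:hopefuleq} and its red--green mirror, apply Stirling, and match against Definition \ref{defin:dom}. Your computation of the exponential rate and the sign analysis (noting that $1-\tau_g-\tau_r<0$ flips the inequality) are correct.

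One point deserves tightening. The estimate \eqref{equation:hopefuleq} for $F_r$ requires only $\tau_g>\rho$ (the upper bound $\tau_g<\frac{1+\rho}{2}$ in Section \ref{section:themus} was needed for $T_g$, not for $F_r$), and likewise the mirror estimate for $F_g$ needs only $\tau_r>1-\rho$. So your ``remaining subregion'' should simply be $\rho<\tau_g$ and $1-\rho<\tau_r$, with no further upper bounds. This matters because your proposed Hoeffding cleanup for the excluded strip (e.g.\ $\tau_g\geq\frac{1+\rho}{2}$ with $\tau_r>1-\rho$) would not actually work as in the easy cases: there both $F_r$ and $F_g$ decay exponentially, and Hoeffding gives only upper bounds, so it cannot by itself compare the two rates. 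Fortunately the Stirling argument already covers that strip once you drop the superfluous constraints, so no separate cleanup is needed.
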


\begin{proof}
If $\rho \leq 1- \tau_r$, then automatically $\rho < \tau_g$. Also $F_g \to 1$ (if $\rho \leq 1- \tau_r$) or $F_g \to \frac{1}{2}$ (if $\rho = 1- \tau_r$) as $w \to \infty$. At the same time, $F_r \to 0$ at an exponential rate in $w$. Thus the result holds by Corollary \ref{coro:easierdom}.

Similarly, if $1-\rho \leq 1- \tau_g$, then $1- \rho < \tau_r$, and result holds by Corollary \ref{coro:easierdom} with the roles of red and green interchanged.

Thus we are left with the case $1- \tau_r < \rho < \tau_g$, where the argument amounts to applying Stirling's approximation to \ref{equation:hopefuleq} and to the equivalent criterion for $F_g$ and taking the ratio of the two expressions, exactly as in Proposition \ref{prop:outnumber}. We leave the details to the reader.
\end{proof}

Recall also that in Lemma \ref{lem:domfacts} we have already established some useful facts about red/green domination in our current region of interest $\tau_g, \tau_r > \frac{1}{2}$.

\section{Hopelessness and Frustration: the case \texorpdfstring{$\tau_r > \frac{1}{2}$}{tr>1/2}} \label{section:stagnancy}

In this section we limit ourselves to the selective model in scenarios where $\tau_g$, $\tau_r > \frac{1}{2}$, and shall prove Theorems \ref{thmm:take2} - \ref{thmm:stag2}. First we assume that $\frac{1}{2}< \tau_g < \mu_g^\rho$ and $\tau_r> \frac{1}{2}$, and that $(\rho, \tau_g, \tau_r)$ is green dominating. Notice that this implies that $\tau_g < \frac{1}{2}(1+\rho)$ by Proposition \ref{prop:musumup}.

We shall establish green takeover, thus proving Theorem \ref{thmm:take2}, using much the same machinery as in section \ref{section:longsect}. An example is illustrated in Figure \ref{fig:takeagain1}.

\begin{figure}[htbp]
\includegraphics[width=15cm, clip=true, trim= 2cm 3cm 2cm 3cm]{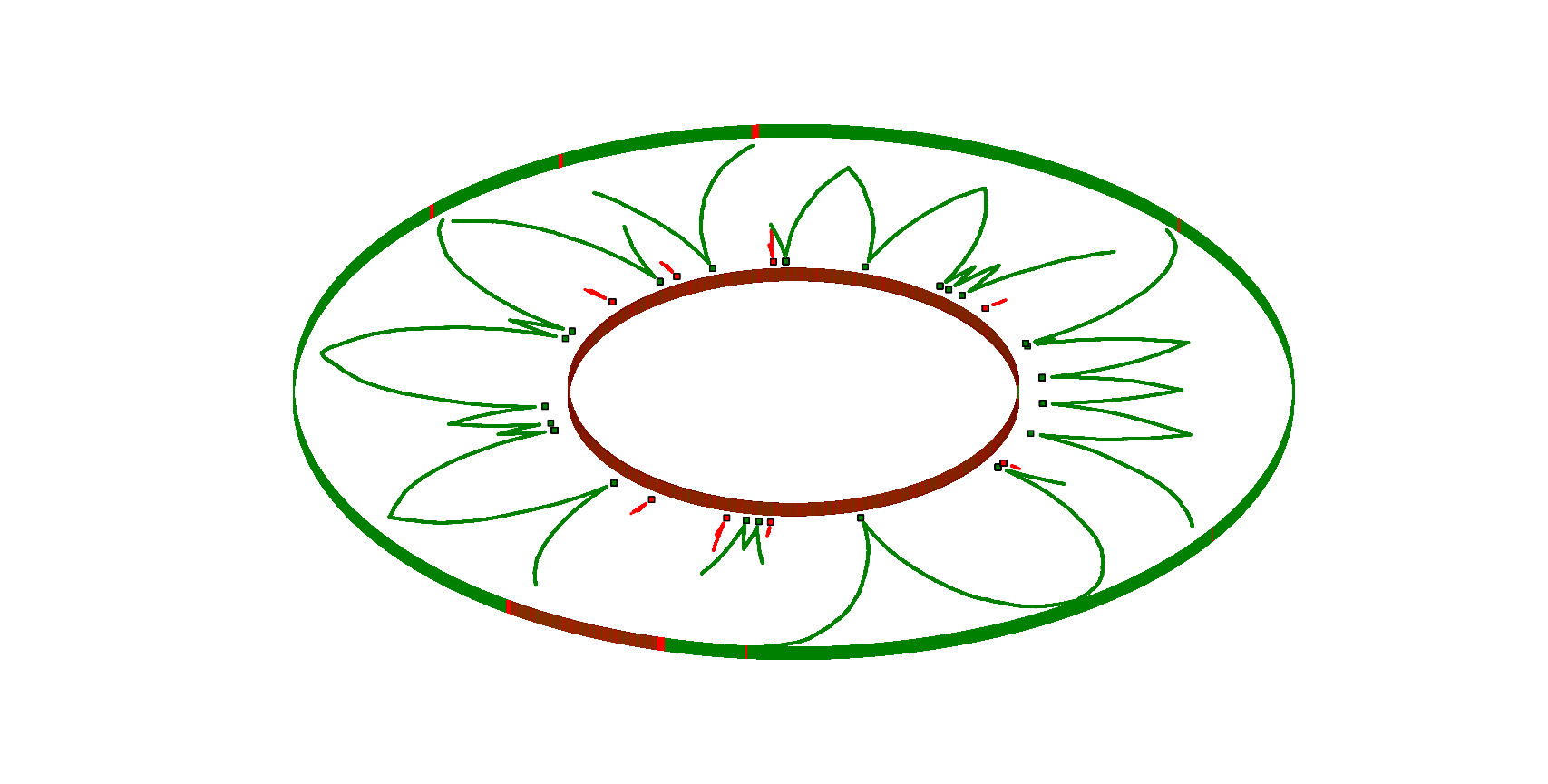} 
\caption{$\rho=0.4$, $\tau_g=0.55<\mu^{0.4}_g \approx 0.58$, $\tau_r=0.75>\mu^{0.4}_r \approx 0.71$, $w=70$, $n=1,000,000$}
\label{fig:takeagain1}
\end{figure}

As usual, we begin by picking a node $u_0$ uniformly at random and aim to establish that $u_0$ will be green in the finished ring with probability $>1-\eps$. We postpone the case $\tau_g  \leq \rho$ and assume that $\tau_g > \rho$. Thus, by Corollary \ref{coro:easierdom} it also follows that $1 - \tau_r < \rho$. Via Propositions \ref{prop:musumup} and \ref{prop:outnumber2}, our hypotheses imply the following:

\begin{rem} \label{rem:dagger2} There exist $0<\eta, \zeta<1$ so that for all $w \gg 0$ we have $T_g< \eta^w F_r$ and $F_g < \zeta^w F_r$. \end{rem}

We also know, by Hoeffding's inequality (Proposition \ref{prop:Hoeffding}), that red nodes are unlikely to be hopeful: \begin{equation} \label{equation:littlehope} F_r < \exp \left( -2(w+1)(\tau_g - \rho)^2  \right). \end{equation}

Let 
\begin{equation} \label{equation:theta*again} \theta^*:=\min \left\{ \frac{m}{2w+1}: \frac{m}{2w+1}>\tau_g \ \& \ m \in \NN \right\}. \end{equation}
Then $\theta^* \to \tau_g$ as $w \to \infty$ and thus by assumption $\rho< \theta^* < \frac{1}{2}(1+\rho)$ for large enough $w$.

Now we set $l_0:=u_0$ and define $l_{i+1}$ to be the first node to the left of $l_i - (2w+1)$ which is either hopeful, or satisfies $\GD_{\theta^*}$, or belongs to a green intractable interval, so long as this node lies within $[u_0 - \frac{n}{4}]$. The $r_i$ are defined identically to the right. Again we shall choose a specific value of $k_0$ in due course. As before we derive the following from Remark \ref{rem:dagger2}, Lemma \ref{lem:gen}, and Bound \ref{equation:littlehope}:

\begin{lem} \label{lem:bigli}
For any $k_0>0$ and $\eps'>0$, there exists $d>0$ such that for all large enough $w$ the following hold with probability $>1 - \eps'$ 
\begin{enumerate}
\item $l_{k_0}, \ldots, l_{1}, r_{1}, \ldots, r_{k_0}$ are all defined. 
\item $l_{k_0}, \ldots, l_{1}, r_{1}, \ldots, r_{k_0}$ all satisfy $\GD_{\theta^*}$.
\item There are no hopeful green nodes in $[l_{k_0}, r_{k_0}]$.
\item No node in $[l_{k_0}, r_{k_0}]$ belongs to a green intractable interval.
\item For $i \geq 2$, we have $|l_{i+1}-l_i|, |r_{i+1}-r_i|, |r_1 - l_1| \geq e^{dw}$.
\end{enumerate} \end{lem}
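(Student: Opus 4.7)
The plan is to mirror the proof of Lemma \ref{lem:theli}, replacing the stable/unhappy dichotomy used there with the intractable/hopeful dichotomy appropriate to the current setting. Specifically, points (1)--(4) should follow from iterated application of Lemma \ref{lem:gen} with $P(u)=\GD_{\theta^*}(u)$ and $Q(u)$ asserting that $u$ is either a hopeful green node or lies in some green intractable interval. Point (5) will come from the exponential smallness of $F_r$ recorded in Bound \ref{equation:littlehope}, together with a union bound over an interval of length $e^{dw}$.

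To verify the hypothesis of Lemma \ref{lem:gen}, I would first observe that $\mathbf{P}(\GD_{\theta^*}(u))$ agrees with $F_r$ up to a polynomial factor in $w$, since both are binomial probabilities concentrated near $(2w+1)\tau_g$; this is an application of Lemma \ref{lem:binom} just as in the derivation of the estimate \ref{equation:hopefuleq}. Remark \ref{rem:dagger2} then bounds $\mathbf{P}(Q(u)) \leq T_g + F_g \leq (\eta^w+\zeta^w)F_r$, so that $\mathbf{P}(P)/\mathbf{P}(Q)$ grows exponentially in $w$ and thus exceeds $kw$ for every fixed $k$ once $w$ is sufficiently large. Invoking Lemma \ref{lem:gen} with $u=u_0$ produces $l_1$ satisfying $P$, together with a buffer property forbidding any $Q$-node in $[l_1-2w,l_1]$. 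Since the events defining $l_{i+1}$ depend only on nodes far to the left of $l_i$, independence lets us iterate, and a union bound over the $O(k_0)$ applications on both sides secures (1) and (2) with total probability exceeding $1-\eps'$.

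For (3) and (4), I would observe that a $\GD_{\theta^*}$ node has local green density strictly greater than $\tau_g > \frac{1}{2}$, and so cannot itself be hopeful green nor lie in a green intractable interval (in both of those cases $G(\N(u)) < \tau_g(2w+1)$). The definition of $l_{i+1}$ forbids any $Q$-node in the interval $(l_{i+1}, l_i-(2w+1)]$, while the buffer property of Lemma \ref{lem:gen} forbids them in $[l_i - 2w, l_i]$. Since $l_i-(2w+1)+1 = l_i - 2w$ these two intervals abut without gap, and together with the $l_i$ and $r_i$ themselves they tile $[l_{k_0}, r_{k_0}]$, yielding both (3) and (4). Point (5) is now immediate: the probability that any of $e^{dw}$ successive candidate positions to the left of $l_i-(2w+1)$ satisfies $P$ or $Q$ is bounded above by $e^{dw} \cdot C \exp(-2(w+1)(\tau_g-\rho)^2)$ for some polynomial $C$, which decays exponentially provided $d < 2(\tau_g-\rho)^2$; a final union bound over the $O(k_0)$ pairs being compared delivers (5).

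I anticipate the main obstacle will be bookkeeping: ensuring the polynomial-versus-exponential comparisons survive across the several flavours of distinguished node (hopeful red, hopeful green, $\GD_{\theta^*}$, green intractable), and confirming that the independence needed to iterate Lemma \ref{lem:gen} is genuine at each stage. Substantively however the heavy lifting is already done by Remark \ref{rem:dagger2}, Bound \ref{equation:littlehope}, Lemma \ref{lem:binom}, and Lemma \ref{lem:gen} itself, so no new estimates beyond those appearing elsewhere in the paper should be required.
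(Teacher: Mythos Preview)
Your proposal is correct and follows the same route as the paper, which simply records that the lemma ``follows from Remark \ref{rem:dagger2}, Lemma \ref{lem:gen}, and Bound \ref{equation:littlehope}'' in direct analogy with Lemma \ref{lem:theli}. Your sketch supplies exactly the details the paper elides: comparing $\mathbf{P}(\GD_{\theta^*})$ with $F_r$ via Lemma \ref{lem:binom}, feeding Remark \ref{rem:dagger2} into the ratio hypothesis of Lemma \ref{lem:gen}, and deriving (5) from the union bound together with \eqref{equation:littlehope}.
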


Next, we can apply Corollary \ref{coro:theliaresmooth} once again to conclude that for any $\eps'>0$ and $k\geq 1$ then $\Sm(l_i)$ and $\Sm(r_i)$ hold for all $i$ with probability $>1-\eps'$ for all $n \gg w \gg 0$. Furthermore by Lemma \ref{lem:thelicomplete}, we know that the $l_i$ and $r_i$ are each very likely to green complete. Next, since $\theta^* < \frac{1}{2}(1+ \rho)$ for large enough $w$, we may apply Corollary \ref{coro:thelispark} for some $\alpha$ so that $Z(\theta_\alpha,\rho)>0$. This establishes that at least one of the $l_i$ and at least one of the $r_i$ will $\alpha$-spark with some probability $\delta'$, independent of $w$. Finally, noting also that $\theta^* \to \tau_g > 1- \tau_r$ as $w \to \infty$ we may apply Lemma \ref{lem:sparkgrows} to establish that those $l_i$ and $r_i$ which do spark will initiate green firewalls.

This again allows us to pick $k_0$ guaranteeing that $u_0$ will be engulfed in a green firewall with probability $>1 - \eps$.\\

\subsection*{\emph{Proof of Theorem \ref{thmm:take2} when $\tau_g  \leq \rho$}}
Here we find that the probability that a randomly chosen red node is hopeful  $F_r \to 1$ (if $\tau_g<\rho$) or $F_r \to \frac{1}{2}$ (if $\tau_g=\rho$) as $n,w \to \infty$.

Suppose first that $\tau_g<\rho$. Once again we define $l_0:=u_0$ and $l_{i+1}$ to be the first node to the left of $l_i - (2w+1)$ which is either hopeful, or satisfies $\GD_{\theta^*}$, or belongs to a red intractable interval, so long as this node lies within $[u_0 - \frac{n}{4}]$. The $r_i$ are defined identically to the right. Let $v$ (respectively $v'$) be the nearest hopeful green node to the left (right) of $u_0$. Again, the $l_i$ and $r_i$ are close together but far from $v$ and $v'$:

\begin{lem} \label{lem:thelioncemore}
For any $k_0>0$ and $\eps'>0$ there exists $d>0$ such that for all large enough $w$ the following hold with probability $>1 - \eps'$ 
\begin{enumerate}
\item $l_{k_0}, \ldots, l_{1}, r_{1}, \ldots, r_{k_0}$ are all defined. 
\item $l_{k_0}, \ldots, l_{1}, r_{1}, \ldots, r_{k_0}$ all satisfy $\GD_{\theta^*}$.
\item No node in $[l_{k_0}, r_{k_0}]$ lies in an intractable red intervals.
\item For $i \geq 1$, we have $|l_i-v|, |r_{i}-v'| \geq e^{dw}$.
\end{enumerate} \end{lem}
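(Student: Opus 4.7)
The plan is to follow the proof of the parallel Lemma \ref{lem:theliyetagain}, adapted to the present subcase $\tau_g<\rho$. The characteristic feature is that Hoeffding's inequality (Proposition \ref{prop:Hoeffding}) gives $F_r\to 1$ at an exponential rate in $w$, so the $l_i,r_i$ press tightly around $u_0$ at near-minimum spacing $2w+1$, while green domination combined with Proposition \ref{prop:outnumber2} forces the hopeful green nodes $v,v'$ to be exponentially far from $u_0$.

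I would take the points in the order 4, 1, 2, 3. Point~4 comes first since it motivates the rest: green domination gives $F_g<\eta^w F_r\le\eta^w$ for some $\eta\in(0,1)$, and for any $d\in(0,-\frac{1}{2}\log\eta)$ a first-moment bound shows the expected number of hopeful green nodes in $[u_0-e^{dw},u_0+e^{dw}]$ tends to $0$; hence both $v,v'$ lie outside this window with probability $>1-\eps'/4$. Point~1 follows from the lower bound $F_r\to 1$ on the density of ``marked'' nodes (those meeting one of the three conditions in the definition of $l_{i+1}$): each $l_{i+1}$ lands within $O(w)$ of $l_i-(2w+1)$, so all $2k_0$ nodes exist inside $[u_0-n/4,u_0+n/4]$ once $n\gg w\gg 0$, and simultaneously all lie within $O(k_0 w)\ll e^{dw}$ of $u_0$, which combined with the previous step yields point~4.

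For point~2, I would apply Lemma \ref{lem:gen} with $P(u)=\GD_{\theta^*}(u)$ and $Q(u)$ the event that $u$ is hopeful without being $\GD_{\theta^*}$, or $u$ lies in a red intractable interval. The red-intractable contribution to $Q$ is exponentially small via the red-version of Proposition \ref{prop:musumup} combined with green domination, and the ``hopeful but not $\GD_{\theta^*}$'' contribution is handled by Lemma \ref{lem:binom}, which concentrates the hopeful mass at its minimum feasible density $\theta^*(2w+1)$ to yield the polynomial separation of probabilities required by Lemma \ref{lem:gen}. Iterating across the disjoint blocks $[l_i-(2w+1),l_{i-1}]$ using independence of their configurations then gives all $2k_0$ nodes satisfying $\GD_{\theta^*}$ simultaneously, with probability $>1-\eps'/4$. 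Point~3 is then a direct union bound on $T_r\le\zeta^w$ over the window $|[l_{k_0},r_{k_0}]|=O(k_0 w)$: the expected number of nodes inside red intractable intervals is $O(k_0 w^2 \zeta^w)=o(1)$.

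The main technical obstacle is point~2: since in the regime $\tau_g<\rho$ hopefulness is overwhelmingly likely whereas $\GD_{\theta^*}$ sits on the lower binomial tail, the exact density at the first marked node does not concentrate at $\theta^*$ automatically, and one must argue carefully by sifting out the minimum-density stratum of the hopeful event using Lemma \ref{lem:binom}. This is the same subtle tail-concentration step encountered in Lemma \ref{lem:theliyetagain}, and I would expect its resolution there to transfer verbatim to the present setting.
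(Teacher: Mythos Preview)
The paper gives no freestanding proof here; the lemma is simply stated as the direct analogue of Lemma \ref{lem:theliyetagain}, which in turn defers to the template of Lemma \ref{lem:theli}. Your handling of points 1, 3, and 4 is in line with that template and is essentially correct.

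Your argument for point 2, however, does not go through. In the present subcase $\tau_g<\rho$ one has $\theta^*\to\tau_g<\rho$, so $\GD_{\theta^*}$ is a strictly \emph{lower}-tail event whose probability decays exponentially in $w$. Meanwhile a typical red node has neighbourhood green-density close to $\rho>\theta^*$ and is therefore hopeful but does \emph{not} satisfy $\GD_{\theta^*}$; hence ${\bf P}(\text{hopeful, not }\GD_{\theta^*})$ is bounded below by a positive constant. With your choice of $P$ and $Q$ the ratio ${\bf P}(P)/{\bf P}(Q)\to 0$, so hypothesis (ii) of Lemma \ref{lem:gen} fails outright. Your appeal to Lemma \ref{lem:binom} is likewise misplaced: that lemma concentrates ${\bf P}(X\ge h)$ at the boundary ${\bf P}(X=h)$ only when $h>pN$, i.e.\ when the boundary sits above the mean. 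Here $\theta^*(2w+1)$ lies below the mean $\rho(2w+1)$, so there is no concentration of the hopeful mass at $\theta^*$; the hopeful event covers the bulk of the distribution, centred at $\rho$, not its lower edge. Concretely, under the stated definition of $l_{i+1}$ the first hit to the left of $l_i-(2w+1)$ is typically the first red node encountered, lying within $O(1)$ steps, with density $\approx\rho$ and hence almost surely not $\GD_{\theta^*}$. Your expectation that the resolution of this point in Lemma \ref{lem:theliyetagain} transfers verbatim is unfounded: the paper gives no explicit resolution there either, and the identical difficulty (with $\theta^*\to 1-\tau_r<\rho$) is present in that subcase as well.
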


Now, Corollary \ref{coro:theliaresmooth} applies yet again, giving us that for each $\eps'>0$ and $k_0 \in \NN$, $\Sm(l_i)$ and $\Sm(r_i)$ holds for all $i \leq k_0$ with probability $>1 - \eps'$ for all $w \gg 0$.

Then we can apply Lemma \ref{lem:thelicomplete}, giving us that the $l_i$ and $r_i$ all green complete with probability $>1- \eps'$ for all large enough $w$. Finally Corollary \ref{coro:thelispark} and Lemma \ref{lem:sparkgrows} also apply to the $l_i$ and $r_i$ and once again allow us to choose $k_0$, such that the probability that no $l_i$ or no $r_i$ initiate a green firewall is $<\eps$.\\

Finally we address the case $\tau_g=\rho$. Attempting to apply our previous results directly brings us into conflict with the hypothesis, in various places, that $\theta \neq \rho$. Again, we may get around this straightforwardly by picking $\tau_g'>\tau_g$ where $\tau_g'<\mu^\rho_g$ such that $(\rho,\tau_g',\tau_r)$ is green dominating. Then $\tau_g'>\rho$, and we proceed through this section's main argument with $\tau_g'$ in place of $\tau_g$ throughout, starting with Remark \ref{rem:dagger2}. Since any unhappy red node which is $\tau_g'$-hopeful is automatically $\tau_g$-hopeful, we deduce the existence of green firewalls exactly as previously. This concludes the proof of Theorem \ref{thmm:take2}.

\subsection*{\emph{Proof of Theorem \ref{thmm:stag1} } }

As before we shall reverse the roles of red and green for convenience, and thus aim to show that if $\frac{1}{2} < \tau_r < \mu^\rho_r$ and $\mu^\rho_g < \tau_g$, if the scenario is green-dominating and additionally if $\tau_g < \frac{1}{2}(1+\rho)$, then the scenario is static almost everywhere. An example is illustrated in Figure \ref{fig:stag2}.

\begin{figure}[hbtp]
\includegraphics[width=15cm, clip=true, trim= 2cm 3cm 2cm 3cm]{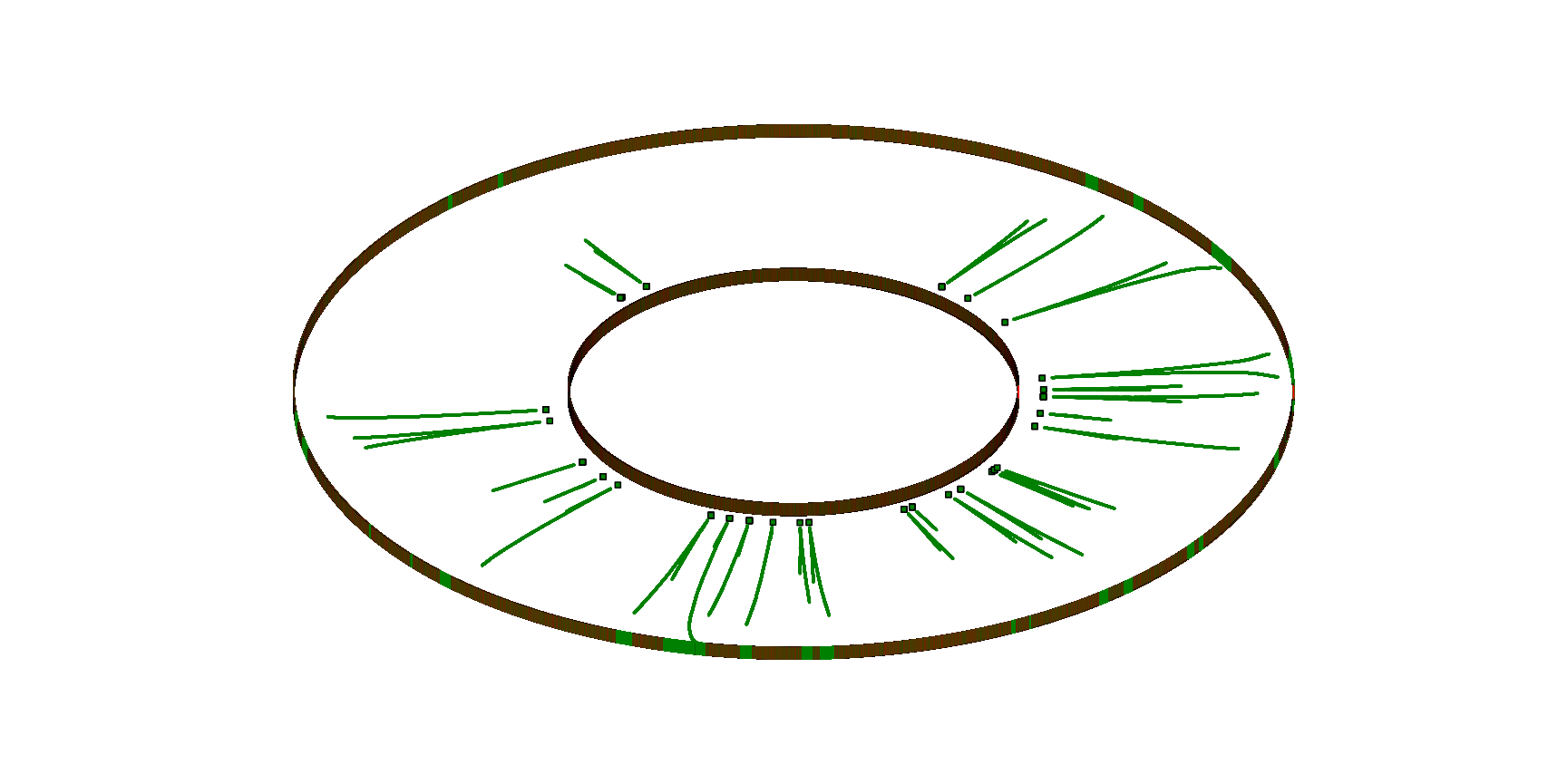} 
\caption{$\rho=0.6$, $\tau_g=0.73>\mu^{0.6}_g \approx 0.71$, $\tau_r=0.56<\mu^{0.6}_r \approx 0.58$, $w=100$, $n=5,000,000$}
\label{fig:stag2}
\end{figure}

First we define $l_0:=u_0$, and define $l_{i+1}$ to be the first node to the left of $l_i - (2w+1)$ which is either hopeful or satisfies $\GD_{\theta^*}$ (defined as in Equation \ref{equation:theta*again}), so long as this node lies within $[u_0 - \frac{n}{4}]$. The $r_i$ are defined identically to the right. Exactly as before, Remark \ref{rem:dagger2} gives us the following:

\begin{lem} \label{lem:bigliagain}
For any $k_0>0$ and $\eps'>0$, there exists $d>0$ such that for all large enough $w$ the following hold with probability $>1 - \eps'$ 
\begin{enumerate}
\item $l_{k_0}, \ldots, l_{1}, r_{1}, \ldots, r_{k_0}$ are all defined. 
\item $l_{k_0}, \ldots, l_{1}, r_{1}, \ldots, r_{k_0}$ all satisfy $\GD_{\theta^*}$.
\item There are no hopeful green nodes in $[l_{k_0}, r_{k_0}]$.
\item For $i \geq 2$, we have $|l_{i+1}-l_i|, |r_{i+1}-r_i|, |r_1 - l_1| \geq e^{dw}$.
\end{enumerate} \end{lem}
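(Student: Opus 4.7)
The proof will closely parallel that of Lemma \ref{lem:theliagain}, substituting ``hopeful'' for ``unhappy'' throughout and working with $\theta^*$ as defined in Equation \ref{equation:theta*again}. First I would derive the appropriate analogue of Remark \ref{rem:dagger2}: combining the hypotheses $\mu^\rho_g < \tau_g$, $\frac{1}{2} < \tau_r < \mu^\rho_r$, and green dominance with Propositions \ref{prop:musumup} and \ref{prop:outnumber2}, one obtains $\eta, \zeta \in (0,1)$ so that $F_g < \eta^w F_r$ and $T_r < \zeta^w F_g$ for all $w \gg 0$. Unlike in Lemma \ref{lem:bigli}, we need not control $T_g$ here: green intractable intervals do not obstruct the remainder of the proof of Theorem \ref{thmm:stag1}; they merely truncate the recursive search, which is harmless.

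For Points 1, 2 and 4, the plan is to apply Lemma \ref{lem:gen} iteratively on each side of $u_0$, with $P(u)$ being the event ``$\GD_{\theta^*}(u)$ holds'' and $Q(u)$ being ``$u$ is hopeful green''. The hypothesis ${\bf P}(P)/{\bf P}(Q) > kw$ holds because ${\bf P}(\GD_{\theta^*})$ is of the same exponential order as $F_r$ up to polynomial factors in $w$ (by Stirling applied to $b(2w+1,\rho)$ at density $\theta^* \approx \tau_g$, in the style of Proposition \ref{prop:musumup}), while $F_g < \eta^w F_r$. Iterating delivers Point 2; Point 1 then follows because, for $n \gg w$, the window $[u_0 - n/4, u_0 + n/4]$ comfortably accommodates the first $k_0$ criterion-satisfying nodes on either side. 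Point 4 follows by choosing $d$ smaller than the exponential decay rate of the criterion probability and applying a union bound over candidate positions between consecutive $l_i$'s.

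Point 3 is a direct Markov argument: the expected number of hopeful green nodes in $[l_{k_0}, r_{k_0}]$ is at most $|[l_{k_0}, r_{k_0}]| \cdot F_g$; the range length is bounded above with high probability by $O(k_0/{\bf P}(\text{criterion}))$ via standard tail estimates on the geometric-type gap distribution, and since ${\bf P}(\text{criterion}) \approx F_r$ up to polynomial factors, combining with $F_g < \eta^w F_r$ yields an expected count $O(k_0 \eta^w) \to 0$.

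The main obstacle I anticipate is the Stirling-based comparison ${\bf P}(\GD_{\theta^*}) \approx F_r$ up to polynomial factors, needed to feed Lemma \ref{lem:gen} its crucial hypothesis; this is routine but must be redone carefully for the specific $\theta^*$ of Equation \ref{equation:theta*again}, following the template of the asymptotic calculations in Sections \ref{section:threshes} and \ref{section:themus}. A minor technical point is confirming that conditioning on the position and type of $l_i$ does not spoil the probabilistic bounds used in locating $l_{i+1}$, but this is immediate since $l_{i+1}$ is sought at distance $\geq 2w+1$ from $l_i$, involving disjoint neighborhoods in the initial configuration.
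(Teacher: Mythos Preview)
Your approach is correct and matches the paper's, which simply states ``Exactly as before, Remark \ref{rem:dagger2} gives us the following'' before stating the lemma. You have correctly identified that the essential ingredient is $F_g < \eta^w F_r$ from green domination (Proposition \ref{prop:outnumber2}), together with the exponential smallness of $F_r$ (here guaranteed by $\tau_g > \mu^\rho_g > \rho$), and that these feed into Lemma \ref{lem:gen} exactly as in Lemmas \ref{lem:theli}, \ref{lem:theliagain}, and \ref{lem:bigli}. Two minor remarks: the bound $T_r < \zeta^w F_g$ you derive is true but plays no role in this lemma (intractable intervals do not appear in the definition of the $l_i$ here); and your separate Markov argument for Point 3 is fine but unnecessary, since Lemma \ref{lem:gen} already yields that no node in $[x_u - 2w, u]$ satisfies $Q$, which iterated over the $l_i$ and $r_i$ covers the whole of $[l_{k_0}, r_{k_0}]$ directly.
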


Notice that $\tau_g>\mu^\rho_g>\rho$, meaning that for all large enough $w$, we have $\theta^* \neq \rho$. Hence Corollary \ref{coro:theliaresmooth} applies again, giving us that for each $\eps'>0$ and $k_0 \in \NN$, $\Sm(l_i)$ and $\Sm(r_i)$ holds for all $i \leq k_0$ with probability $>1 - \eps'$ for all $w \gg 0$.

Then we can apply Lemma \ref{lem:thelicomplete}, giving us that the $l_i$ and $r_i$ all green complete with probability $>1- \eps'$ for all large enough $w$. Finally noting that $\theta^* \to \tau_g<\frac{1}{2}(1+\rho)$, we may finish off by applying Corollary \ref{coro:thelispark} for some suitable $\alpha$ and Lemma \ref{lem:sparkgrows} to the $l_i$ and $r_i$, once again allowing us to choose $k_0$, such that the probability that no $l_i$ or no $r_i$ initiate a green firewall is $<\eps$.

Let $l$ and $r$ be the $l_i$ and $r_i$ nearest $u_0$ which do initiate green firewalls. It is certain that these green firewalls will spread towards $u_0$ until they hit green intractable intervals. Suppose this has happened by stage $s$. At this stage, looking away from $u_0$ we encounter intractable intervals of both colours before hopeful nodes of either colour. Thus with probability $>1- \eps$, the node $u_0$ will not change colour.\\

\subsection*{\emph{Discussion of Question \ref{questionn:static3?}}}

Similar remarks apply here as in the case of Question \ref{questionn:static2?}, and we expect that any technique which resolves that question will apply here too. We may also repackage the counterexamples we mentioned in that case: $(\rho,\tau_g,\tau_r)=(0.7,0.87,0.51)$. Here $\tau_g> \mu^{0.7}_g \approx 0.79$ while $\frac{1}{2}< \tau_r < \mu^{0.7}_g \approx 0.52$, and the scenario is green dominating. However also $\tau_g > \frac{1}{2}(1+\rho)=0.85$.

Exchanging the roles of red and green here produces $(\rho,\tau_g,\tau_r)=(0.3,0.51,0.87)$ which is located within the upper purple region of Figure \ref{fig:map30}.

Again we can find cases where $Z(\theta^*,\rho)>0$ also fails. For instance, $(\rho,\tau_g,\tau_r)=(0.74,0.93,0.502)$ is green dominating, satisfies $\tau_g > \mu^{0.74}_g \approx 0.81$ and $\frac{1}{2}< \tau_r < \mu^{0.74}_g \approx 0.503$ while also $Z(\theta^*,\rho) \to -0.06$ approximately as $\theta^* \to \tau_g$.

\subsection*{\emph{Proof of Theorem \ref{thmm:stag2}  }}

To complete our analysis of the selective model, we turn to the case where both $\mu^\rho_g < \tau_g$ and $\mu^\rho_r < \tau_r$. Recall that Theorem \ref{thmm:stag2} asserts that such a scenario is static almost everywhere, as illustrated in Figure \ref{fig:stag1}. The proof of this follows from Lemma \ref{lem:gen} applied twice, by interpreting $P(u)$ as the event that $u$ lies in a green (respectively red) intractable interval and $Q(u)$ as $u$ being green (red) and hopeful. The necessary probabilistic bounds are given in Proposition \ref{prop:musumup}.

\begin{figure}[!htbp]
\includegraphics[width=15cm, clip=true, trim= 2cm 3cm 2cm 3cm]{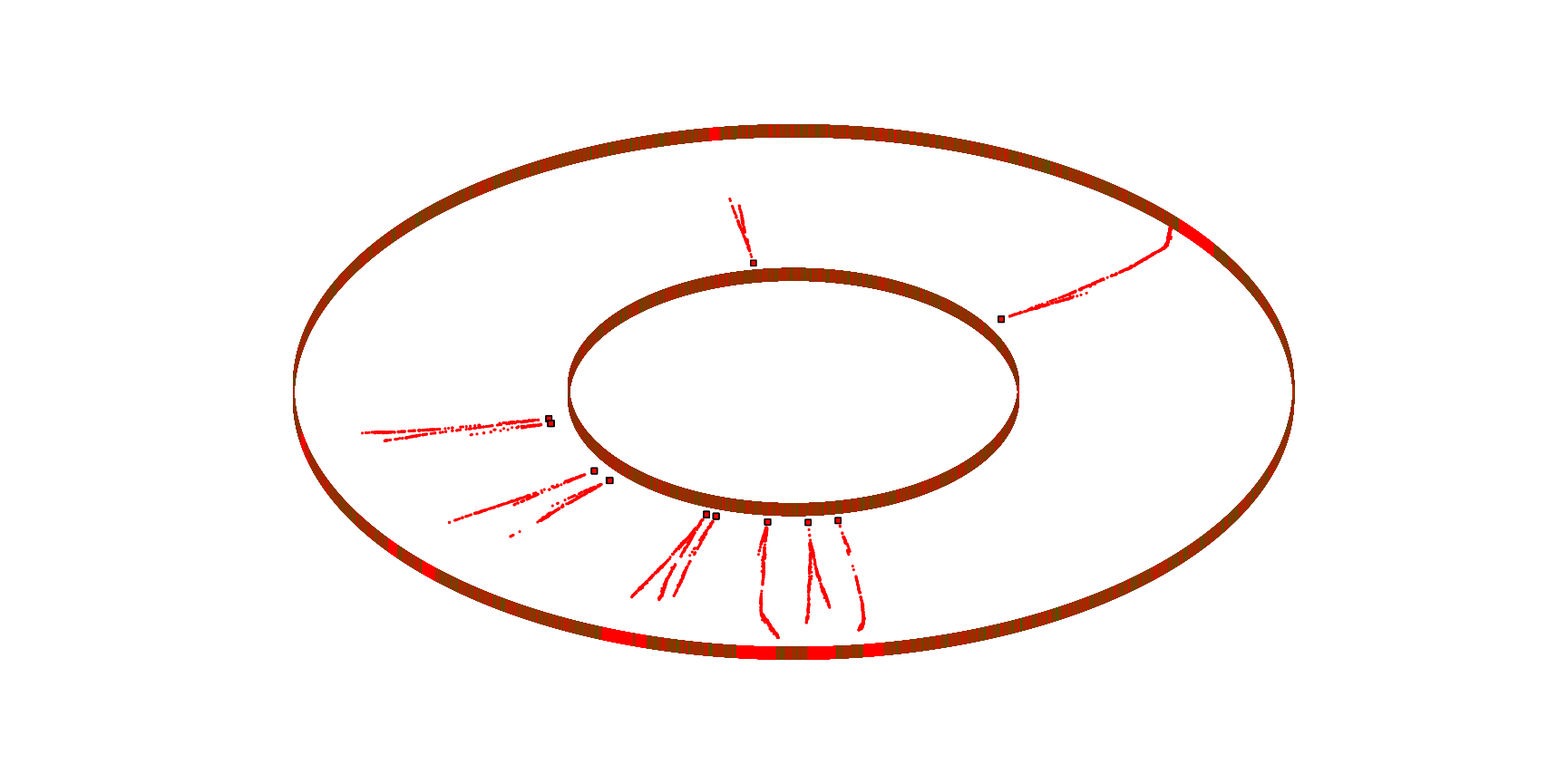} 
\caption{$\rho=0.4$, $\tau_g=0.65>\mu^{0.4}_g \approx 0.58$, $\tau_r=0.75>\mu^{0.4}_r \approx 0.71$, $w=50$, $n=100,000$}
\label{fig:stag1}
\end{figure}

\section{Alternative dynamics, perturbed processes, open problems, \texorpdfstring{\&}{and} run-time} \label{section:bigtau}

Finally, we turn our attention to scenarios where both $\tau_r, \tau_g > \frac{1}{2}$ under the incremental and synchronous dynamics. In the case of a synchronous model, we can mimic Theorem \ref{thmm:straddle} with the following proposition, whose proof is deferred to Appendix \ref{section:appD}. This also serves to establish that, under the given conditions on $\tau_g$ and $\tau_r$ and for all large enough $n$, the initial configuration is highly likely to be such that the process is guaranteed to finish.

\begin{restatable}{prop}{happysim}
\label{prop:happysim}
In the synchronous model, suppose that $\frac{1}{2} < \tau_g < \frac{2}{3}$ and $\tau_g < \tau_r$. Then green takes over totally. 
\end{restatable}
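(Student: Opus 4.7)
The plan is to mirror the approach of Theorem \ref{thmm:straddle} by identifying a green ``seed'' in the initial configuration and then showing that under synchronous dynamics this seed expands until it consumes the entire ring. With $\tau_g > \frac{1}{2}$ a plain run of $w+1$ greens is no longer adequate, since its boundary nodes have only $w+1$ greens (against $w$ potentially red) in their neighbourhoods and are therefore unhappy. Instead, I would identify a critical length $L^{*}(w) = C \cdot w$ (for a suitable constant $C = C(\tau_g, \tau_r)$) such that any run of $\geq L^{*}$ consecutive green nodes has a stably happy interior and is guaranteed to propagate outward. For any fixed $\rho \in (0,1)$, the expected number of such runs in the initial configuration is $n \rho^{L^{*}(w)}$, which tends to infinity provided $n$ grows sufficiently fast relative to $w$---consistent with the blanket asymptotic ``$n \gg w \gg 0$''---and standard concentration then delivers the existence of such a seed with probability exceeding $1 - \varepsilon$.

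The next step is a one-step analysis of the seed $[a, a + L^{*} - 1]$. Tabulating happy/unhappy status by distance from each edge, reds in a band of width roughly $(2\tau_r - 1)w$ on either side of the seed are unhappy (they see sufficiently many greens) and flip to green, while greens in a thinner band of width $(2\tau_g - 1)w$ at each end of the seed flip to red. Since $\tau_r > \tau_g$, the net outward shift of the green frontier is at least $2(\tau_r - \tau_g) w > 0$ per side per step; and the interior greens at distance $\geq (2\tau_g-1)w$ from either edge are happy regardless of what happens outside and so remain green. For the inductive step, I would track the outermost green positions across successive synchronous updates, arguing inductively that they advance by at least a constant multiple of $w$ over any constant number of steps. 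The hypothesis $\tau_g < \frac{2}{3}$ is used precisely to ensure that after the band oscillations the reconstituted green region---the surviving interior core together with the layers of newly-flipped greens---remains of length at least $L^{*}$, so that the inductive hypothesis is re-established and the core never collapses. Iterating over $O(n/w)$ steps, the entire ring becomes green; at that point every node is happy, the process terminates, and we obtain simultaneously the takeover conclusion and the finishing claim alluded to in the preamble.

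The principal obstacle is the band-structure analysis in the inductive step. The synchronous update does not produce a cleanly monotone boundary but rather an alternating pattern of thin strips---outermost new greens, inner newly-flipped reds, surviving interior greens, and symmetrically on the other side---whose combined evolution is not obviously monotone. My plan would be to introduce a potential function (for instance a weighted combination of the outermost green position together with the widths of the surviving interior and new-green bands) and to show that it increases monotonically under the joint hypotheses. Identifying the right potential, and verifying the case analysis for its monotonicity---with the bound $\tau_g < \frac{2}{3}$ emerging precisely as the threshold that prevents the reconstituted core from ever shrinking below the critical length $L^{*}$---is the crux of the proof.
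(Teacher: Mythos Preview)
Your overall strategy---locate a long green seed in the initial configuration and argue that under synchronous updates it expands irreversibly---is the same as the paper's. But you have left the central step as an open problem (the potential-function idea is only a hope), and the paper's actual argument shows that this gap can be closed by choosing a sharper invariant than the one you are tracking.

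The key idea you are missing is to follow the \emph{happy} green firewall rather than the green region itself. Take a maximal interval $[a,b]$ of length $\geq 2w+1$ in which every node is green \emph{and happy}, set $k_g := \lceil \tau_g(2w+1)\rceil - (w+1)$ and $k_r$ analogously, and look only at the window $J=[b+1,b+w]$ just beyond the right edge. Because $b$ is happy and $b+1$ is not a happy green node, one knows $G_t(J)$ quite precisely (it equals $k_g$ in the generic case), and a short case analysis---splitting on whether $b+1$ is red, and on whether $J$ contains any happy red node---shows that the happy firewall never retreats and gains at least one node within two synchronous steps. The hypothesis $\tau_g<\tfrac{2}{3}$ enters exactly as the inequality $k_g<\tfrac{1}{3}w$, which forces $2w-2k_g\geq w+k_g+1$ and makes the newly absorbed node happy; the hypothesis $\tau_g<\tau_r$ gives $k_r\geq k_g+1$, ensuring the red side loses at least one more node than the green side in the exchange. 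No potential function or multi-band bookkeeping is needed: the alternating-strip picture you describe is real for the raw green region, but it disappears once you restrict attention to the happy core. The seed one needs is therefore just a happy green block of length $2w+1$, whose existence with high probability follows from the law of large numbers exactly as in Theorem~\ref{thmm:straddle}.
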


Recall that Conjecture \ref{conj:abovehalf} generalises Proposition \ref{prop:happysim}, asserting that if $\frac{1}{2} < \tau_g < \tau_r$, then green will take over totally under both the incremental and synchronous dynamics.

Towards this conjecture, we briefly make the some observations about a perturbed version of our model. The initial configuration is set up exactly as previously described. But now, for any $1>\eps>0$ we define the $\eps$-perturbed model as follows: at each time-step with probability $1-\eps$ we proceed as in the previous incremental model, but with probability $\eps$ we pick a node at random and alter its colour. Thus $\eps$ can roughly be thought of as the probability of an error at each stage. We remark that this process is a regular perturbed Markov process in the sense of Section 3.4 of \cite{Y}.

The advantage of working with such a perturbed process is that for each $\eps$ the Markov process is irreducible: any state of the ring is accessible from any other in a finite number of steps. Thus, following Young and notably in the works of Zhang (\cite{Z1}, \cite{Z2}, \cite{Z3}), it  has become common practice to analyse Schelling segregation via perturbed models of this sort, and to examine the limit as $\eps \to 0$. It is particularly of interest to identify the \emph{stochastically stable} states, which are the states most likely to emerge in the long run, as $\eps \to 0$. They are defined as follows: for each $\eps>0$, Markov chain theory guarantees that there will be a unique stationary distribution $\mu^\eps$ on the state-space. A state $s$ is stochastically stable if $\mu^0(s):=\lim_{\eps \to 0} \mu^\eps(s)>0$.

Now, our unperturbed model contains exactly two recurrence classes, namely the absorbing states $\mathfrak{G}$ and $\mathfrak{R}$ representing totally green and red rings respectively. By Young's Theorem (Theorem 3.1 of \cite{Y}) whether or not these are stochastically stable will depend on their \emph{stochastic potential}. We refer the interested reader to \cite{Y} for the formal definition, however in the current context its meaning is straightforward: the stochastic potential of the state $\mathfrak{G}$ is the minimum number of errors required to reach it from the opposite recurrence class $\mathfrak{R}$.

That is to say, the stochastic potential of $\mathfrak{G}$ is simply the minimum number of green nodes which have to be artificially inserted into an otherwise entirely red ring in order to generate one unhappy red element. (Notice that if these green nodes are inserted in consecutive positions, the remainder of the transformation from $\mathfrak{R}$ to $\mathfrak{G}$ may then take place error-free.) This number is $\lfloor (1-\tau_r)(2w+1) \rfloor +1$. Similarly the stochastic potential of state $\mathfrak{R}$ is $\lfloor (1-\tau_g)(2w+1) \rfloor +1$. Thus we have the following result, which supports, but does not formally imply, the incremental case of Conjecture \ref{conj:abovehalf}:

\begin{thmm} \label{thmm:stochastic}
In the perturbed model, suppose that $\frac{1}{2} < \tau_g < \tau_r$. Then total green takeover represents the only stochastically stable state.

If $\frac{1}{2} < \tau_g = \tau_r$, total takeover by either colour is stochastically stable.
\end{thmm}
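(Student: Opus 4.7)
The plan is to apply Young's Theorem (Theorem 3.1 of \cite{Y}) to the two stochastic potentials identified in the paragraph immediately preceding the theorem. Our perturbed Markov chain has exactly two recurrence classes, the absorbing states $\mathfrak{G}$ and $\mathfrak{R}$; in this two-class case Young's Theorem simplifies to the statement that a recurrence class is stochastically stable iff its stochastic potential is minimal among the two, and is the unique stochastically stable class iff its potential is strictly smaller.

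Next I would verify rigorously the two potentials asserted in the preceding discussion, namely $r(\mathfrak{R},\mathfrak{G})=k_r:=\lfloor(1-\tau_r)(2w+1)\rfloor+1$ and symmetrically $r(\mathfrak{G},\mathfrak{R})=k_g:=\lfloor(1-\tau_g)(2w+1)\rfloor+1$. For the lower bound on $r(\mathfrak{R},\mathfrak{G})$, the key observation is that so long as fewer than $k_r$ greens have ever been introduced by errors, every window of length $2w+1$ contains at most $k_r-1\leq(1-\tau_r)(2w+1)$ greens, hence every red node is happy; the only available unperturbed transitions therefore act on unhappy greens, and such transitions can only remove greens. Thus no trajectory from $\mathfrak{R}$ with strictly fewer than $k_r$ errors can reach $\mathfrak{G}$. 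For the matching upper bound, one uses $k_r$ errors to plant a run of $k_r$ consecutive green nodes in $\mathfrak{R}$; since $\tau_r>\frac{1}{2}$ forces $k_r\leq w$ for all $w$ large enough, each red node immediately bordering this run contains all $k_r$ greens within its neighbourhood and is therefore unhappy. Selecting such a border red as the dynamic's target at each successive step extends the green run, and the boundary-unhappiness property is preserved by this growth, so iteratively the run fills the entire ring with no further errors.

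It then remains only to compare the two potentials. If $\tau_g<\tau_r$ then $(1-\tau_g)(2w+1)-(1-\tau_r)(2w+1)=(\tau_r-\tau_g)(2w+1)>1$ for all sufficiently large $w$, yielding $k_g>k_r$ strictly, and Young's Theorem then identifies $\mathfrak{G}$ as the unique stochastically stable state. If instead $\tau_g=\tau_r$ the two potentials coincide, so both absorbing classes share the minimal potential and are both stochastically stable.

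The principal obstacle lies in the upper-bound half of the potential computation: one must ensure that at every configuration along the chosen sequence of unperturbed steps there remains an unhappy border red available for the dynamic to select, from the initial planted run all the way until the ring is filled. This is the only part that goes beyond a purely formal invocation of Young's Theorem, and it is exactly the content of the parenthetical claim made in the preceding discussion.
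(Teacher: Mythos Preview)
Your proposal is correct and follows essentially the same approach as the paper. The paper's own ``proof'' is nothing more than the two paragraphs of discussion preceding the theorem statement: it identifies the two absorbing states, invokes Young's Theorem, states the stochastic potentials $\lfloor(1-\tau_r)(2w+1)\rfloor+1$ and $\lfloor(1-\tau_g)(2w+1)\rfloor+1$ with only a parenthetical justification for the upper bound, and then simply writes ``Thus we have the following result.'' You have supplied the details the paper omits---the lower-bound argument that no red can become unhappy until $k_r$ errors have accumulated, the explicit verification that the planted green run can grow error-free, and the comparison of potentials for large $w$---but the skeleton is identical.
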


Now, there is a strong sense in which Conjecture \ref{conj:abovehalf} and Theorem \ref{thmm:stochastic} fail to give the entire story. So
we finish with some remarks on the run-time of the process, and hypothesise the existence of another important tipping point in each of the incremental and synchronous models. The following are easy to see from our analysis so far:

\begin{itemize}[$\bullet$] \item In the selective model, the the expected run-time is at most linear in $n$.
\item In the incremental model, if either $\tau_g, \tau_r < \frac{1}{2}$, the expected run-time is at most linear in $n$.
\item In the synchronous model, if either $\tau_g, \tau_r < \frac{2}{3}$, the run-time will be at most linear in $n$ with probability $>1- \eps$ for all large enough $n$ and $w$.
\end{itemize}

We further conjecture the following:

\begin{conj} \ \ 

\begin{itemize}[$\bullet$]

\item In the incremental model, if either $\tau_g, \tau_r < \frac{3}{4}$, the expected run-time is at most linear in $n$.

\item In the incremental model, if both $\tau_g, \tau_r > \frac{3}{4}$, the process will finish, but the expected run-time is superpolynomial in $n$.

\item In the synchronous model, if both $\tau_g, \tau_r > \frac{2}{3}$, the expected run-time is superpolynomial in $n$ (which includes the possibility of never finishing).
\end{itemize}
\end{conj}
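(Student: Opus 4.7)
The plan is to handle the three clauses with two distinct techniques: a monotonicity argument bounding the total number of color changes for the first clause, and the explicit construction of metastable or cyclic configurations for the last two.

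For the incremental model with $\min(\tau_g,\tau_r) < \frac{3}{4}$, I would assume without loss of generality that $\tau_g < \frac{3}{4}$ and seek to bound the total expected number of flips by $O(n)$; since each step of the incremental dynamic performs exactly one flip, this immediately yields linear expected run-time. The key step is a \emph{monotonicity principle}: an appropriately defined notion of ``saturation'' at a node is preserved by the dynamics. A candidate property is $G_t(\N(x)) \geq \tfrac{3}{4}(2w+1)$, from which one tries to deduce that $x$ stays green forever, by induction on the first violation in time: a green $y \in \N(x)$ that flips must be unhappy, yet the overlap satisfies $|\N(x)\cap \N(y)| \geq w+1$ and thus inherits most of the green supermajority from $\N(x)$, forcing $y$ to still be happy when $\tau_g < \frac{3}{4}$. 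Combined with the analysis of Sections~\ref{section:threshes}--\ref{section:longsect}, which shows that with high probability each $O(w)$-window of the initial configuration either stays static or rapidly reaches saturation after $O(w)$ local flips, one concludes that the total expected number of flips is $O(n)$.

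For the incremental model with both $\tau_g, \tau_r > \frac{3}{4}$, termination follows from standard Markov-chain considerations: the state space is finite and every unhappy node has positive probability of flipping in isolation, so no non-absorbing state can be recurrent. For the superpolynomial lower bound, I would exhibit a family of metastable initial configurations --- long alternating blocks of red and green of lengths near the critical boundary $\tau(2w+1)$ --- in which the only unhappy nodes sit in thin boundary layers. Escape from such a pocket, either by merging with an adjacent block or eliminating the pocket entirely, requires a specific $\Omega(w)$-long sequence of flips whose combined probability is exponentially small in $w$; a law-of-large-numbers argument on the initial configuration then establishes that such pockets arise with high probability in rings of size $n \gg w$, forcing superpolynomial expected run-time. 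For the synchronous model with both $\tau_g, \tau_r > \frac{2}{3}$, I would instead display explicit cyclic configurations such as alternating red and green blocks whose lengths are tuned so that every node is simultaneously unhappy: all nodes flip in unison to a color-swapped state in which every node is still unhappy, producing an infinite $2$-cycle. A probabilistic argument parallel to but more delicate than that of Proposition~\ref{prop:happysim} would then show that for $\tau_g, \tau_r > \frac{2}{3}$ random initial configurations almost surely contain large regions that evolve into such cycles --- the threshold $\frac{2}{3}$ being precisely where the green-firewall argument of Proposition~\ref{prop:happysim} breaks down --- precluding polynomial-time termination, and frequently termination altogether.

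The principal obstacle lies in the first clause: the naive monotonicity invariant is fragile because a single flip distant from $x$ can move $G_t(\N(x))$ below the relevant threshold, and one appears to need a finer, weighted invariant --- perhaps indexed by distance to the nearest fully saturated region --- whose expected decrease under a random unhappy flip is provably non-positive when $\tau_g<\frac{3}{4}$. The threshold $\frac{3}{4}$ appears to arise from the fact that a green density of $\frac{3}{4}$ on a $(2w+1)$-window forces a green density at least $\tau_g$ on every overlapping $(w+1)$-subwindow precisely when $\tau_g < \frac{3}{4}$, closing the inductive step; above this threshold the cascades of destabilization propagate unhindered, which is exactly what the second clause exploits.
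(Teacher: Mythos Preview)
The statement you are attempting to prove is presented in the paper as a \emph{conjecture}, not a theorem; the paper offers no proof and only a brief heuristic discussion. That discussion differs substantially from your approach: the paper's intuition for the $\tfrac{3}{4}$ threshold in the incremental model concerns the boundary of a happy green firewall $[a,c]$, letting $b$ be its rightmost happy node and $I=[b+1,b+w]$. One computes $\UG(I)=\lceil\tau_g(2w+1)\rceil-(w+1)$, so $\UG(I)\lessgtr\tfrac{1}{2}w$ according as $\tau_g\lessgtr\tfrac{3}{4}$, which (together with a lower bound on $\UR(I)$) determines whether the boundary of the happy firewall performs a random walk biased to advance or to retreat. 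The superpolynomial regime is then likened to the Ehrenfest urn, whose expected return time to an extreme state is exponential in $w$. No intuition is offered for the synchronous $\tfrac{2}{3}$ claim beyond the observation that the firewall argument of Proposition~\ref{prop:happysim} requires $\tau_g<\tfrac{2}{3}$.

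Your proposal is not a proof either, as you yourself acknowledge in the final paragraph, but it also contains a concrete error. Your stated reason for the $\tfrac{3}{4}$ threshold --- that a green density of $\tfrac{3}{4}$ on a $(2w+1)$-window forces density at least $\tau_g$ on every $(w+1)$-subwindow precisely when $\tau_g<\tfrac{3}{4}$ --- is false: if $G_t(\N(x))\geq\tfrac{3}{4}(2w+1)$, a $(w+1)$-subwindow can still have as few as $(w+1)-\tfrac{1}{4}(2w+1)\approx\tfrac{1}{2}w$ green nodes, i.e.\ density only about $\tfrac{1}{2}$, so the inductive step you describe does not close for any $\tau_g>\tfrac{1}{2}$. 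Relatedly, your ``saturation is preserved'' claim fails outright: a green $y\in\N(x)$ can have $\N(y)$ overlapping $\N(x)$ in exactly $w+1$ nodes, all of which could absorb the $\tfrac{1}{4}(2w+1)$ red nodes of $\N(x)$, while the remaining $w$ nodes of $\N(y)$ outside $\N(x)$ are all red, giving $G_t(\N(y))\approx\tfrac{1}{2}w$ and $y$ genuinely unhappy. The paper's biased-random-walk picture is the correct source of the threshold; your monotonicity invariant, even in the weighted form you gesture at, does not obviously capture it. For the other two clauses your outline is plausible and broadly consonant with the paper's Ehrenfest-urn remark, but neither you nor the paper supplies the probabilistic argument that such metastable or cyclic regions arise from the random initial configuration with the required probability.
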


We briefly discuss the intuition behind this conjecture in the incremental case. Consider a green firewall $[a,c]$ where $c-a \gg 2w+1$ and $c+1$ is red. Let $b$ be the rightmost happy element within this firewall. We assume that $b<c$. Now consider the interval $I=[b+1,b+w]$. Since $b$ is happy and $b+1$ is not, $I$ must contain exactly $\lceil \tau_g(2w+1) \rceil -(w+1)$ many green nodes which will all be unhappy. The question of interest is whether the happy firewall, which currently ends at $b$, is more likely to advance or retreat.

If $\tau_g> \frac{3}{4}$, then $\UG(I)>\frac{1}{2}w$, and irrespective of the remaining nodes in $I$, the happy firewall is more likely to retreat. However, if $\tau_g<\frac{3}{4}$, then $\UG(I)<\frac{1}{2}w$ for all large enough $w$. At the same time, $\UR(I) \geq \min \left\{ w- \UG(I), \lceil \tau_r(2w+1) \rceil -(w+1) \right\}$, and presuming $\tau_g< \tau_r$, the happy firewall is more likely to advance.

The situation where $\tau_g, \tau_r > \frac{3}{4}$ is redolent of the classic Ehrenfest Urn, a simple model of a thermodynamic process proposed by T. \& P. Ehrenfest in \cite{Eh}. An urn is filled with a fixed number ($w$) of balls, divided in some proportion between red green. At each time step, a ball is selected uniformly at random from the urn and replaced with a ball of the opposite colour. It is fairly clear that the model's limiting distribution as $t \to \infty$ is $b(w, \frac{1}{2})$, regardless of the starting configuration. The celebrated analysis of Kac in \cite{K} also established that if the urn begins in an all-green state (and subject to the technical proviso that $w$ is even) the expected time until this state recurs is exponential in $w$.\\

\begin{acknowledgements}

Lewis-Pye was was supported by a Royal Society University Research Fellowship.  

Barmpalias was supported by the Research Fund for International Young Scientists from the National Natural Science Foundation of China, grant numbers 613501-10236 and 613501-10535, and an International Young Scientist Fellowship from the Chinese Academy of Sciences; support was also received from the project Network Algorithms and Digital Information from the Institute of Software, Chinese Academy of Sciences and a Marsden grant of New Zealand.

\end{acknowledgements}

\appendix
\section{Deferred proofs from section \ref{section:intro}} \label{section:appA}

We deferred the proof of the following result from the introduction, and present it below:

\itends*

\begin{proof}
Our strategy is to define a harmony index for the whole ring, and establish that this quantity has a finite upper bound, but also increases (by at least some minimum positive amount) with each legitimate move. This will give the result. If $\tau_g+\tau_r \leq 1$, we start by picking $\chi$ such that $$\frac{1 - \tau_g}{\tau_g} \geq \chi \geq \frac{\tau_r}{1- \tau_r}.$$
If instead $\tau_g+\tau_r > 1$, then we require that $w$ is large enough to allow us to choose $\chi$ where
$$0 < \frac{1 - \tau_g + \left( \frac{1}{2w+1} \right) }{\tau_g - \left( \frac{1}{2w+1} \right)} < \chi < \frac{\tau_r - \left( \frac{1}{2w+1} \right) }{1- \tau_r + \left( \frac{1}{2w+1} \right) }.$$
Now for a node $x$ at time $t$, we'll write $G_t(x)=1$ (respectively $G_t(x)=0$) if $x$ is green (red) at time $t$, and define
$$A_t(x):= \left\{ \begin{array}{ll} \chi & \textrm{ if } G_t(x)=1 \\
1 & \textrm{ if } G_t(x)=0. \end{array} \right. $$
Similarly define $$L_t(x):= \frac{|\{ y \in \N(x) \ : \ G_t(y)= G_t(x) \}|}{2w+1}.$$
Now we define the following harmony index: $S(t):= \sum_{x} A_t(x) L_t(x)$. Clearly this is bounded above by $n \cdot \max\{1, \chi\}$. We wish to compare $S(t+1)$ and $S(t)$. Suppose that $x$ is the node whose colour changes. Then $L_{t+1}(x) = 1-L_t(x) + \frac{1}{2w+1}$. Similarly for $y \in \N(x)$ with $x \neq y$ and $G_t(y)=G_t(x)$ we have $L_{t+1}(y)=L_{t}(y) - \frac{1}{2w+1}$, and there are $(2w+1)L_t(x) -1$ many such $y$.
At the same time, for $z \in \N(x)$ with $G_t(z) \neq G_t(x)$ we have $L_{t+1}(z)=L_{t}(z) + \frac{1}{2w+1}$, and there are $(1-L_t(x))(2w+1)$ many such $z$. Hence, 
\begin{dmath*}S(t+1)=S(t) - A_t(x)L_t(x) + A_{t+1}(x) \left( 1-L_t(x) + \frac{1}{2w+1} \right) \\ - ((2w+1)L_t(x) -1)A_{t}(x)\frac{1}{2w+1} + \left( 1-L_t(x) \right) (2w+1)A_{t+1}(x)\frac{1}{2w+1}.\end{dmath*}
Thus $$S(t+1) - S(t) = 2A_{t+1}(x) - 2(1+\chi)L_t(x) + \frac{1+\chi}{2w+1}.$$
Hence it suffices to show that $(1+\chi)L_t(x) <A_{t+1}(x)$ for which we check the four possible cases.

Suppose first that $\tau_g+\tau_r \leq 1$. If $G_t(x)=1$ then, since $x$ is unhappy $L_t(x)< \tau_g$ and $(1+\chi)\tau_g \leq 1 = A_{t+1}(x)$ as required. On the other hand, if $G_t(x)=0$ then $L_t(x)<\tau_r$ and $(1+\chi)\tau_r \leq \chi = A_{t+1}(x)$, again as required.

Suppose now that $\tau_g+\tau_r > 1$. This time if $G_t(x)=1$, then since $x$ is hopeful $1-L_t(x) + \frac{1}{2w+1} \geq \tau_r$ meaning $(1+\chi)L_t(x) \leq (1+\chi)(1- \tau_r + \frac{1}{2w+1}) < 1 =A_{t+1}(x)$ again by choice of $\chi$. Finally, if $G_t(x)=0$ then hopefulness tells us that $1-L_t(x) + \frac{1}{2w+1} \geq \tau_g$ which gives us $(1+\chi)L_t(x) \leq (1+\chi)(1- \tau_g + \frac{1}{2w+1}) < \chi =A_{t+1}(x)$. \end{proof}

\section{Deferred proofs from section \ref{section:domsect}} \label{section:appB}

Here we present proofs of two of the more technical matters from section \ref{section:domsect}, starting with the following:

\domfacts*

\begin{proof}

Define $h:T_1 \cup T_2 \to \RR$ by $$h(x,y):= \frac{x^{\left(\frac{x}{1-x - y}\right)}{\left(1 -x \right)}^{\left(\frac{1-x}{ 1- x - y}\right)} }{y^{\left(\frac{y}{1-x - y}\right)}{\left(1-y\right)}^{\left(\frac{1-y}{1-x- y}\right)} }.$$

\noindent \textbf{Claim. } $\frac{\partial h}{\partial x} <0$ and $\frac{\partial h}{\partial y} >0$ on $T_1 \cup T_2$.

\noindent \textbf{Proof of claim. } By differentiating $\ln h$, we find that
$\frac{1}{h} \frac{\partial h}{\partial x} = \frac{k(x,y)}{(1-x-y)^2}$
where $$k(x,y)=(1-y) \ln x +y \ln (1-x) - y \ln y - (1-y) \ln (1-y).$$
Since $h>0$ on $T_i$ it suffices to show that $k \leq 0$ on $S$. Well $$\frac{\partial k}{\partial x} = \frac{1-x-y}{x(1-x)}$$ whence
$\frac{\partial k}{\partial x} >0$ on $T_1$ and $\frac{\partial k}{\partial x} <0$ on $T_2$. Similarly
 $$\frac{\partial k}{\partial y} = \ln(1-x) - \ln y + \ln(1-y) - \ln x$$
meaning that $\frac{\partial k}{\partial y} >0$ on $T_1$ and $\frac{\partial k}{\partial y} <0$ on $T_2$.
So we have established that $k$ is monotonically strictly increasing in both $x$ and $y$ on $T_1$ and monotonically strictly decreasing in both $x$ and $y$ on $T_2$. Along the line $L$ we have $k(x, y)=0$, hence it must be that $\frac{\partial h}{\partial x} <0$ on both $T_1$ and $T_2$ as required. Since $\left(h(x,y) \right)^{-1} = h(y,x)$, the result for $\frac{\partial h}{\partial y}$ also follows. \textbf{QED Claim}\\

Statement 1 of the Lemma follows from the fact that, for $(\tau_g, \tau_r) \in T_1 \cup T_2$, the scenario $(\rho, \tau_g, \tau_r)$ being red dominating is equivalent to the assertion $h(\tau_g, \tau_r) < \frac{1-\rho}{\rho}$, with green domination equivalent to the reverse inequality.

Now consider the restrictions $h \! \upharpoonright_{S_i}$. Since $\lim_{(x,y)\to (0,\frac{1}{2})} h(x,y) = \lim_{(x,y)\to (\frac{1}{2},1)} h(x,y) = 4$ and $\lim_{(x,y)\to (\frac{1}{2},0)} h(x,y) = \lim_{(x,y)\to (1,\frac{1}{2})} h(x,y) = \frac{1}{4}$, from which it follows that the restriction $h:S_i \to (\frac{1}{4},4)$ is surjective for $i \in \{1,2\}$.

Thus for $\rho \geq \frac{4}{5}$ we have $h(x,y)> \frac{1}{4} > \frac{1- \rho}{\rho}$ for any $(x,y) \in S_i$. Similarly for $\rho \leq \frac{1}{5}$ we have $h(x,y) < 4 < \frac{1- \rho}{\rho}$, giving statement 2.

For statement 3, notice that if $\frac{1}{5} < \rho < \frac{4}{5}$ then $4 > \frac{1- \rho}{\rho} > \frac{1}{4}$ and the result again follows by the continuity and surjectivity of $h$ restricted to $S_i$.
\end{proof}

\easierdom*

\begin{proof}
Let $h$ be as in the proof of Lemma \ref{lem:domfacts}. We shall compute $\lim_{y \uparrow (1-\rho)} h(\rho,y)$. Well,
$$\ln h = \frac{\rho \ln \rho + (1- \rho)\ln (1-\rho) - y \ln y - (1-y) \ln (1-y)}{1-\rho-y}.$$
By L'H\^opital's rule, therefore $\lim_{y \uparrow (1-\rho)} \ln h(\rho,y) = \lim_{y \uparrow (1-\rho)} \ln \left(\frac{y}{1-y} \right) = \ln \left(\frac{1- \rho}{\rho} \right)$. Hence, by the continuity of $\ln$, we have $\lim_{y \uparrow (1-\rho)} h(\rho,y) = \frac{1- \rho}{\rho}$.

By Lemma \ref{lem:domfacts} (1) applied to $T_1$, the result therefore follows in the region $\tau_g+\tau_r<1$. The case where $\tau_g+\tau_r>1$ is identical, except that we work in $T_2$ and compute $\lim_{x \downarrow \rho} h(x,1-\rho) = \frac{1- \rho}{\rho}$. 
\end{proof}

\section{Deferred proofs from section \ref{section:longsect}} \label{section:appC}

Throughout this appendix we work in a fixed scenario $(\tau_g,\tau_r,\rho)$ and for some fixed $\theta \neq \rho$. For any node $u$, we define $x_u$ to be the first node to the left of $u$ satisfying $\GD_\theta(x_u)$. The following proposition plays a significant role in the current work:

\newbound*

This appendix is devoted to proving this. Abusing notation slightly we shall refer to the case where we may take $p = 1- \eps'$ for any $\eps'>0$ as the case $p=1$. Notice that even here we cannot simply apply Lemma \ref{lem:gen}, since we do not have access to hypothesis (ii) there. Instead we shall perform some careful counting operations, working in the vicinity of some node $v$ satisfying $\GD_\theta(v)$, and bounding above the number of other such nodes that one can expect to find nearby. Before commencing this though, we mention a version of the law of large numbers which was shall use several times:

\begin{lem}[Strong law of large numbers]\label{lem:sllnschel}
Fix a scenario and a value of $w$. Let $Q'(u)$ be a property of nodes which depends only on the vicinity of $u$ in the initial configuration (i.e. on $[u-C,u+C]$ for some $C$ independent of $n$). With probability one, as $n\to\infty$ the proportion of nodes $u$ in the ring that satisfy $Q'(u)$ tends to $\textbf{P}(Q'(u))$.
\end{lem}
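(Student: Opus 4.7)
The plan is to realize each finite ring as a window inside a single bi-infinite i.i.d.\ process, apply the strong law of large numbers to that infinite process, and then discard the negligible boundary contribution arising from the wrap-around.

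First I would fix a probability space carrying an i.i.d.\ sequence $(Y_i)_{i \in \ZZ}$, where each $Y_i$ is independently green with probability $\rho$ and red with probability $1-\rho$. For each $n$, couple the ring of size $n$ to this sequence by assigning node $u \in \{0,1,\ldots,n-1\}$ the colour $Y_u$ and identifying positions modulo $n$. Since $Q'(u)$ is determined by the colours within $[u-C,u+C]$, for every $u$ with $C \leq u \leq n-1-C$ whether $Q'(u)$ holds in the ring coincides with whether it holds when computed directly from the bi-infinite sequence $(Y_i)_{i \in \ZZ}$.

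Next I would set $Z_i := \mathbf{1}[Q'(i) \text{ holds in } (Y_j)_{j \in \ZZ}]$. Each $Z_i$ is a bounded measurable function of $Y_{i-C},\ldots,Y_{i+C}$ only, so $(Z_i)_{i \in \ZZ}$ is stationary and $(2C+1)$-dependent. As a factor of the Bernoulli shift it is ergodic, and Birkhoff's ergodic theorem yields $\frac{1}{n}\sum_{i=0}^{n-1} Z_i \to \textbf{E}[Z_0] = \textbf{P}(Q'(u))$ almost surely. Writing $S_n$ for the number of nodes in the ring of size $n$ which satisfy $Q'$, the coupling gives $|S_n - \sum_{i=0}^{n-1} Z_i| \leq 2C$, since the two counts can only disagree on the (at most $2C$) nodes whose neighbourhood straddles the wrap-around point; dividing by $n$ and letting $n \to \infty$, the discrepancy vanishes almost surely, so $S_n/n \to \textbf{P}(Q'(u))$ almost surely.

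I do not anticipate any substantive obstacle. The only subtlety is setting up a single probability space carrying all of the finite rings simultaneously, which is routine. If ergodic theory is to be avoided entirely, a purely elementary alternative is available: finite-range dependence forces $\textrm{Var}(S_n) = O(n)$, so Chebyshev gives $\textbf{P}(|S_n/n - p| > \eps) = O(1/n)$ where $p = \textbf{P}(Q'(u))$; this yields convergence in probability, and one may then upgrade to almost sure convergence via Borel--Cantelli applied along the subsequence $n_k = k^2$ together with a simple monotonicity interpolation between consecutive $n_k$.
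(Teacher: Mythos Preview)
Your argument is correct. The paper itself does not prove this lemma; it simply defers to \cite{BEL2}, so there is no in-paper proof to compare against. Your coupling of all the finite rings to a single bi-infinite i.i.d.\ sequence is exactly the right device for making the phrase ``with probability one, as $n\to\infty$'' meaningful, and invoking Birkhoff (or, more elementarily, the SLLN for $m$-dependent sequences) on the indicators $Z_i$ is the standard route. The boundary estimate $|S_n-\sum_{i=0}^{n-1}Z_i|\le 2C$ is fine. One small wording point: in your elementary alternative, $S_n$ is not literally monotone in $n$, but the partial sums $\sum_{i<n}Z_i$ are, and since $S_n$ differs from these by at most $2C$ the interpolation between consecutive $n_k=k^2$ still goes through; you may want to phrase it that way rather than as ``monotonicity interpolation''.
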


The proof of this can be found in \cite{BEL2}. In proving Proposition \ref{prop:newbound}, the following definition will play an important role:

\begin{defin} \label{defin:atmostz}
We say that $\GD_\theta (u,z)$ holds if $\GD_\theta(u)$ holds and there are at most $z$ many nodes $v \in [u-(2w+1), u+(2w+1)]$ satisfying $\GD_\theta(v)$. 
\end{defin}

We shall show shortly that if $\theta \neq \rho$, then we may choose $z$ large enough that for all $w \gg 0$, $\GD_\theta (u,z)$ is highly likely to follow from $GD_\theta(u)$. To establish this, it will be helpful to introduce a weaker notion:

\begin{defin}
Given a node $u$ and an integer $k\geq 1$ let $\mathcal{N}_k(u):=[ u-\lceil w/k \rceil, u+ \lceil w/k \rceil]$. For any $z>0$, we say $\GD_\theta(u,k,z)$ holds if $\GD_\theta(u)$ holds and additionally there are at most $z$ many nodes within $\N_k(u)$ such that $\GD_\theta(z)$ holds.
\end{defin}

We remark that for probabilities $p_1$ and $p_2$ we shall use the notation $p_1 \gg p_2$ to mean $\frac{p_1}{p_2} \gg 0$. We shall now show that $\GD_\theta(u,k,z)$ is likely to follow from $\GD_\theta(u)$:

\begin{lem} \label{lem:z} We make no assumption on $(\rho, \tau_g, \tau_r)$, supposing only that $\theta \neq \rho$. Then for any  $\eps'>0$, all large enough $z$, and all $0\ll k \ll w$, we have $${\bf P}\big(\GD_\theta(u,k,z) | \GD_\theta(u) \big)>1-\eps'.$$ 
\end{lem}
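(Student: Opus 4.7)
The plan is to bound $\E[|\mathcal{S}(u)| \mid \GD_\theta(u)]$—where $\mathcal{S}(u) := \{v \in \N_k(u) : \GD_\theta(v)\}$—by a constant $z_0$ independent of $k$ and $w$, after which Markov's inequality immediately delivers the result. To this end, fix $v = u + j$ with $0 < |j| \leq \lceil w/k \rceil$ and set $B := G_0(\N(u) \setminus \N(v))$, $C := G_0(\N(v) \setminus \N(u))$, each a sum of $|j|$ independent Bernoulli$(\rho)$ variables. Substituting $G_0(\N(u) \cap \N(v)) = (2w+1)\theta - B$ (using $\GD_\theta(u)$) into $G_0(\N(v)) = G_0(\N(u) \cap \N(v)) + C$ shows that, conditional on $\GD_\theta(u)$, the event $\GD_\theta(v)$ reduces to $B = C$.

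Conditional on $\GD_\theta(u)$, the variable $B$ is hypergeometric with mean $|j|\theta$ (the $(2w+1)\theta$ greens in $\N(u)$ being uniformly placed), whereas $C$ depends only on nodes outside $\N(u)$ and so remains independent of $\GD_\theta(u)$ and binomial with mean $|j|\rho$. Since $\theta \neq \rho$, setting $\delta := |\theta - \rho|/3$ and applying Hoeffding's inequality—to $C$ via Proposition \ref{prop:Hoeffding} and to $B$ via its standard extension to sampling without replacement—yields a constant $c > 0$ (depending only on $\theta$ and $\rho$) such that
\[
\textbf{P}(\GD_\theta(v) \mid \GD_\theta(u)) \ = \ \textbf{P}(B = C \mid \GD_\theta(u)) \ \leq \ 2 \exp(-c|j|)
\]
for every $|j| \geq 1$.

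Summing over $v \in \N_k(u)$ then gives
\[
\E\bigl[|\mathcal{S}(u)| \bigm| \GD_\theta(u)\bigr] \ \leq \ 1 + \sum_{j \geq 1} 4 \exp(-cj) \ =: \ z_0,
\]
a finite constant independent of $k$ and $w$, and Markov's inequality yields $\textbf{P}(|\mathcal{S}(u)| > z \mid \GD_\theta(u)) \leq z_0/z$, which is at most $\eps'$ whenever $z \geq z_0/\eps'$—equivalently, $\textbf{P}(\GD_\theta(u,k,z) \mid \GD_\theta(u)) > 1 - \eps'$. The main technical point is the concentration of $B$ conditional on $\GD_\theta(u)$, which uses Hoeffding's inequality in its hypergeometric (sampling-without-replacement) form rather than the independent-Bernoulli version stated in Proposition \ref{prop:Hoeffding}; this is a classical extension of Hoeffding's own work, and everything else is routine Markov-bound accounting.
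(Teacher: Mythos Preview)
Your proof is correct and takes a genuinely different route from the paper's argument. The paper argues as follows: assuming $\theta>\rho$ (the other case being symmetric), it first uses Chebyshev to show that after stepping $\lfloor z/2\rfloor$ nodes to the right of $u$ the green count $G(\N(v))$ has very probably dropped well below $\theta(2w+1)$; it then models $G(\N(v'))$ for $v'\in\N_k(u)$ as a biased random walk (biased downward since $\theta>\rho$), and invokes the standard ruin estimate to argue that the walk is unlikely ever to climb back to level $\theta(2w+1)$. The requirement $k\gg 0$ enters because the bias of the walk must be controlled uniformly over $\N_k(u)$ despite the hypergeometric dependence on $\N(u)$.

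Your approach is a direct first-moment bound: you compute $\mathbf{P}(\GD_\theta(u+j)\mid\GD_\theta(u))=\mathbf{P}(B=C)$ exactly, then use the mean separation $|j||\theta-\rho|$ together with hypergeometric and binomial Hoeffding bounds to get exponential decay in $|j|$, sum, and apply Markov. This is cleaner, avoids the random-walk machinery entirely, and in fact yields a bound uniform in $k$ (so the hypothesis $0\ll k$ is not actually needed in your argument). The paper's approach, on the other hand, gives more structural information about the trajectory of $G(\N(v'))$, which is in the spirit of how the surrounding lemmas are used, but for this lemma in isolation your method is more economical. One minor point: your constant $2$ in $2\exp(-c|j|)$ should be $4$ (two tails for each of $B$ and $C$), but this is immaterial to the argument.
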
    

\begin{proof} 
We assume first that $\theta > \rho$. Again we start by selecting $u$ uniformly at random from nodes such that $\GD_\theta(u)$ holds. First of all, we want to show that for sufficiently large $z$, if we step $\lfloor z/2 \rfloor $ many nodes to the right (or left) of $u$, then we will very probably reach a green density well below $\theta$. To this end, let $v=u+\lfloor z/2 \rfloor $ and $x_0= G \left( \mathcal{N}(u) \backslash \mathcal{N}(v) \right)$. Then $E(x_0)=\theta \lfloor z/2 \rfloor $. By applying  Chebyshev's Inequality we conclude that for any $\eps''>0$ and for all sufficiently large $z$, ${\bf P}(|(x_0/\lfloor z/2 \rfloor) -\theta| >\eps'') \ll\eps'$.

Now consider  $x_1:=G \left( \mathcal{N}(v) \backslash \mathcal{N}(u) \right)$. The law of large numbers tells us that for  any $\eps''>0$ and for all sufficiently large $z$, ${\bf P}(|(x_1/\lfloor z/2 \rfloor) -\rho| >\eps'' )\ll\eps'$. Since 
$G(\N(v))=G(\N(u))-x_0+x_1$, we find that for any $m>0$ and for all sufficiently large $z$, ${\bf P}(G(\N(u))- G(\N(v)))<m) \ll \eps'$. 

So far then, we have considered moving $\lfloor z/2 \rfloor $ many nodes to the right of $u$ to a node $v$, and have concluded that $G(\N(v))$ will very probably be well below $G(\N(u))=\theta(2w+1)$ (a similar argument also applies, of course, to the left). Now we have to show that as we move right from $v$ to some node $v'$, so long as $v' \in \mathcal{N}_k(u)$, the green node count $G(\N(v'))$ will very probably remain below $\theta(2w+1)$.  In order to do this, we approximate $G(\N(v'))$, as $v'$ varies, by a biased random walk $B(v')$.

So let us briefly adopt the approximation that  nodes in $\N(u)$ are independent identically distributed random variables, each with probability $\theta$ of being green. Then, for $v' \in \mathcal{N}_k(u)$ to the right of $u$, ${\bf P} \big( B(v'+1) = B(v')+1 \big)=\rho(1-\theta)$, while ${\bf P} \big( B(v'+1) = B(v') \big)=(1 - \theta)(1-\rho) + \theta \rho = 1- \theta + \rho + 2 \theta \rho$ and ${\bf P}\big(B(v'+1) = B(v')-1 \big)=(1-\rho)\theta$.

Since $\theta>\rho$, ${\bf P}\big(B(v'+1) = B(v')-1 \big)>{\bf P}\big(B(v'+1) = B(v')+1 \big)$. Now removing those steps at which $B(v')$ does not change, we get a biased random walk with probability say $p_1 > \frac{1}{2}$ of going down at each step and $(1-p_1)$ of going up. Choose $p_2$ with $\frac{1}{2}<p_2 <p_1$. Now, dropping the false assumption of independence, by taking $k$ sufficiently large we ensure that as we take successive steps right from $v$ inside the interval $\mathcal{N}_k(u)$, at each step, no matter what has occurred at previous steps, the probability of $G(\N(v'))$ increasing is less than $(1-p_2)$ and the probability of $G(\N(v'))$ decreasing is greater than $p_2> \frac{1}{2}$. Thus by a standard fact about biased random walks, if $G(\N(u+\lfloor z/2 \rfloor)) \leq \theta(2w+1)-m$, then the probability that any nodes $v' \in [u+\lfloor z/2 \rfloor, u+\lceil w/k \rceil)$ satisfies $G(\N(v')) \geq \theta(2w+1)$ is less than $\left(\frac{1-p_2}{p_2} \right)^m$. 

Finally, let $m$ be such that  $\left( \frac{1-p_2}{p_2} \right)^m\ll \eps'$, and let $z$ be sufficiently large that, for $v=u+\lfloor z/2 \rfloor$,  ${\bf P}(\theta(2w+1)- G(\mathcal{N}(v))<m) \ll \eps'$.

This completes the proof in the case that $\theta > \rho$. The argument when $\theta < \rho$ is essentially identical, the only change being that as we step away from $u$, we will be highly likely to reach a green density well \emph{above} $\theta$.
\end{proof}

\begin{coro} \label{coro:hb}  With no assumption on $(\rho, \tau_g, \tau_r)$ and $\theta \neq \rho$, for any $\eps'>0$, for all sufficiently large $z$, for all large enough $w$, $${\bf P} \left( \GD_\theta (u,z) | \GD_\theta (u) \right) >1-\eps'.$$  
\end{coro}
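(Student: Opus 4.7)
The plan is to split $[u-(2w+1), u+(2w+1)]$ into an inner interval $\N_k(u)$ (for $k$ a fixed but large constant) and the outer complement $O := [u-(2w+1), u+(2w+1)] \setminus \N_k(u)$. Lemma \ref{lem:z}, applied with $\eps'/2$ in place of $\eps'$, yields some $z_0$ and $k_0$ such that ${\bf P}(\GD_\theta(u, k_0, z_0) \mid \GD_\theta(u)) > 1 - \eps'/2$ for all $w \gg 0$; in particular at most $z_0$ nodes of $\N_{k_0}(u)$ satisfy $\GD_\theta$ with high conditional probability. So it suffices to show that, conditional on $\GD_\theta(u)$, with probability at least $1 - \eps'/2$ no node of $O$ satisfies $\GD_\theta$. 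Setting $z := z_0$ and combining via a union bound then yields the corollary.

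For the outer bound I would use a first-moment argument. For each $v \in O$ with $d := |v - u| \in (\lceil w/k_0 \rceil, 2w+1]$, decompose $G(\N(v)) = G(\N(u) \cap \N(v)) + G(\N(v) \setminus \N(u))$. Conditional on $\GD_\theta(u)$ the two summands are independent, the first being hypergeometric with mean $\theta(2w+1-d)$ and the second Binomial$(d,\rho)$ with mean $\rho d$; both have variance of order $d$. Since $\GD_\theta(v)$ requires $G(\N(v)) = \theta(2w+1)$, a deviation of exactly $(\theta - \rho)d$ above the conditional mean, the key estimate I would aim to establish is
$${\bf P}(\GD_\theta(v) \mid \GD_\theta(u)) \leq C \exp(-cd)$$
for constants $c, C > 0$ depending only on $\theta$ and $\rho$. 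Summing over $v \in O$ bounds the conditional expected number of $\GD_\theta$ nodes in $O$ by $O(\exp(-cw/k_0))$, which tends to $0$ as $w \to \infty$ for fixed $k_0$. Markov's inequality then delivers ${\bf P}(\exists v \in O : \GD_\theta(v) \mid \GD_\theta(u)) < \eps'/2$ for all $w \gg 0$.

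The main obstacle is securing the $\exp(-cd)$ decay, since a direct Hoeffding bound on $G(\N(v))$ treated as a sum of $2w+1$ indicators yields only $\exp(-c'd^2/w)$, which is inadequate when $d \lesssim \sqrt{w}$. The correct scale is that, conditional on $\GD_\theta(u)$, all randomness in $G(\N(v))$ resides in the $2d$ nodes of the symmetric difference $\N(u) \triangle \N(v)$, so the relevant standard deviation is $\sqrt{d}$ rather than $\sqrt{w}$. The cleanest approach, in the spirit of Lemma \ref{lem:z}, is to extend its biased random walk argument to cover the full range $v' \in [u, u+(2w+1)]$. Assuming $\theta > \rho$ (the other case is symmetric), one verifies that the conditional probability that the walk decreases at each step exceeds the probability it increases by a constant margin throughout: the fraction of green among as-yet-unrevealed $\N(u)$-nodes stays close to $\theta$ under the conditioning, so the bias persists even after leaving $\N_{k_0}(u)$. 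Chernoff applied to this biased walk then delivers the required $\exp(-cd)$ bound on the probability of returning to the starting value $\theta(2w+1)$ at time $d$. (A direct Bernstein bound on the two independent summands above is an equally valid alternative.)
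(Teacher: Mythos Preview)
Your approach is correct but differs from the paper's. The paper's proof is much shorter: it simply observes that if both $\GD_\theta(u,k_1,z)$ and $\mathtt{Smooth}_{k_2,\eps''}(u)$ hold (for $k_2$ large compared to $k_1$ and $\eps''$ small), then \emph{deterministically} there are at most $z$ nodes $v\in[u-(2w+1),u+(2w+1)]$ with $\GD_\theta(v)$. The reason is that smoothness pins down $G(\N(v))$ for any $v$ at distance $d>w/k_1$ from $u$ to within $O(\eps'' w + w/k_2)$ of the value $\theta(2w+1-d)+\rho d$, which for such $d$ differs from $\theta(2w+1)$ by at least $(\theta-\rho)w/k_1$; choosing $\eps''$ and $1/k_2$ small relative to $(\theta-\rho)/k_1$ rules out $\GD_\theta(v)$ outright. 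The corollary then follows by combining Lemma~\ref{lem:z} with Corollary~\ref{coro:smooth2}.

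So where you establish a probabilistic first-moment bound over the outer region $O$, the paper instead invokes smoothness to handle $O$ deterministically on a high-probability event. Your route is more self-contained but longer; theirs reuses machinery already in place. One comment on your sketch: the Bernstein/Hoeffding alternative is solid once you reparametrise via the symmetric difference (writing $G(\N(v))=\theta(2w+1)-Y_2+Y_1$ with $Y_1\sim\mathrm{Bin}(d,\rho)$ and $Y_2$ hypergeometric of sample size $d$, both bounded by Hoeffding with exponent $-cd$). The biased-random-walk extension is less clean than you suggest, since by the time $v'$ has moved a distance comparable to $w$ you have revealed a large fraction of $\N(u)$ and the claim that the unrevealed green fraction ``stays close to $\theta$'' no longer holds uniformly; the Bernstein route avoids this issue.
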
   

\begin{proof}

Observe that if $k_2$ is sufficiently large compared to $k_1$, if $\eps'$ is sufficiently small, and if  $\GD_\theta (u,k_1,z)$ and $\mathtt{Smooth}_{k_2,\eps'}(u)$  both hold, then in the initial configuration there are at most $z$ many nodes $v \in [u-(2w+1), u+(2w+1)]$ where $\GD_\theta(v)$ holds.  Applying Lemmas \ref{lem:z} and \ref{coro:smooth2} therefore gives the result.

\end{proof}

The next result is another step towards Proposition \ref{prop:newbound} (with $\neg$ denoting logical negation). (We remark that $z+1$ in part (ii) could be replaced with many other expressions; however $z+1$ will turn out to be a useful choice.)

\begin{lem} \label{lem:smoothnodes}
Again we make no assumption on $(\rho, \tau_g, \tau_r)$, but fix $\theta \neq \rho$. Let $Q$ and $p$ be as in Proposition \ref{prop:newbound}. Define \[ \mu := \frac{{\bf P}( Q(v)| \GD_\theta(v,z))}{{\bf P}( \neg Q(v) |  \GD_\theta(v,z))}. \]
Then for any $\eps'>0$, all sufficiently large $z$, and all sufficiently large $w$, we have 
\begin{enumerate}[(i)]
\item If $p \neq 1$, then $\mu \geq \frac{p - \eps'}{1-p}$.
\item If $p = 1$, then $\mu \geq \frac{z+1}{\eps'}$.
\end{enumerate}
\end{lem}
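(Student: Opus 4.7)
The plan is to exploit Corollary \ref{coro:hb}, which tells us that for large enough $z$ and $w$, conditioning on $\GD_\theta(v,z)$ in place of $\GD_\theta(v)$ is a small perturbation: specifically ${\bf P}(\neg \GD_\theta(v,z) \wedge \GD_\theta(v)) < \delta \cdot {\bf P}(\GD_\theta(v))$ for any prescribed $\delta>0$. This means conditional probabilities involving $\GD_\theta(v,z)$ cannot differ much from those involving $\GD_\theta(v)$, so the hypothesis ${\bf P}(Q(v) \mid \GD_\theta(v)) \geq p$ essentially transfers.

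Concretely, since $\GD_\theta(v,z)$ implies $\GD_\theta(v)$, we have ${\bf P}(\neg Q(v) \wedge \GD_\theta(v,z)) \leq (1-p) \cdot {\bf P}(\GD_\theta(v))$, and inclusion--exclusion gives ${\bf P}(Q(v) \wedge \GD_\theta(v,z)) \geq (p - \delta) \cdot {\bf P}(\GD_\theta(v))$. Dividing both inequalities by ${\bf P}(\GD_\theta(v,z))$, and using ${\bf P}(\GD_\theta(v,z)) \geq (1-\delta) \cdot {\bf P}(\GD_\theta(v))$ in the second case, I would arrive at
\[
\mu \ \geq \ \frac{(p-\delta)(1-\delta)}{1-p}.
\]
For case (i), the desired bound $(p-\eps')/(1-p)$ follows by taking $\delta$ sufficiently small relative to $\eps'$ (for instance $\delta \leq \eps'/(2(1+p))$ suffices after expanding the numerator), and the quantitative content of Corollary \ref{coro:hb} is then invoked with this $\delta$, which dictates how large $z$ and then $w$ must be.

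Case (ii) is slightly more delicate because the target $(z+1)/\eps'$ grows with $z$. Here I would use the full strength of the $p=1$ hypothesis, choosing $p := 1 - \eps_0$ and $\delta := \eps_0/2$ where $\eps_0 := \eps'/(2(z+1))$, both of which are permitted by the hypotheses for all large enough $w$ (depending on the now-chosen $z$). Plugging these values into the displayed bound gives $\mu \geq 1/(2\eps_0) = (z+1)/\eps'$ directly. The only subtlety, and the one step requiring care, is the nested quantifier ordering: the required smallness of $\eps_0$ and $\delta$ depends on $z$, so $w$ must be taken correspondingly larger, which is consistent with the lemma's ``for all sufficiently large $z$, for all sufficiently large $w$'' formulation. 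No deeper obstacle seems to arise once this bookkeeping is set up correctly.
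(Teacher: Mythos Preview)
Your proposal is correct and follows essentially the same route as the paper's proof. Both arguments use Corollary \ref{coro:hb} to ensure that conditioning on $\GD_\theta(v,z)$ rather than $\GD_\theta(v)$ perturbs probabilities by at most some small $\delta$ (the paper calls it $\eps_1$), then bound the numerator of $\mu$ below by $p-\delta$ and the denominator above by $1-p$; for case (ii) both reduce to case (i) by taking $p$ of the form $1-\eps_0$ with $\eps_0$ small relative to $\eps'/(z+1)$. Your extra factor of $(1-\delta)$ is unnecessary (taking the ratio of the two joint-probability bounds directly yields $\mu \geq (p-\delta)/(1-p)$ without passing through conditional probabilities), but it is harmless and the paper's computation is equally informal on this point.
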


\begin{proof}
Applying  Corollary \ref{coro:hb}, we require $z$ large that, for all sufficiently large $w$, if we are given that $\GD_\theta(v)$ holds, then  $\GD_\theta(v,z)$ fails with probability $\ll \eps'$, i.e. putting $\eps_1={\bf P}(\neg \GD_\theta(v,z)| \GD_\theta(v))$,  choose $z$ so that $\eps_1\ll \eps'$ for sufficiently large $w$.

Also, by assumption on $Q$, we have that ${\bf P}(\neg Q(v)|\GD_\theta(v)) \geq 1-p$ for all large enough $w$. Thus
  
 \[ \mu = \frac{{\bf P}( Q(v)\wedge \GD_\theta(v,z)   |\GD_\theta(v))}{{\bf P}(\neg  Q(v)\wedge \GD_\theta(v,z)  |\GD_\theta(v)))} = \frac{1-{\bf P}( \neg Q(v) \vee \neg \GD_\theta(v,z)   |\GD_\theta(v))}{1-{\bf P}(Q(v)\vee \neg   \GD_\theta(v,z)  |\GD_\theta(v)))}\]
 
Thus, if $p \neq 1$, we get  $\frac{1- (1- p + \eps_1)}{1 - p} \leq \mu$. Hence
$\mu \geq \frac{p - \eps_1}{1 - p}$ and the first statement follows by assumption on $\eps_1$. 

If $p = 1$, then let $\eps''>0$ be small enough relative to $\eps'$ that $\frac{1 - 2\eps''}{\eps''}> \frac{z+1}{\eps'}$. Then the the result obtained in part (i) gives us that $\mu \geq \frac{1 - 2\eps''}{\eps''}$ for all $w \gg 0$ and thus part (ii) follows by assumption on $\eps''$.
\end{proof}

\begin{lem}  \label{lem:lismooth}
Again, our only assumption is that $\theta \neq \rho$. Now for any node $u$, let $x_u$ be the first node to the left of $u$ for which $\GD_\theta(x_u)$ holds. For any $\eps'>0$, for all large enough $z$ and $0 \ll w \ll n$ and for $u$ chosen uniformly at random, $x_u$ is defined and $\GD_\theta(x_u,z)$ holds with probability $>1-\eps'$.
\end{lem}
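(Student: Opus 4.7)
The plan is to establish the exact identity
\[
{\bf P}(x_u \in V \setminus V_z) \;=\; {\bf E}\bigl[Y(0)\,(1-Z(0))\,r(0)\bigr],
\]
where $V$ and $V_z$ denote the sets of nodes satisfying $\GD_\theta$ and $\GD_\theta(\cdot,z)$ respectively, $Y$ and $Z$ are their indicators, and $r(v)$ is the distance from $v$ to the next $\GD_\theta$-node on its right (with the product set to $0$ if $v \notin V$). This identity holds exactly for any $n,w$: conditional on the initial configuration, $\{u : x_u = v\}$ has exactly $r(v)$ elements, $\sum_{v \in V} r(v) = n$ whenever $V \ne \emptyset$, and then translation invariance of the ring yields the claim. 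Since ${\bf P}(V = \emptyset) = (1-q)^n \to 0$ (where $q:={\bf P}(Y(0)=1)$), it suffices to bound the above expectation by $\eps'/2$ in order to guarantee that $x_u$ is defined and lies in $V_z$ with probability at least $1-\eps'$.

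Fix $\eps'>0$ and set $\eps_1 := \eps'/4$. By Corollary \ref{coro:hb}, choose $z$ and then $w$ large enough that $p_b := {\bf P}(\neg Z(0) \mid Y(0)=1) < \eps_1$. We split the expectation according to whether $r(0) \leq 4w+1$ or $r(0) > 4w+1$. The short-reach contribution is bounded by $(4w+1)\cdot {\bf P}(Y(0)=1, Z(0)=0) \leq (4w+1)\,q\,p_b$. Since $\theta \neq \rho$, standard large-deviation estimates for the binomial distribution give $q = O\!\bigl(e^{-cw}/\sqrt{w}\bigr)$ for some constant $c>0$, so this part vanishes as $w \to \infty$.

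For the long-reach part the crucial point is that $r(0) > 4w+1$ forces $Y(1) = \cdots = Y(4w+1) = 0$, so no $\GD_\theta$-node lies in $[1,2w+1]$; hence $\neg Z(0)$ must be witnessed by $\geq z$ such nodes in $[-(2w+1),-1]$, an event whose conditional probability given $Y(0)=1$ is at most $p_b$ by Corollary \ref{coro:hb} and symmetry around $0$. Thus ${\bf P}(Y(0)=1, Z(0)=0, r(0) > 4w+1) \leq p_b q$. The combined conditioning depends only on the coloring of positions in $[-(3w+1), 5w+1]$, while beyond that range the coloring is independent; a renewal-style estimate then bounds the expected distance from position $4w+1$ to the next $\GD_\theta$-node by $O(w) + 1/q$, the first term covering a buffer where the conditioning can still influence $Y$-values, and the second being the unconditional mean gap ${\bf E}[r(0)\mid Y(0)=1]=1/q$ governing the fresh renewal past the buffer. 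The long-reach term is therefore at most $p_b q(O(w) + 1/q) = p_b(O(wq) + 1)$, which tends to $p_b < \eps'/4$ as $w \to \infty$ because $wq \to 0$.

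The main technical hurdle lies in this renewal-style bound on the conditional expected reach: one must carefully identify the range of positions on which the event $\{Y(0)=1,\ \neg Z(0),\ r(0) > 4w+1\}$ actually depends, and then exploit the i.i.d.\ structure of the initial coloring outside that range to show that the reach past the "known" region is stochastically dominated by the unconditional renewal distance (up to a buffer of size $O(w)$ accounting for the overlap between $\GD_\theta$-neighbourhoods and the conditioning). Combining the short-reach and long-reach bounds gives ${\bf P}(x_u \in V \setminus V_z) < \eps'/2$ for $z,w,n$ chosen in the stated order, and together with ${\bf P}(V \ne \emptyset) > 1-\eps'/2$ for $n \gg w$, the conclusion follows.
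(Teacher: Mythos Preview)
Your approach via the exact identity ${\bf P}(x_u \in V\setminus V_z)={\bf E}[Y(0)(1-Z(0))r(0)]$ is correct and genuinely different from the paper's combinatorial iteration, and the short-reach bound is fine. The gap is in the long-reach term, where you invoke a ``renewal-style estimate'' of $O(w)+1/q$ for the expected distance from the end of the buffer to the next $\GD_\theta$-node. The quantity $1/q$ is indeed ${\bf E}[r(0)\mid Y(0)=1]$ by Kac's lemma, but that is the expected \emph{return} time starting \emph{from} a $\GD_\theta$-node. Past your buffer you are at an arbitrary position, and the relevant quantity is the forward hitting time $W=\min\{j\geq 1:Y(j)=1\}$; for a stationary sequence this equals $\tfrac{1}{2}\bigl(q\,{\bf E}[r(0)^2\mid Y(0)=1]+1\bigr)$, i.e.\ it is controlled by the \emph{second} moment of the gap distribution, not the first. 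Because the $Y(j)$ are strongly correlated (overlapping windows), showing ${\bf E}[W]=O(1/q)$ rather than the crude block bound $O(w/q)$ is not automatic; and with only $O(w/q)$ your long-reach term becomes $p_b\,q\cdot O(w/q)=p_b\cdot O(w)$, which does not vanish since Corollary~\ref{coro:hb} gives $p_b<\eps_1$ for your chosen $z$ but no decay in $w$.

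The paper sidesteps this second-moment issue entirely. It runs an iteration that partitions the ring into segments, each beginning $2w+1$ to the left of the previous segment's first $\GD_\theta$-node $v_s$, and colours a segment black when $\neg\GD_\theta(v_s,z)$. It then double-counts: every segment contributes at most $z$ nodes to $V_z$ (by the very definition of $\GD_\theta(\cdot,z)$), while every black segment contributes at least $z$ nodes to $V\setminus V_z$. Since $|V_z|/|V\setminus V_z|\approx (1-\eps_1)/\eps_1$ by the strong law, the black proportion $\pi$ is forced to satisfy $\pi\leq 1/\chi\ll\eps'$. This requires only the first-moment information from Corollary~\ref{coro:hb} together with the built-in multiplicity $z$ in the definition of $V_z$, and never touches the distribution of the gap lengths $r(v)$.
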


\begin{proof}

We work entirely in the initial configuration, and define an iteration which assigns colours to nodes as follows. 

\vspace{0.2cm} 
\noindent Step 0. Pick a node $t_0$ uniformly at random. 

\noindent Step $s+1$. Let $v_s$ be the first node to the left of $t_s$ such that $v_s=x_{t_s}$ or such that $s>0$ and $v_s=t_0$. Carry out the instructions for the first case below which applies: 

\begin{enumerate} 
\item If there exists no such $v_s$ then terminate the iteration, and declare that it has `ended prematurely'. 
\item If $v_s=t_{0}$ and $s>0$ then make $t_s$ undefined  and terminate the iteration.
\item If $|v_s-t_s|<2w+1$, then colour $t_s$ pink.
\item If $\neg \GD_\theta (v_s,z)$ holds, then colour $t_s$ black.
\item If $\GD_\theta (v_s,z)$ holds, then colour $t_s$ silver.
\end{enumerate} 

In cases (3) - (5), define $t_{s+1}=v_s-(2w+1)$, unless $t_0$ lies in the interval $[v_s-(2w+1),v_s)$, in which case terminate the iteration. This completes the description of the iteration. \\

The purpose of this construction is that every node $u$ in the ring which satisfies $\GD_\theta (u,z)$ will lie in the interval $I_s:= [v_s-2w,v_s]$ for some $s$. Thus it will assist us in counting such nodes.

First note that the probability that the iteration terminates prematurely can be made arbitrarily small by taking $n$ large, and similarly that we may assume that $t_s$ of all colours exist. Also, by the assumption that $\theta \neq \rho$, the proportion of $t_s$ coloured pink will be $\ll \eps'$ for all large enough $w$.

Now let $S$ be the greatest $s$ such that $t_s$ is defined when the iteration terminates, and let $\eps_1: = {\bf P} \left( \neg \GD_\theta (u,z)  | \GD_\theta (u) \right)$. Then by Corollary \ref{coro:hb}, we know that $\eps_1 \ll \eps'$ for all large enough $w$.

Let $$\chi:= \frac{|\{u : \GD_\theta (u,z)  \}|}{|\{u: \GD_\theta (u) \wedge \neg \GD_\theta (u,z) \}|}$$
Then for large $n$, $\chi$ can be expected to be close to $\frac{1-\eps_1}{\eps_1}$ by the strong law of large numbers (Lemma \ref{lem:sllnschel}).   In order to find an upper bound for $\chi$, first let us find an upper bound for ${|\{u : \GD_\theta (u,z)  \}|}$. 

Let $v_s$ be as defined in the iteration. Then there can be at most $z$ many nodes $u$ in $I_s$ satisfying $\GD_\theta (u,z)$. Furthermore, since every node $u$ satisfying $\GD_\theta (u,z)$ lies in some $I_s$, we find an upper bound of $|\{u : \GD_\theta (u,z)  \}| \leq (S+1)z$.

Similarly we may find a lower bound for $|\{u : \GD_\theta (u) \wedge \neg \GD_\theta (u,z) \}|$. Let $\pi$ be the proportion of the $t_s$ that are coloured black. Now for each black $t_s$ we are guaranteed at least $z$ many such nodes in $I_s$. We therefore get a lower bound of $(S+1)\pi z$.

Putting these two bounds together, we get \[ \chi \leq \frac{ z(S+1)}{\pi(S+1)z} = \frac{1}{\pi}. \] 

\noindent Since $\chi$ is close to $\frac{1-\eps_1}{\eps_1}$ for large $n$ and $p\ll \eps'$, we infer that for all sufficiently large $w$, with probability tending to 1 as $n\rightarrow \infty$, we have  $\chi \gg 1/\eps'$, so that $\pi \ll \eps'$. So, for sufficiently large $w$, with probability tending to 1 as $n\rightarrow \infty$, the proportion of the $t_s$ which are not coloured silver is $\ll \eps'$.

This concludes the proof, since as $n \to \infty$ the proportion of the $t_s$ coloured silver will be less than or equal to the probability of $x_u$ being defined and satisfying $\GD_\theta(x_u,z)$ for $u$ chosen uniformly at random (and indeed will converge to this value). \end{proof}

We may now complete the work of this appendix:

\begin{proof}[Proof of Proposition \ref{prop:newbound}]

Let $0< \eps' \ll \eps_0 \ll p$.  Several of the previous Lemmas are stated for all sufficiently large $z$. Here we fix $z$ large enough to apply those results for our value of $\eps'$.

We shall now extend the construction from the proof of Lemma \ref{lem:lismooth}, and we preserve the notation introduced there including the result of the colouring process (1)-(5). However, we further subdivide the silver nodes:

\begin{enumerate}
\item[(6)] If $\GD_\theta (v_s,z) \wedge \neg Q(v_s)$ holds, then colour $t_s$ silver and grey.
\item[(7)] If $\GD_\theta (v_s,z) \wedge Q(v_s)$ holds, then colour $t_s$ silver and gold.
\end{enumerate}

Let $\pi'$ be the proportion of the $t_s$ which are coloured grey and $\Xi$ the proportion coloured gold.  We have already established that $\Xi > 1-\pi' - \eps'$ with probability tending to $1$.

As in Lemma \ref{lem:smoothnodes}, we define $$\mu:=\frac{{\bf P} \left( \Sm(v) \wedge \GD_\theta (v,z) \right)}{ {\bf P} \left( \neg \Sm(v) \wedge \GD_\theta (v,z) \right)}$$
and
$$\chi':=\frac{| \left\{v : Q(v) \wedge \GD_\theta (v,z) \right\}| }{ |\left\{ v: \neg Q(v) \wedge \GD_\theta (v,z) \right\}|}.$$

Now $\chi'$ can be expected to be close to $\mu$ for large $n$, by the strong law of large numbers (Lemma \ref{lem:sllnschel}). Thus for all large enough $w$ with probability approaching $1$ we may apply the bounds for $\mu$ provided by Lemma \ref{lem:smoothnodes} also to $\chi'$.

Now we form a new lower bound for $\chi'$ in the same manner we did for $\chi$. The numerator of the fraction is bounded above by $(1-\pi')(S+1)z$. For the denominator, notice that if $t_s$ is coloured grey then we get at least one node $u \in I_s$ for which $\neg Q(u) \wedge \GD_\theta (u,z)$ holds. Thus 
\[ \chi' \leq \frac{(1-\pi')(S+1)z}{\pi'(S+1)}= \frac{(1-\pi')z}{\pi'}. \]

Suppose now that $p \neq 1$. Then by Lemma \ref{lem:smoothnodes}, we have that
$\frac{p - \eps'}{1-p} \leq \frac{(1-\pi')z}{\pi'}$ with probability tending to $1$. Rearranging this gives us $1- \pi' \geq \frac{p-\eps'}{z(1-p)+p-\eps'}$.

For this fixed value of $z$, since $\eps'$ was chosen small enough, it follows that $\Xi > \frac{p}{z}$ with probability tending to $1$. Now we may take $p':=\frac{p}{z}$. This concludes the proof in the case $ p \neq 1$, since as $w \to \infty$, the value of $\Xi$ will converge to the probability of $x_u$ being defined and satisfying $Q(x_u)$ for $u$ chosen uniformly at random. Thus this probability will also exceed $p'$. Furthermore, the probability that 

On the other hand, if $p=1$ then we know by Lemma \ref{lem:smoothnodes} that $\chi'  > \frac{z+1}{\eps'}$ with probability tending to $1$. Thus $\frac{(1-\pi')z}{\pi'} > \frac{z+1}{\eps'}$. With $z$ fixed, by choice of $\eps'$ it follows that $\Xi > 1- \eps_0$ for all large enough $w$ with probability tending to $1$, which again is sufficient to establish the result.

\end{proof}

\section{Deferred proofs from section \ref{section:bigtau}} \label{section:appD}

In this appendix, we present the proof of the following result:

\happysim*

\begin{proof}
Suppose that $[a,b]$ is a happy green firewall at time $t$, meaning that each element of $[a,b]$ is a happy green node, and where $b - a \geq 2w+1$. We suppose that $[a,b]$ is maximal, meaning that $a-1$ and $b+1$ are not happy green nodes, and we shall show that, for all large enough $w$, the happy firewall will advance (at both ends) either at time $t+1$ or $t+2$, and never retreat. As in the proof of Theorem 1.4, we can then appeal to the law of large numbers to guarantee the existence of such a firewall with probability $>1-\eps'$, for large enough $n$, ensuring total takeover.

Define $k_g:= \lceil \tau_g(2w+1) \rceil - (w+1)$, and $k_r$ similarly. The conditions on $\tau_g$ and $\tau_r$ imply the following: $k_g < \frac{1}{3}w$ and $k_r \geq k_g +1$, so long as $w$ is large enough.

Now we focus on the right end of $[a,b]$. (Similar arguments will apply to the left end.) Let $J:=[b+1,b+w]$. Let $b+c+1$ be the leftmost red node in $J$ at time $t$, so that $0 \leq c \leq k_g$. Suppose first that $c \geq 1$. It follows immediately that $G_t(J) =\UG_t(J)= k_g$, since $G_t(\N(x)) \in \left\{G_t(\N(x-1)), G_t(\N(x-1))-1 \right\}$ for $x \in J$.

Define $J_1:=[b+1, b+c]$. By definition then, $J_1$ is entirely green, which is to say $G_t(J_1) = c$. Also let $J_2:=[b+c+1, b+w]$. Then $G_t(J_2) = k_g -c$ and $R_t(J_2)=w-k_g$. We now divide into two cases, depending on whether or not $\HR_t(J_2) = 0$.

Suppose first that $\HR_t(J_2) = 0$. (This will hold, for instance, if $w-k_g \leq k_r$.) Then $\UR_t(J_2) = w-k_g$ and at time $t+1$ we have $G_{t+1}(J_1) = 0$ and $G_{t+1}(J_2) = w-k_g$. Notice that this implies $G_{t+1}(J) \geq k_g$, so the happy green firewall has not retreated. In this case, $G_{t+1}(\N(b+c+1)) \geq w - c + w-k_g = 2w - k_g -c \geq 2w - 2k_g \geq w+k_g +1$. Hence $b+c+1$ is a happy green node at time $t+1$. Since $R_{t+1}(J_1) = c < k_r$ it follows that $\UR_{t+1}(J_1)=c$, thus, moving on to time $t+2$, we see that $J_1$ is once again entirely green but is joined by $b+c+1$. Thus $G_{t+2}(J) \geq k_g+1$ and hence the happy green firewall has grown to encompass $b+1$.

On the other hand, if $\HR_t(J_2) > 0$, then we must have $k_r \leq \UR_t(J_2) < w- k_g$ (since counting from the left, we must encounter $k_r$ unhappy red nodes before the first happy one). Thus at time $t+1$ we have $G_{t+1}(J_2) \geq k_r$, ensuring that the happy firewall has not retreated. At the same time, $\UG_t(J_2) = k_g -c$ as before. At time $t+1$ these will all become unhappy red nodes making $\UR_{t+1}(J) \geq k_g$. But we cannot have equality here, since $k_g$ many unhappy red nodes are not enough to preserve the happiness of any red node happy at time $t$, at least one of which must therefore become unhappy. Thus $\UR_{t+1}(J_2) \geq k_g -c +1$. Moving on to time $t+2$, we see $J_1$ is again entirely green, and $G_{t+2}(J) \geq k_g+1$, meaning that the happy green firewall has grown. 

Now we deal with the case that $c=0$, meaning that $b+1$ is an unhappy red node at time $t$. Here we can say only that $G_t(J) \geq k_g$. If $G_t(J) < w - k_g$, then $\UR_t(J) \geq k_g+1$ (because $\UR_t(J) = k_r \geq k_g +1$), and thus at time $t+1$, $G_{t+1}(J) \geq k_g +1$, meaning that $b+1$ will be a happy green node, and the happy firewall will have grown.

Hence we can suppose that $G_t(J) \geq w - k_g$. (Notice that this implies $R_t(J) \leq k_g < k_r$ so that $\HR_t(J)=0$.) Now let $b+d$ be any green node in $J$ where $G_t [b+d, b+w] \geq w-k_g$. We claim that $b+d$ is happy. Well, clearly $d \leq k_g+1$. Hence $G_t[b+d-w, b+d-1] \geq w-k_g$. Equally, $G_t[b+d, b+d+w] \geq w-k_g$. Hence $G_t(\N(b+d)) \geq 2w-2k_g \geq 1+k_g$ as required.

So, at time $t$ if $b+d$ is an unhappy green node, then $G_t [b+d, b+w] \leq w-k_g -1$. Clearly at most $w-k_g -1$ green nodes can satisfy this, making $\UG_t(J) \leq w-k_g- 1$. Thus $G_{t+1}(J) =\HG_t(J) +\UR_t(J) \geq k_g+1$, meaning that $b+1$ will be a happy green node at time $t+1$ as required. \end{proof}

\end{document}